\let\oldvec\vec
\let\vec\oldvec
\newlength{\problemoffset}
\newcommand{\escale}[1]{\ensuremath{\textbf{\scalebox{0.8}{#1}}}}
\newcommand{\nscale}[1]{\ensuremath{\textbf{\scalebox{0.8}{#1}}}}
\newcommand{\myEdge}[2]{ \tikz[baseline=-1pt]{
\draw[#2,line width=0.3pt] (0,0) -- ++(0.6,0) node[anchor=base, yshift=3pt, pos=0.5] {\escale{$#1$}};
}}
\newcommand{\Edge}[1]{ \tikz[baseline=-1pt]{
\draw[->,line width=0.3pt] (0,0) -- ++(0.6,0) node[anchor=base, yshift=5pt, pos=0.5] {\escale{$#1$}};
}}
\newcommand{\lEdge}[1]{ \tikz[baseline=-1pt]{
\draw[->,line width=0.3pt] (0,0) -- ++(1,0) node[anchor=base, yshift=5pt, pos=0.5] {\escale{$#1$}};
}}
\newcommand{\edge}[1]{\myEdge{#1}{->}}
\newcommand{\nop}{\ensuremath{\textsf{nop}}}
\newcommand{\inp}{\ensuremath{\textsf{inp}}}
\newcommand{\out}{\ensuremath{\textsf{out}}}
\newcommand{\set}{\ensuremath{\textsf{set}}}
\newcommand{\res}{\ensuremath{\textsf{res}}}
\newcommand{\swap}{\ensuremath{\textsf{swap}}}
\newcommand{\free}{\ensuremath{\textsf{free}}}
\newcommand{\used}{\ensuremath{\textsf{used}}}
\newcommand{\exit}{\ensuremath{\textsf{exit}}}
\newcommand{\enter}{\ensuremath{\textsf{enter}}}
\newcommand{\keepo}{\ensuremath{\textsf{keep}^+}}
\newcommand{\keepze}{\ensuremath{\textsf{keep}^-}}
\begin{document}
\title{The Complexity of Synthesizing \nop-Equipped Boolean Nets from $g$-Bounded Inputs \\ (Technical Report)
}
\author{Ronny Tredup}
\institute{Universit\"at Rostock, Institut f\"ur Informatik, Theoretische Informatik, Albert-Einstein-Stra\ss e 22, 18059, Rostock}
\maketitle

\begin{abstract}
Boolean Petri nets equipped with \nop\ allow places and transitions to be independent by being related by $\nop$.
We characterize for any fixed $g\in \mathbb{N}$ the computational complexity of synthesizing \nop-equipped Boolean Petri nets from labeled directed graphs whose states have at most $g$ incoming and at most $g$ outgoing arcs.
\end{abstract}

\section{Introduction}%

Boolean Petri nets are a basic model for the description of distributed and concurrent systems.
These nets allow at most one token on each place $p$ in every reachable marking.
Thus, $p$ is considered a Boolean condition that is \emph{true} if $p$ is marked and \emph{false} otherwise.
A place $p$ and a transition $t$ of a Boolean Petri net $N$ are related by one of the following Boolean \emph{interactions}: 
\emph{no operation} (\nop), \emph{input} (\inp), \emph{output} (\out), \emph{unconditionally set to true} (\set), \emph{unconditionally
reset to false} (\res), \emph{inverting} (\swap), \emph{test if true} (\used), and \emph{test if false} (\free).
The relation between $p$ and $t$ determines which conditions $p$ must satisfy to allow $t$'s firing and which impact has the firing of $t$ on $p$:
The interaction $\inp$ ($\out$) defines that $p$ must be \emph{true} (\emph{false}) first and \emph{false} (\emph{true}) after $t$ has fired.
If $p$ and $t$ are related by $\free$ ($\used$) then $t$'s firing proves that $p$ is \emph{false} (\emph{true}).
The interaction $\nop$ says that $p$ and $t$ are independent, that is, neither need $p$ to fulfill any condition nor has the firing of $t$ any impact on $p$.
If $p$ and $t$ are related by $\res$ ($\set$) then $p$ can be both \emph{false} or \emph{true} but after $t$'s firing it is \emph{false} (\emph{true}).
Also, the interaction $\swap$ does not require that $p$ satisfies any particular condition to enable $t$.
Here, the firing of $t$ inverts $p$'s Boolean value.

Boolean Petri nets are classified by the interactions of $I$ that they use to relate places and transitions.
More exactly, a subset $\tau\subseteq I$ is called a \emph{type of net} and a net $N$ is of type $\tau$ (a $\tau$-net, for short) if it applies at most the interactions of $\tau$.
So far, research has explicitly discussed seven Boolean Petri net types:
\emph{Elementary net systems} $\{\nop, \inp, \out\}$~\cite{DBLP:conf/ac/RozenbergE96},  \emph{Contextual nets} $\{\nop, \inp, \out, \used, \free\}$~\cite{DBLP:journals/acta/MontanariR95}, \emph{event/condition nets} $\{\nop, \inp, \out, \used\}$~\cite{DBLP:series/txtcs/BadouelBD15}, \emph{inhibitor nets} $\{\nop, \inp, \out, \free\}$~\cite{DBLP:conf/apn/Pietkiewicz-Koutny97}, \emph{set nets} $\{\nop, \inp, \set, \used\}$~\cite{DBLP:journals/acta/KleijnKPR13}, \emph{trace nets} $\{\nop, \inp, \out, \set, \res, \used, \free\}$~\cite{DBLP:journals/acta/BadouelD95}, and \emph{flip flop nets} $\{\nop, \inp, \out, \swap\}$~\cite{DBLP:conf/stacs/Schmitt96}.
However, since we have eight interactions to choose from, there are actually a total of 256 different types.

This paper addresses the computational complexity of the $\tau$-\emph{synthesis} problem. 
It consists in deciding whether a given directed labeled graph $A$, also called \emph{transition system}, is isomorphic to the reachability graph of a $\tau$-net $N$ and in constructing $N$ if it exists.
It has been shown that $\tau$-\emph{synthesis} is NP-complete if $\tau=\{\nop,\inp,\out\}$ ~\cite{DBLP:journals/tcs/BadouelBD97}, even if the inputs are strongly  restricted~\cite{DBLP:conf/concur/TredupR18,DBLP:conf/apn/TredupRW18}. 
On the contrary, in~\cite{DBLP:conf/stacs/Schmitt96}, it has been shown that it becomes polynomial if $\tau=\{\nop, \inp, \out, \swap\}$.
These opposing results motivate the question which interactions of $I$ make the synthesis problem hard and which make it tractable.
In our previous work of \cite{DBLP:journals/corr/abs-1909-05968,DBLP:conf/tamc/TredupR19,DBLP:conf/apn/TredupR19}, we answer this question partly and reveal the computational complexity of 120 of the 128 types that allow \nop.

In this paper, we investigate for fixed $g\in \mathbb{N}$ the computational complexity of $\tau$-synthesis restricted to $g$-bounded inputs, that is, every state of $A$ has at most $g$ incoming and at most $g$ outgoing arcs.
On the one hand, inputs of practical applications tend to have a low bound $g$ such as benchmarks of digital hardware design~\cite{jordiCortadella2017}.
On the other hand, considering restricted inputs hopefully gives us a better understanding of the problem's hardness.
Thus, $g$-bounded inputs are interesting from both the practical and the theoretical point of view.
In this paper, we completely characterize the complexity of $\tau$-synthesis restricted to $g$-bounded inputs for all types that allow places and transitions to be independent, that is, which contain $\nop$.
Figure~\ref{fig:summary} summarizes our findings:
For the types of \S 1 and \S 2, we showed hardness of synthesis without restriction in \cite{DBLP:conf/tamc/TredupR19}.
In this paper, we strengthen these results to $2$- and $3$-bounded inputs, respectively, and show that these bounds are tight.
The hardness result of the types of \S 3 origins from \cite{DBLP:conf/apn/TredupR19}.
This paper shows that a bound less than $2$ makes synthesis tractable.
Hardness for the types of \S 4 to \S 8 has been shown for $2$-bounded inputs in \cite{DBLP:conf/apn/TredupR19}.
In this paper, we strengthen this results to $1$-bounded inputs.
The hardness part for the types of \S 9 origin from \cite{DBLP:journals/corr/abs-1909-05968}.
In this paper, we argue that the bound $2$ is tight.
Finally, while the results of \S 10 are new, the ones of \S 11 have been found in \cite{DBLP:conf/tamc/TredupR19}.

For all considered types $\tau$, the corresponding hardness results are based on a reduction of the so called \emph{cubic monotone one-in-three 3SAT} problem~\cite{DBLP:journals/dcg/MooreR01}.
All reductions follow a common approach that represents clauses by directed labelled paths.
Thus, this paper also contributes a very general way to prove NP-completeness of synthesis of Boolean types of nets.

\begin{figure}[t!]\centering
\begin{tabular}{ c | p{8cm}| c| c| c}
$\S$ &Type of net $\tau$ & $g$ & Complexity & \#
\\ \hline
 %
 1& \multirow{2}*{\pbox{20cm}{$\{\nop, \inp, \free\}$, $\{\nop, \inp, \used, \free\}$, \\ $\{\nop, \out, \used\}$,  $\{\nop, \out, \used, \free\}$}} & $\geq 2$ & NP-complete & 4 \\   &  & $ < 2$ & polynomial &  \\ \hline
2 & \multirow{2}*{$\{\nop, \set, \res \} \cup \omega$ and $\emptyset \not=\omega \subseteq \{\used, \free\}$} & $\geq 3$ & NP-complete  & 3 \\  & & $ < 3$ & polynomial &  \\   \hline
\multirow{4}*{3} & \multirow{2}*{\pbox{20cm}{$\{\nop,\inp,\set\}$, $\{\nop,\inp,\set,\used\}$, \\ $\{\nop,\inp,\res,\set\}\cup\omega$ and $\omega\subseteq \{\out,\used,\free\}$, \\ $\{\nop,\out,\res\}$, $\{\nop,\out,\res,\free\}$, \\ $\{\nop,\out,\res,\set\}\cup\omega$ and $\omega\subseteq \{\inp,\used,\free\}$}} & $\geq 2$ & NP-complete & 16 \\  && $ < 2$ & polynomial&  \\ &&&&  \\ &&&&\\\hline
4 & $\{\nop, \inp, \out,\set \} \cup \omega$ or $\{\nop, \inp, \out,\res \} \cup \omega$ and \newline $\omega\subseteq \{\used,\free\}$ & $\geq 1$ & NP-complete & 8\\ \hline
 5 & $\{\nop, \inp,\set, \free \} $, $\{\nop, \inp,\set, \used, \free \} $, \newline $\{\nop, \out, \res, \used\} $, $\{\nop, \out, \res, \used,\free \} $  & $\geq 1$ &  NP-complete &  4 \\ \hline
6 & $\{\nop, \inp, \res,\swap \} \cup \omega$ or $\{\nop, \out, \set,\swap \} \cup \omega$ and \newline $\omega \subseteq \{\used, \free\}$ & $\geq 1$ & NP-complete  &  8  \\ \hline 
7 & $\{\nop, \inp, \set,\swap \} \cup \omega$ and $\omega \subseteq \{\out, \res,\used, \free\}$,  \newline $\{\nop, \out, \res,\swap \} \cup \omega$ and $\omega \subseteq \{\inp, \set,\used, \free\}$ & $\geq 1$ & NP-complete  & 28  \\ \hline 
8 & $\{\nop, \inp, \out \} \cup \omega$ and $\omega \subseteq \{\used, \free\}$& $\geq 1$& NP-complete  & 4  \\ \hline
9&\multirow{2}*{\pbox{20cm}{$\{\nop,\set,\swap\}\cup\omega$, $\{\nop,\res,\swap\}\cup\omega$,\\ $\{\nop,\res,\set, \swap\}\cup\omega$ and $\emptyset\not=\omega\subseteq\{\used,\free\}$}} & $\geq 2$ & NP-complete  & 9  \\ && $ < 2$ & polynomial &   \\ \hline
10& $\{\nop, \inp\}$, $\{\nop,\inp,\used\}$, $\{\nop,\out\}$, $\{\nop,\out,\free\}$ \newline
$\{\nop,\set,\swap\}$, $\{\nop,\res,\swap\}$, $\{\nop,\set,\res\}$,  \newline
$\{\nop,\set,\res,\swap\}$  & $\geq0$ & polynomial & 8\\\hline
11 & $\{\nop, \res\} \cup \omega$ and $\omega \subseteq \{\inp, \used, \free\}$, \newline 
 $\{\nop, \set\} \cup \omega$ and  $\omega \subseteq \{\out, \used, \free\}$, \newline 
 $\{\nop, \swap\} \cup \omega$ and  $\omega \subseteq \{\inp, \out, \used, \free\}$, \newline 
 $\{\nop\} \cup \omega$ and  $\omega \subseteq \{\used, \free\}$&  $\geq0$   & polynomial & 36
\end{tabular}
\caption{The computational complexity of Boolean net synthesis from $g$-bounded TS for all types that contain $\nop$.}
\label{fig:summary}
\end{figure}

\section{Preliminaries}%


\textbf{Transition Systems.}
A \emph{transition system} (TS, for short) $A=(S,E, \delta)$ is a directed labeled graph with states $S$, events $E$ and partial \emph{transition function} $\delta: S\times E \longrightarrow S$, where $\delta(s,e)=s'$ is interpreted as $s\edge{e}s'$.
For $s\edge{e}s'$ we say $s$ is a source and $s'$ is a sink of $e$, respectively.
An event $e$ \emph{occurs} at a state $s$, denoted by $s\edge{e}$, if $\delta(s,e)$ is defined.
An \emph{initialized} TS $A=(S,E,\delta, s_0)$ is a TS with a distinct state $s_0\in S$ where every state $s\in S$ is \emph{reachable} from $s_0$ by a directed labeled path.
TSs in this paper are \emph{deterministic} by design as their state transition behavior is given by a (partial) function.
Let $g\in \mathbb{N}$. 
An initialized TS $A$ is called $g$-bounded if for all $s\in S(A)$ the number of incoming and outgoing arcs at $s$ is restricted by $g$: $\vert \{e\in E(A) \mid  \edge{e}s\}\vert \leq g$ and $\vert \{e\in E(A) \mid  s\edge{e}\}\vert \leq g$. 

\begin{figure}\centering
\begin{tabular}{c|c|c|c|c|c|c|c|c}
$x$ & $\nop(x)$ & $\inp(x)$ & $\out(x)$ & $\set(x)$ & $\res(x)$ & $\swap(x)$ & $\used(x)$ & $\free(x)$\\ \hline
$0$ & $0$ & & $1$ & $1$ & $0$ & $1$ & & $0$\\
$1$ & $1$ & $0$ & & $1$ & $0$ & $0$ & $1$ & \\
\end{tabular}
\caption{
All interactions in $I$.
An empty cell means that the column's function is undefined on the respective $x$.
The entirely undefined function is missing in $I$.
}\label{fig:interactions}
\end{figure}

\textbf{Boolean Types of Nets~\cite{DBLP:series/txtcs/BadouelBD15}.}
The following notion of Boolean types of nets serves as vehicle to capture many Boolean Petri nets in a uniform way.
A \emph{Boolean type of net} $\tau=(\{0,1\},E_\tau,\delta_\tau)$ is a TS such that $E_\tau$ is a subset of the Boolean interactions:
$E_\tau \subseteq I = \{\nop, \inp, \out, \set, \res, \swap, \used, \free\}$. 
The interactions $i \in I$ are binary partial functions $i: \{0,1\} \rightarrow \{0,1\}$ as defined in Figure~\ref{fig:interactions}.
For all $x\in \{0,1\}$ and all $i\in E_\tau$ the transition function of $\tau$ is defined by $\delta_\tau(x,i)=i(x)$.
Notice that $I$ contains all binary partial functions $\{0,1\} \rightarrow \{0,1\}$ except for the entirely undefined function $\bot$. 
Even if a type $\tau$ includes $\bot$, this event can never occur, so it would be useless.
Thus, $I$ is complete for deterministic Boolean types of nets, and that means there are a total of 256 of them.
By definition, a Boolean type $\tau$ is completely determined by its event set $E_\tau$.
Hence, in the following we identify $\tau$ with $E_\tau$, cf. Figure~\ref{fig:type_of_nets}.
Moreover, for readability, we group interactions by $\enter = \{\out,\set,\swap\}$, $\exit = \{\inp,\res,\swap\}$, $\keepo = \{\nop,\set,\used\}$, and $\keepze = \{\nop,\res,\free\}$.

\begin{figure}[h!]

\centering
\begin{tikzpicture}[scale = 1.2]
\begin{scope}
\node (0) at (0,0) {\nscale{$0$}};
\node (1) at (2,0) {\nscale{$1$}};

\path (0) edge [->, out=-120,in=120,looseness=5] node[left, align =left] {\escale{$\nop$} \\ \escale{\free} \\ \escale{\res} } (0);
\path (1) edge [<-, out=60,in=-60,looseness=5] node[right, align=left] {\escale{$\nop$}  } (1);

\path (0) edge [<-, bend right= 30] node[below] {\escale{$\res, \swap$}} (1);
\path (0) edge [->, bend left= 30] node[above] {\escale{$\out,\swap$}} (1);
\end{scope}
\begin{scope}[xshift=4.5cm]
\node (0) at (0,0) {\nscale{$0$}};
\node (1) at (2,0) {\nscale{$1$}};

\path (0) edge [->, out=-120,in=120,looseness=5] node[left, align =left] {\escale{$\nop$} } (0);
\path (1) edge [<-, out=60,in=-60,looseness=5] node[right, align=left] {\escale{$\nop$} \\ \escale{\used} \\ \escale{\set}} (1);

\path (0) edge [<-, bend right= 30] node[below] {\escale{$\inp,\swap$}} (1);
\path (0) edge [->, bend left= 30] node[above] {\escale{$\set,\swap$}} (1);
\end{scope}
\end{tikzpicture}
\caption{
Left: $\tau=\{\nop, \out, \res, \swap, \free\}$.
Right: $\tilde{\tau}=\{\nop, \inp, \set, \swap, \used\}$.
$\tau$ and $\tilde{\tau}$ are isomorphic.
The isomorphism $\phi: \tau\rightarrow \tilde{\tau}$ is given by $\phi(s)=1-s$ for $s\in \{0,1\}$, $\phi(i)=i$ for $i\in \{\nop,\swap\}$, $\phi(\out)=\inp$, $\phi(\res)=\set$ and $\phi(\free)=\used$.}
\label{fig:type_of_nets}
\end{figure}
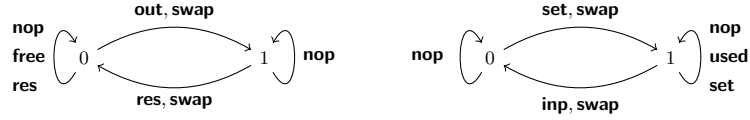

\textbf{$\tau$-Nets.}
Let $\tau\subseteq I$.
A Boolean Petri net $N = (P, T, H_0, f)$ of type $\tau$, ($\tau$-net, for short) is given by finite and disjoint sets $P$ of places and $T$ of transitions, an initial marking $H_0: P\longrightarrow  \{0,1\}$, and a (total) flow function $f: P \times T \rightarrow \tau$. 
A $\tau$-net realizes a certain behavior by firing sequences of transitions: 
A transition $t \in T$ can fire in a marking $M: P\longrightarrow  \{0,1\}$ if $\delta_\tau(M(p), f(p,t))$ is defined for all $p\in P$.
By firing, $t$ produces the next marking $M' : P\longrightarrow  \{0,1\}$ where $M'(p)=\delta_\tau(M(p), f(p,t))$ for all $p\in P$. 
This is denoted by $M \edge{t} M'$.
Given a $\tau$-net $N=(P, T, H_0, f)$, its behavior is captured by a transition system $A_N$, called the reachability graph of $N$.
The state set of $A_N$ consists of all markings that, starting from initial state $H_0$, can be reached by firing a sequence of transitions.
For every reachable marking $M$ and transition $t \in T$ with $M \edge{t} M'$ the state transition function $\delta$ of $A$ is defined as $\delta(M,t) = M'$.

\textbf{$\tau$-Regions.}
Let $\tau\subseteq I$.
If an input $A$ of $\tau$-synthesis allows a positive decision then we want to construct a corresponding $\tau$-net $N$ purely from $A$.
Since $A$ and $A_N$ are isomorphic, $N$'s transitions correspond to $A$'s events.
However, the notion of a place is unknown for TSs.
So called regions mimic places of nets:
A $\tau$-region of a given $A=(S, E, \delta, s_0)$ is a pair $(sup, sig)$ of \emph{support} $sup: S \rightarrow S_\tau = \{0,1\}$ and \emph{signature} $sig: E\rightarrow E_\tau = \tau$ where every transition $s \edge{e} s'$ of $A$ leads to a transition $sup(s) \edge{sig(e)} sup(s')$ of $\tau$.
While a region divides $S$ into the two sets $sup^{-1}(b) = \{s \in S \mid sup(s) = b\}$ for $b \in \{0,1\}$, the events are cumulated by $sig^{-1}(i) = \{e \in E \mid sig(e) = i\}$ for all available interactions $i \in \tau$.
We also use $sig^{-1}(\tau') = \{e \in E \mid sig(e) \in \tau'\}$ for $\tau' \subseteq \tau$.
A region $(sup, sig)$ models a place $p$ and the corresponding part of the flow function $f$.
In particular, $sig(e)$ models $f(p,e)$ and $sup(s)$ models $M(p)$ in the marking $M\in RS(N)$ corresponding to $s\in S(A)$. 
Every set $\mathcal{R} $ of $\tau$-regions of $A$ defines the \emph{synthesized $\tau$-net} $N^{\mathcal{R}}_A=(\mathcal{R}, E, f, H_0)$ with flow function $f((sup, sig),e)=sig(e)$ and initial marking $H_0((sup, sig))=sup(s_{0})$ for all $(sup, sig)\in \mathcal{R}, e\in E$.
It is well known that $A_{N^{\mathcal{R}}_A }$ and $A$ are isomorphic if and only if $\mathcal{R}$'s regions solve certain separation atoms \cite{DBLP:series/txtcs/BadouelBD15}, to be introduced next.
A pair $(s, s')$ of distinct states of $A$ define a \emph{state separation atom} (SSP atom, for short).
A $\tau$-region $R=(sup, sig)$ \emph{solves} $(s,s')$ if $sup(s)\not=sup(s')$.
The meaning of $R$ is to ensure that $N^{\mathcal{R}}_A$ contains at least one place $R$ such that $M(R)\not=M'(R)$ for the markings $M$ and $M'$ corresponding to $s$ and $s'$, respectively.
If there is a $\tau$-region that solves $(s,s')$ then $s$ and $s'$ are called \emph{$\tau$-solvable}.
If every SSP atom of $A$ is $\tau$-solvable then $A$ has the \emph{$\tau$-state separation property} ($\tau$-SSP, for short).
A pair $(e,s)$ of event $e\in E $ and state $s\in S$ where $e$ does not occur at $s$, that is $\neg s\edge{e}$, define an \emph{event state separation atom} (ESSP atom, for short).
A $\tau$-region $R=(sup, sig)$ \emph{solves} $(e,s)$ if $sig(e)$ is not defined on $sup(s)$ in $\tau$, that is, $\neg \delta_\tau(sup(s), sig(e))$.
The meaning of $R$ is to ensure that there is at least one place $R$ in $N^{\mathcal{R}}_A$ such that $\neg M\edge{e}$ for the marking $M$ corresponding to $s$.
If there is a $\tau$-region that solves $(e,s)$ then $e$ and $s$ are called \emph{$\tau$-solvable}.
If every ESSP atom of $A$ is $\tau$-solvable then $A$ has the \emph{$\tau$-event state separation property} ($\tau$-ESSP, for short).
A set $\mathcal{R}$ of $\tau$-regions of $A$ is called $\tau$-\emph{admissible} if for every of $A$'s (E)SSP atoms there is a $\tau$-region $R$ in $\mathcal{R}$ that solves it.
The following lemma, borrowed from \cite[p.163]{DBLP:series/txtcs/BadouelBD15}, summarizes the already implied connection between the existence of $\tau$-admissible sets of $A$ and (the solvability of) $\tau$-synthesis:
\begin{lemma}[\cite{DBLP:series/txtcs/BadouelBD15}]\label{lem:admissible} 
A TS $A$ is isomorphic to the reachability graph of a $\tau$-net $N$ if and only if there is a $\tau$-admissible set $\mathcal{R}$ of $A$ such that $N=N^{\mathcal{R}}_A$.
\end{lemma}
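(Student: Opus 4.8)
The statement is the standard soundness-and-completeness theorem of region theory, so the plan is to prove the two implications separately, in each case exploiting that a place of a $\tau$-net and a $\tau$-region of $A$ are two descriptions of the same datum. The bridge in both directions is the correspondence that sends a place $p$ (resp. a region $R=(sup,sig)$) to the assignment recording, for every reachable marking $M$ resp. state $s$, the bit $M(p)$ resp. $sup(s)$, and, for every event $e$, the interaction $f(p,e)$ resp. $sig(e)$. Accordingly, the $\tau$-SSP will be used to guarantee that states of $A$ and markings of the net stay in bijection, while the $\tau$-ESSP will be used to guarantee that no event fires where it should not.

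For the direction from left to right I would assume a $\tau$-net $N=(P,T,H_0,f)$ together with an isomorphism $A\cong A_N$, and use it to identify $S(A)$ with $RS(N)$ and $E(A)$ with $T$, writing $M_s$ for the marking corresponding to the state $s$ (so $M_{s_0}=H_0$). For each place $p\in P$ I define $R_p=(sup_p,sig_p)$ by $sup_p(s)=M_s(p)$ and $sig_p(e)=f(p,e)$, and first verify that $R_p$ is a $\tau$-region: every arc $s\edge{e}s'$ of $A$ is a step $M_s\edge{e}M_{s'}$ of $N$, and the firing rule gives $M_{s'}(p)=\delta_\tau(M_s(p),f(p,e))$, that is, $sup_p(s)\edge{sig_p(e)}sup_p(s')$ in $\tau$. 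Setting $\mathcal{R}=\{R_p\mid p\in P\}$, the definitions of the flow function and initial marking of $N^{\mathcal{R}}_A$ give $N=N^{\mathcal{R}}_A$ under the identification $R_p\leftrightarrow p$. Admissibility of $\mathcal{R}$ can then be read off directly: for an SSP atom $(s,s')$ the distinct markings $M_s\neq M_{s'}$ differ on some place $p$, so $R_p$ solves it; for an ESSP atom $(e,s)$ with $\neg s\edge{e}$, the marking $M_s$ does not enable $e$ in $N$, hence some place $p$ has $\delta_\tau(M_s(p),f(p,e))$ undefined, and $R_p$ solves $(e,s)$.

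For the converse I would fix a $\tau$-admissible set $\mathcal{R}$ together with the associated net $N=N^{\mathcal{R}}_A$, and build the candidate isomorphism $\iota\colon S(A)\to RS(N)$ by $\iota(s)(R)=sup(s)$ for every $R=(sup,sig)\in\mathcal{R}$. The region property makes $\iota$ a homomorphism of transition systems: if $s\edge{e}s'$, then $\delta_\tau(sup(s),sig(e))=sup(s')$ holds in every region, so $e$ is enabled at $\iota(s)$ and its firing yields $\iota(s')$. Injectivity of $\iota$ is exactly the $\tau$-SSP, since distinct states $s,s'$ are separated by some region and therefore have $\iota(s)\neq\iota(s')$. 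By construction $\iota(s_0)=H_0$, so the image of $\iota$ contains the initial marking and is closed under every arc inherited from $A$.

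The step I expect to be the crux is surjectivity onto $RS(N)$ together with the absence of spurious arcs, and this is precisely where the $\tau$-ESSP is indispensable. I would argue by induction on the length of firing sequences: assuming $M=\iota(s)$ is reachable and $M\edge{e}M'$ is a step of $N$, I must produce a state $s'$ of $A$ with $s\edge{e}s'$ and $\iota(s')=M'$. The danger is that $e$ fires at $M$ although $e$ does not occur at $s$; but then $(e,s)$ is an ESSP atom, and admissibility supplies a region $R=(sup,sig)$ with $\neg\delta_\tau(sup(s),sig(e))$, meaning $e$ is disabled at $M=\iota(s)$, a contradiction. Hence $s\edge{e}s'$ exists, and the homomorphism property forces $\iota(s')=M'$. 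Since $RS(N)$ is by definition the least set of markings containing $H_0$ and closed under firing, this simultaneously shows that $\iota$ is onto and that every arc of $A_N$ between images of states stems from an arc of $A$; combined with injectivity it follows that $\iota$ is an isomorphism $A\cong A_{N^{\mathcal{R}}_A}$. The only bookkeeping that needs care is keeping the identification of events of $A$ with transitions of $N$ consistent throughout, and noticing that SSP rules out state collapse while ESSP rules out extra behaviour, the two properties bundled in $\tau$-admissibility being exactly what the isomorphism requires.
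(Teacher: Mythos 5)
The paper does not actually prove this lemma: it is imported verbatim (with citation, from p.~163 of the Badouel--Bernardinello--Darondeau book) and used as a black box, so there is no in-paper argument to compare yours against. Your proof is precisely the standard region-theoretic argument that the citation points to, and it is correct: in one direction every place $p$ is turned into a region via $sup_p(s)=M_s(p)$, $sig_p(e)=f(p,e)$, with enabledness failures and marking differences of $N$ witnessing the solution of ESSP and SSP atoms; in the converse direction the map $\iota(s)(R)=sup(s)$ is a TS-homomorphism by the region property, injective by the $\tau$-SSP, and surjective without spurious arcs by the induction over firing sequences in which the $\tau$-ESSP is exactly what forbids $e$ to fire at $\iota(s)$ when $\neg s\edge{e}$. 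The only point worth flagging is cosmetic: in the forward direction the assignment $p\mapsto R_p$ need not be injective (two distinct places can induce the identical region), so $N=N^{\mathcal{R}}_A$ holds only up to collapsing duplicate places --- a looseness already present in the lemma's own formulation and harmless, since duplicate places do not affect the reachability graph.
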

We say a $\tau$-net $N$ $\tau$-\emph{solves} $A$ if $A_N$ and $A$ are isomorphic.
By Lemma~\ref{lem:admissible}, deciding if $A$ is $\tau$-solvable reduces to deciding whether it has the $\tau$-(E)SSP.
Moreover, it is easy to see that if $\tau$ and $\tilde{\tau}$ are isomorphic then deciding the $\tau$-(E)SSP reduces to deciding the $\tilde{\tau}$-(E)SSP:
\begin{lemma}[Without proof]
\label{lem:isomorphy}
If $\tau$ and $\tilde{\tau}$ are isomorphic types of nets then a TS $A$ has the $\tau$-(E)SSP if and only if $A$ has the $\tilde{\tau}$-(E)SSP.
\end{lemma}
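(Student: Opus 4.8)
The plan is to lift a type isomorphism $\phi$ to a bijection between the $\tau$-regions and the $\tilde{\tau}$-regions of $A$ that preserves solvability of every separation atom. First I would make precise what ``isomorphic types'' means: since a Boolean type is itself a two-state TS, an isomorphism $\phi\colon\tau\to\tilde{\tau}$ is a pair of bijections $\phi_S\colon\{0,1\}\to\{0,1\}$ and $\phi_E\colon\tau\to\tilde{\tau}$ such that $\delta_\tau(x,i)$ is defined if and only if $\delta_{\tilde{\tau}}(\phi_S(x),\phi_E(i))$ is defined, and in that case $\phi_S(\delta_\tau(x,i))=\delta_{\tilde{\tau}}(\phi_S(x),\phi_E(i))$. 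The example in Figure~\ref{fig:type_of_nets}, with $\phi_S(s)=1-s$, is exactly such a $\phi$.

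Given this, the central construction is: for a $\tau$-region $R=(sup,sig)$ of $A$, define $\tilde{R}=(\phi_S\circ sup,\ \phi_E\circ sig)$. The first key step is to check that $\tilde{R}$ is a genuine $\tilde{\tau}$-region. Take any transition $s\edge{e}s'$ of $A$. Because $R$ is a region, $sup(s)\edge{sig(e)}sup(s')$ is a transition of $\tau$; applying the ``defined-iff-defined and commuting'' property of $\phi$ then yields $\phi_S(sup(s))\edge{\phi_E(sig(e))}\phi_S(sup(s'))$ in $\tilde{\tau}$, which is precisely the region condition for $\tilde{R}$. Since $\phi^{-1}=(\phi_S^{-1},\phi_E^{-1})$ is again a type isomorphism, applying the same construction to $\phi^{-1}$ inverts $R\mapsto\tilde{R}$, so this map is a bijection between the $\tau$-regions and the $\tilde{\tau}$-regions of $A$.

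Next I would verify that $R$ and $\tilde{R}$ solve exactly the same atoms. For an SSP atom $(s,s')$ this is immediate from injectivity of $\phi_S$: $sup(s)\neq sup(s')$ holds iff $\phi_S(sup(s))\neq\phi_S(sup(s'))$. For an ESSP atom $(e,s)$ it is the ``defined-iff-defined'' clause: $sig(e)$ is undefined on $sup(s)$ in $\tau$ exactly when $\phi_E(sig(e))$ is undefined on $\phi_S(sup(s))$ in $\tilde{\tau}$. Hence $(s,s')$ (respectively $(e,s)$) is $\tau$-solvable iff it is $\tilde{\tau}$-solvable. Since the (E)SSP is just the assertion that every atom is solvable, and the atoms of $A$ themselves do not depend on the type, $A$ has the $\tau$-SSP (respectively $\tau$-ESSP) iff it has the $\tilde{\tau}$-SSP (respectively $\tilde{\tau}$-ESSP), which proves both directions of the lemma simultaneously.

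The routine core — the region-transfer step and the atom-preservation checks — is genuinely easy once the definition is fixed, and it is symmetric in $\tau$ and $\tilde{\tau}$ via $\phi^{-1}$. I expect the only point needing care to be the very first one, namely spelling out ``isomorphic types of nets'' as a TS isomorphism that both \emph{preserves} and \emph{reflects} definedness of transitions (not merely relabels events), since every subsequent claim is a direct consequence of $\phi_S$ and $\phi_E$ being bijections commuting with $\delta_\tau$ and $\delta_{\tilde{\tau}}$.
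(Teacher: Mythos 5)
Your proof is correct and follows exactly the argument the paper leaves implicit (the lemma is stated without proof, but the caption of Figure~\ref{fig:regions} sketches precisely this translation $sup''=\phi\circ sup$, $sig''=\phi\circ sig$ of regions along the isomorphism of Figure~\ref{fig:type_of_nets}). Composing support and signature with the two components of the type isomorphism, and using that the isomorphism preserves and reflects definedness of $\delta_\tau$, is the intended one-line justification, so nothing is missing.
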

In particular, we benefit from the isomorphisms that map \nop\ to \nop, \swap\ to \swap, \inp\ to \out, \set\ to \res, \used\ to \free, and vice versa.
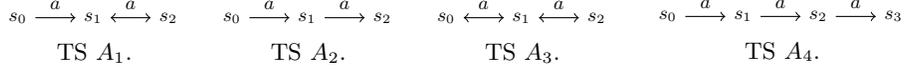
\begin{figure}[h!]
\begin{minipage}{0.2\textwidth}
\begin{center}
\begin{tikzpicture}[new set = import nodes]
\begin{scope}[nodes={set=import nodes}]

\node (0) at (0,0) {\nscale{$s_0$}};
\node (1) at (1,0) {\nscale{$s_1$}};
\node (2) at (2,0) {\nscale{$s_2$}};
\graph {
 (0)->[ "\escale{$a$}"] (1)<->["\escale{$a$}"] (2);%
};
\end{scope}
\end{tikzpicture}
TS $A_1$.
\end{center}
\end{minipage}
\hspace{0.2cm}
\begin{minipage}{0.2\textwidth}
\begin{center}
\begin{tikzpicture}[new set = import nodes]
\begin{scope}[nodes={set=import nodes}]

\node (0) at (0,0) {\nscale{$s_0$}};
\node (1) at (1,0) {\nscale{$s_1$}};
\node (2) at (2,0) {\nscale{$s_2$}};
\graph {
 (0)->[ "\escale{$a$}"] (1)->["\escale{$a$}"] (2);%
};
\end{scope}
\end{tikzpicture}
TS $A_2$.
\end{center}
\end{minipage}
\hspace{0.2cm}
\begin{minipage}{0.2\textwidth}
\begin{center}
\begin{tikzpicture}[new set = import nodes]
\begin{scope}[nodes={set=import nodes}]

\node (0) at (0,0) {\nscale{$s_0$}};
\node (1) at (1,0) {\nscale{$s_1$}};
\node (2) at (2,0) {\nscale{$s_2$}};
\graph {
 (0)<->[ "\escale{$a$}"] (1)<->["\escale{$a$}"] (2);%
};
\end{scope}
\end{tikzpicture}
TS $A_3$.
\end{center}
\end{minipage}
\hspace{0.2cm}
\begin{minipage}{0.3\textwidth}
\begin{center}
\begin{tikzpicture}[new set = import nodes]
\begin{scope}[nodes={set=import nodes}]

\node (0) at (0,0) {\nscale{$s_0$}};
\node (1) at (1,0) {\nscale{$s_1$}};
\node (2) at (2,0) {\nscale{$s_2$}};
\node (3) at (3,0) {\nscale{$s_3$}};
\graph {
 (0)->[ "\escale{$a$}"] (1)->["\escale{$a$}"] (2)->["\escale{$a$}"] (3);%
};
\end{scope}
\end{tikzpicture}
TS $A_4$.
\end{center}
\end{minipage}
\caption{Let $\tau=\{\nop,\set, \swap, \free\}$.
The TSs $A_1,\dots, A_4$ give examples for the presence and absence of the $\tau$-(E)SSP:
TS $A_1$ has the $\tau$-ESSP as $a$ occurs at every state.
It has also the $\tau$-SSP: 
The region $R=(sup, sig)$ where $sup(s_0)=sup(s_2)=1$, $sup(s_1)=0$ and $sig(a)=\swap$ separates the pairs $s_0,s_1$ and $s_2, s_1$.
Moreover, the region $R'=(sup', sig')$ where $sup'(s_0)=0$ and $sup'(s_1)=sup'(s_2)=1$ and $sig'(a)=\set$ separates $s_0$ and $s_1$. 
Notice that $R$ and $R'$ can be translated into $\tilde{\tau}$-regions, where $\tilde{\tau}=\{\nop, \res, \swap, \used\}$, via the isomorphism of Figure~\ref{fig:type_of_nets}.
For example, if $s\in S(A_1)$ and $e\in E(A_1)$ and $sup''(s)=\phi(sup(s))$ and $sig''(e)=\phi(sig(e))$ then the resulting $\tilde{\tau}$-region $R''=(sup'', sig'')$ separates $s_0, s_1$ and $s_2,s_1$.
Thus, $A_1$ has also $\tilde{\tau}$-(E)SSP. 
TS $A_2$ has the $\tau$-SSP but not the $\tau$-ESSP as event $a$ is not inhibitable at the state $s_2$.
TS $A_3$ has the $\tau$-ESSP ($a$ occurs at every state) but not the $\tau$-SSP as $s_1$ and $s_2$ are not separable.
TS $A_4$ has neither the $\tau$-ESSP nor the $\tau$-SSP.
}
\label{fig:regions}
\end{figure}

\section{Hardness Results}%

In this section, we show in accordance to the summary of Figure~\ref{fig:summary} for several types of nets $\tau\subseteq I$ and fixed $g\in \mathbb{N}$ that $\tau$-synthesis NP-complete even if the input TS $A$ is $g$-bounded.
Since $\tau$-synthesis is known to be in NP for all $\tau\subseteq I$ \cite{DBLP:conf/apn/TredupR19}, we restrict ourselves to the hardness part. 
All proofs base on a reduction of the problem \emph{cubic monotone one-in-three $3$-SAT} which has been shown to be NP-complete in~\cite{DBLP:journals/dcg/MooreR01}.
The input for this problem is a Boolean expression $\varphi=\{\zeta_0,\dots, \zeta_{m-1}\}$ of $m$ negation-free three-clauses $\zeta_i=\{X_{i_0}, X_{i_1}, X_{i_2}\}$ such that every variable $X\in V(\varphi)=\bigcup_{i=0}^{m-1}\zeta_i$ occurs in exactly three clauses.
Notice that the latter implies $\vert V(\varphi)\vert =m$.
Moreover, we assume without loss of generality that if $\zeta_i=\{X_{i_0}, X_{i_1}, X_{i_2}\}$ then $i_0 < i_1 < i_2$.
The question to answer is whether there is a subset $M\subseteq V(\varphi)$ with $\vert M\cap \zeta_i\vert =1$ for all $i\in \{0,\dots, m-1\}$.
For all considered types of nets $\tau$ and corresponding bounds $g$, we reduce a given instance $\varphi$ to a $g$-bounded TS $A^\tau_\varphi$ such that the following two conditions are true:
Firstly, the TS $A^\tau_\varphi$ has an ESSP atom $\alpha$ which is $\tau$-solvable if and only if there is a one-in-three model $M$ of $\varphi$.
Secondly, if the ESSP atom $\alpha$ is $\tau$-solvable then all ESSP and SSP atoms of $A^\tau_\varphi$ are also $\tau$-solvable. 
A reduction that satisfies these conditions proves the NP-hardness of $\tau$-synthesis as follows:
If $\varphi$ has a one-three-model then the conditions ensure that the TS $A^\tau_\varphi$ has the $\tau$-(E)SSP and thus is $\tau$-solvable.
Reversely, if $A^\tau_\varphi$ is $\tau$-solvable then, by definition, it has the $\tau$-ESSP.
In particular, there is a $\tau$-region that solves $\alpha$ which, by the first condition, implies that $\varphi$ has a one-in-three model.
Consequently, $A^\tau_\varphi$ is $\tau$-solvable if and only if  $\varphi$ has a one-in-three model.
For all types considered, the proof that the solvability of $\alpha$ implies the (E)SSP has been transferred to the appendix (section A), so as not to affect the readability of the paper. 

A key idea, applied by all reductions in one way or another, is the representation of every clause $\zeta_i=\{X_{i_0}, X_{i_1}, X_{i_2}\}$, $i\in \{0,\dots, m-1\}$, by a directed labeled path of $A^\tau_\varphi$ on which the variables of $\zeta_i$ occur as events:
\[ 
s_{i,0}\dots s_{i,j}\edge{X_{i_0}}s_{i,j+1}\dots s_{i,j'}\edge{X_{i_1}}s_{i,j'+1}\dots s_{i,j''}\edge{X_{i_2}}s_{i,j''+1}\dots s_{i,n
}\] 
The reductions ensure that if a $\tau$-region $(sup, sig)$ solves the atom $\alpha$ then $sup(s_{i,0})\not=sup(s_{i,n})$.
This makes the image of this path under $(sup, sig)$ a directed path from $0$ to $1$ or from $1$ to $0$ in $\tau$.
Thus, there has to be an event $e$ on the path that causes the state change from $sup(s_{i,0})$ to $sup(s_{i,n})$ by $sig(e)$.
We exploit this property and ensure that this state change is unambiguously done by (the signature of) exactly one variable event per clause.
As a result, the corresponding variable events define a searched model of $\varphi$ via their signature.
The proof of the following theorem gives a first example of this approach and Figure~\ref{fig:full_example} shows a full example reduction.
\begin{theorem}\label{the:nop_inp_free}
For any fixed $g\geq 2$, deciding if a $g$-bounded TS $A$ is $\tau$-solvable is NP-complete if $\tau=\{\nop,\inp,\free\}$, $\tau=\{\nop,\inp,\used, \free\}$, $\tau=\{\nop,\out,\used\}$ and $\tau=\{\nop,\out,\used, \free\}$.
\end{theorem}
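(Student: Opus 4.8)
The plan is to reduce the four listed types to two representatives and then give a single reduction that serves both. By Lemma~\ref{lem:isomorphy}, together with the isomorphism that exchanges $\inp$ with $\out$ and $\used$ with $\free$ while fixing $\nop$, the type $\{\nop,\inp,\free\}$ is equivalent to $\{\nop,\out,\used\}$ and $\{\nop,\inp,\used,\free\}$ is equivalent to $\{\nop,\out,\used,\free\}$ for deciding the (E)SSP. Hence it suffices to prove NP-hardness of $\tau$-solvability on $2$-bounded inputs for $\tau=\{\nop,\inp,\free\}$ and $\tau=\{\nop,\inp,\used,\free\}$; since a $2$-bounded TS is $g$-bounded for every $g\geq 2$, one construction settles all $g\geq 2$, and membership in NP is already known. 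The engine of both arguments is the observation, read off from Figure~\ref{fig:interactions}, that neither type contains an interaction realizing a $0\to 1$ transition: along any directed path the support of a $\tau$-region is non-increasing, and an ESSP atom $(e,s)$ can only be solved by $sig(e)=\inp$ at $sup(s)=0$, by $sig(e)=\free$ at $sup(s)=1$, or, in the richer type, by $sig(e)=\used$ at $sup(s)=0$.

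For a given instance $\varphi$ I would build a $2$-bounded TS $A^\tau_\varphi$ as follows. Each clause $\zeta_i=\{X_{i_0},X_{i_1},X_{i_2}\}$ is encoded by the length-three path $b_i \xrightarrow{X_{i_0}} c_{i,1} \xrightarrow{X_{i_1}} c_{i,2} \xrightarrow{X_{i_2}} c_{i,3}$ whose only events are the three globally shared variable events. A distinguished key event $k$ is used to \emph{pin} supports: every start state $b_i$ receives an outgoing $k$-arc into a fresh sink, and every end state $c_{i,3}$ receives an incoming $k$-arc from a fresh source. The clause gadgets are tied to a common initial state through an auxiliary distribution structure of in- and out-degree at most two (for instance a binary tree of fresh $\nop$-labelled events), so that the whole TS is initialized, connected and $2$-bounded: each $b_i$ carries exactly the two out-events $k$ and $X_{i_0}$, and each $c_{i,3}$ carries exactly the two in-events $X_{i_2}$ and $k$. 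The designated atom is $\alpha=(k,w)$ for a state $w$ that has an incoming but no outgoing $k$-arc. Checking the possible values of $sig(k)$ against Figure~\ref{fig:interactions} shows that in both types only $sig(k)=\inp$ solves $\alpha$: $\nop$ never inhibits, while $\free$ and $\used$ leave $sig(k)$ defined on $sup(w)$, whereas $\inp$ forces $sup(w)=0$ and is undefined there. Crucially, $sig(k)=\inp$ then pins $sup(b_i)=1$ (source of an $\inp$-arc) and $sup(c_{i,3})=0$ (sink of an $\inp$-arc) for \emph{every} clause simultaneously.

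It remains to read off the equivalence with one-in-three satisfiability. Given a region solving $\alpha$, monotonicity forces every clause path to descend from $sup(b_i)=1$ to $sup(c_{i,3})=0$, so it contains exactly one support-dropping transition; that transition is labelled $\inp$ and, since the clause path carries only variable events, it is a variable event applied at support $1$. A second $\inp$-variable on the same path would have to fire at support $0$, where $\inp$ is undefined, contradicting the region's validity; hence $M=\{X\in V(\varphi)\mid sig(X)=\inp\}$ meets every clause in exactly one variable and is a one-in-three model. Conversely, a model $M$ yields a valid region by setting $sig(k)=\inp$, $sig(X)=\inp$ for $X\in M$ and $sig(X)=\nop$ otherwise, making each clause path drop exactly at its unique $M$-variable; this region solves $\alpha$, and by the appendix the solvability of $\alpha$ propagates to all (E)SSP atoms, so by Lemma~\ref{lem:admissible} $A^\tau_\varphi$ is $\tau$-solvable. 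The main obstacle is the design of the key-event gadget: it must simultaneously rule out $\nop$, $\free$ and $\used$ as cheating signatures of $k$ in both types, pin all clause endpoints from the solvability of a single atom, and respect in- and out-degree two — which is exactly where the budget $g=2$ is spent — while the routine but lengthy verification that $\alpha$-solvability entails the full $\tau$-(E)SSP is deferred to the appendix.
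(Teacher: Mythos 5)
Your reduction is broken, and the flaw sits exactly in the step you wave off as routine: the claim that solvability of $\alpha$ propagates to all remaining (E)SSP atoms. A correct reduction must satisfy two conditions: (i) $\alpha$ is $\tau$-solvable iff $\varphi$ has a one-in-three model, and (ii) if $\alpha$ is solvable then \emph{every} (E)SSP atom of $A^\tau_\varphi$ is solvable. Your construction satisfies (i) but violates (ii), because you use a \emph{single} event $k$ to pin both ends of every clause path. Consider the ESSP atom $(k,c_{j,1})$, where $c_{j,1}$ is the state reached after the first variable of clause $\zeta_j$, i.e.\ $b_j\edge{X_{j_0}}c_{j,1}$ (this is a genuine atom, since $k$ does not occur at $c_{j,1}$). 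Since $\nop$ never inhibits, a solving region needs $sig(k)\in\{\inp,\used,\free\}$. If $sig(k)=\free$, then every source and every sink of a $k$-arc has support $0$, so $sup(b_j)=sup(c_{j,3})=0$; as neither type contains an interaction realizing a $0\to 1$ change, the whole clause path, including $c_{j,1}$, sits at support $0$, where $\free$ is defined -- no inhibition. Symmetrically, $sig(k)=\used$ pins the entire path to support $1$, where $\used$ is defined. So only $sig(k)=\inp$ remains, and then, exactly as in your hardness direction, $sup(b_i)=1$ and $sup(c_{i,3})=0$ hold for \emph{all} clauses simultaneously, so the $\inp$-labelled variables form a one-in-three model; moreover $sup(c_{j,1})=0$ forces that model to contain the specific variable $X_{j_0}$. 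Hence $(k,c_{j,1})$ is solvable iff $\varphi$ has a one-in-three model \emph{containing} $X_{j_0}$, which is strictly stronger than satisfiability.

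This is not a removable technicality. Take the instance of Figure~\ref{fig:full_example}: $\zeta_0=\{X_0,X_1,X_2\}$, $\zeta_1=\{X_0,X_2,X_3\}$, $\zeta_2=\{X_0,X_1,X_3\}$, $\zeta_3=\{X_2,X_4,X_5\}$, $\zeta_4=\{X_1,X_4,X_5\}$, $\zeta_5=\{X_3,X_4,X_5\}$. It is satisfiable ($M=\{X_0,X_4\}$), but no model contains $X_1$: from $X_1\in M$, clause $\zeta_0$ excludes $X_0,X_2$ and clause $\zeta_2$ excludes $X_3$, leaving $\zeta_1=\{X_0,X_2,X_3\}$ uncoverable. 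Since $X_1$ is the first variable of $\zeta_4$, the atom $(k,c_{4,1})$ of your TS is unsolvable, so your $A^\tau_\varphi$ is not $\tau$-solvable although $\varphi$ is a yes-instance; your map is therefore not a reduction from cubic monotone one-in-three $3$-SAT. This is precisely why the paper uses \emph{two} key events: $k_0$, forced to $\inp$, pins only the path \emph{starts} at support $1$, while $k_1$, forced to $\free$, pins only the path \emph{ends} at support $0$. The designated atom $(k_1,m_0)$ activates both constraints at once (yielding the model), but when solving the remaining atoms involving $k_0$ one may set $sig(k_1)=\nop$, and vice versa; this decoupling makes regions such as $sig^{-1}(\inp)=\{k_0,X_{i_0}\}$ with all other events neutral valid without any global model (cf.\ the region $R_{k_0,i}$ in the appendix). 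To repair your proof you would have to reintroduce this decoupling, i.e.\ split your single $k$ into two independently neutralizable events -- at which point you essentially recover the paper's construction.
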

\begin{figure}
\begin{center}
\begin{tikzpicture}[new set = import nodes]
\begin{scope}[nodes={set=import nodes}]%
			
		\foreach \i in {0,...,4} { \coordinate (0\i) at (0,\i*-0.9cm) ;}
		\foreach \i in {0,...,4} { \coordinate (1\i) at (1.8,\i*-0.9cm) ;}
		\foreach \i in {0,...,4} { \coordinate (2\i) at (3.6,\i*-0.9cm) ;}
		\foreach \i in {0,...,4} { \coordinate (3\i) at (5.4,\i*-0.9cm) ;}
		\foreach \i in {0,...,4} { \coordinate (4\i) at (7.2,\i*-0.9cm) ;}
		\foreach \i in {0,...,4} { \coordinate (5\i) at (9,\i*-0.9cm) ;}
		\foreach \i in {0,...,2} { \coordinate (m\i) at (10.8,\i*-0.9cm) ;}
		\foreach \i in {0,...,6} { \coordinate (b\i) at (\i*1.8, 0.9cm) ;}
		\foreach \i in {00,10,20,30,40,50,31,41,51,b0, b1,b2, b3, b4, b5, b6,m0} {\fill[red!15] (\i) circle (0.35cm);}
		
		\foreach \i in {0,...,4} { \node (0\i) at (0\i) {\nscale{$t_{0,\i}$}};}
		\node (05) at (0.9,0) {\nscale{$t_{0,5}$}};
		\foreach \i in {0,...,4} { \node (1\i) at (1\i) {\nscale{$t_{1,\i}$}};}
		\node (15) at (2.7,0) {\nscale{$t_{1,5}$}};
		\foreach \i in {0,...,4} { \node (2\i) at (2\i) {\nscale{$t_{2,\i}$}};}
		\node (25) at (4.5,0) {\nscale{$t_{2,5}$}};
		\foreach \i in {0,...,4} { \node (3\i) at (3\i) {\nscale{$t_{3,\i}$}};}
		\node (35) at (6.3,0) {\nscale{$t_{3,5}$}};
		\foreach \i in {0,...,4} { \node (4\i) at (4\i) {\nscale{$t_{4,\i}$}};}
		\node (45) at (8.1,0) {\nscale{$t_{4,5}$}};
		\foreach \i in {0,...,4} { \node (5\i) at (5\i) {\nscale{$t_{5,\i}$}};}
		\node (55) at (9.9,0) {\nscale{$t_{5,5}$}};
		\foreach \i in {0,...,2} { \node (m\i) at (m\i) {\nscale{$h_{\i}$}};}
		\foreach \i in {0,...,6} { \node (b\i) at (b\i) {\nscale{$\bot_\i$}};  }
\graph {
	(import nodes);
			00 ->[swap,"\escale{$X_0$}"]01->[swap,"\escale{$X_1$}"]02->[swap,"\escale{$X_2$}"]03->[swap,"\escale{$k_1$}"]04;  
			00 ->[swap,"\escale{$k_0$}"]05;
			10 ->[swap,"\escale{$X_0$}"]11->[swap,"\escale{$X_2$}"]12->[swap,"\escale{$X_3$}"]13->[swap,"\escale{$k_1$}"]14;  
			10 ->[swap,"\escale{$k_0$}"]15;
			20 ->[swap,"\escale{$X_0$}"]21->[swap,"\escale{$X_1$}"]22->[swap,"\escale{$X_3$}"]23->[swap,"\escale{$k_1$}"]24;  
			20 ->[swap,"\escale{$k_0$}"]25;
			30 ->[swap,"\escale{$X_2$}"]31->[swap,"\escale{$X_4$}"]32->[swap,"\escale{$X_5$}"]33->[swap,"\escale{$k_1$}"]34;  
			30 ->[swap,"\escale{$k_0$}"]35;
			40 ->[swap,"\escale{$X_1$}"]41->[swap,"\escale{$X_4$}"]42->[swap,"\escale{$X_5$}"]43->[swap,"\escale{$k_1$}"]44;  
			40 ->[swap,"\escale{$k_0$}"]45;
			50 ->[swap,"\escale{$X_3$}"]51->[swap,"\escale{$X_4$}"]52->[swap,"\escale{$X_5$}"]53->[swap,"\escale{$k_1$}"]54;  
			50 ->[swap,"\escale{$k_0$}"]55;
			m0 ->[swap,"\escale{$k_0$}"]m1->[swap,"\escale{$k_1$}"]m2;
			b0 ->["\escale{$\ominus_1$}"]b1->["\escale{$\ominus_2$}"]b2->["\escale{$\ominus_3$}"]b3->["\escale{$\ominus_4$}"]b4->["\escale{$\ominus_5$}"]b5->["\escale{$\ominus_6$}"]b6;
			b0 ->[swap,"\escale{$\oplus_0$}"]00; b1 ->[swap,"\escale{$\oplus_1$}"]10; b2 ->[swap,"\escale{$\oplus_2$}"]20;
			b3 ->[swap,"\escale{$\oplus_3$}"]30; b4 ->[swap,"\escale{$\oplus_4$}"]40; b5 ->[swap,"\escale{$\oplus_5$}"]50;
			b6 ->[swap,"\escale{$\oplus_6$}"]m0; 
			
			};
\end{scope}
\end{tikzpicture}
\end{center}
\caption{
The TS $A^\tau_\varphi$ for $\varphi=\{\zeta_0,\dots, \zeta_{5}\}$ with clauses $\zeta_0=\{X_0,X_1,X_2\},\ \zeta_1= \{X_0,X_2,X_3\},\ \zeta_2= \{X_0,X_1,X_3\},\ \zeta_3= \{X_2,X_4,X_5\},\ \zeta_4=\{X_1,X_4,X_5\},\ \zeta_5= \{X_3,X_4,X_5\}$ .
The red colored area sketches the $\tau$-region $(sup, sig)$ that solves $(k_1,m_0)$ and corresponds to the one-in-three model $M=\{X_0,X_4\}$.
}\label{fig:full_example}
\end{figure}

\begin{proof}
We argue for $\tau\in \{\{\nop,\inp,\free\}, \{\nop,\inp,\used, \free\} \}$, which by Lemma~\ref{lem:isomorphy} proves the claim for the other types, too.

Firstly, the TS $A^\tau_\varphi$ has the following gadget $H$ (left hand side) that provides the events $k_0, k_1$ and the atom $\alpha=(k_1, m_0)$.
Secondly, it has for every clause $\zeta_i=\{X_{i_0}, X_{i_1}, X_{i_2}\}$ the following gadget $T_i$ (right hand side) that applies $k_0$, $k_1$ and $\zeta_i's$ variables as events.
\begin{center}
\begin{tikzpicture}[new set = import nodes]
\begin{scope}[nodes={set=import nodes}]%

		\foreach \i in {0,...,2} { \coordinate (\i) at (\i*1.5cm,0) ;}
		\foreach \i in {0,...,2} { \node (\i) at (\i) {\nscale{$m_{\i}$}};}
		\graph {
	(import nodes);
			0 ->["\escale{$k_0$}"]1->["\escale{$k_1$}"]2;
			};
\end{scope}
\begin{scope}[xshift=4.5cm, nodes={set=import nodes}]%

		\foreach \i in {0,...,4} { \coordinate (\i) at (\i*1.3cm,0) ;}
		\foreach \i in {0,...,4} { \node (\i) at (\i) {\nscale{$t_{i,\i}$}};}
		\node (5) at (0,-1.2) {\nscale{$t_{i,5}$}};
\graph {
	(import nodes);
			0 ->["\escale{$X_{i_0}$}"]1->["\escale{$X_{i_1}$}"]2->["\escale{$X_{i_2}$}"]3->["\escale{$k_1$}"]4;  
			0 ->["\escale{$k_0$}"]5;
			};
\end{scope}
\end{tikzpicture}
\end{center}
Finally, $A^\tau_\varphi$ uses the states $\bot_0,\dots, \bot_m$ and events $\ominus_1,\dots \ominus_m$ and $\oplus_0,\dots,\oplus_m$ to join the gadgets $T_0,\dots, T_{m-1}$ and $H$ by $\bot_{i}\edge{\ominus_{i+1}}\bot_{i+1}$ and $\bot_i\edge{\oplus_i}t_{i,0}$, for all $i\in \{0,\dots, m-1\}$, and $\bot_m\edge{\oplus_m}h_0$, cf. Figure~\ref{fig:full_example}.

The gadget $H$ ensures that if $(sup, sig)$ is a region that solves $\alpha$ then $sup(h_0)=1$ and $sig(k_1)=\free$ which implies $sup(h_1)=0$ and $sig(k_0)=\inp$.
This is because $sig(k_1)\in \{\inp,\used\}$ and $sup(h_0)=0$ implies $sig(k_0)\in \{\out,\set,\swap\}$, which is impossible.
Consequently, $s\edge{k_0}$ and $s'\edge{k_1}$ imply $sup(s)=1$ and $sup(s')=0$, respectively.
The TS $A^\tau_\varphi$ uses these properties to ensure via $T_0,\dots, T_{m-1}$ that the region $(sup, sig)$ implies a one-in-three model of $\varphi$.

More exactly, if $i\in \{0,\dots, m-1\}$ then for $T_i$ we have by $t_{i,0}\edge{k_0}$ and $t_{i,3}\edge{k_1}$  that $sup(t_{i,0})=1$ and $sup(t_{i,3})=0$.
Thus, there is an event $X_{i_j}$, where $j\in \{0,1,2\}$, such that $sig(X_{i_j})=\inp$.
Clearly, if $sig(X_{i_j})=\inp$ then $sig(X_{i_\ell})\not=\inp$ for all $j < \ell\in \{0,1,2\}$ as $X_{i_\ell}$'s sources have a $0$-support.
Consequently, there is \emph{exactly one} variable event $X\in \zeta_i$ such that $sig(X)=\inp$.
Since $i$ was arbitrary, this is simultaneously true for all clauses $\zeta_0,\dots, \zeta_{m-1}$.
Thus, the set $M=\{X\in V(\varphi) \mid sig(X)=\inp\}$ is a one in three model of $\varphi$.

In reverse, if $\varphi$ is one-in-three satisfiable then there is a $\tau$-region $(sup, sig)$ of $A^\tau_\varphi$ that solves $\alpha$.
In particular, if $M$ is a one-in-three model of $\varphi$ then we first define $sup(\bot_0)=1$.
Secondly, for all $e\in E(A^\tau_\varphi)$ we define $sig(e)=\free$ if $e=k_1$, $sig(e)=\inp$ if $e\in \{k_0\}\cup M$ and else $sig(e)=\nop$.
Since $A^\tau_\varphi$ is reachable, by inductively defining $sup(s_{i+1})=\delta_\tau(sup(s_i), sig(e_i))$ for all paths $\bot_0 \edge{e_1}s_1\dots s_{n-1}\edge{e_n}s_{n}$, this defines a fitting region $(sup, sig)$, cf. Figure~\ref{fig:full_example}.

This proves that $\alpha$ is $\tau$-solvable if and only if $\varphi$ is one-in-three satisfiable.
\end{proof}
In the remainder of this section, we present the remaining hardness results in accordance to Figure~\ref{fig:summary} and the corresponding reductions that proves them.
\begin{theorem}\label{lem:nop_set_res+used_free}
For any fixed $g\geq 3$, deciding if a $g$-bounded TS $A$ is $\tau$-solvable is NP-complete if $\tau=\{\nop,\set,\res\}\cup\omega$ and $\emptyset\not=\omega\subseteq \{\used,\free\}$.
\end{theorem}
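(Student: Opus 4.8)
The plan is to mirror the reduction of Theorem~\ref{the:nop_inp_free}: from a cubic monotone one-in-three 3SAT instance $\varphi$ I build a $g$-bounded TS $A^\tau_\varphi$ together with a distinguished ESSP atom $\alpha$ such that $\alpha$ is $\tau$-solvable if and only if $\varphi$ has a one-in-three model, and then invoke the appendix result that solvability of $\alpha$ already entails the full $\tau$-(E)SSP. By Lemma~\ref{lem:isomorphy} and the isomorphism exchanging \set\ with \res\ and \used\ with \free, it suffices to treat the cases $\used\in\omega$, that is, $\tau\in\{\{\nop,\set,\res,\used\},\{\nop,\set,\res,\used,\free\}\}$; for the latter, self-isomorphy lets me assume the witness event is signed \used\ rather than \free.

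The decisive difference to Theorem~\ref{the:nop_inp_free} is that \set\ and \res\ are \emph{total}. This yields strong \emph{global} constraints on any region $(sup,sig)$: $sig(e)=\set$ forces $sup(s')=1$ at \emph{every} sink $s'$ of $e$, $sig(e)=\res$ forces every sink to $0$, whereas $sig(e)=\used$ forces $sup(s)=1$ at every source $s$ of $e$ and $sig(e)=\free$ forces every source to $0$, while $\nop$ equates the supports along each of its edges. In particular the only interactions that can \emph{inhibit} an event are \used\ (on a $0$-state) and \free\ (on a $1$-state). I therefore design the gadget $H$ so that solving $\alpha$ forces $sig(k_1)=\used$ and, through a second edge carrying $k_1$ as in Theorem~\ref{the:nop_inp_free}, simultaneously forces a companion event to be signed \set\ (and a further one \res); these \set/\res-signed events are what let me \emph{pin} the supports of the clause endpoints, since neither \used\ nor \free\ alone can pin a state to both $0$ and $1$.

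For each clause $\zeta_i=\{X_{i_0},X_{i_1},X_{i_2}\}$ I would use a gadget whose entry state is pinned to support $0$ and whose exit state is pinned to support $1$ (a source of the \used-signed $k_1$), so that the image of $\zeta_i$'s representation is a directed path from $0$ to $1$ in $\tau$. As only \set\ can raise the support, every clause then has \emph{at least one} variable $X$ with $sig(X)=\set$, and I read off the candidate model $M=\{X\in V(\varphi)\mid sig(X)=\set\}$. The main obstacle --- and exactly the reason the tight bound rises from $2$ (cf.\ \S3) to $3$ --- is the converse \emph{at most one}: because \set\ is total and idempotent, two \set-variables in one clause are perfectly consistent with a $0\to 1$ path, so the one-shot argument of Theorem~\ref{the:nop_inp_free} (where \inp\ fires only from a $1$-state and drives it to $0$) is unavailable. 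I expect to overcome this by realising the three variables of $\zeta_i$ on a degree-$3$ branching: from a support-$0$ state $c_i$, three arcs $c_i\edge{X_{i_j}}d_{i,j}$ lead to separate sinks with $sup(d_{i,j})=1$ exactly when $sig(X_{i_j})=\set$, and an auxiliary \used/\free-test around the $d_{i,j}$ turns ``two sinks carry a $1$'' into a support clash. Proving that this degree-$3$ construction enforces \emph{precisely} one \set-variable per clause is the delicate core of the argument and is also what makes the bound $g\geq 3$ tight.

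For the forward direction, given a one-in-three model $M$ I set $sig(k_1)=\used$, $sig(X)=\set$ for $X\in M$ and $sig(X)=\nop$ otherwise, choose the signatures of the joining and pinning events accordingly, and extend $sup$ inductively along the reachable paths from the initial state exactly as in Theorem~\ref{the:nop_inp_free}; the one-per-clause property makes this inductive definition consistent, and the resulting region solves $\alpha$. For the backward direction I read $M$ off the \set-signed variables and use the global sink/source constraints above to show $\lvert M\cap\zeta_i\rvert=1$ for every $i$. Finally, membership in NP is already known, and the announced appendix argument upgrades solvability of $\alpha$ to the full $\tau$-(E)SSP, which together establishes NP-completeness for every fixed $g\geq 3$.
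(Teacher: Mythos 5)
There is a genuine gap at what you yourself call the ``delicate core'': the mechanism enforcing \emph{at most one} marked variable per clause. Your sketch --- three arcs $c_i\edge{X_{i_j}}d_{i,j}$ out of a support-$0$ state plus ``an auxiliary \used/\free-test around the $d_{i,j}$'' that turns ``two sinks carry a $1$'' into a clash --- cannot work as stated. Region constraints are purely edge-local and conjunctive: if an auxiliary event $u$ occurs at (or is incident to) all three sinks, then $sig(u)=\used$ forces \emph{all} of $d_{i,0},d_{i,1},d_{i,2}$ to support $1$ and $sig(u)=\free$ forces all of them to $0$; no interaction in $\{\nop,\set,\res,\used,\free\}$ expresses ``at most one of these states has support $1$''. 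Mutual exclusion between two variables of a clause can only be obtained indirectly, by a propagation chain in which one variable's signature forces the signature of an intermediate event, which in turn forbids the signature of the other variables --- and your construction contains no such chain. This is exactly where the paper invests its main effort: per clause it builds \emph{four} path gadgets, three of them containing the variables in the three cyclic rotations on paths pinned from support $1$ down to support $0$ (forcing at least one variable with a $\res$-signature --- note the paper marks variables by $\res$, not $\set$), and a fourth pinned from $0$ up to $1$ over helper events $y_{3i},y_{3i+1},y_{3i+2}$ (forcing at least one $y$ with a $\set$-signature). Loops of the variables and of the $y$-events are attached to intermediate states of these paths so that $sig(X_{i_0})=\res$ forces $sig(y_{3i+1})\neq\set$ and $sig(y_{3i+2})\neq\set$, hence $sig(y_{3i})=\set$, which in turn forces $sig(X_{i_1})\neq\res$ and $sig(X_{i_2})\neq\res$. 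These loops (two loops plus path edges at one state) are also the actual source of the bound $g\geq 3$, not an out-branching of three clause edges from a single state as you conjecture. Separately, the paper also needs padding gadgets $F_j$ that force the interleaved events $z_j$ to be $\nop$-signed, so that the $0$-to-$1$ (resp.\ $1$-to-$0$) state change along a clause path can only be performed by a variable or $y$-event; your proposal has no analogue of this, yet without it the ``at least one'' direction already fails for path-style gadgets.

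A second, smaller problem: you plan to ``invoke the appendix result that solvability of $\alpha$ already entails the full $\tau$-(E)SSP''. That appendix argument is not a generic lemma; it is a case analysis over the concrete gadgets of the paper's $A^\tau_\varphi$. Since your transition system is a different construction, you would have to prove the analogous implication for it from scratch, which is a substantial (if routine) part of the overall NP-hardness argument.
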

\begin{proof}
The TS $A^\tau_\varphi$ has the following gadgets $H_0, H_1$ and $H_2$ (in this order): 
\begin{center}
\begin{tikzpicture}[new set = import nodes]
\begin{scope}[nodes={set=import nodes}]%
	\foreach \i in {0,...,2} {\coordinate (\i) at (\i*1.8cm,0);}
	\foreach \i in {0} {\fill[red!20, rounded corners] (\i) +(-0.4,-0.3) rectangle +(2.3,1);}
	\foreach \i in {0,...,2} {\node (t\i) at (\i) {\nscale{$h_{0,\i}$}};}
	
\path (t1) edge [->, out=130,in=50,looseness=3] node[above] {\nscale{$k_0$} } (t1);	
\path (t2) edge [->, out=130,in=50,looseness=3] node[above] {\nscale{$k_1$} } (t2);
\graph {
	(import nodes);
			t0 ->["\escale{$k_0$}"]t1;
			t1 ->["\escale{$k_1$}"]t2;
			};
\end{scope}
\begin{scope}[xshift=5.25cm,nodes={set=import nodes}]%
	\foreach \i in {0,...,1} {\coordinate (\i) at (\i*1.8cm,0);}
	\foreach \i in {0} {\fill[red!20, rounded corners] (\i) +(-0.4,-0.3) rectangle +(0.5,1);}
	\foreach \i in {0,...,1} {\node (t\i) at (\i) {\nscale{$h_{1,\i}$}};}
	
\path (t0) edge [->, out=130,in=50,looseness=3] node[above] {\nscale{$k_0$} } (t0);	
\path (t1) edge [->, out=130,in=50,looseness=3] node[above] {\nscale{$k_2$} } (t1);
\path (t1) edge [->, out=-130,in=-50,looseness=3] node[below] {\nscale{$k_1$} } (t1);
\graph {
	(import nodes);
			t0 ->["\escale{$k_2$}"]t1;
			};
\end{scope}
\begin{scope}[xshift=8.6cm,nodes={set=import nodes}]%
	\foreach \i in {0,...,1} {\coordinate (\i) at (\i*1.8cm,0);}
	\foreach \i in {1} {\fill[red!20, rounded corners] (\i) +(-0.5,-1) rectangle +(0.5,1);}
	\foreach \i in {0,...,1} {\node (t\i) at (\i) {\nscale{$h_{2,\i}$}};}
	
\path (t0) edge [->, out=130,in=50,looseness=3] node[above] {\nscale{$k_1$} } (t0);	
\path (t1) edge [->, out=130,in=50,looseness=3] node[above] {\nscale{$k_3$} } (t1);
\path (t1) edge [->, out=-130,in=-50,looseness=3] node[below] {\nscale{$k_0$} } (t1);
\graph {
	(import nodes);
			t0 ->["\escale{$k_3$}"]t1;
			};
\end{scope}
\end{tikzpicture}
\end{center}
The gadget $H_0$ provides the atom $\alpha=(k_0,h_{0,2})$.
By symmetry, a TS $A$ is $\{\nop,\set,\res,\used\}$-solvable if and only if it is $\{\nop,\set,\res,\free\}$-solvable or $\{\nop,\set,\res,\free, \used\}$-solvable.
Thus, in the following we assume $sig(k_0)=\used$ and $sup(h_{0,2})=0$ if $(sup, sig)$ $\tau$-solves $\alpha$.
As a result, by $sig(k_0)=\used$, implying $sup(h_{0,1})=1$, and $sup(h_{0,2})=0$ we have $sig(k_1)=\res$.
Especially, if $\edge{k_0}s$ then $sup(s)=1$ and if $\edge{k_1}s$ then $sup(s)=0$.
Thus, $sup(h_{1,0})=sup(h_{2,1})=1$ and $sup(h_{1,1})=sup(h_{2,0})=0$ which implies $sig(k_2)=\res$ and $sig(k_3)=\set$.

The construction uses $k_2$ and $k_3$ to produce some neutral events.
More exactly, the TS $A^\tau_\varphi$ implements for all $j\in \{0,\dots, 16m-1\}$  the following gadget $F_{j}$ that uses $k_2$ and $k_3$ to ensure that the events $z_j$ are neutral:
\begin{center}
\begin{tikzpicture}[new set = import nodes]
\begin{scope}[nodes={set=import nodes}]%
	\foreach \i in {0,...,4} {\coordinate (\i) at (\i*1.8cm,0);}
	\foreach \i in {3} {\fill[red!20, rounded corners] (\i) +(-0.5,-1) rectangle +(2.3,0.9);}
	\foreach \i in {0,...,4} {\node (t\i) at (\i) {\nscale{$f_{j,\i}$}};}
	
	\path (t1) edge [->, out=130,in=50,looseness=3] node[above] {\nscale{$z_j$} } (t1);
	\path (t1) edge [->, out=-130,in=-50,looseness=3] node[below] {\nscale{$k_2$} } (t1);
	\path (t2) edge [->, out=130,in=50,looseness=3] node[above] {\nscale{$c_{2j}$} } (t2);
	\path (t3) edge [->, out=130,in=50,looseness=3] node[above] {\nscale{$c_{2j+1}$} } (t3);
	\path (t4) edge [->, out=130,in=50,looseness=3] node[above] {\nscale{$z_j$} } (t4);
	\path (t4) edge [->, out=-130,in=-50,looseness=3] node[below] {\nscale{$k_3$} } (t4);
\graph {
	(import nodes);
			t0 ->["\escale{$z_j$}"]t1->["\escale{$c_{2j}$}"]t2->["\escale{$c_{2j+1}$}"]t3->["\escale{$z_j$}"]t4;
			};
\end{scope}
\end{tikzpicture}
\end{center}
By $sig(k_2)=\res$ and $sig(k_3)=\set$ we have $sup(f_{j,1})=0$ and $sup(f_{j,4})=1$.
This implies $\edge{sig(z_j)}0$ and $\edge{sig(z_j)}1$ and thus $sig(z_j)=\nop$.	

\noindent
Finally, for every $i\in \{0, \dots, m-1\}$ and clause $\zeta_i=\{X_{i_0}, X_{i_1}, X_{i_2}\}$, the TS $A^\tau_\varphi$ has the following four gadgets $T_{i,0}, T_{i,1}T_{i,2}$ and $T_{i,3}$ (in this order):
\begin{center}
\begin{tikzpicture}[new set = import nodes]
\begin{scope}[nodes={set=import nodes}]
	\foreach \i in {0,...,7} {\coordinate (\i) at (\i*1.6cm,0);}
	\foreach \i in {0} {\fill[red!20, rounded corners] (\i) +(-0.4,-0.9) rectangle +(2.15,0.9);}
	\foreach \i in {0,...,7} {\node (t\i) at (\i) {\nscale{$t_{i,0,\i}$}};}
	
\path (t0) edge [->, out=-130,in=-50,looseness=3] node[below, align =left] {\nscale{$k_0$} } (t0);	
\path (t1) edge [->, out=130,in=50,looseness=3] node[above, align =left] {\nscale{$z_{16i}$} } (t1);

\path (t2) edge [->, out=130,in=50,looseness=3] node[above, align =left] {\nscale{$X_{i_0}$} } (t2);
\path (t2) edge [->, out=-130,in=-50,looseness=3] node[below, align =left] {\nscale{$y_{3i+1}$} } (t2);

\path (t3) edge [->, out=130,in=50,looseness=3] node[above, align =left] {\nscale{$z_{16i+1}$} } (t3);

\path (t4) edge [->, out=130,in=50,looseness=3] node[above, align =left] {\nscale{$X_{i_1}$} } (t4);
\path (t4) edge [->, out=-130,in=-50,looseness=3] node[below, align =left] {\nscale{$y_{3i+2}$} } (t4);

\path (t5) edge [->, out=130,in=50,looseness=3] node[above, align =left] {\nscale{$z_{16i+2}$} } (t5);
\path (t6) edge [->, out=130,in=50,looseness=3] node[above, align =left] {\nscale{$X_{i_2}$} } (t6);

\path (t7) edge [->, out=130,in=50,looseness=3] node[above, align =left] {\nscale{$z_{16i+3} $} } (t7);
\path (t7) edge [->, out=-130,in=-50,looseness=3] node[below, align =left] {\nscale{$k_1$} } (t7);
\graph {
	(import nodes);
			t0 ->["\escale{$z_{16i}$}"]t1;
			t1 ->["\escale{$X_{i_0}$}"]t2;
			t2 ->["\escale{$z_{16i+1}$}"]t3;
			t3 ->["\escale{$X_{i_1}$}"]t4;
			t4 ->["\escale{$z_{16i+2}$}"]t5;
			t5 ->["\escale{$X_{i_2}$}"]t6;
			t6 ->["\escale{$z_{16i+3}$}"]t7;
			};
\end{scope}
\end{tikzpicture}
\end{center}
\begin{center}
\begin{tikzpicture}[new set = import nodes]
\begin{scope}[yshift=-2.25cm,nodes={set=import nodes}]
	\foreach \i in {0,...,7} {\coordinate (\i) at (\i*1.6cm,0);}
	\foreach \i in {0} {\fill[red!20, rounded corners] (\i) +(-0.4,-0.9) rectangle +(8.55,1);}

	\foreach \i in {0,...,7} {\node (t\i) at (\i) {\nscale{$t_{i,1,\i}$}};}
	
\path (t0) edge [->, out=-130,in=-50,looseness=3] node[below, align =left] {\nscale{$k_0$} } (t0);	
\path (t1) edge [->, out=130,in=50,looseness=3] node[above, align =left] {\nscale{$z_{16i+4}$} } (t1);

\path (t2) edge [->, out=130,in=50,looseness=3] node[above, align =left] {\nscale{$X_{i_1}$} } (t2);
\path (t2) edge [->, out=-130,in=-50,looseness=3] node[below, align =left] {\nscale{$y_{3i}$} } (t2);

\path (t3) edge [->, out=130,in=50,looseness=3] node[above, align =left] {\nscale{$z_{16i+5}$} } (t3);

\path (t4) edge [->, out=130,in=50,looseness=3] node[above, align =left] {\nscale{$X_{i_2}$} } (t4);
\path (t4) edge [->, out=-130,in=-50,looseness=3] node[below, align =left] {\nscale{$y_{3i+1}$} } (t4);

\path (t5) edge [->, out=130,in=50,looseness=3] node[above, align =left] {\nscale{$z_{16i+6}$} } (t5);
\path (t6) edge [->, out=130,in=50,looseness=3] node[above, align =left] {\nscale{$X_{i_0}$} } (t6);
\path (t7) edge [->, out=130,in=50,looseness=3] node[above, align =left] {\nscale{$z_{16i+7} $} } (t7);
\path (t7) edge [->, out=-130,in=-50,looseness=3] node[below, align =left] {\nscale{$k_1$} } (t7);
\graph {
	(import nodes);
			t0 ->["\escale{$z_{16i+4}$}"]t1;
			t1 ->["\escale{$X_{i_1}$}"]t2;
			t2 ->["\escale{$z_{16i+5}$}"]t3;
			t3 ->["\escale{$X_{i_2}$}"]t4;
			t4 ->["\escale{$z_{16i+6}$}"]t5;
			t5 ->["\escale{$X_{i_0}$}"]t6;
			t6 ->["\escale{$z_{16i+7}$}"]t7;
			};
\end{scope}
\end{tikzpicture}
\end{center}
\begin{center}
\begin{tikzpicture}[new set = import nodes]
\begin{scope}[yshift=-5cm,nodes={set=import nodes}]
	\foreach \i in {0,...,7} {\coordinate (\i) at (\i*1.6cm,0);}
	\foreach \i in {0} {\fill[red!20, rounded corners] (\i) +(-0.4,-0.9) rectangle +(5.35,1);}
	\foreach \i in {0,...,7} {\node (t\i) at (\i) {\nscale{$t_{i,2,\i}$}};}
	
\path (t0) edge [->, out=-130,in=-50,looseness=3] node[below, align =left] {\nscale{$k_0$} } (t0);	
\path (t1) edge [->, out=130,in=50,looseness=3] node[above, align =left] {\nscale{$z_{16i+8}$} } (t1);

\path (t2) edge [->, out=130,in=50,looseness=3] node[above, align =left] {\nscale{$X_{i_2}$} } (t2);
\path (t2) edge [->, out=-130,in=-50,looseness=3] node[below, align =left] {\nscale{$y_{3i}$} } (t2);

\path (t3) edge [->, out=130,in=50,looseness=3] node[above, align =left] {\nscale{$z_{16i+9}$} } (t3);

\path (t4) edge [->, out=130,in=50,looseness=3] node[above, align =left] {\nscale{$X_{i_0}$} } (t4);
\path (t4) edge [->, out=-130,in=-50,looseness=3] node[below, align =left] {\nscale{$y_{3i+2}$} } (t4);

\path (t5) edge [->, out=130,in=50,looseness=3] node[above, align =left] {\nscale{$z_{16i+10}$} } (t5);
\path (t6) edge [->, out=130,in=50,looseness=3] node[above, align =left] {\nscale{$X_{i_1}$} } (t6);

\path (t7) edge [->, out=130,in=50,looseness=3] node[above, align =left] {\nscale{$z_{16i+11} $} } (t7);
\path (t7) edge [->, out=-130,in=-50,looseness=3] node[below, align =left] {\nscale{$k_1$} } (t7);
\graph {
	(import nodes);
			t0 ->["\escale{$z_{16i+8}$}"]t1;
			t1 ->["\escale{$X_{i_2}$}"]t2;
			t2 ->["\escale{$z_{16i+9}$}"]t3;
			t3 ->["\escale{$X_{i_0}$}"]t4;
			t4 ->["\escale{$z_{16i+10}$}"]t5;
			t5 ->["\escale{$X_{i_1}$}"]t6;
			t6 ->["\escale{$z_{16i+11}$}"]t7;
			};
\end{scope}
\end{tikzpicture}
\end{center}
\begin{center}
\begin{tikzpicture}[new set = import nodes]
\begin{scope}[nodes={set=import nodes}]%
	\foreach \i in {0,...,7} {\coordinate (\i) at (\i*1.6cm,0);}
	\foreach \i in {2} {\fill[red!20, rounded corners] (\i) +(-0.6,-0.9) rectangle +(8.4,0.9);}
	\foreach \i in {0,...,7} {\node (t\i) at (\i) {\nscale{$t_{i,3,\i}$}};}
	
\path (t0) edge [->, out=-130,in=-50,looseness=3] node[below, align =left] {\nscale{$k_1$} } (t0);	
\path (t1) edge [->, out=130,in=50,looseness=3] node[above, align =left] {\nscale{$z_{16i+12} $} } (t1);

\path (t2) edge [->, out=130,in=50,looseness=3] node[above, align =left] {\nscale{$y_{3i}$} } (t2);

\path (t3) edge [->, out=130,in=50,looseness=3] node[above, align =left] {\nscale{$z_{16i+13} $} } (t3);

\path (t4) edge [->, out=130,in=50,looseness=3] node[above, align =left] {\nscale{$y_{3i+1}$} } (t4);

\path (t5) edge [->, out=130,in=50,looseness=3] node[above, align =left] {\nscale{$z_{16i+14} $} } (t5);
\path (t6) edge [->, out=130,in=50,looseness=3] node[above, align =left] {\nscale{$y_{3i+2}$} } (t6);

\path (t7) edge [->, out=130,in=50,looseness=3] node[above, align =left] {\nscale{$z_{16i+15} $} } (t7);
\path (t7) edge [->, out=-130,in=-50,looseness=3] node[below, align =left] {\nscale{$k_0$} } (t7);
\graph {
	(import nodes);
			t0 ->["\escale{$z_{16i+12} $}"]t1;
			t1 ->["\escale{$y_{3i}$}"]t2;
			t2 ->["\escale{$z_{16i+13} $}"]t3;
			t3 ->["\escale{$y_{3i+1}$}"]t4;
			t4 ->["\escale{$z_{16i+14} $}"]t5;
			t5 ->["\escale{$y_{3i+2}$}"]t6;
			t6 ->["\escale{$z_{16i+15} $}"]t7;
			};
\end{scope}
\end{tikzpicture}
\end{center}
The functionality of $T_{i,0},\dots, T_{i,4}$ is like this:
By $sig(k_0)=\used$ and $sig(k_1)=\res$ we have that $sup(t_{i,0,0})=sup(t_{i,1,0})=sup(t_{i,2,0})=sup(t_{i,3,7})=1$ and $sup(t_{i,0,7})=sup(t_{i,1,7})=sup(t_{i,2,7})=sup(t_{i,3,0})=0$.
Thus, by the neutrality of $z_{16i}, \dots, z_{16i+11}$, there has to be at least one variable event with a $\res$-signature.
Moreover, by $sup(t_{i,3,0})=0$ and $sup(t_{i,3,3})=1$ and the neutrality of $z_{16i+12}, \dots, z_{16i+15}$ there is an event of $y_{3i}, y_{3i+1}, y_{3i+2}$ with a $\set$-signature.
We argue that there is exactly one variable event with a \res-signature:
If $sig(X_{i_0})=\res$ then $sup(t_{i,0,2})=sup(t_{i,2,4})=0$ which implies $sig(y_{3i+1})\not=\set$ and $sig(y_{3i+2})\not=\set$ and thus $sig(y_{3i})=\set$.
By $sig(y_{3i})=\set$ we have $sup(t_{i,1,2})=sup(t_{i,2,2})=1$ which implies $sig(X_{i_1})\not=\res$ and $sig(X_{i_2})\not=\res$. 
Similarly, one argues that $sig(X_{i_1})=\res$ implies $sig(X_{i_0})\not=\res$ and $sig(X_{i_1})\not=\res$ and that $sig(X_{i_2})=\res$ requires $sig(X_{i_0})\not=\res$ and $sig(X_{i_1})\not=\res$.
Thus, the set $M=\{X\in V(\varphi) \mid sig(X)=\res\}$ is a one-in-three model of $\varphi$. 
This shows that the $\tau$-solvability of $A^\tau_\varphi$ implies the one-in-three satisfiability of $\varphi$. 

To join the gadgets finally build $A^\tau_\varphi$, we use the states $\bot=\{\bot_0,\dots, \bot_{20m+2}\}$ and the events $\oplus=\{\oplus_{0},\dots, \oplus_{20m+2}\}$ and $\ominus=\{\ominus_{1},\dots, \ominus_{20m+2}\}$.
The states of $\bot$ are connected by $\bot_j\edge{\ominus_{j+1}}\bot_{j+1}$ and $\bot_{j+1}\edge{\ominus_{j+1}}\bot_{j+1}$ for $j\in \{0,\dots, 20m+2\}$.
Let $x=16m+3$ and $y=19m+3$.
For all $i\in \{0,1,2\}$, for all $j\in \{3,\dots, 13m+2\}$, for all $\ell\in \{0,\dots, m-1\}$ and for all $n\in \{0,\dots, 2\}$ we add the following edges that connect the gadgets $H_0,H_1,H_2$ and $F_0,\dots, F_{16m-1}$ and $T_{0,0},T_{0,1},T_{0,2},\dots, T_{m-1,0},T_{m-1,1}T_{m-1,2}$ and $T_{0,3},\dots,T_{m-1,3}$ of $A^\tau_\varphi$:
\begin{center}
\begin{tikzpicture}[new set = import nodes]
\begin{scope}[nodes={set=import nodes}]%
	\node (0) at (0,0) {\nscale{$\bot_j$}};
	\node (1) at (1.5,0) {\nscale{$h_{j,0}$}};
		
	\path (1) edge [->, out=130,in=50,looseness=3] node[above] {\nscale{$\oplus_j$} } (1);

\graph {
	(import nodes);
			0 ->["\escale{$\oplus_j$}"]1;
			};
\end{scope}
\begin{scope}[xshift=3cm,nodes={set=import nodes}]%
	\node (0) at (0,0) {\nscale{$\bot_\ell$}};
	\node (1) at (1.5,0) {\nscale{$f_{\ell,0}$}};
		
	\path (1) edge [->, out=130,in=50,looseness=3] node[above] {\nscale{$\oplus_\ell$} } (1);

\graph {
	(import nodes);
			0 ->["\escale{$\oplus_\ell$}"]1;
			};
\end{scope}
\begin{scope}[xshift=6cm,nodes={set=import nodes}]%
	\node (0) at (0,0) {\nscale{$\bot_{x+3i+n}$}};
	\node (1) at (2,0) {\nscale{$t_{i,n,0}$}};
		
	\path (1) edge [->, out=130,in=50,looseness=3] node[above] {\nscale{$\oplus_{x+3i+n}$} } (1);

\graph {
	(import nodes);
			0 ->["\escale{$\oplus_{x+3i+n}$}"]1;
			};
\end{scope}
\begin{scope}[xshift=9.25cm,nodes={set=import nodes}]%
	\node (0) at (0,0) {\nscale{$\bot_{y+\ell}$}};
	\node (1) at (1.8,0) {\nscale{$t_{\ell,3,0}$}};
		
	\path (1) edge [->, out=130,in=50,looseness=3] node[above] {\nscale{$\oplus_{y+\ell}$} } (1);

\graph {
	(import nodes);
			0 ->["\escale{$\oplus_{y+\ell}$}"]1;
			};
\end{scope}
\end{tikzpicture}
\end{center}
If $M$ is a one-in-three model of $\varphi$ then $\alpha$ is $\tau$-solvable by a $\tau$-region $(sup, sig)$.
The red colored area above indicates already a positive support of some states.
In particular, if $s\in \{h_{0,0}, h_{1,0}, h_{2,1}\}$ or $\{f_{j,0} \mid j\in \{0,\dots, 16m-1\}\}$ then $sup(s)=1$.
The support values of the states of $T_{i,0},\dots, T_{i,3}$, where $i\in \{0,\dots, m-1\}$, are defined in accordance to which event of $X_{i_0}, X_{i_1}, X_{i_2}$ belongs to $M$.
The red colored area above sketches $X_{i_0}\in M$.
Moreover, we define $sup(s)=0$ for all $s\in \bot$.
Let $e\in E(A^\tau_\varphi)\setminus \oplus$.
We define $sig(e)=\used$ if $e=k_0$ and $sig(e)=\res$ if $e\in \{k_1\}\cup M$.
For all $i\in \{0,\dots, m-1\}$ and clauses $\{X_{i_0}, X_{i_1}, X_{i_2}\}$ and all $j\in \{0,1,2\}$ we set $sig(e)=\set$ if $e=y_{3i+j}$ and $X_{i_j}\in M$.
Otherwise, we define $sig(e)=\nop$.
Finally, for all events $e\in\oplus$ and edges $s\edge{e}s'$ of $A$ we define $sig(e)=\set$ if $sup(s')=1$ and, otherwise, $sig(e)=\nop$.
The resulting $\tau$-region $(sup, sig)$ of $A^\tau_\varphi$ solves $\alpha$.
\end{proof}

\begin{theorem}\label{the:nop_inp_set_2bounded}
For any fixed $g\geq 2$, deciding if a $g$-bounded TS $A$ is $\tau$-solvable is NP-complete if one of the following conditions is true:
\begin{enumerate}
\item
$\tau=\{\nop,\inp,\set\}$ or $\tau=\{\nop,\inp,\set,\used\}$ or $\tau=\{\nop,\inp,\res,\set\}\cup\omega$ and $\omega\subseteq\{\out,\used,\free\}$,
\item
$\tau=\{\nop,\out,\res\}$ or $\tau=\{\nop,\out,\res,\free\}$ or $\tau=\{\nop,\out,\res,\set\}\cup\omega$ and $\omega\subseteq \{\inp,\used,\free\}$.
\end{enumerate}
\end{theorem}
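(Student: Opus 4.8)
The plan is to reduce cubic monotone one-in-three 3SAT to $\tau$-synthesis restricted to $2$-bounded inputs, following the common template sketched above, and to produce a single TS $A^\tau_\varphi$ that works uniformly for the whole family. By Lemma~\ref{lem:isomorphy} together with the isomorphism sending $\inp$ to $\out$, $\set$ to $\res$, and $\used$ to $\free$, it suffices to treat the types of item~1; item~2 then follows verbatim. Since passing to a larger type only enlarges the set of available interactions, I would keep the construction essentially fixed for all types between $\{\nop,\inp,\set\}$ and $\{\nop,\inp,\res,\set,\out,\used,\free\}$. The whole difficulty is therefore concentrated in the forward implication: I must make ``$\alpha$ is $\tau$-solvable'' force a one-in-three model \emph{even when} the richer interactions of the larger types are on the table.

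First I would build an atom gadget $H$ supplying distinguished events $k_0,k_1$ and an ESSP atom $\alpha=(k_0,s)$. In the core type $\{\nop,\inp,\set\}$ the interaction $\inp$ is the only partial one, so every region solving $\alpha$ is forced to set $sig(k_0)=\inp$ and $sup(s)=0$; for the larger types I would add a small frame of auxiliary gadgets and invoke a symmetry argument exactly as the gadgets $H_0,H_1,H_2$ do in the proof of Theorem~\ref{lem:nop_set_res+used_free}, so as to pin $sig(k_0)$ (and hence, through $H$, $sig(k_1)$ and the supports of $H$'s states). The two handles I then exploit are that using $k_0$ as an \emph{outgoing} event forces its source to support $1$ (as $\inp$ is undefined on $0$), while making a state the \emph{target} of a suitable $k_0$- or $k_1$-transition forces that state to support $0$. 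With these I would realise every clause $\zeta_i=\{X_{i_0},X_{i_1},X_{i_2}\}$ as a directed path whose first state is forced to support $1$ and whose last state is forced to support $0$, so that its image in $\tau$ runs from $1$ to $0$.

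The heart of the argument is to guarantee that along each clause path the drop from support $1$ to support $0$ is caused by \emph{exactly one} variable event. To this end I would, as in Theorem~\ref{lem:nop_set_res+used_free}, insert neutralizing gadgets (built from forced self-loops at support-$0$ and support-$1$ states) that pin all auxiliary separator events to $\nop$, and I would give every variable $X$ a transition into a state whose support is forced to $0$; the latter excludes every up-moving interaction ($\set$ and, in the larger types, $\out$) from $sig(X)$, since such interactions always produce support $1$. Once all separators are $\nop$ and all variable signatures avoid up-movers, the support along a clause path is non-increasing, hence drops from $1$ to $0$ exactly once, and collecting for each clause the unique variable realising the drop yields $M\subseteq V(\varphi)$ with $\vert M\cap\zeta_i\vert=1$, i.e. a one-in-three model. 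I expect the main obstacle to be precisely the types $\{\nop,\inp,\res,\set\}\cup\omega$: because $\inp$ and $\res$ act identically on support $1$ and differ only on support $0$, a single $\res$ occurring \emph{after} the drop could masquerade as a second selected variable. Ruling this out is the job of a cross-checking layer of clause gadgets using cyclic variable orders together with dedicated $y$-events, mirroring the mutual-exclusion mechanism of Theorem~\ref{lem:nop_set_res+used_free}: selecting one variable forces its matching $y$-event to be an up-mover, which in turn forbids the remaining variables of the clause from realising a drop.

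For the converse I would start from a one-in-three model $M$ and define $(sup,sig)$ explicitly: $sig(k_0)=\inp$, the value $\nop$ on all separators and non-selected variables, the selecting value on $X\in M$, and the up-moving value on the $y$-events matching selected variables; the supports are then obtained by propagating $sup$ forward from the initial state along all paths, which is well defined because $A^\tau_\varphi$ is reachable and deterministic. Finally I would glue all gadgets through fresh states $\bot_j$ and events $\oplus_j,\ominus_j$ as in the previous reductions, arranging the branchings so that every state keeps in- and out-degree at most $2$; this makes $A^\tau_\varphi$ $2$-bounded, hence $g$-bounded for every fixed $g\geq 2$. That the $\tau$-solvability of $\alpha$ already entails the full $\tau$-(E)SSP is deferred to the appendix, as announced for all reductions of this section.
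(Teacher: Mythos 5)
Your proposal is sound, and it reaches the statement by a genuinely different construction than the paper's. The paper's proof of Theorem~\ref{the:nop_inp_set_2bounded} is loop-free: it forces the boundary supports of the clause gadgets through auxiliary event families $V\subseteq sig^{-1}(\enter)$ and $W\subseteq sig^{-1}(\keepze)$, whose signatures are pinned by the long gadget $H$ and propagated by the diamond gadgets $G_j$, and it obtains the ``exactly one variable per clause'' property from backward edges $x_{i_j}$ forming $2$-cycles with the variable edges in three cyclically rotated clause copies --- if $sig(X_{i_0})\in\{\inp,\res\}$ then $sig(x_{i_0})\in\{\out,\set\}$, which lifts states of the rotated copies to support $1$ and leaves the other two variables with both a $0$-sink and a $1$-sink, hence signature \nop. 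You instead force supports directly by $k_0$-edges (an $\inp$-source must carry $1$, an $\inp$-target $0$) and transplant the self-loop/$y$-event cross-checking of Theorem~\ref{lem:nop_set_res+used_free}; this attacks exactly the crux you correctly isolate (a total down-mover \res\ occurring after the drop masquerading as a second selected variable), and it can indeed be kept $2$-bounded, because each clause-path state needs only one $y$-loop besides its two path edges --- precisely where Theorem~\ref{lem:nop_set_res+used_free} spends degree $3$, since its states carry two loops each. What each route buys: yours is more uniform with the paper's other reductions and dispenses with the heavy $H$/$G_j$/$V$/$W$ machinery; the paper's avoids self-loops altogether and can delegate its (E)SSP appendix to an already-published argument for its specific gadgets. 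Two details you should still nail down. First, for $\omega$ containing $\out$ and exactly one of $\used,\free$ the $0/1$-swap is not an automorphism of $\tau$, so $sig(k_0)$ cannot literally be ``pinned'' to \inp; you must run the forcing argument a second time with $sig(k_0)=\out$, exchanging the roles of sources and sinks and of $\{\inp,\res\}$ versus $\{\out,\set\}$, exactly as the paper's ``by symmetry'' case does. Second, since your TS differs from the paper's, the deferred claim that solvability of $\alpha$ implies the full $\tau$-(E)SSP cannot be inherited from the reference the paper uses; it must be re-proved for your gadgets, the delicate part being SSP and ESSP atoms involving the self-looped clause states. Neither point undermines the approach; they mark where the remaining detail work lies.
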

\begin{proof}
We present a reduction for Condition~1 which, by Lemma~\ref{lem:isomorphy}, proves the claim for the other types, too.
The TS $A^\tau_\varphi$ has the following gadget $H$:
\begin{center}
\noindent
\begin{tikzpicture}[new set = import nodes, scale =0.87]
\begin{scope}[nodes={set=import nodes}]
\foreach \i in {0,...,6} {\coordinate (h0\i) at (\i*2cm,0);}
\foreach \i in {0,...,6} {\coordinate (h3m_1\i) at (\i*2cm,-3);}
\foreach \i in {0,...,6} {\coordinate (h3m\i) at (\i*2cm,-4.5);}
\foreach \i in {0,...,6} {\coordinate (h6m_1\i) at (\i*2cm,-7.5);}
\foreach \i in {h00,h03,h3m_10,h3m_13,h3m0,h3m3,h6m_10,h6m_13} {\fill[red!20, rounded corners] (\i) +(-0.6,-0.25) rectangle +(0.7,0.25);}
\foreach \i in {0,...,6} {\node (h0\i) at (h0\i) {\nscale{$h_{0,\i}$}};}
\foreach \i in {0,...,6} {\node (h3m_1\i) at (h3m_1\i) {\nscale{$h_{3m-1,\i}$}};}
\foreach \i in {0,...,6} {\node (h3m\i) at (h3m\i) {\nscale{$h_{3m,\i}$}};}
\foreach \i in {0,...,6} {\node (h6m_1\i) at (h6m_1\i) {\nscale{$h_{6m-1,\i}$}};}

\coordinate (d0) at(0,-1.5);
\coordinate (d1) at(12cm,-1.5cm);
\coordinate (r0_0) at(11,-0.75);
\coordinate (r0_1) at(6cm,-0.75cm);
\coordinate (r0_2) at(0.5cm,-0.75cm);
\node (d0) at (d0) {\nscale{$\vdots$}};
\node (d1) at (d1) {\nscale{$\vdots$}};
\draw[->, , rounded corners] (h06) -- (r0_0)--(r0_1)--node[above] { \escale{$r_{0}$}}(r0_2)-- (d0);
\coordinate (r3m_2_0) at(11cm,-2.25cm);
\coordinate (r3m_2_1) at(6cm,-2.25cm);
\coordinate (r3m_2_2) at(0.5cm,-2.25cm);
\draw[->, , rounded corners] (d1) -- (r3m_2_0)--(r3m_2_1)--node[above] { \escale{$r_{3m-2}$}}(r3m_2_2)-- (h3m_10);
\coordinate (r3m_1_0) at(11cm,-3.75cm);
\coordinate (r3m_1_1) at(6cm,-3.75cm);
\coordinate (r3m_1_2) at(0.5cm,-3.75cm);
\draw[->, , rounded corners] (h3m_16) -- (r3m_1_0)--(r3m_1_1)--node[above] { \escale{$r_{3m-1}$}}(r3m_1_2)-- (h3m0);
\coordinate (d2) at(0,-6);
\coordinate (d3) at(12cm,-6cm);
\coordinate (r3m_0) at(11,-5.25);
\coordinate (r3m_1) at(6cm,-5.25cm);
\coordinate (r3m_2) at(0.5cm,-5.25cm);
\node (d2) at (d2) {\nscale{$\vdots$}};
\node (d3) at (d3) {\nscale{$\vdots$}};
\draw[->, , rounded corners] (h3m6) -- (r3m_0)--(r3m_1)--node[above] { \escale{$r_{3m}$}}(r3m_2)-- (d2);
\coordinate (r6m_2_0) at(11cm,-6.75cm);
\coordinate (r6m_2_1) at(6cm,-6.75cm);
\coordinate (r6m_2_2) at(0.5cm,-6.75cm);
\draw[->, , rounded corners] (d3) -- (r6m_2_0)--(r6m_2_1)--node[above] { \escale{$r_{6m-2}$}}(r6m_2_2)-- (h6m_10);

\graph { 
%
(h00) ->["\escale{$k$}"] (h01) ->["\escale{$z_0$}"] (h02) ->["\escale{$v_0$}"] (h03) ->["\escale{$k$}"] (h04) ->["\escale{$q_0$}"] (h05)->["\escale{$z_0$}"](h06);
%
(h3m_10) ->["\escale{$k$}"] (h3m_11) ->["\escale{$z_{3m-1}$}"] (h3m_12) ->["\escale{$v_{3m-1}$}"] (h3m_13) ->["\escale{$k$}"] (h3m_14) ->["\escale{$q_{3m-1}$}"] (h3m_15)->["\escale{$z_{3m-1}$}"](h3m_16);
%
(h3m0) ->["\escale{$k$}"] (h3m1) ->["\escale{$w_0$}"] (h3m2) ->["\escale{$p_{0}$}"] (h3m3) ->["\escale{$k$}"] (h3m4) ->["\escale{$y_{0}$}"] (h3m5)->["\escale{$w_0$}"](h3m6);
%
(h6m_10) ->["\escale{$k$}"] (h6m_11) ->["\escale{$w_{3m-1}$}"] (h6m_12) ->["\escale{$p_{3m-1}$}"] (h6m_13) ->["\escale{$k$}"] (h6m_14) ->["\escale{$y_{3m-1}$}"] (h6m_15)->["\escale{$w_{3m-1}$}"](h6m_16);

(h06)->["\escale{$c_0$}"] (d1); 
(d1)->["\escale{$c_{3m-2}$}"] (h3m_16); 
(h3m_16)->["\escale{$c_{3m-1}$}"] (h3m6); 
(h3m6)->["\escale{$c_{3m}$}"] (d3); 
(d3)->["\escale{$c_{6m-2}$}"] (h6m_16); 
};
\end{scope}
\end{tikzpicture}
\end{center}
The intention of the gadget $H$ is to provide the atom $\alpha=(k, h_{0,6})$ and the events of $Z=\{z_0,\dots, z_{3m-1}\}$, $V=\{v_0,\dots, v_{3m-1}\}$ and $W=\{w_0,\dots, w_{3m-1}\}$.

Moreover, the TS $A^\tau_\varphi$ has the following two gadgets $F_0$ and $F_1$ and for all $j\in \{0,\dots, 6m-2\}$ the following gadget $G_j$ (in this order):
\begin{center}
\begin{tikzpicture}[new set = import nodes]
\begin{scope}[nodes={set=import nodes}]%
	\foreach \i in {0,...,4} {\coordinate (\i) at (\i*1.3cm,0);}
	\foreach \i in {0} {\fill[red!20, rounded corners] (\i) +(-0.4,-0.25) rectangle +(0.5,0.35);}
	\foreach \i in {3} {\fill[red!20, rounded corners] (\i) +(-1.8,-0.25) rectangle +(0.4,0.35);}
	\foreach \i in {0,...,4} {\node (n\i) at (\i) {\nscale{$f_{0,\i}$}};}
		
\end{scope}
\graph {
	(import nodes);
			n0 ->["\escale{$k$}"]n1;
			n1 ->["\escale{$n$}"]n2;
			n2 ->["\escale{$z_0$}"]n3;
			n3 ->["\escale{$k$}"]n4;
			};
\begin{scope}[nodes={set=import nodes}, xshift=6.3cm]%
		\foreach \i in {0,...,2} {\coordinate (\i) at (\i*1.3cm,0);}
		\foreach \i in {1} {\fill[red!20, rounded corners] (\i) +(-1.6,-0.25) rectangle +(0.5,0.35);}
		\foreach \i in {0,...,2} {\node (m\i) at (\i) {\nscale{$f_{1,\i}$}};}
\end{scope}
\graph {
	(import nodes);
			m0 ->["\escale{$q_0$}"]m1;
			m1 ->["\escale{$k$}"]m2;
		};
\begin{scope}[xshift=9.9cm,nodes={set=import nodes}]%

		\coordinate (1) at (1.5,0) ;	
		\foreach \i in {1} {\fill[red!20, rounded corners] (\i) +(-1.8,-0.2) rectangle +(0.3,0.3);}
		\node (0) at (0,0) {\nscale{$g_{i,0}$}};
		\node (1) at (1.5,0) {\nscale{$g_{i,1}$}};	
		\node (2) at (0,-1) {\nscale{$g_{i,2}$}};
		\node (3) at (1.5,-1) {\nscale{$g_{i,3}$}};
\graph {
	(import nodes);
			0 ->["\escale{$c_i$}"]1;
			0 ->[swap, "\escale{$k$}"]2;
			2 ->[swap,"\escale{$c_i$}"]3;
			1 ->["\escale{$k$}"]3;
			
			};
\end{scope}
\end{tikzpicture}
\end{center}
%
%
Finally, the TS $A^\tau_\varphi$ has for every clause $\zeta_i=\{X_{i_0}, X_{i_1}, X_{i_2}\}$, $i\in \{0,\dots, m-1\}$, the following gadgets $T_{i,0}, T_{i,1}$ and $T_{i,2}$ (in this order): 
%
\begin{center}
\begin{tikzpicture}[new set = import nodes,scale=0.9]
\begin{scope}[nodes={set=import nodes}]
		\coordinate (0) at (1,2);
		\foreach \i in {0} {\fill[red!20, rounded corners] (\i) +(-0.4,-0.25) rectangle +(0.5,0.35);}
		\coordinate(1) at (3,1.5);
		\foreach \i in {2,...,5} { \pgfmathparse{\i-2} \coordinate (\i) at (\pgfmathresult*1.8cm,0) ;}
		\foreach \i in {2} {\fill[red!20, rounded corners] (\i) +(-0.4,-0.7) rectangle +(0.5,0.8);}
		\foreach \i in {0,...,5} { \node (\i) at (\i) {\nscale{$t_{i,0,\i}$}};}
\end{scope}
\graph {
	(import nodes);
			0 ->["\escale{$k$}"]1;
			1 ->[swap, bend right =30, pos=0.7,"\escale{$v_{3i}$}"]2;
			1 ->[bend left=30,pos=0.7,"\escale{$w_{3i}$}"]5;
			2->[bend left=15 , "\escale{$X_{i_0}$}"]3;
			3 ->[bend left=15 , "\escale{$X_{i_1}$}"]4;
			4 ->[ bend left=15 , "\escale{$X_{i_2}$}"]5;
			3 ->[bend left=15 , "\escale{$x_{i_0}$}"]2;
			4 ->[bend left=15 , "\escale{$x_{i_1}$}"]3;
			5 ->[bend left=15 ,  "\escale{$x_{i_2}$}"]4;
	
	};
\begin{scope}[nodes={set=import nodes}, xshift=7cm]
		\coordinate (0) at (1,2);
		\foreach \i in {0} {\fill[red!20, rounded corners] (\i) +(-0.4,-0.25) rectangle +(0.5,0.35);}
		\coordinate(1) at (3,1.5);
		\foreach \i in {2,...,5} { \pgfmathparse{\i-2} \coordinate (\i) at (\pgfmathresult*1.8cm,0) ;}
		\foreach \i in {2} {\fill[red!20, rounded corners] (\i) +(-0.4,-0.7) rectangle +(4.25,0.8);}
		\foreach \i in {0,...,5} { \node (\i) at (\i) {\nscale{$t_{i,1,\i}$}};}
\end{scope}
\graph {
	(import nodes);
			0 ->["\escale{$k$}"]1;
			1 ->[swap, bend right =30, pos=0.7,"\escale{$v_{3i+1}$}"]2;
			1 ->[bend left=30,pos=0.7,"\escale{$w_{3i+1}$}"]5;
			2->[bend left=15 , "\escale{$X_{i_1}$}"]3;
			3 ->[bend left=15 , "\escale{$X_{i_2}$}"]4;
			4 ->[ bend left=15 , "\escale{$X_{i_0}$}"]5;
			3 ->[bend left=15 , "\escale{$x_{i_1}$}"]2;
			4 ->[bend left=15 , "\escale{$x_{i_2}$}"]3;
			5 ->[bend left=15 ,  "\escale{$x_{i_0}$}"]4;
	
	};
\end{tikzpicture}
\end{center}
\begin{center}
\begin{tikzpicture}[new set = import nodes,scale=0.9]
\begin{scope}[nodes={set=import nodes},xshift=3.5cm, yshift=-3.25cm]
		\coordinate (0) at (1,2);
		\foreach \i in {0} {\fill[red!20, rounded corners] (\i) +(-0.4,-0.25) rectangle +(0.5,0.35);}
		\coordinate(1) at (3,1.5);
		\foreach \i in {2,...,5} { \pgfmathparse{\i-2} \coordinate (\i) at (\pgfmathresult*1.8cm,0) ;}
		\foreach \i in {2} {\fill[red!20, rounded corners] (\i) +(-0.4,-0.7) rectangle +(2.4,0.8);}
		\foreach \i in {0,...,5} { \node (\i) at (\i) {\nscale{$t_{i,2,\i}$}};}
\end{scope}
\graph {
	(import nodes);
			0 ->["\escale{$k$}"]1;
			1 ->[swap, bend right =30, pos=0.7,"\escale{$v_{3i+2}$}"]2;
			1 ->[bend left=30,pos=0.7,"\escale{$w_{3i+2}$}"]5;
			2->[bend left=15 , "\escale{$X_{i_2}$}"]3;
			3 ->[bend left=15 , "\escale{$X_{i_0}$}"]4;
			4 ->[ bend left=15 , "\escale{$X_{i_1}$}"]5;

			3 ->[bend left=15 , "\escale{$x_{i_2}$}"]2;
			4 ->[bend left=15 , "\escale{$x_{i_0}$}"]3;
			5 ->[bend left=15 ,  "\escale{$x_{i_1}$}"]4;
	
	};
\end{tikzpicture}
\end{center} 
In the following, we argue that $H,F_0,F_1$ and $G_0,\dots, G_{m-2}$ collaborate like this:
If $(sup, sig)$ is a $\tau$-region solving $\alpha$ then either $sig(k)=\inp$, $V\subseteq sig^{-1}(\enter)$ and $W\subseteq  sig^{-1}(\keepze)$ or $sig(k)=\out$ and $V\subseteq sig^{-1}(\exit)$ and $W\subseteq sig^{-1}(\keepo)$.
Moreover, we prove that this implies by the functionality of $T_{0,0},\dots, T_{m-1,2}$ that $M=\{X\in V(\varphi) \mid sig(X)\not=\nop\}$ is a one-in-three model of $\varphi$.

Let $(sup, sig)$ be a $\tau$-region that solves $\alpha$.
Recall that there are basically four interactions possibly useful for $sig(k)$, namely $\inp,\out,\used,\free$.
Since the interactions $\res,\set,\swap,\nop$ are defined on both $0$ and $1$, they do not fit to solve $\alpha$.
If $sig(k)=\used$ then $sup(s)=sup(s')=1$ for every transition $s\edge{k}s'$.
Hence, we have $sup(f_{0,3})=sup(f_{1,1})=sup(h_{0,4})=1$.
By definition of $\inp, \res$ we have that $\edge{e}s$ and $sig(e)\in \{\inp,\res\}$ implies $sup(s)=0$.
Consequently, by $\edge{z_0}f_{0,3}$ and $\edge{q_0}f_{1,1}$ we get $sig(z_0), sig(q_0)\in \keepo$ and thus $sup(h_{0,4})=sup(h_{0,5})=sup(h_{0,6})=1$ which contradicts $\neg sup(h_{0,6})\edge{sig(k)}$.
Hence, $sig(k)\not=\used$.
Similarly, $sig(k) =\free$ implies $sup(h_{0,6})=0$, which is a contradiction.
Thus, we have that $sig(k)=\inp$ and $sup(h_{0,6})=0$ or $sig(k)=\out$ and  $sup(h_{0,6})=1$. 

As a next step, we show that $sig(k)=\inp$ and $sup(h_{0,6})=0$ implies $sig(v_0)\in \enter$ and $sig(z_0)\in \keepze$.
By $sig(k)=\inp$ and $\edge{k}h_{0,1}$ and $h_{0,3}\edge{k}$ we get $sup(h_{0,1})=0$ and $sup(h_{0,3})=1$.
Moreover, by $\edge{z_0}h_{0,6}$ and $sup(h_{0,6})=0$ we obtain $sig(z_0)\in\keepze$, which by $sup(h_{0,1})=0$ implies  $sup(h_{0,2})=0$.
Finally, $sup(h_{0,2})=0$ and $sup(h_{0,3})=1$ imply $sig(v_0)\in \enter$.
Notice that this reasoning purely bases on $sig(k)=\inp$ and $sup(h_{0,6})=0$.
Moreover, $A^\tau_\varphi$ uses for every $j\in \{0,\dots, 6m-2\}$ the TS $G_j$ to transfer ensure $sup(h_{0,6})=sup(h_{1,6})=\dots=sup(h_{6m-1,6})$.
This transfers $z_0\in \keepze$ and $v_0\in \enter $ to $V\subseteq \enter$ and $W\subseteq \keepze$.
In particular, by $sig(k)=\inp$ we have $sup(g_{i,0})=sup(g_{i,0})=1$ and $sup(g_{i,2})=sup(g_{i,3})=0$, that is, $sig(c_i)=\nop$.
Hence, if $sig(k)=\inp$ and $sup(h_{0,6})=0$ then $sup(h_{i,6})=0$ for all $i\in \{0,\dots, 6m-1\}$.
Perfectly similar to the discussion for $z_0$ and $v_0$ we obtain that $V\subseteq sig^{-1}(\enter)$ and $W\subseteq  sig^{-1}(\keepze)$, respectively.
Similarly, $sig(k)=\out$ and $sup(h_{0,6})=1$ imply $V\subseteq sig^{-1}(\exit)$ and $W\subseteq sig^{-1}(\keepo)$.

We now argue that $T_{i,0},\dots, T_{m-1,2}$ ensure that $M=\{X\in V(\varphi)\mid sig(X)\not=\nop\}$ is a one-in-three model of $\varphi$.
Let $i\in \{0,\dots, m-1\}$ and $sig(k)=\inp$ and $sup(h_{0,6})=0$ implying $V\subseteq sig^{-1}(\enter)$ and $W\subseteq  sig^{-1}(\keepze)$.
By $sig(k)=\inp$ and $V\subseteq sig^{-1}(\enter)$ and $W\subseteq  sig^{-1}(\keepze)$ we have that $sup(t_{i,0,2})=sup(t_{i,1,2})=sup(t_{i,2,2})=1$ and $sup(t_{i,0,5})=sup(t_{i,1,5})=sup(t_{i,2,5})=0$.
As a result, every event $e\in \{X_{i_0}, X_{i_1}, X_{i_2}\}$ has a $0$-sink, which implies $sig(e)\in \{\nop,\inp,\res\}$, and every event $e\in \{x_{i_0}, x_{i_1}, x_{i_2}\}$ has a $1$-sink, which implies $sig(e)\in \{\nop,\out,\set\}$.
By $sup(t_{i,0,2})=1$ and $sup(t_{i,0,5})=0$ there is a $X\in \{X_{i_0}, X_{i_1}, X_{i_2}\}$ such that $sig(X)\in \{\inp,\res\}$.
We argue that $sig(Y)=\nop$ for $Y\in \{X_{i_0}, X_{i_1}, X_{i_2}\}\setminus \{X\}$.
If $sig(X_{i_0})\in \{\inp,\res\}$ then $sup(t_{i,0,3})=0$ which implies $sig(x_{i_0})\in \{\out,\set\}$ and, therefor, $sup(t_{i,1,4})=1$.
Since $sig(X_{i_1}), sig(X_{i_2})\not\in \{\out, \set\}$ and $sig(x_{i_1}), sig(x_{i_2})\not\in \{\inp, \res\}$ we obtain that $sup(t_{i,0,3})=sup(t_{i,0,4})=0$ and $sup(t_{i,1,3})=sup(t_{i,1,4})=1$, respectively.
Thus, for all $e\in \{X_{i_1}, X_{i_2}\}$ there are edges $\edge{e}s$ and $\edge{e}s'$ such that $sup(s)=0$ and $sup(s')=1$, which implies $sig(e)=\nop$.
Similarly, if $sig(X_{i_1})=\inp$ then $sig(X_{i_0})=sig(X_{i_1})=\nop$ and if $sig(X_{i_1})=\inp$ then $sig(X_{i_0})=sig(X_{i_2})=\nop$.
Consequently, there is for every clause $\zeta_i$ exactly one variable event with a signature different from \nop, which makes $M=\{X\in V(\varphi)\mid sig(X)\not=\nop\}$ a one-in-three model of $\varphi$.
By symmetry, if $sig(k)=\out$ and $sup(h_{0,6})=1$ then $M$ is a one-in-three model of $\varphi$, too.

To join the gadgets and finally build $A^\tau_\varphi$, we use the states $\bot=\{\bot_0,\dots, \bot_{9m+1}\}$ and the events $\oplus=\{\oplus_{0},\dots, \oplus_{9m+1}\}$ and $\ominus=\{\ominus_{1},\dots, \ominus_{9m+1}\}$.
The states of $\bot$ are connected by $\bot_j\edge{\ominus_{j+1}}\bot_{j+1}$ for $j\in \{0,\dots, 9m+1\}$.
Let $x=6m+2$.
For all $i\in \{0,\dots, 6m-2\}$, for all $j\in \{0,\dots, m-1\}$ and for all $\ell\in \{0,\dots, 2\}$ we add the following edges that connect the gadgets $H_0, F_0,F_1, G_0,\dots, G_{6m-2}$ and $T_{0,0},T_{0,1},T_{0,2}$ up to $T_{m-1,0},T_{m-1,1},T_{m-1,2}$ to $A^\tau_\varphi$:
\begin{center}
\begin{tikzpicture}[new set = import nodes]
\begin{scope}[nodes={set=import nodes}]%
	\node (0) at (0,0) {\nscale{$\bot_0$}};
	\node (1) at (1.3,0) {\nscale{$h_{0,0}$}};
\graph {
	(import nodes);
			0 ->["\escale{$\oplus_0$}"]1;
			};
\end{scope}
\begin{scope}[xshift=2.2cm,nodes={set=import nodes}]%
	\node (0) at (0,0) {\nscale{$\bot_1$}};
	\node (1) at (1.3,0) {\nscale{$f_{0,0}$}};
\graph {
	(import nodes);
			0 ->["\escale{$\oplus_1$}"]1;
			};
\end{scope}
\begin{scope}[xshift=4.4cm,nodes={set=import nodes}]%
	\node (0) at (0,0) {\nscale{$\bot_2$}};
	\node (1) at (1.3,0) {\nscale{$f_{1,0}$}};
\graph {
	(import nodes);
			0 ->["\escale{$\oplus_2$}"]1;
			};
\end{scope}
\begin{scope}[xshift=6.6cm,nodes={set=import nodes}]%
	\node (0) at (0,0) {\nscale{$\bot_{i+3}$}};
	\node (1) at (1.5,0) {\nscale{$g_{i+3,0}$}};
\graph {
	(import nodes);
			0 ->["\escale{$\oplus_{i+3}$}"]1;
			};
\end{scope}
\begin{scope}[xshift=9.5cm,nodes={set=import nodes}]%
	\node (0) at (0,0) {\nscale{$\bot_{x+3\ell+n}$}};
	\node (1) at (2.1,0) {\nscale{$t_{\ell,n,0}$}};
\graph {
	(import nodes);
			0 ->["\escale{$\oplus_{x+3\ell+n}$}"]1;
			};
\end{scope}

\end{tikzpicture}
\end{center}

If $M$ is a one-in-three model of $\varphi$ then there is a $\tau$-region $(sup, sig)$ of $A^\tau_\varphi$ that solves $\alpha$.
The red colored area of the figures introducing the gadgets indicates already a positive support of some states.
In particular, if $s\in \{h_{j,0}, h_{j,3} \mid j\in \{0,\dots, 6m-1\}\}$ or $s\in \{f_{0,0}, f_{0,2}, f_{0,3}, f_{1,0},f_{1,1} \}$ $s\in \{g_{j,0}, g_{j,1} \mid j\in \{0,\dots, 6m-2\}\}$ then $sup(s)=1$.
The support values of the states of $T_{i,0},\dots, T_{i,2}$, where $i\in \{0,\dots, m-1\}$, are defined in accordance to which of the events $X_{i_0}, X_{i_1}, X_{i_2}$ belongs to $M$.
The red colored area above sketches the situation where $X_{i_0}\in M$.
Moreover, we define $sup(s)=0$ for all $s\in \bot$.
Let $e\in E(A^\tau_\varphi)\setminus \oplus$.
We define $sig(e)=\inp$ if $e\in \{k\}\cup M$ and $sig(e)=\set $ if $e\in $.
For all $i\in \{0,\dots, m-1\}$ and clauses $\{X_{i_0}, X_{i_1}, X_{i_2}\}$ and all $j\in \{0,1,2\}$ we set $sig(e)=\set$ if $e=n$ or $e\in \{v_j,p_j\mid j\in \{0,\dots, 3m-1\}\}$ or $e=x_{i_j}$ and $X_{i_j}\in M$.
Otherwise, we define $sig(e)=\nop$.
Finally, for all events $e\in\oplus$ and edges $s\edge{e}s'$ of $A$ we define $sig(e)=\set$ if $sup(s')=1$ and, otherwise, $sig(e)=\nop$.
\end{proof}

\textbf{Joining of $1$-bounded gadgets.} 
In the following, we consider types $\tau$ where $\tau$-synthesis from $1$-bounded inputs is NP-complete.
All gadgets $A_0,\dots, A_n$ of the reductions are directed paths $A_i=s^i_0\edge{e_1}\dots, \edge{e_n}s^i_n$ on pairwise distinct states $s^i_0,\dots, s^i_n$.
The joining is for all types the concatenation 
\begin{center}
\begin{tikzpicture}[new set = import nodes]
\begin{scope}[nodes={set=import nodes}]%
		\node (init) at (-1,0) {$A^\tau_\varphi=$};
		\foreach \i in {0,...,4} { \coordinate (\i) at (\i*1.4cm,0) ;}
		\node (0) at (0) {$A_0$};
		\node (1) at (1) {\nscale{$\bot_1$}};
		\node (2) at (2) {$A_2$};
		\node (3) at (3) {\nscale{$\bot_2$}};
		\node (4) at (4) {};
		\node (5) at (6.2,0) {$\dots$};
		\node (6) at (6.8,0) {};
		\node (7) at (8.2,0) {\nscale{$\bot_n$}};
		\node (8) at (9.6,0) {$A_n$};
\graph {
	(import nodes);
			0 ->["\escale{$\ominus_1$}"]1->["\escale{$\oplus_1$}"]2 ->["\escale{$\ominus_2$}"]3 ->["\escale{$\oplus_2$}"]4;
			6  ->["\escale{$\ominus_n$}"]7->["\escale{$\oplus_n$}"]8;
};
\end{scope}
\end{tikzpicture}
\end{center}
with fresh states $\bot_1,\dots,\bot_n$ and events $\ominus_1,\dots \ominus_n,\oplus_1,\dots \oplus_n$.
\begin{theorem}\label{the:nop_inp_out_set+used_free}
For any fixed $g\geq 1$, deciding if a $g$-bounded TS $A$ is $\tau$-solvable is NP-complete if $\tau=\{\nop, \inp,\out,\set\}\cup\omega$ or $\tau=\{\nop, \inp,\out,\res\} \cup \omega$ and $\omega \subseteq \{\used, \free\}$.
\end{theorem}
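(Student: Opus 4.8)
The plan is to reduce from \emph{cubic monotone one-in-three 3SAT}, following the common template of Theorem~\ref{the:nop_inp_free} and Theorem~\ref{the:nop_inp_set_2bounded}. By Lemma~\ref{lem:isomorphy} and the isomorphism exchanging $\inp\leftrightarrow\out$, $\set\leftrightarrow\res$ and $\used\leftrightarrow\free$, the $\{\nop,\inp,\out,\res\}$-family is isomorphic to the $\{\nop,\inp,\out,\set\}$-family, so it suffices to treat $\tau=\{\nop,\inp,\out,\set\}\cup\omega$ with $\omega\subseteq\{\used,\free\}$. Given $\varphi$, I would build $A^\tau_\varphi$ as the $1$-bounded concatenation of directed-path gadgets described under \textbf{Joining of $1$-bounded gadgets}, so that every state keeps at most one incoming and one outgoing arc.

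The first step is a core gadget $H$ furnishing a single distinguished ESSP atom $\alpha=(k,q)$ together with the events that will anchor supports. Since $\nop$ and $\set$ (and, in the $\res$-variants, $\res$) are defined on both $0$ and $1$, they cannot solve any ESSP atom, so a solving region must satisfy $sig(k)\in\{\inp,\out,\used,\free\}$. As in Theorem~\ref{the:nop_inp_free}, I would lay out $H$ as a short path whose only internally consistent completions rule out $\used$ and $\free$ (each forces a neighbouring state onto a support at which the adjacent event becomes undefined), leaving the two symmetric possibilities $sig(k)=\inp$ with $sup(q)=0$, or $sig(k)=\out$ with $sup(q)=1$. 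Exactly as in Theorem~\ref{the:nop_inp_set_2bounded}, I would then carry the two cases in parallel: every state with an outgoing $k$ gets support $1$ (resp.\ $0$), and auxiliary path-gadgets (the $1$-bounded analogues of the neutrality and transfer gadgets used there) both pin a batch of helper events to $\nop$ and propagate the value $sup(q)$ to the anchor states of all clause gadgets.

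The heart of the reduction is the clause gadget for $\zeta_i=\{X_{i_0},X_{i_1},X_{i_2}\}$: a directed path on which the three variable events occur between two states anchored to opposite supports, namely $1$ and $0$ in the $\inp$-case (and $0$ and $1$ in the $\out$-case). The decisive point, and the reason the bound $g\geq 1$ is attainable, is that I would give each variable occurrence a forced $0$-sink in the $\inp$-case (a forced $1$-sink in the $\out$-case), which confines $sig(X)\in\{\nop,\inp\}$ (resp.\ $\{\nop,\out\}$). Then, just as in Theorem~\ref{the:nop_inp_free}, no variable can raise the support once it has dropped, so along the path the support changes \emph{exactly once}: there is exactly one variable per clause whose signature is not $\nop$. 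Hence $M=\{X\in V(\varphi)\mid sig(X)\neq\nop\}$ meets every clause in exactly one variable, i.e.\ it is a one-in-three model, and the $\out$-case is symmetric. For the converse I would, given a model $M$, set $sig(k)$ to the chosen polarity, $sig(X)=\inp$ for $X\in M$ and $\nop$ otherwise, fix the anchor and the connector $\oplus$-events to $\set$ or $\nop$ according to the intended target support, and propagate $\delta_\tau$ from the initial state to obtain a $\tau$-region solving $\alpha$.

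The main obstacle is precisely this confinement in the strictly $1$-bounded regime. Because $\out$ and $\set$ (and, when present, $\used,\free$) are now available, a region could try to \enquote{return} the support to $1$ in the middle of a clause path and drop it again, making several variables look chosen; and since every state may host only one incoming and one outgoing arc, I cannot use the back-and-forth diamonds of Theorem~\ref{the:nop_inp_set_2bounded} to manufacture the required $0$- and $1$-sinks. The delicate work is therefore to realise those anchored sinks purely by threading $k$- and helper-occurrences along paths of in- and out-degree at most one, and to make one gadget serve both the $\inp$- and the $\out$-case (and absorb the optional $\used,\free$, which on a $0$-sink collapse to the behaviour of $\nop$). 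Finally, I would defer to the appendix the routine verification that solvability of $\alpha$ already entails the full $\tau$-(E)SSP, so that by Lemma~\ref{lem:admissible} the TS $A^\tau_\varphi$ is $\tau$-solvable if and only if $\varphi$ is one-in-three satisfiable.
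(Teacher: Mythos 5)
Your overall template does match the paper's: a reduction from cubic monotone one-in-three $3$-SAT built from $1$-bounded path gadgets, clause paths whose two ends are anchored to opposite supports, extra per-variable gadgets that confine the variable signatures, and a deferred appendix argument that solvability of $\alpha$ implies the full $\tau$-(E)SSP. However, there is a genuine gap in your central mechanism: the symmetric two-case treatment of $sig(k)=\inp$ versus $sig(k)=\out$ cannot work for $\tau=\{\nop,\inp,\out,\set\}\cup\omega$, because this type is \emph{not} invariant under the $0\leftrightarrow 1$ isomorphism (that isomorphism sends it to the $\res$-family, which is why Lemma~\ref{lem:isomorphy} pairs the two families in the first place). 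Concretely, your confinement claim for the $\out$-case is false: a forced $1$-sink does not confine $sig(X)$ to $\{\nop,\out\}$, since $\set$ (and $\used$, when present) also have $1$-sinks. This is fatal to the model extraction, because $\set$, unlike $\inp$, can fire both as $0\edge{\set}1$ and as $1\edge{\set}1$: in a clause path running from support $0$ up to support $1$, two variables of the same clause may both carry $\set$ --- the first performing the rise, the second firing neutrally --- without any contradiction, so neither $M=\{X\mid sig(X)\not=\nop\}$ nor ``the support-changing variable of the clause'' is a well-defined one-in-three model. Nor can $\set$ be excluded in that case: ruling out $\set$ for $X$ requires an occurrence of $X$ with a forced $0$-sink, which is precisely the $\inp$-case constraint and, combined with a forced $1$-sink, would collapse $sig(X)$ to $\nop$.

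The paper therefore does not carry two cases at all. Its gadgets $H_1,H_2,H_3$ (plus $H_4$ when $\used\in\tau$) rule out $sig(k_0)\in\{\inp,\used,\free\}$, forcing the unique polarity $sig(k_0)=\out$ and $sig(k_1)\in\{\out,\set\}$. Under this polarity every source of $k_0$ is forced to $0$, so the $1$-bounded path $b_{i,0}\edge{X_i}b_{i,1}\edge{k_0}b_{i,2}$ gives each variable a forced $0$-sink, confining $sig(X)$ to $\{\nop,\inp\}$ (plus a harmless $\free$), while each clause path runs from $1$ (sink of $k_1$) down to $0$ (source of $k_0$). Since $\inp$ is then the only support-changing interaction available to variables and it can never fire without changing the support, the dropping variable of each clause is unambiguous and $M=\{X\mid sig(X)=\inp\}$ (note: not $sig(X)\not=\nop$, which could wrongly pick up $\free$-labelled variables) is a one-in-three model. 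Observe also that your $0$-sink device itself presupposes such an asymmetry: in a $1$-bounded TS the sink of $X_i$ cannot additionally be a sink of $k$ (two incoming events at one state), so it must be the \emph{source} of an adjacent event whose signature forces a $0$-source --- that is, an $\out$-signature event. Forcing $sig(k_0)=\out$ is exactly what makes this device available; it is the key idea your proposal is missing, not a stylistic alternative to it.
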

\begin{proof}
Our construction proves the claim for $\tau=\{\nop, \inp,\set,\out\}\cup\omega$ with $\omega \subseteq \{\used, \free\}$.
By Lemma~\ref{lem:isomorphy}, this proves the claim also for the other types.

The TS $A^\tau_\varphi$ has the following gadgets $H_0, H_1,H_2$ and $H_3$ (in this order): 
\begin{center}
\begin{tikzpicture}[new set = import nodes]
\begin{scope}[nodes={set=import nodes}]%
		\foreach \i in {0,...,8} { \coordinate (\i) at (\i*1.4cm,0) ;}
		\foreach \i in {1} {\fill[red!20, rounded corners] (\i) +(-0.5,-0.25) rectangle +(1.9,0.35);}
		\foreach \i in {4} {\fill[red!20, rounded corners] (\i) +(-0.5,-0.25) rectangle +(3.3,0.35);}
		\foreach \i in {8} {\fill[red!20, rounded corners] (\i) +(-0.5,-0.25) rectangle +(0.4,0.35);}

		\foreach \i in {0,...,8} { \node (\i) at (\i) {\nscale{$h_{0,\i}$}};}
\graph {
	(import nodes);
			0 ->["\escale{$k_0$}"]1->["\escale{$z_0$}"]2 ->["\escale{$o$}"]3 ->["\escale{$k_1$}"]4  ->["\escale{$z_1$}"]5->["\escale{$z_0$}"]6->["\escale{$o$}"]7->["\escale{$k_0$}"]8;
};
\end{scope}
\begin{scope}[yshift=-1.2cm,nodes={set=import nodes}]%
		\foreach \i in {0,...,2} { \coordinate (\i) at (\i*1.5cm,0) ;}
		\foreach \i in {2} {\fill[red!20, rounded corners] (\i) +(-0.5,-0.25) rectangle +(0.4,0.35);}
		\foreach \i in {0,...,2} { \node (\i) at (\i) {\nscale{$h_{1,\i}$}};}
\graph {
	(import nodes);
			0 ->["\escale{$z_0$}"]1->["\escale{$k_0$}"]2;
			};
\end{scope}
\begin{scope}[xshift=4cm, yshift=-1.2cm,nodes={set=import nodes}]%
		\foreach \i in {0,...,2} { \coordinate (\i) at (\i*1.5cm,0) ;}
		\foreach \i in {2} {\fill[red!20, rounded corners] (\i) +(-0.5,-0.25) rectangle +(0.4,0.35);}
		\foreach \i in {0,...,2} { \node (\i) at (\i) {\nscale{$h_{2,\i}$}};}
\graph {
	(import nodes);
			0 ->["\escale{$z_1$}"]1->["\escale{$k_0$}"]2;
			};
\end{scope}
\begin{scope}[xshift=8cm,yshift=-1.2cm,nodes={set=import nodes}]%
		\foreach \i in {0,...,2} { \coordinate (\i) at (\i*1.5cm,0) ;}
		\foreach \i in {1} {\fill[red!20, rounded corners] (\i) +(-0.5,-0.25) rectangle +(1.9,0.4);}
		\foreach \i in {0,...,2} { \node (\i) at (\i) {\nscale{$h_{3,\i}$}};}
\graph {
	(import nodes);
			0 ->["\escale{$k_0$}"]1->["\escale{$k_1$}"]2;
			};
\end{scope}
\end{tikzpicture}
\end{center}
If $\used\in\tau$ then $A^\tau_\varphi$ has the following gadget $H_4$: 
\begin{center}
\begin{tikzpicture}[new set = import nodes]
\begin{scope}[nodes={set=import nodes}]%
		\foreach \i in {0,...,3} { \coordinate (\i) at (\i*1.4cm,0) ;}
		\foreach \i in {0} {\fill[red!20, rounded corners] (\i) +(-0.5,-0.25) rectangle +(4.6,0.4);}
		\foreach \i in {0,...,3} { \node (\i) at (\i) {\nscale{$h_{4,\i}$}};}
\graph {
	(import nodes);
			0 ->["\escale{$k_1$}"]1->["\escale{$z_0$}"]2 ->["\escale{$k_1$}"]3;
			};
\end{scope}
\end{tikzpicture}
\end{center}
For all $i\in \{0,\dots, m-1\}$, the TS $A^\tau_\varphi$ has for the clause $\zeta_i=\{X_{i_0}, X_{i_1}, X_{i_2}\}$ and the variable $X_i\in V(\varphi)$ the following gadgets $T_i$ and $B_i$, respectively:
\begin{center}
\begin{tikzpicture}[new set = import nodes]
\begin{scope}[nodes={set=import nodes}]%
		\foreach \i in {0,...,5} { \coordinate (\i) at (\i*1.3cm,0) ;}
		\foreach \i in {0} {\fill[red!20, rounded corners] (\i) +(-0.5,-0.25) rectangle +(1.7,0.4);}
		\foreach \i in {5} {\fill[red!20, rounded corners] (\i) +(-0.45,-0.25) rectangle +(0.35,0.4);}
		\foreach \i in {0,...,5} { \node (\i) at (\i) {\nscale{$t_{i,\i}$}};}
\graph {
	(import nodes);
			0 ->["\escale{$k_1$}"]1->["\escale{$X_{i_0}$}"]2 ->["\escale{$X_{i_1}$}"]3->["\escale{$X_{i_2}$}"]4->["\escale{$k_0$}"]5; 
			};
\end{scope}
\begin{scope}[xshift=7.5cm,nodes={set=import nodes}]%
		\foreach \i in {0,...,2} { \coordinate (\i) at (\i*1.3cm,0) ;}
		\foreach \i in {2} {\fill[red!20, rounded corners] (\i) +(-0.45,-0.25) rectangle +(0.35,0.4);}
		\foreach \i in {0,...,2} { \node (\i) at (\i) {\nscale{$b_{i,\i}$}};}
\graph {
	(import nodes);
			0 ->["\escale{$X_i$}"]1->["\escale{$k_0$}"]2;  
			};
\end{scope}
\end{tikzpicture}
\end{center}
The gadget $H_0$ provides the atom $\alpha=(k_0,h_{0,6})$.
Moreover, the gadgets $H_0,\dots, H_4$ ensure that if $(sup, sig)$ is a $\tau$-region solving $\alpha$ then $sig(k_0)=\out$ and $sig(k_1)\in \{\out, \set\}$.
In particular, $H_4$ prevents the solvability of $(k_0,h_{0,6})$ by $\used$.
As a result, such a region implies $sup(t_{i,1})=1$, $sup(t_{i,4})=0$ and $sup(b_{i,1})=0$ for all $i\in \{0,\dots, m-1\}$.
On the one hand, by $sup(b_{i,1})=0$ for all $i\in \{0,\dots, m-1\}$ we have $sig(X)\not\in \{\out, \set\}$ for all $X\in V(\varphi)$.
On the other hand, by  $sup(t_{i,1})=1$ and $sup(t_{i,4})=0$ there is an event $X\in \{X_{i_0}, X_{i_1},X_{i_2}\}$ such that $sig(X)=\inp$.
Since no variable event has an incoming signature we obtain immediately $sig(Y)\not=\inp$ for $Y\in \{X_{i_0}, X_{i_1},X_{i_2}\}\setminus\{X\}$.
Thus, $M=\{X\in V(\varphi) \mid sig(X)=\inp\}$ is a one-in-three model of $\varphi$.

We argue that $H_0,\dots, H_4$ behave as announced.
Let $(sup, sig)$ be a region that solves $(k_0,h_{0,6})$.
If $sig(k_0)=\inp$ then $sup(h_{0,6})=0$ and $sig(h_{0,7})=1$, implying $sig(o)\in \{\out,\set\}$ and $sup(h_{0,3})=1$.
Thus, there is an event $e\in \{k_1, z_0,z_1\}$ with $sig(e)=\inp$.
By $sig(k_0)=\inp$ we have $sup(h_{1,1})=sup(h_{2,1})=1$ and $sup(h_{3,1})=0$ implying $sig(e)\not=\inp$ for all $e\in \{k_1, z_0,z_1\}$, a contradiction.

If $sig(k_0)=\free$ then $sup(h_{0,6})=1$ and $sup(h_{0,1})=sup(h_{0,7})=sup(h_{1,1})=0$ which implies $sig(o)=\inp$ and $sup(h_{0,2})=1$.
By $sup(h_{0,1})=0$ and $sup(h_{0,2})=1$ we have $sig(z_0)\in \{\out, \set\}$ which by $sup(h_{1,1})=0$ is a contradiction.

If $sig(k_0)=\used$ then $sup(h_{0,6})=0$ and $sup(h_{0,1})=sup(h_{0,7})=sup(h_{1,1})=sup(h_{2,1})=1$.
This implies $sig(o)\in \{\out,\set\}$ and $sup(h_{0,3})=1$.
Thus, by $sup(h_{0,6})=0$ there is an event $e\in \{k_1,z_0,z_1\}$ with $sig(e)=\inp$.
By $sup(h_{1,1})=sup(h_{2,1})=1$, we have $e\not\in \{z_0,z_1\}$.
If $sig(k_1)=\inp$ then $sup(h_{4,1})=0$ and $sup(h_{4,2})=1$, implying $sig(z_0)\in \{\out,\set\}$ and $sup(h_{0,6})=1$.
This is a contradiction.
Altogether, this proves $sig(k_0)\not\in \{\inp,\used,\free\}$.

Consequently, we obtain $sig(k_0)=\out$ and $sup(h_{0,6})=1$ which implies $sig(o)=\inp$ and $sup(h_{0,3})=0$.
By $sup(h_{0,6})=1$, this implies that there is an event  $e\in \{k_1,z_0,z_1\}$ with $sig(e)\in \{\out, \set\}$.
Again by $sig(k_0)=\out$, we have $sup(h_{1,1})=sup(h_{2,1})=0$, which implies $e=k_1$. 
The signatures $sig(k_0)=\out$ and $sig(k_1)\in\{\out,\set\}$ and the construction of $T_0,\dots, T_{m-1}$ and $B_0,\dots, B_{m-1}$ ensure that $M=\{X\in V(\varphi) \mid sig(X)=\inp\}$ is a one-in-three model of $\varphi$:
By $sig(k_0)=\out$ and $sig(k_1)\in\{\out,\set\}$ we have $sup(t_{i,1})=1$ and $sup(t_{i,4})=sup(b_{i,1})=0$ for all $i\in \{0,\dots, m-1\}$.
By $sup(t_{i,1})=1$ and $sup(t_{i,4})=0$, there is an event $X\in \zeta_i$ such that $sig(X)=\inp$.
Moreover, by $sup(b_{i,1})=0$, we get $sig(X_i)\not\in \enter$ for all $i\in \{0,\dots, m-1\}$.
Thus, $X$ is unambiguous and thus $M$ a searched model.

In reverse, if $M$ is a one-in-three model of $\varphi$ then there is a $\tau$-region $(sup, sig)$ that solves $\alpha$.
The red colored area above sketches states with a positive support.
Which states of $T_{i}$, besides of $t_{i,0}, t_{i,1}$ and $t_{i,5}$, get a positive support depends for all $i\in \{0,\dots, m-1\}$ on which of $X_{i_0}, X_{i_1},X_{i_2}$ belongs to $M$.
The red colored area above sketches the case $X_{i_0}\in M$.
Moreover, we define $sup(s)=1$ if $s=b_{i,0}$ and $X_i\in M$ or if $s\in \bot$.
For $sig$ we define $sig(k_1)=\set$ and for all $e\in E(A^\tau_\varphi)\setminus \{k_1\}$ and $s\edge{e}s'\in A^\tau_\varphi$ we define $sig(e)=\out$ if and $sup(s')>sup(s)$ and $sig(e)=\inp$ if $sup(s)>sup(s')$ and else $sig(e)=\nop$.
\end{proof}

\begin{theorem}\label{the:nop_inp_set_free+used}
For any $g\geq 1$, deciding if a $g$-bounded TS $A$ is $\tau$-solvable is NP-complete if $\tau=\{\nop, \inp, \set, \free\}$ or $\tau=\{\nop, \inp, \set, \used, \free\}$ or $\tau=\{\nop, \out, \res, \used\}$ or $\tau=\{\nop, \out, \res, \used, \free\}$.
\end{theorem}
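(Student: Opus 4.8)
The plan is to reduce from cubic monotone one-in-three 3SAT exactly as in Theorems~\ref{the:nop_inp_free}--\ref{the:nop_inp_out_set+used_free}. Since $\{\nop,\inp,\set,\free\}$ is isomorphic to $\{\nop,\out,\res,\used\}$ and $\{\nop,\inp,\set,\used,\free\}$ to $\{\nop,\out,\res,\used,\free\}$ under the isomorphism mapping $\inp\leftrightarrow\out$, $\set\leftrightarrow\res$, $\used\leftrightarrow\free$ and $\nop\to\nop$, I would treat only $\tau\in\{\{\nop,\inp,\set,\free\},\{\nop,\inp,\set,\used,\free\}\}$ and invoke Lemma~\ref{lem:isomorphy} for the other two. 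The construction will produce a $1$-bounded $A^\tau_\varphi$, which is $g$-bounded for every $g\geq1$, so hardness for $g=1$ gives hardness for all $g\geq1$; membership in NP is already known. All gadgets are directed paths, linked by the $\bot$-concatenation of the ``Joining of $1$-bounded gadgets'' paragraph.

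The governing observation is that among the interactions of $\tau$ only $\inp$ and $\free$ (and $\used$, if present) are partial, so the atom $\alpha=(k_0,s^\ast)$ can be solved only by $sig(k_0)\in\{\inp,\free\}$ (resp.\ $\{\inp,\free,\used\}$); moreover $\inp$ is the unique interaction decreasing a support from $1$ to $0$, $\free$ is undefined on $1$, and $\set$ is the only interaction forcing a target support to $1$. I would realize the control by the path $u_0\edge{k_0}u_1\edge{k_0}c\edge{k_1}s^\ast$ providing $\alpha=(k_0,s^\ast)$. The doubled $k_0$ at $u_1$ is self-blocking for $\inp$ (an incoming $\inp$ forces $sup(u_1)=0$, an outgoing $\inp$ forces $sup(u_1)=1$), hence $sig(k_0)=\free$; this gives $sup(c)=0$ and, since $\free$ must be undefined on $sup(s^\ast)$, forces $sup(s^\ast)=1$, so that $c\edge{k_1}s^\ast$ is a $0\to1$ step and $sig(k_1)=\set$. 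For the five-interaction type I would add $v_0\edge{k_1}v_1\edge{k_1}v_2$, making $sig(k_1)\neq\inp$; this excludes $sig(k_0)=\used$, because $\used$ would yield $sup(c)=1$ and $sup(s^\ast)=0$, forcing the now-impossible $sig(k_1)=\inp$.

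With $sig(k_0)=\free$ and $sig(k_1)=\set$ fixed, I would encode each clause $\zeta_i=\{X_{i_0},X_{i_1},X_{i_2}\}$ by $t_{i,0}\edge{k_1}t_{i,1}\edge{X_{i_0}}t_{i,2}\edge{X_{i_1}}t_{i,3}\edge{X_{i_2}}t_{i,4}\edge{k_0}t_{i,5}$, so that $sup(t_{i,1})=1$ and $sup(t_{i,4})=0$, together with a variable gadget $b_{i,0}\edge{X_i}b_{i,1}\edge{k_0}b_{i,2}$ which, via $sup(b_{i,1})=0$, guarantees $sig(X)\neq\set$ for every variable $X$. Along the clause path the support must drop from $1$ to $0$, which needs at least one variable with $sig=\inp$; and since $\inp$ is undefined on $0$ and $\set$ is excluded, the support cannot return to $1$ afterwards, so \emph{exactly} one variable per clause carries $\inp$. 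Hence $M=\{X\in V(\varphi)\mid sig(X)=\inp\}$ is a one-in-three model. Conversely, from a model $M$ I would put $sig(k_0)=\free$, $sig(k_1)=\set$, $sig(X)=\inp$ for $X\in M$ and $sig(X)=\nop$ otherwise, propagate the supports along each path, and set the joining events to $\nop$ or $\set$; this yields a $\tau$-region solving $\alpha$, so $\alpha$ is $\tau$-solvable iff $\varphi$ is one-in-three satisfiable.

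The delicate points I anticipate are, first, the global consistency of the shared events $k_0,k_1$: every $k_0$-occurrence must sit on a $0$-support source and every $k_1$-occurrence must produce a $1$-support target, which holds by construction but must be checked uniformly across $H_0$, the clause and variable gadgets, and the $\bot$-joining. Second, the elimination of $\used$ in the five-interaction type rests entirely on the doubled-$k_1$ gadget and the short chain $sig(k_0)=\used\Rightarrow sig(k_1)=\inp\Rightarrow$ contradiction, which must be argued with care. As in the preceding theorems, the hardest and most laborious implication, namely that $\tau$-solvability of $\alpha$ already entails the full $\tau$-(E)SSP of $A^\tau_\varphi$, I would establish by the usual case analysis deferred to the appendix.
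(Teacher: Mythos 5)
Your overall architecture --- treating $\{\nop,\inp,\set,\free\}$ and $\{\nop,\inp,\set,\used,\free\}$ and invoking Lemma~\ref{lem:isomorphy} for the other two types, reusing the clause paths $t_{i,0}\edge{k_1}t_{i,1}\edge{X_{i_0}}\dots\edge{X_{i_2}}t_{i,4}\edge{k_0}t_{i,5}$ and the variable gadgets $b_{i,0}\edge{X_i}b_{i,1}\edge{k_0}b_{i,2}$, and the ``at least one $\inp$, and no return to support $1$, hence exactly one'' argument --- matches the paper. But your control gadget breaks the reduction. The path $u_0\edge{k_0}u_1\edge{k_0}c$ repeats $k_0$ on two consecutive edges, and in these types this creates an \emph{unsolvable SSP atom}: to separate $(u_1,c)$ you need a region with $sup(u_1)\not=sup(c)$, but $sig(k_0)=\inp$ is impossible (your own ``self-blocking'' observation: the incoming occurrence forces $sup(u_1)=0$, the outgoing one forces $sup(u_1)=1$), and every other interaction of $\tau$ --- $\nop$, $\set$, $\free$, and $\used$ if present --- maps each support value on which it is defined to itself, hence forces $sup(u_1)=sup(c)$. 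The same holds for $(v_1,v_2)$ in your doubled-$k_1$ gadget $v_0\edge{k_1}v_1\edge{k_1}v_2$. Consequently $A^\tau_\varphi$ fails the $\tau$-SSP for \emph{every} input $\varphi$, is never $\tau$-solvable, and your reduction maps all instances to ``no''; the claimed equivalence with one-in-three satisfiability collapses. (This is exactly the fact the paper itself exploits in Theorem~\ref{the:nop_inp_set_1bounded}: in a $\tau$-solvable TS no event may occur twice in a row.)

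The paper obtains the same signature conclusions without consecutive repetitions: it uses $H_0=h_{0,0}\edge{k_0}h_{0,1}\edge{k_1}h_{0,2}\edge{z_0}h_{0,3}\edge{k_1}h_{0,4}\edge{z_1}h_{0,5}\edge{k_0}h_{0,6}$ together with $H_1=h_{1,0}\edge{k_0}h_{1,1}\edge{z_0}h_{1,2}\edge{k_0}h_{1,3}$, where the two occurrences of $k_0$ (and of $k_1$) are separated by auxiliary events. The exclusion of $sig(k_0)=\inp$ is then indirect: $\inp$ would give $sup(h_{1,1})=0$ and $sup(h_{1,2})=1$, forcing $sig(z_0)=\set$ and hence $sup(h_{0,3})=1$, contradicting that $(k_0,h_{0,3})$ is solved by $\inp$; the exclusion of $\used$ works analogously through the interplay of $H_0$ and $H_1$, after which $sig(k_0)=\free$ and $sig(k_1)=\set$ follow. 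If you replace your doubled-event gadgets by such interleaved occurrences sharing auxiliary events across two paths, the remainder of your argument (including the deferred verification that solvability of $\alpha$ implies the full $\tau$-(E)SSP) goes through as in the paper.
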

\begin{proof}
Our reduction proves the claim for $\tau=\{\nop, \inp, \set, \free\}$ and $\tau=\{\nop, \inp, \set, \used, \free\}$ and thus 
by Lemma~\ref{lem:isomorphy}, for the other types, too.

The TS $A^\tau_\varphi$ has the following gadgets $H_0$ and $H_1$ providing the atom $(k_0,h_{0,3})$:
\begin{center}
\begin{tikzpicture}[new set = import nodes]
\begin{scope}[nodes={set=import nodes}]%
		\foreach \i in {0,...,6} { \coordinate (\i) at (\i*1.4cm,0) ;}
		\foreach \i in {2} {\fill[red!20, rounded corners] (\i) +(-0.35,-0.25) rectangle +(3.65,0.4);}
		\foreach \i in {0,...,6} { \node (\i) at (\i*1.5cm,0) {\nscale{$h_{0,\i}$}};}
\graph {
	(import nodes);
			0 ->["\escale{$k_0$}"]1->["\escale{$k_1$}"]2 ->["\escale{$z_0$}"]3 ->["\escale{$k_1$}"]4  ->["\escale{$z_1$}"]5->["\escale{$k_0$}"]6;

			};
\end{scope}
\begin{scope}[yshift=-1.2cm,nodes={set=import nodes}]%
		\foreach \i in {0,...,3} { \coordinate (\i) at (\i*1.4cm,0) ;}
		\foreach \i in {0,...,3} { \node (\i) at (\i*1.5cm,0) {\nscale{$h_{1,\i}$}};}
\graph {
	(import nodes);
			0 ->["\escale{$k_0$}"]1->["\escale{$z_0$}"]2 ->["\escale{$k_0$}"]3;
			};
\end{scope}
\end{tikzpicture}
\end{center}
For all $i\in \{0,\dots, m-1\}$, the $A^\tau_\varphi$ for the clause $\zeta_i=\{X_{i_0}, X_{i_1}, X_{i_2}\}$ and the variable $X_i\in V(\varphi)$ the gadgets $T_i$ and gadget $B_i$, respectively, as previously defined for Theorem~\ref{the:nop_inp_set_free+used}.
%
%
%
%
The gadgets $H_0$ and $H_1$ ensure that a $\tau$-region $(sup, sig)$ solving $(k_0,h_{0,3})$ satisfies $sig(k_0)=\free$ and $sig(k_1)=\set$.
This implies $sup(t_{i,1})=1$ and $sup(t_{i,4})=sup(b_{i,2})=0$ for all $i\in \{0,\dots, m-1\}$.
By $sup(t_{i,1})=1$ and $sup(t_{i,4})=0$, there is an event $X\in \zeta_i$ such that $sig(X)=\inp$ and, by $sup(b_{i,2})=0$ for all $i\in \{0,\dots, m-1\}$, we have $sig(X)\not=\set$ for all $X\in V(\varphi)$.
Thus, the event $X\in \zeta_i$ is unique and $M=\{X\in V(\varphi) \mid sig(X)=\inp\}$ is a one-in-three model.

We briefly argue that $H_0$ and $H_1$ perform as announced:
Let $(sup, sig)$ be a $\tau$-region that solves $\alpha$.
If $sig(k_0)=\inp$ then $sup(h_{1,1})=0$ and $sup(h_{1,2})=1$ which implies $sig(z_0)=\set$ and thus $sup(h_{0,3})=1$, a contradiction.
Hence, $sig(k_{0})\not=\inp$.
If $sig(k_0)=\used$ then $sup(h_{0,1})=sup(h_{1,2})=1$ and $sup(h_{0,3})=0$.
Consequently, $sig(z_0)=\inp$ or $sig(k_1)=\inp$ but this contradicts $sup(h_{1,2})=1$ and $sup(h_{0,3})=0$.
Thus, $sig(k_0)\not=\used$.
Thus, we have $sig(k_0)=\free$ and $sup(h_{0,3})=1$, which implies that one of $k_1,z_0$ has a $\set$-signature.
By $sig(k_0)=\free$, we get $sup(h_{1,3})=0$ and thus $sig(k_1)=\set$. 

If $M$ is a one-in-three model of $\varphi$ then we can define an $\alpha$ solving region similar to the one of Theorem~\ref{the:nop_inp_out_set+used_free}, where we replace $sig(k_0)=\inp$ by $sig(k_0)=\free$.
\end{proof}
\begin{theorem}\label{the:nop_inp_res_swap+used_free}
For any fixed $g\geq 1$, deciding if a $g$-bounded TS $A$ is $\tau$-solvable is NP-complete if  $\tau=\{\nop, \inp,\res,\swap\}\cup\omega$ and $\omega\subseteq\{\used,\free\}$.
\end{theorem}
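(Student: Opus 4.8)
The plan is to reduce \emph{cubic monotone one-in-three 3SAT} to $\tau$-solvability of a $1$-bounded TS, following the template used throughout the paper: I build $A^\tau_\varphi$ with one distinguished ESSP atom $\alpha=(k,s^\ast)$, arrange that $\alpha$ is $\tau$-solvable iff $\varphi$ has a one-in-three model, and leave the implication ``$\alpha$-solvability $\Rightarrow$ the full $\tau$-(E)SSP'' to the appendix. The decisive feature of $\tau=\{\nop,\inp,\res,\swap\}$ is that $\inp$ is its \emph{only} partial interaction while $\nop,\res,\swap$ are total; hence for $\omega=\emptyset$ every region solving an (E)SSP atom must use $\inp$, which forces $sup=1$ at every source and $sup=0$ at every sink of $k$. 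I first build a head gadget $H$ that provides $\alpha$ and pins $sig(k)=\inp$; when $\omega\neq\emptyset$ I append, as in the proofs of Theorems~\ref{the:nop_inp_out_set+used_free} and~\ref{the:nop_inp_set_free+used}, a handful of extra path-gadgets ruling out a solution of $\alpha$ by $\used$ or $\free$, so that $sig(k)=\inp$ is forced for every $\omega\subseteq\{\used,\free\}$.

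Since only $\inp$ forces supports, and a $k$-edge cannot share a state with a variable edge once $A$ is $1$-bounded, I attach each clause as a single directed path whose entry is a $k$-sink (support $0$) and whose exit is a $k$-source (support $1$); thus the variable part $a_i\xrightarrow{X_{i_0}}\dots\xrightarrow{X_{i_2}}b_i$ of every clause runs from support $0$ to support $1$. As $\swap$ is the unique interaction of $\tau$ sending $0$ to $1$, at least one variable event of each clause must carry $\swap$. To pin variable signatures I use two short fragments per variable $X$: a sink-$1$ edge $\bullet\xrightarrow{X}z\xrightarrow{k}\bullet$ (its sink has support $1$, so it excludes $\inp,\res,\free$) and a source-$0$ edge $\bullet\xrightarrow{k}w\xrightarrow{X}\bullet$ (its source has support $0$, so it excludes $\inp,\used$); together they force $sig(X)\in\{\nop,\swap\}$ for every $\omega$, and hence $M=\{X\in V(\varphi)\mid sig(X)=\swap\}$ meets every clause.

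The hard part is upgrading ``at least one'' to ``exactly one'' per clause, and this is where I expect the real work to lie. A single $0$-to-$1$ path only constrains the \emph{parity} of the number of swaps, so it admits the intended one-swap solutions as well as the spurious all-three-swap (all-true) solution; worse, any constraint I can impose on the $\{\nop,\swap\}$-valued variable events through $1$-bounded paths is $\mathrm{GF}(2)$-affine and therefore can never separate ``one'' from ``three''. Nor is there a self-limiting selector available: $\inp$ would be self-limiting on a $1\!\to\!0$ path, but because $\inp$ is a restriction of the total interaction $\res$, no gadget can force $\inp$ over $\res$, so the convenient mechanism of Theorems~\ref{the:nop_inp_out_set+used_free} and~\ref{the:nop_inp_set_free+used} is unavailable; and the self-loop ``budget'' gadgets that enforce uniqueness for $\{\nop,\set,\res\}$ in Theorem~\ref{lem:nop_set_res+used_free} cannot be reused, since a loop together with the incident path edges already violates $1$-boundedness. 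My plan is therefore to reintroduce the missing \emph{non-affine} information by an interlocking of the three cyclic rotations of each clause path with auxiliary selector events, where I realise every needed constant support not by a loop but by the two-state fragments $\bullet\xrightarrow{k}\bullet\xrightarrow{o}\bullet$ built from a forced-$\swap$ event $o$ (itself pinned by placing it on a forced $0\!\to\!1$ edge between a $k$-sink and a $k$-source). Designing this interlocking so that two or three swaps in any clause become inconsistent while every single-swap assignment stays solvable — and verifying that it remains $1$-bounded — is the main obstacle.

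For the converse I take a one-in-three model $M$ and set $sig(k)=\inp$, $sig(X)=\swap$ for $X\in M$ and $sig(X)=\nop$ otherwise, give the selector and auxiliary events the values dictated by $M$, and let each joining event $\oplus_j$ be $\swap$ or $\nop$ according to whether it must raise the support of its sink; propagating $sup$ from the initial state along the concatenation of the paragraph ``Joining of $1$-bounded gadgets'' then yields a $\tau$-region $(sup,sig)$ solving $\alpha$. Combined with the appendix implication and with the membership of $\tau$-synthesis in $\mathrm{NP}$, this establishes NP-completeness for every fixed $g\geq 1$ and every $\omega\subseteq\{\used,\free\}$, and by Lemma~\ref{lem:isomorphy} simultaneously for the dual family $\{\nop,\out,\set,\swap\}\cup\omega$.
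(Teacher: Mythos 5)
Your proposal reproduces the paper's framework faithfully up to a point: reduction from cubic monotone one-in-three 3SAT, a head gadget pinning $sig(k)=\inp$ (with extra gadgets excluding $\used$/$\free$ when $\omega\neq\emptyset$), forced-$\swap$ auxiliary events obtained from $0$-to-$1$ edges between a $k$-sink and a $k$-source, variable events pinned to $\{\nop,\swap\}$, and the parity argument showing each clause path carries an odd number of $\swap$s. But you stop exactly where the proof actually happens. Since $\varphi$ is monotone, the all-$\swap$ assignment (every variable mapped to $\swap$) is consistent with everything you have built: every clause path then performs three state changes, which is odd, so a region solving $\alpha$ exists for \emph{every} input $\varphi$. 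Without the ``exactly one'' enforcement the reduction is vacuous, and you explicitly leave that enforcement open (``the main obstacle''). So this is not a proof; the missing piece is the theorem's core.

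For comparison, here is how the paper closes that gap, and why your $\mathrm{GF}(2)$-affineness worry dissolves. Per clause $\zeta_i$ it introduces three auxiliary events $w_{3i},w_{3i+1},w_{3i+2}$ and, in gadgets $T_{i,4},T_{i,5},T_{i,6}$, sandwiches each $w$ between two forced-$\swap$ events inside a $k$-sink\,...\,$k$-source frame; this forces the source of $w$ to support $1$ and its sink to support $0$, hence $sig(w)\in\{\inp,\res,\swap\}$, i.e.\ $w$ must change state at $1$. Then in the three rotation gadgets $T_{i,1},T_{i,2},T_{i,3}$ each $w_{3i+j}$ is placed \emph{between} a pair of variable events of $\zeta_i$ on a segment whose two endpoints are both forced to support $0$ (again by flanking forced-$\swap$ events inside a $k$-frame). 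If both variables of that pair were $\swap$, the source and the sink of the interposed $w$ would both have support $1$, forcing $sig(w)\in\{\nop,\used\}$ --- contradicting $T_{i,4}$--$T_{i,6}$. The three rotations cover all pairs, so at most one variable per clause is $\swap$, and with your parity argument exactly one. Note that the resulting constraint ``not both $\swap$'' is a NAND, not an affine condition: it becomes expressible precisely because the interposed event is \emph{not} confined to $\{\nop,\swap\}$ but ranges over $\{\inp,\res,\swap\}$, where the partial interaction $\inp$ and the non-injective interaction $\res$ break affineness. Your sketch names the right ingredients (cyclic rotations, interposed selectors, forced constant supports), but the concrete assembly --- together with the verification that every single-$\swap$ assignment still extends to a valid region over all gadgets, which the paper supplies via its explicitly defined solving regions --- is exactly the work the proposal does not do.
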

\begin{proof}
The TS $A^\tau_\varphi$ has the following gadgets $H_0, H_1, H_2$ and $H_3$: 
\begin{center}
\begin{tikzpicture}[new set = import nodes]
\begin{scope}[nodes={set=import nodes}]

		\foreach \i in {0,...,4} { \coordinate (\i) at (\i*1.3cm,0) ;}
		\foreach \i in {0,3} {\fill[red!20, rounded corners] (\i) +(-0.4,-0.25) rectangle +(0.4,0.3);}
		\foreach \i in {0,...,4} { \node (\i) at (\i) {\nscale{$h_{0,\i}$}};}
\graph {
	(import nodes);
			0 ->["\escale{$k$}"]1->["\escale{$y_0$}"]2->["\escale{$v$}"]3->["\escale{$k$}"]4;  
			};
\end{scope}
\begin{scope}[xshift=6.25cm,nodes={set=import nodes}]%

		\foreach \i in {0,...,4} { \coordinate (\i) at (\i*1.3cm,0) ;}
		\foreach \i in {0} {\fill[red!20, rounded corners] (\i) +(-0.4,-0.25) rectangle +(0.4,0.3);}
		\foreach \i in {2} {\fill[red!20, rounded corners] (\i) +(-0.4,-0.25) rectangle +(1.7,0.3);}
		\foreach \i in {0,...,4} { \node (\i) at (\i) {\nscale{$h_{1,\i}$}};}
\graph {
	(import nodes);
			0 ->["\escale{$k$}"]1->["\escale{$y_1$}"]2->["\escale{$y_0$}"]3->["\escale{$k$}"]4;  
			};
\end{scope}
\begin{scope}[yshift=-1.1cm,nodes={set=import nodes}]%

		\foreach \i in {0,...,5} { \coordinate (\i) at (\i*1.2cm,0) ;}
		\foreach \i in {0} {\fill[red!20, rounded corners] (\i) +(-0.4,-0.25) rectangle +(0.4,0.3);}
		\foreach \i in {3} {\fill[red!20, rounded corners] (\i) +(-0.4,-0.25) rectangle +(1.6,0.3);}
		\foreach \i in {0,...,5} { \node (\i) at (\i) {\nscale{$h_{2,\i}$}};}
\graph {
	(import nodes);
			0 ->["\escale{$k$}"]1->["\escale{$y_0$}"]2->["\escale{$y_1$}"]3->["\escale{$y_0$}"]4->["\escale{$k$}"]5;  
			};
\end{scope}
\begin{scope}[xshift=6.8cm, yshift=-1.1cm,nodes={set=import nodes}]%

		\foreach \i in {0,...,4} { \coordinate (\i) at (\i*1.2cm,0) ;}
		\foreach \i in {0,3} {\fill[red!20, rounded corners] (\i) +(-0.4,-0.25) rectangle +(0.4,0.3);}
		\foreach \i in {0,...,4} { \node (\i) at (\i) {\nscale{$h_{3,\i}$}};}
\graph {
	(import nodes);
			0 ->["\escale{$y_1$}"]1->["\escale{$y_0$}"]2->["\escale{$v$}"]3->["\escale{$k$}"]4;  
			};
\end{scope}
\end{tikzpicture}
\end{center}
The gadgets $H_0,\dots, H_3$ provide the atom $\alpha=(k,h_{0,2})$ and ensure that a $\tau$-region $(sup, sig)$ solving $\alpha$ satisfies $sig(k)=\inp$ and $sup(h_{0,2})=0$.
%
The TS $A^\tau_\varphi$ has the following gadgets $F_0, F_1$ and for all $j\in \{0,\dots, 10\}$ the gadget $G_j$:
\begin{center}
\begin{tikzpicture}[new set = import nodes]
\begin{scope}[nodes={set=import nodes}]
		
		\foreach \i in {0,...,4} { \coordinate (\i) at (\i*1.2cm,0) ;}
		\foreach \i in {0,3} {\fill[red!20, rounded corners] (\i) +(-0.4,-0.25) rectangle +(0.4,0.3);}
		\foreach \i in {0,...,4} { \node (\i) at (\i) {\nscale{$f_{0,\i}$}};}
\graph {
	(import nodes);
			0 ->["\escale{$k$}"]1->["\escale{$z_0$}"]2->["\escale{$v$}"]3->["\escale{$k$}"]4;  
			};
\end{scope}
\begin{scope}[xshift=6.5cm,nodes={set=import nodes}]
		
		\foreach \i in {0,...,4} { \coordinate (\i) at (\i*1.2cm,0) ;}		
		\foreach \i in {0,3} {\fill[red!20, rounded corners] (\i) +(-0.4,-0.25) rectangle +(0.4,0.3);}
		\foreach \i in {0,...,4} { \node (\i) at (\i) {\nscale{$f_{1,\i}$}};}
\graph {
	(import nodes);
			0 ->["\escale{$k$}"]1->["\escale{$z_1$}"]2->["\escale{$v$}"]3->["\escale{$k$}"]4;  
			};
\end{scope}
\begin{scope}[yshift=-1.2cm,nodes={set=import nodes}]
		
		\foreach \i in {0,...,5} { \coordinate (\i) at (\i*1.2cm,0) ;}
		\foreach \i in {0} {\fill[red!20, rounded corners] (\i) +(-0.4,-0.25) rectangle +(0.4,0.3);}
		\foreach \i in {3} {\fill[red!20, rounded corners] (\i) +(-0.4,-0.25) rectangle +(1.6,0.3);}
		\foreach \i in {0,...,5} { \node (\i) at (\i) {\nscale{$g_{j,\i}$}};}
\graph {
	(import nodes);
			0 ->["\escale{$k$}"]1->["\escale{$z_0$}"]2->["\escale{$u_j$}"]3->["\escale{$z_1$}"]4->["\escale{$k$}"]5;  
			};
\end{scope}
\end{tikzpicture}
\end{center}
For all $j\in \{0,\dots, 10\}$, the gadgets $F_0,F_1, G_j$ ensure $sig(u_j)=\swap$ for any $\tau$-region $(sup, sig)$ solving $\alpha$.

For all $i\in \{0,\dots, m-1\}$, the TS $A^\tau_\varphi$ has for the clause $\zeta_i=\{X_{i_0}, X_{i_1}, X_{i_2}\}$ some gadgets $T_{i,0},\dots, T_{i,6}$ and $B_i$.
The purpose of these gadgets is to make the one-and-three satisfiability of $\varphi$ and the solvability of $\alpha$ the same. 
In particular, the TS $T_{i,0}$ is defined by:
\begin{center}
\begin{tikzpicture}[new set = import nodes]
\begin{scope}[nodes={set=import nodes}]
		
		\foreach \i in {0,...,9} { \coordinate (\i) at (\i*1.25cm,0) ;}
		\foreach \i in {0,2,8} {\fill[red!20, rounded corners] (\i) +(-0.4,-0.25) rectangle +(0.4,0.45);}
		\foreach \i in {4} {\fill[red!20, rounded corners] (\i) +(-0.4,-0.25) rectangle +(1.65,0.45);}
		\foreach \i in {0,...,9} { \node (\i) at (\i) {\nscale{$t_{i,0,\i}$}};}
\graph {
	(import nodes);
			0 ->["\escale{$k$}"]1->["\escale{$u_{0}$}"]2->["\escale{$X_{i_0}$}"]3->["\escale{$u_{1}$}"]4->["\escale{$X_{i_1}$}"]5->["\escale{$u_{2}$}"]6->["\escale{$X_{i_2}$}"]7->["\escale{$u_{3}$}"]8->["\escale{$k$}"]9;
			};
\end{scope}
\end{tikzpicture}
\end{center}
The gadgets $T_{i,1}, T_{i,2}$ and $T_{i,3}$ are defined (in this order) as follows:
\begin{center}
\begin{tikzpicture}[new set = import nodes]
\begin{scope}[nodes={set=import nodes}]
		
		\foreach \i in {0,...,8} { \coordinate (\i) at (\i*1.4cm,0) ;}
		\foreach \i in {0,2,4,7} {\fill[red!20, rounded corners] (\i) +(-0.4,-0.25) rectangle +(0.4,0.45);}
		\foreach \i in {0,...,8} { \node (\i) at (\i) {\nscale{$t_{i,1,\i}$}};}
\graph {
	(import nodes);
			0 ->["\escale{$k$}"]1->["\escale{$u_{4}$}"]2->["\escale{$u_{5}$}"]3->["\escale{$X_{i_0}$}"]4->["\escale{$w_{3i}$}"]5->["\escale{$X_{i_1}$}"]6->["\escale{$u_{6}$}"]7->["\escale{$k$}"]8;
			};
\end{scope}
\begin{scope}[yshift=-1.1cm,nodes={set=import nodes}]
		
		\foreach \i in {0,...,8} { \coordinate (\i) at (\i*1.4cm,0) ;}
		\foreach \i in {0,2,5,7} {\fill[red!20, rounded corners] (\i) +(-0.4,-0.25) rectangle +(0.4,0.45);}
		\foreach \i in {0,...,8} { \node (\i) at (\i) {\nscale{$t_{i,2,\i}$}};}
\graph {
	(import nodes);
			0 ->["\escale{$k$}"]1->["\escale{$u_{4}$}"]2->["\escale{$u_{5}$}"]3->["\escale{$X_{i_2}$}"]4->["\escale{$w_{3i+1}$}"]5->["\escale{$X_{i_0}$}"]6->["\escale{$u_{6}$}"]7->["\escale{$k$}"]8;
			};
\end{scope}
\begin{scope}[yshift=-2.1cm,nodes={set=import nodes}]
		
		\foreach \i in {0,...,8} { \coordinate (\i) at (\i*1.4cm,0) ;}
		\foreach \i in {0,2,7} {\fill[red!20, rounded corners] (\i) +(-0.4,-0.25) rectangle +(0.4,0.45);}
		\foreach \i in {0,...,8} { \node (\i) at (\i) {\nscale{$t_{i,3,\i}$}};}
\graph {
	(import nodes);
			0 ->["\escale{$k$}"]1->["\escale{$u_{4}$}"]2->["\escale{$u_{5}$}"]3->["\escale{$X_{i_1}$}"]4->["\escale{$w_{3i+2}$}"]5->["\escale{$X_{i_2}$}"]6->["\escale{$u_{6}$}"]7->["\escale{$k$}"]8;
			};
\end{scope}
\end{tikzpicture}
\end{center}
Moreover, the gadgets $T_{i,4}, T_{i,5}$ and $T_{i,6}$ are defined like this: 
\begin{center}
\begin{tikzpicture}[new set = import nodes]
\begin{scope}[nodes={set=import nodes}]
		
		\foreach \i in {0,...,5} { \coordinate (\i) at (\i*1.5cm,0) ;}
		\foreach \i in {0,2,4} {\fill[red!20, rounded corners] (\i) +(-0.4,-0.25) rectangle +(0.4,0.4);}
		\foreach \i in {0,...,5} { \node (\i) at (\i) {\nscale{$t_{i,4,\i}$}};}
\graph {
	(import nodes);
			0 ->["\escale{$k$}"]1->["\escale{$u_{7}$}"]2->["\escale{$w_{3i}$}"]3->["\escale{$u_8$}"]4->["\escale{$k$}"]5;
			};
\end{scope}
\begin{scope}[yshift=-1.1cm, nodes={set=import nodes}]
		
		\foreach \i in {0,...,5} { \coordinate (\i) at (\i*1.5cm,0) ;}
		\foreach \i in {2} {\fill[red!20, rounded corners] (\i) +(-0.4,-0.25) rectangle +(0.4,0.35);}
		\foreach \i in {0,4} {\fill[red!20, rounded corners] (\i) +(-0.4,-0.25) rectangle +(0.4,0.4);}
		\foreach \i in {0,...,5} { \node (\i) at (\i) {\nscale{$t_{i,5,\i}$}};}
\graph {
	(import nodes);
			0 ->["\escale{$k$}"]1->["\escale{$u_{7}$}"]2->["\escale{$w_{3i+1}$}"]3->["\escale{$u_8$}"]4->["\escale{$k$}"]5;
			};
\end{scope}
\begin{scope}[yshift=-2.2cm,nodes={set=import nodes}]
		
		\foreach \i in {0,...,5} { \coordinate (\i) at (\i*1.5cm,0) ;}
		\foreach \i in {2} {\fill[red!20, rounded corners] (\i) +(-0.4,-0.25) rectangle +(0.4,0.35);}
		\foreach \i in {0,4} {\fill[red!20, rounded corners] (\i) +(-0.4,-0.25) rectangle +(0.4,0.4);}
		\foreach \i in {0,...,5} { \node (\i) at (\i) {\nscale{$t_{i,6,\i}$}};}
\graph {
	(import nodes);
			0 ->["\escale{$k$}"]1->["\escale{$u_{7}$}"]2->["\escale{$w_{3i+2}$}"]3->["\escale{$u_8$}"]4->["\escale{$k$}"]5;
			};
\end{scope}
\end{tikzpicture}
\end{center}
Finally, the gadget $B_i$ is defined as follows:
\begin{center}
\begin{tikzpicture}[new set = import nodes]
\begin{scope}[nodes={set=import nodes}]
		
		\foreach \i in {0,...,4} { \coordinate (\i) at (\i*1.4cm,0) ;}
		\foreach \i in {1,3} {\fill[red!20, rounded corners] (\i) +(-0.4,-0.25) rectangle +(0.4,0.4);}							\foreach \i in {0,...,4} { \node (\i) at (\i) {\nscale{$b_{i,\i}$}};}
\graph {
	(import nodes);
			0 ->["\escale{$X_i$}"]1->["\escale{$u_{9}$}"]2->["\escale{$u_{10}$}"]3->["\escale{$k$}"]4;
			};
\end{scope}

\end{tikzpicture}
\end{center}
Let $(sup, sig)$ be a $\tau$-region solving $\alpha$.
We first argue that the gadgets $H_0,\dots,H_3$ and $F_0,F_1$ and $G_0,\dots, G_{10}$ ensure that a $\tau$-region $(sup, sig)$ solving $\alpha$ satisfies $sig(k)=\inp$, $sup(h_{0,2})=0$ and $sig(u_0)=\dots=sig(u_{10})=\swap$.

If $sig(k)=\free$ and $sup(h_{0,2})=1$ then $s\edge{k}s'$ implies $sup(s)=sup(s')=0$.
Especially, by $sup(h_{0,1})=0$ and $sup(h_{0,2})=1$ we have $sig(y_0)=\swap$.
Moreover, by $sup(h_{2,1})=sup(h_{2,4})=0$ and $sig(y_0)=\swap$ we have that $sup(h_{2,2})=sup(h_{2,3})=1$.
This implies $sig(y_1)\in \{\nop,\used\}$.
By $sup(h_{1,1})=0$ and $h_{1,1}\edge{y_1}$ this implies $sig(y_1)=\nop$ and thus $sup(h_{1,2})=0$.
Furthermore, by $sup(h_{1,2})=sup(h_{1,3})=0$ and $h_{1,2}\edge{y_0}h_{1,3}$ this implies $sig(y_0)\not=\swap$, a contradiction.
Thus, we have $sig(k)\not=\free$.

If $sig(k)=\used$ and $sup(h_{0,2})=0$ then $s\edge{k}s'$ implies $sup(s)=sup(s')=1$.
Thus, we get $sup(h_{0,1})=sup(h_{0,3})=sup(h_{1,3})=1$ which with $sup(h_{0,2})=0$ implies $sig(y_0)=sig(v)=\swap$.
Moreover, $sup(h_{1,3})=1$ and $sig(y_0)=\swap$ imply $sup(h_{1,2})=0$.
By $sup(h_{1,1})=1$, this implies $sig(y_1)\in \{\inp,\res\}$.
Finally, $sup(h_{3,3})=1$ and $sig(v)=sig(y_0)=\swap$ imply $sup(h_{3,1})=1$.
This contradicts $sig(y_1)\in \{\inp,\res\}$.
Thus, $sig(k)\not=\used$.
Altogether, this shows that $sig(k)=\inp$ and $sup(h_{0,2})=0$, which implies $sig(v)=\swap$.

By $sig(k)=\inp$ we have $sup(f_{0,1})=sup(f_{1,1})=sup(g_{j,1})=0$ and $sup(f_{0,3})=sup(f_{1,3})=sup(g_{j,4})=1$.
By $sig(v)=\swap$, this implies $sup(f_{0,2})=sup(f_{1,2})=0$ and thus $sig(z_0), sig(z_1)\in \{\nop,\res,\free\}$.
Moreover, $sup(g_{j,1})=0$, $sup(g_{j,4})=1$ and $sig(z_0), sig(z_1)\in \{\nop,\res,\free\}$ imply $sup(g_{j,2})=0$ and $sup(g_{j,3})=1$ and thus $sig(u_j)=\swap$.

Let $i\in \{0,\dots,m-1\}$.
We now show that $T_{i,0},\dots, T_{i,6}$ and $B_i$ collaborate as announced.
By $sig(k)=\inp$ and $sig(u_{9})=sig(u_{10})=\swap$, we have $sup(b_{i,1})=1$ for all $i\in \{0,\dots, m-1\}$.
Since $\edge{X_i}b_{i,1}$ for all $i\in \{0,\dots, m-1\}$, the gadget $B_i$ ensures for all $X\in V(\varphi)$ that $s\edge{X}s'$ and $sup(s)\not=sup(s')$ imply $sig(X)=\swap$.

The gadget $T_{i,0}$ work like this:
By $sig(k)=\inp$ we get that $sup(t_{i,0,1})=0$ and $sup(t_{i,0,8})=1$.
Consequently, the image $sup(t_{i,0,1})\lEdge{sig(X_{i_0})}\dots \lEdge{sig(u_3)}sup(t_{i,0,8})$ of the path $t_{i,0,1}\edge{X_{i_0}}\dots\edge{u_3}t_{i,0,8}$ performs an uneven number of state changes from $0$ to $1$ in $\tau$.
Since $sig(u_0)=\dots=sig(u_3)=\swap$, the events $u_0,\dots, u_3$ perform an even number of state changes.
Thus, either all of $X_{i_0}, X_{i_1}, X_{i_2}$ are mapped to $\swap$ or exactly one of them.
The construction of $T_{i,1},\dots, T_{i,6}$ guarantees that there is exactly one variable event mapped to \swap.

In particular, the gadgets $T_{i,4}, T_{i,5}$ and $T_{i,6}$ ensure that if $e\in \{w_{3i}, w_{3i+1} ,w_{3i+2}\}$ then $sig(e)\not\in \{\nop,\used\}$.
We argue for $w_{3i}$:
By $sig(k)=\inp$ we get $sup(t_{i,4,1})=0$ and $sup(t_{i,4,4})=1$ which, by $sig(u_7)=sig(u_8)=\swap$, implies $sup(t_{i,4,2})=1$ and $sup(t_{i,4,3})=0$.
Clearly, this implies $sig(w_{3i})\not\in \{\nop,\used\}$.
Similarly, we obtain that $sig(w_{3i+1})\not\in \{\nop,\used\}$ and $sig(w_{3i+2})\not\in \{\nop,\used\}$.
 
Finally, the gadgets $T_{i,1}, T_{i,2}$ and $T_{i,3}$ ensure that never two variable events of the same clause can have a swap signature:
By $sig(k)=\inp$ we get that $sup(t_{i,1,1})=0$ and $sup(t_{i,1,7})=1$ which with $sig(u_4)=sig(u_5)=sig(u_6)=\swap$ implies $sup(t_{i,1,3})=0$ and $sup(t_{i,1,6})=0$.
Thus, if $sig(X_{i_0})=sig(X_{i_1})=\swap$ then $sup(t_{i,1,4})=sup(t_{i,1,5})=1$ which implies $sig(w_{3i})\in \{\nop, \used \}$, a contradiction.
Similarly, one uses $T_{i,2}$ and $T_{i,3}$ to show that neither $X_{i_0}$ and $X_{i_2}$ nor $X_{i_1}$ and $X_{i_2}$ can simultaneously be mapped to $\swap$.
As $i$ was arbitrary, there is exactly one variable per clause that is mapped to $\swap$.
Thus, $M=\{X\in V(\varphi) \mid sig(X)=\swap\}$ is a one-in-three model of $\varphi$.

Reversely, a one-in-three model $M$ of $\varphi$ allows a $\tau$-region $(sup, sig)$ that solves $\alpha$:
The red colored area above indicates which states of $H_0,\dots, H_3$, $F_0,F_1$, $G_0,\dots, G_{10}$ and $T_{0,4},T_{0,5},T_{0,6},\dots, T_{m-1,4},T_{m-1,5},T_{m-1,6}$ have positive support.
Moreover, we define $sup(s)=1$ for all $s\in \bot$.
Which states of $T_{i,0}, \dots, T_{i,3}$, where $i\in \{0,\dots, m-1\}$, besides of $k$'s sources get a positive support depends on which of $X_{i_0}, X_{i_1}, X_{i_2}$ belong to $M$.
The red colored area sketches the situation for $X_{i_0}\in M$.
It is easy to see that there is for all $e\in E(A^\tau_\varphi)$ a fitting $sig$-value making $(sup, sig)$ a (solving) $\tau$-region where $sig(k)=\inp$ and $sup(h_{0,2})=0$.
\end{proof}

\begin{theorem}\label{the:nop_inp_set_swap+out_res_used_free}
For any fixed $g\geq 1$, deciding if a $g$-bounded TS $A$ is $\tau$-solvable is NP-complete if $\tau=\{\nop,\inp,\set,\swap\}\cup\omega$ and $\omega\subseteq \{\out,\res,\used,\free\}$ or if $\tau=\{\nop,\out,\res,\swap\}\cup\omega$ and $\omega\subseteq \{\inp,\set,\used,\free\}$.
\end{theorem}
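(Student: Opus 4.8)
The plan is to first apply Lemma~\ref{lem:isomorphy} to collapse the two families into one. The isomorphism $\phi$ that swaps $\inp\leftrightarrow\out$, $\set\leftrightarrow\res$, $\used\leftrightarrow\free$ and fixes $\nop,\swap$ sends $\{\nop,\inp,\set,\swap\}$ to $\{\nop,\out,\res,\swap\}$ and the extension set $\{\out,\res,\used,\free\}$ to $\{\inp,\set,\used,\free\}$; hence it suffices to prove NP-completeness for $\tau=\{\nop,\inp,\set,\swap\}\cup\omega$ with $\omega\subseteq\{\out,\res,\used,\free\}$. As in all earlier reductions I would reduce from cubic monotone one-in-three $3$-SAT, building a $1$-bounded TS $A^\tau_\varphi$ by concatenating directed-path gadgets via the generic joining construction for $1$-bounded inputs. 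Reusing the swap-based architecture of Theorem~\ref{the:nop_inp_res_swap+used_free}, the designated atom is an ESSP atom $\alpha=(k,s)$, and the model is recovered as $M=\{X\in V(\varphi)\mid sig(X)=\swap\}$.

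The first block of gadgets (the head gadgets $H_0,\dots$) must pin down $sig(k)$. Here I would exploit that among all eight interactions only $\inp,\out,\used,\free$ are partial, so only these can solve an ESSP atom, whereas the total maps $\nop,\set,\res,\swap$ are defined on both $0$ and $1$ and are useless for $\alpha$. Since $\inp\in\tau$ always while $\out,\used,\free$ lie in $\tau$ only when they belong to $\omega$, the head gadgets must contain, for each of $\out,\used,\free$, a sub-configuration that derives a contradiction from the assumption $sig(k)\in\{\out,\used,\free\}$ by propagating forced support values along short cycles of $k$-edges, exactly in the style of the $sig(k)\in\{\used,\free\}$ refutations of Theorem~\ref{the:nop_inp_res_swap+used_free}, but augmented by one further gadget that also kills $\out$. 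The outcome is $sig(k)=\inp$ together with $sup(s)=0$.

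Next I would install filler gadgets $F_0,F_1$ and a family $G_j$ that, once $sig(k)=\inp$ is fixed, force the auxiliary rung events $u_j$ to $\swap$ and the separating events $z_0,z_1$ into $\keepze$, mirroring the $F,G$ analysis of Theorem~\ref{the:nop_inp_res_swap+used_free}. The clause gadget $T_{i,0}$ then realises the parity argument: with $sig(k)=\inp$ the image of the clause path runs from support $0$ to support $1$, the even number of interleaved $\swap$-rungs contributes no net parity, so an \emph{odd} number of the three variable events of $\zeta_i$ must change the support. The rotated copies $T_{i,1},T_{i,2},T_{i,3}$, which permute the order of the three variables, together with $T_{i,4},T_{i,5},T_{i,6}$ and the variable gadget $B_i$, enforce that no two variables of a clause simultaneously change the support, so exactly one does; this yields $\vert M\cap\zeta_i\vert=1$ for every clause. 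Conversely, a model is turned into a solving region by reading off the support pattern sketched by the highlighted states, with the remaining (E)SSP atoms handled in the appendix.

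The main obstacle, and the essential difference from Theorem~\ref{the:nop_inp_res_swap+used_free}, is that here $\set\in\tau$ (and possibly $\out,\res\in\omega$), so a variable event is \emph{a priori} allowed to carry a support-changing but non-involutive signature such as $\set$ or $\res$, which would corrupt the parity count. The crux of the construction is therefore to force every variable event into $\{\nop,\swap\}$ uniformly for all admissible $\omega$. I would achieve this by presenting each variable, across its several occurrences in the clause gadgets, with both a sink of support $0$ and a sink of support $1$: a sink of support $1$ forbids $\inp,\res,\free$ (which can only output $0$), a sink of support $0$ forbids $\out,\set,\used$ (which can only output $1$), and together these exclusions leave exactly the identity $\nop$ and the flip $\swap$. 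Checking that the forced support values really do supply both a $0$-sink and a $1$-sink for each variable, independently of which interactions $\omega$ adds, is the technical heart of the argument; once it is in place the parity and pair-exclusion reasoning goes through verbatim, and the equivalence ``$\alpha$ is $\tau$-solvable iff $\varphi$ is one-in-three satisfiable'' follows.
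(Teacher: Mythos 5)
Your overall architecture agrees with the paper's: reduce via Lemma~\ref{lem:isomorphy} to $\tau=\{\nop,\inp,\set,\swap\}\cup\omega$ with $\omega\subseteq\{\out,\res,\used,\free\}$, reuse the swap--parity machinery of Theorem~\ref{the:nop_inp_res_swap+used_free}, and force every variable event into $\{\nop,\swap\}$ by giving it both a $0$-sink and a $1$-sink; the latter is indeed exactly how the paper's gadgets $B_i$ and $T_{i,1},T_{i,2},T_{i,3}$ cooperate. However, there is a genuine flaw at your very first step: you claim the head gadgets can be ``augmented by one further gadget that also kills $\out$'', so that every region solving $\alpha$ satisfies $sig(k)=\inp$. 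This is provably impossible for part of the family covered by the theorem. Whenever $\tau$ is closed under the complement map that exchanges $\inp\leftrightarrow\out$, $\set\leftrightarrow\res$, $\used\leftrightarrow\free$ and fixes $\nop,\swap$ --- for instance $\omega=\{\out,\res\}$ or $\omega=\{\out,\res,\used,\free\}$ --- every $\tau$-region $(sup,sig)$ has a complementary $\tau$-region $(1-sup,\overline{sig})$, and the two solve exactly the same ESSP atoms, because $\delta_\tau(x,i)$ is undefined if and only if $\delta_\tau(1-x,\overline{i})$ is undefined. Hence, if on a positive instance some region with $sig(k)=\inp$ solves $\alpha$ (which your reduction requires), then its complement solves $\alpha$ with $sig(k)=\out$; no gadget can exclude this. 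Conversely, any gadget strong enough to refute $sig(k)=\out$ would, after complementing, also refute $sig(k)=\inp$ and thus make $\alpha$ unsolvable outright, destroying the reduction.

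The paper avoids this trap: its gadgets $H_0,\dots,H_3$ (together with the additional gadgets added only when $\tau\cap\{\used,\free\}\not=\emptyset$) pin down merely $sig(k)\in\{\inp,\out\}$, excluding $\used$ and $\free$ by parity arguments; the subsequent analysis is carried out for $sig(k)=\inp$ and $sup(h_{3,3})=0$ and then invoked symmetrically for $sig(k)=\out$ and $sup(h_{3,3})=1$, both branches yielding $sig(u_j)=\swap$ for all $j$ and the same model $M=\{X\in V(\varphi)\mid sig(X)=\swap\}$. Your argument becomes sound if you replace ``force $sig(k)=\inp$'' by ``force $sig(k)\in\{\inp,\out\}$'' and note that your sink and parity arguments are invariant under complementing the region, so that both branches certify the one-in-three model; as written, the proposal fails on the self-complementary types in the family.
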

\begin{proof}
We present the reduction for the types built by $\tau=\{\nop, \inp,\set,\swap\}\cup\omega$ where $\omega \subseteq \{\out, \res, \used, \free\}$.
Again, the other types are covered by Lemma~\ref{lem:isomorphy}.

The TS $A^\tau_\varphi$ has the following gadgets $H_0,H_1,H_2$ and $H_3$: 
\begin{center}
\begin{tikzpicture}[new set = import nodes]
\begin{scope}[nodes={set=import nodes}]%

		\foreach \i in {0,...,2} { \coordinate (\i) at (\i*1.4cm,0) ;}
		\foreach \i in {0,2} {\fill[red!20, rounded corners] (\i) +(-0.4,-0.25) rectangle +(0.4,0.35);}
		\foreach \i in {0,...,2} { \node (\i) at (\i) {\nscale{$h_{0,\i}$}};}
\graph {(import nodes);
			0 ->["\escale{$k$}"]1->["\escale{$v_0$}"]2;  
			};
\end{scope}
\begin{scope}[xshift=5cm,nodes={set=import nodes}]%

		\foreach \i in {0,...,2} { \coordinate (\i) at (\i*1.4cm,0) ;}
		\foreach \i in {1} {\fill[red!20, rounded corners] (\i) +(-0.4,-0.25) rectangle +(0.4,0.35);}
		\foreach \i in {0,...,2} { \node (\i) at (\i) {\nscale{$h_{1,\i}$}};}
\graph {(import nodes);
			0 ->["\escale{$v_0$}"]1->["\escale{$k$}"]2;  
			};
\end{scope}
\begin{scope}[yshift=-1.1cm, nodes={set=import nodes}]%
		
		\foreach \i in {0,...,4} { \coordinate (\i) at (\i*1.4cm,0) ;}
		\foreach \i in {0} {\fill[red!20, rounded corners] (\i) +(-0.4,-0.25) rectangle +(0.4,0.35);}
		\foreach \i in {2} {\fill[red!20, rounded corners] (\i) +(-0.4,-0.25) rectangle +(1.9,0.35);}
		\foreach \i in {0,...,4} { \node (\i) at (\i) {\nscale{$h_{2,\i}$}};}
\graph {(import nodes);
			0 ->["\escale{$k$}"]1->["\escale{$v_0$}"]2->["\escale{$v_1$}"]3->["\escale{$k$}"]4;  
			};
\end{scope}

\begin{scope}[xshift=6.9cm, yshift=-1.1cm,nodes={set=import nodes}]%
		
		\foreach \i in {0,...,3} { \coordinate (\i) at (\i*1.4cm,0) ;}
		\foreach \i in {0,2} {\fill[red!20, rounded corners] (\i) +(-0.4,-0.25) rectangle +(0.4,0.35);}
		\foreach \i in {0,...,3} { \node (\i) at (\i) {\nscale{$h_{3,\i}$}};}
\graph {
	(import nodes);
			0 ->["\escale{$k$}"]1->["\escale{$v_1$}"]2->["\escale{$v_0$}"]3;  
			};
\end{scope}
\end{tikzpicture}
\end{center}
If $\tau\cap\{\used,\free\}\not=\emptyset$ then $A^\tau_\varphi$ has also the following gadgets $H_4,\dots, H_9 $:
\begin{center}
\begin{tikzpicture}[new set = import nodes]
\begin{scope}[nodes={set=import nodes}]%
		\foreach \i in {0,...,4} { \coordinate (\i) at (\i*1.2cm,0) ;}
		\foreach \i in {0,3} {\fill[red!20, rounded corners] (\i) +(-0.4,-0.25) rectangle +(0.4,0.35);}
		\foreach \i in {0,...,4} { \node (\i) at (\i) {\nscale{$h_{4,\i}$}};}
\graph {
	(import nodes);
			0 ->["\escale{$k$}"]1->["\escale{$x$}"]2->["\escale{$v_0$}"]3->["\escale{$k$}"]4;  
			};
\end{scope}
\begin{scope}[xshift=6cm,nodes={set=import nodes}]%
		\foreach \i in {0,...,4} { \coordinate (\i) at (\i*1.2cm,0) ;}
		\foreach \i in {0} {\fill[red!20, rounded corners] (\i) +(-0.4,-0.25) rectangle +(0.4,0.35);}
		\foreach \i in {2} {\fill[red!20, rounded corners] (\i) +(-0.4,-0.25) rectangle +(1.6,0.35);}
		\foreach \i in {0,...,4} { \node (\i) at (\i) {\nscale{$h_{5,\i}$}};}
\graph {
	(import nodes);
			0 ->["\escale{$k$}"]1->["\escale{$v_0$}"]2->["\escale{$x$}"]3->["\escale{$k$}"]4;  
			};
\end{scope}
\begin{scope}[yshift=-1.1cm,nodes={set=import nodes}]%
		\foreach \i in {0,...,4} { \coordinate (\i) at (\i*1.2cm,0) ;}
		\foreach \i in {0,3} {\fill[red!20, rounded corners] (\i) +(-0.4,-0.25) rectangle +(0.4,0.35);}
		\foreach \i in {0,...,4} { \node (\i) at (\i) {\nscale{$h_{6,\i}$}};}
\graph {
	(import nodes);
			0 ->["\escale{$k$}"]1->["\escale{$x$}"]2->["\escale{$y_0$}"]3->["\escale{$k$}"]4;  
			};
\end{scope}
\begin{scope}[xshift=6cm,yshift=-1.1cm,nodes={set=import nodes}]%
		\foreach \i in {0,...,4} { \coordinate (\i) at (\i*1.2cm,0) ;}
		\foreach \i in {0} {\fill[red!20, rounded corners] (\i) +(-0.4,-0.25) rectangle +(0.4,0.35);}
		\foreach \i in {2} {\fill[red!20, rounded corners] (\i) +(-0.4,-0.25) rectangle +(1.6,0.35);}
		\foreach \i in {0,...,4} { \node (\i) at (\i) {\nscale{$h_{7,\i}$}};}
\graph {
	(import nodes);
			0 ->["\escale{$k$}"]1->["\escale{$y_0$}"]2->["\escale{$x$}"]3->["\escale{$k$}"]4;  
			};
\end{scope}
\end{tikzpicture}
\end{center}
\begin{center}
\begin{tikzpicture}[new set = import nodes]
\begin{scope}[yshift=-2.2cm,nodes={set=import nodes}]%
		\foreach \i in {0,...,4} { \coordinate (\i) at (\i*1.2cm,0) ;}
		\foreach \i in {0,3} {\fill[red!20, rounded corners] (\i) +(-0.4,-0.25) rectangle +(0.4,0.35);}
		\foreach \i in {0,...,4} { \node (\i) at (\i) {\nscale{$h_{8,\i}$}};}
\graph {
	(import nodes);
			0 ->["\escale{$k$}"]1->["\escale{$x$}"]2->["\escale{$y_1$}"]3->["\escale{$k$}"]4;  
			};
\end{scope}
\begin{scope}[xshift=6cm,yshift=-2.2cm,nodes={set=import nodes}]%
		\foreach \i in {0,...,4} { \coordinate (\i) at (\i*1.2cm,0) ;}
		\foreach \i in {0} {\fill[red!20, rounded corners] (\i) +(-0.4,-0.25) rectangle +(0.4,0.35);}
		\foreach \i in {2} {\fill[red!20, rounded corners] (\i) +(-0.4,-0.25) rectangle +(1.6,0.35);}
		\foreach \i in {0,...,4} { \node (\i) at (\i) {\nscale{$h_{9,\i}$}};}
\graph {
	(import nodes);
			0 ->["\escale{$k$}"]1->["\escale{$y_1$}"]2->["\escale{$x$}"]3->["\escale{$k$}"]4;  
			};
\end{scope}
\begin{scope}[yshift=-3.3cm,nodes={set=import nodes}]%
		\foreach \i in {0,...,4} { \coordinate (\i) at (\i*1.2cm,0) ;}
		\foreach \i in {0,3} {\fill[red!20, rounded corners] (\i) +(-0.4,-0.25) rectangle +(0.4,0.35);}
		\foreach \i in {0,...,4} { \node (\i) at (\i) {\nscale{$h_{10,\i}$}};}
\graph {
	(import nodes);
			0 ->["\escale{$k$}"]1->["\escale{$x$}"]2->["\escale{$y_2$}"]3->["\escale{$k$}"]4;  
			};
\end{scope}
\begin{scope}[xshift=6cm,yshift=-3.3cm,nodes={set=import nodes}]%
		\foreach \i in {0,...,4} { \coordinate (\i) at (\i*1.2cm,0) ;}
		\foreach \i in {0} {\fill[red!20, rounded corners] (\i) +(-0.4,-0.25) rectangle +(0.4,0.35);}
		\foreach \i in {2} {\fill[red!20, rounded corners] (\i) +(-0.4,-0.25) rectangle +(1.6,0.35);}
		\foreach \i in {0,...,4} { \node (\i) at (\i) {\nscale{$h_{11,\i}$}};}
\graph {
	(import nodes);
			0 ->["\escale{$k$}"]1->["\escale{$y_2$}"]2->["\escale{$x$}"]3->["\escale{$k$}"]4;  
			};
\end{scope}
\begin{scope}[yshift=-4.4cm,nodes={set=import nodes}]%
		\foreach \i in {0,...,5} { \coordinate (\i) at (\i*1.4cm,0) ;}
		\foreach \i in {0} {\fill[red!20, rounded corners] (\i) +(-0.4,-0.25) rectangle +(0.4,0.35);}
		\foreach \i in {2} {\fill[red!20, rounded corners] (\i) +(-0.4,-0.25) rectangle +(3.2,0.35);}
		\foreach \i in {0,...,5} { \node (\i) at (\i) {\nscale{$h_{12,\i}$}};}
\graph {
	(import nodes);
			0 ->["\escale{$k$}"]1->["\escale{$y_0$}"]2->["\escale{$y_1$}"]3->["\escale{$y_2$}"]4->["\escale{$k$}"]5;  
			};
\end{scope}
\end{tikzpicture}
\end{center}
The gadgets $H_0,\dots, H_3$ and $H_4,\dots, H_9$, if added, provide $\alpha=(k,h_{3,3})$ and ensure that a $\tau$-region $(sup,sig)$ solving $\alpha$ satisfies $sig(k)\in \{\inp,\out\}$.

The TS $A^\tau_\varphi$ adds the following gadgets $F_0,F_1,F_2$ and $G_i, N_i$, $\forall i\in \{0,\dots, 13\}$:
\begin{center}
\begin{tikzpicture}[new set = import nodes]
\begin{scope}[nodes={set=import nodes}]%
		\foreach \i in {0,...,4} { \coordinate (\i) at (\i*1.3cm,0) ;}
		\foreach \i in {0,3} {\fill[red!20, rounded corners] (\i) +(-0.4,-0.25) rectangle +(0.4,0.35);}
		\foreach \i in {0,...,4} { \node (\i) at (\i) {\nscale{$f_{0,\i}$}};}
\graph {
	(import nodes);
			0 ->["\escale{$k$}"]1->["\escale{$z_0$}"]2->["\escale{$v_0$}"]3->["\escale{$k$}"]4;
			};
\end{scope}
\begin{scope}[xshift=6.3cm, nodes={set=import nodes}]%
		\foreach \i in {0,...,4} { \coordinate (\i) at (\i*1.3cm,0) ;}
		\foreach \i in {0,3} {\fill[red!20, rounded corners] (\i) +(-0.4,-0.25) rectangle +(0.4,0.35);}
		\foreach \i in {0,...,4} { \node (\i) at (\i) {\nscale{$f_{1,\i}$}};}
\graph {
	(import nodes);
			0 ->["\escale{$k$}"]1->["\escale{$z_1$}"]2->["\escale{$v_0$}"]3->["\escale{$k$}"]4;
			};
\end{scope}
\begin{scope}[yshift=-1.2cm, nodes={set=import nodes}]
		
		\foreach \i in {0,...,5} { \coordinate (\i) at (\i*1.3cm,0) ;}
		\foreach \i in {0,4} {\fill[red!20, rounded corners] (\i) +(-0.4,-0.25) rectangle +(0.4,0.35);}
		\foreach \i in {0,...,5} { \node (\i) at (\i) {\nscale{$f_{2,\i}$}};}
\graph {
	(import nodes);
			0 ->["\escale{$k$}"]1->["\escale{$z_0$}"]2->["\escale{$z_1$}"]3->["\escale{$z_2$}"]4->["\escale{$k$}"]5;
			};
\end{scope}
\begin{scope}[yshift=-2.4cm,nodes={set=import nodes}]
		
		\foreach \i in {0,...,5} { \coordinate (\i) at (\i*1.05cm,0) ;}
		\foreach \i in {0} {\fill[red!20, rounded corners] (\i) +(-0.3,-0.25) rectangle +(0.3,0.35);}
		\foreach \i in {3} {\fill[red!20, rounded corners] (\i) +(-0.4,-0.25) rectangle +(1.4,0.35);}
		\foreach \i in {0,...,5} { \node (\i) at (\i) {\nscale{$g_{i,\i}$}};}
\graph {
	(import nodes);
			0 ->["\escale{$k$}"]1->["\escale{$z_0$}"]2->["\escale{$u_i$}"]3->["\escale{$z_1$}"]4->["\escale{$k$}"]5;
			};
\end{scope}
\begin{scope}[xshift=6.2cm,yshift=-2.4cm, nodes={set=import nodes}]
		
		\foreach \i in {0,...,5} { \coordinate (\i) at (\i*1.05cm,0) ;}
		\foreach \i in {0,2,4} {\fill[red!20, rounded corners] (\i) +(-0.3,-0.25) rectangle +(0.3,0.35);}
		\foreach \i in {0,...,5} { \node (\i) at (\i) {\nscale{$n_{i,\i}$}};}
\graph {
	(import nodes);
			0 ->["\escale{$k$}"]1->["\escale{$z_2$}"]2->["\escale{$u_i$}"]3->["\escale{$v_0$}"]4->["\escale{$k$}"]5;
			};
\end{scope}
\end{tikzpicture}
\end{center}
The gadgets $F_0,F_1,F_2$ and $G_0,N_0,\dots, G_{13}, N_{13}$ guarantee that if $(sup, sig)$ solves $\alpha$ then $sig(u_i)=\swap$.
Similarly to the reduction of Theorem~\ref{the:nop_inp_res_swap+used_free}, the TS $A^\tau_\varphi$ has for every $i\in \{0,\dots, m-1\}$ gadgets $T_{i,0},\dots, T_{i,6}$ and $B_i$ to make the one-in-three satisfiability of $\varphi$ and the $\tau$-solvability of $\alpha$ the same.
These gadgets and the ones for Theorem~\ref{the:nop_inp_res_swap+used_free} have basically the same intention.
However, since the current types have different interactions, the peculiarity of these gadgets is slightly different.
In particular, $A^\tau_\varphi$ has for each clause $\zeta_i=\{X_{i_0}, X_{i_1}, X_{i_2}\}$ the following gadget $T_{i,0}$:
\begin{center}
\begin{tikzpicture}[new set = import nodes]
\begin{scope}[nodes={set=import nodes}]%
		
		\foreach \i in {0,...,9} { \coordinate (\i) at (\i*1.25cm,0) ;}
		\foreach \i in {0,2,8} {\fill[red!20, rounded corners] (\i) +(-0.3,-0.25) rectangle +(0.3,0.4);}
		\foreach \i in {4} {\fill[red!20, rounded corners] (\i) +(-0.3,-0.25) rectangle +(1.6,0.4);}
		\foreach \i in {0,...,9} { \node (\i) at (\i) {\nscale{$t_{i,0,\i}$}};}
\graph {
	(import nodes);
			0 ->["\escale{$k$}"]1->["\escale{$u_{0}$}"]2->["\escale{$X_{i_0}$}"]3->["\escale{$u_{1}$}"]4->["\escale{$X_{i_1}$}"]5->["\escale{$u_{2}$}"]6->["\escale{$X_{i_2}$}"]7->["\escale{$u_{3}$}"]8->["\escale{$k$}"]9;
			};
\end{scope}
\end{tikzpicture}
\end{center}
Moreover, the gadgets $T_{i,1}, T_{i,2}$ and $T_{i,3}$ are defined as follows:
\begin{center}
\begin{tikzpicture}[new set = import nodes]
\begin{scope}[nodes={set=import nodes}]
		
		\foreach \i in {0,...,8} { \coordinate (\i) at (\i*1.4cm,0) ;}
		\foreach \i in {0,2,7} {\fill[red!20, rounded corners] (\i) +(-0.4,-0.25) rectangle +(0.4,0.4);}
		\foreach \i in {4} {\fill[red!20, rounded corners] (\i) +(-0.4,-0.25) rectangle +(1.8,0.4);}
		\foreach \i in {0,...,8} { \node (\i) at (\i) {\nscale{$t_{i,1,\i}$}};}
\graph {
	(import nodes);
			0 ->["\escale{$k$}"]1->["\escale{$u_{4}$}"]2->["\escale{$X_{i_0}$}"]3->["\escale{$w_{3i}$}"]4->["\escale{$X_{i_1}$}"]5->["\escale{$u_{5}$}"]6->["\escale{$u_{6}$}"]7->["\escale{$k$}"]8;
			};
\end{scope}
\begin{scope}[yshift=-1.1cm,nodes={set=import nodes}]
		
		\foreach \i in {0,...,8} { \coordinate (\i) at (\i*1.4cm,0) ;}
		\foreach \i in {0,5,7} {\fill[red!20, rounded corners] (\i) +(-0.4,-0.25) rectangle +(0.4,0.4);}
		\foreach \i in {2} {\fill[red!20, rounded corners] (\i) +(-0.4,-0.25) rectangle +(1.75,0.4);}
		\foreach \i in {0,...,8} { \node (\i) at (\i) {\nscale{$t_{i,2,\i}$}};}
\graph {
	(import nodes);
			0 ->["\escale{$k$}"]1->["\escale{$u_{4}$}"]2->["\escale{$X_{i_2}$}"]3->["\escale{$w_{3i+1}$}"]4->["\escale{$X_{i_0}$}"]5->["\escale{$u_{5}$}"]6->["\escale{$u_{6}$}"]7->["\escale{$k$}"]8;
			};
\end{scope}
\begin{scope}[yshift=-2.2cm,nodes={set=import nodes}]
		
		\foreach \i in {0,...,8} { \coordinate (\i) at (\i*1.4cm,0) ;}
		\foreach \i in {0,7} {\fill[red!20, rounded corners] (\i) +(-0.4,-0.25) rectangle +(0.4,0.4);}
		\foreach \i in {2} {\fill[red!20, rounded corners] (\i) +(-0.4,-0.25) rectangle +(4.7,0.4);}
		\foreach \i in {0,...,8} { \node (\i) at (\i) {\nscale{$t_{i,3,\i}$}};}
\graph {
	(import nodes);
			0 ->["\escale{$k$}"]1->["\escale{$u_{4}$}"]2->["\escale{$X_{i_1}$}"]3->["\escale{$w_{3i+2}$}"]4->["\escale{$X_{i_2}$}"]5->["\escale{$u_{5}$}"]6->["\escale{$u_{6}$}"]7->["\escale{$k$}"]8;
			};
\end{scope}
\end{tikzpicture}
\end{center}
Furthermore, the gadgets $T_{i,4}, T_{i,5}$ and $T_{i,6}$ are defined by 
\begin{center}
\begin{tikzpicture}[new set = import nodes]
\begin{scope}[nodes={set=import nodes}]%
		
		\foreach \i in {0,...,7} { \coordinate (\i) at (\i*1.5cm,0) ;}
		\foreach \i in {0,2,4,6} {\fill[red!20, rounded corners] (\i) +(-0.4,-0.25) rectangle +(0.4,0.4);}
		\foreach \i in {0,...,7} { \node (\i) at (\i) {\nscale{$t_{i,4,\i}$}};}
\graph {
	(import nodes);
			0 ->["\escale{$k$}"]1->["\escale{$u_{7}$}"]2->["\escale{$u_{8}$}"]3->["\escale{$w_{3i}$}"]4->["\escale{$u_{9}$}"]5->["\escale{$u_{10}$}"]6->["\escale{$k$}"]7;
			};
\end{scope}
\end{tikzpicture}
\end{center}
\begin{center}
\begin{tikzpicture}[new set = import nodes]
\begin{scope}[yshift=-1.2cm,nodes={set=import nodes}]
		
		\foreach \i in {0,...,7} { \coordinate (\i) at (\i*1.5cm,0) ;}
		\foreach \i in {0,2,4,6} {\fill[red!20, rounded corners] (\i) +(-0.4,-0.25) rectangle +(0.4,0.4);}
		\foreach \i in {0,...,7} { \node (\i) at (\i) {\nscale{$t_{i,5,\i}$}};}
\graph {
	(import nodes);
			0 ->["\escale{$k$}"]1->["\escale{$u_{7}$}"]2->["\escale{$u_{8}$}"]3->["\escale{$w_{3i+1}$}"]4->["\escale{$u_{9}$}"]5->["\escale{$u_{10}$}"]6->["\escale{$k$}"]7;
			};
\end{scope}
\end{tikzpicture}
\end{center}
\begin{center}
\begin{tikzpicture}[new set = import nodes]
\begin{scope}[yshift=-2.4cm,nodes={set=import nodes}]
		
		\foreach \i in {0,...,7} { \coordinate (\i) at (\i*1.5cm,0) ;}
		\foreach \i in {0,2,4,6} {\fill[red!20, rounded corners] (\i) +(-0.4,-0.25) rectangle +(0.4,0.4);}
		\foreach \i in {0,...,7} { \node (\i) at (\i) {\nscale{$t_{i,6,\i}$}};}
\graph {
	(import nodes);
			0 ->["\escale{$k$}"]1->["\escale{$u_{7}$}"]2->["\escale{$u_{8}$}"]3->["\escale{$w_{3i+2}$}"]4->["\escale{$u_{9}$}"]5->["\escale{$u_{10}$}"]6->["\escale{$k$}"]7;
			};
\end{scope}
\end{tikzpicture}
\end{center}
Finally, the TS $A^\tau_\varphi$ has for all $i\in \{0,\dots, m-1\}$ the following gadget $B_i$:
\begin{center}
\begin{tikzpicture}[new set = import nodes]
\begin{scope}[nodes={set=import nodes}]
		
		\foreach \i in {0,...,3} { \coordinate (\i) at (\i*1.3cm,0) ;}
		\foreach \i in {2} {\fill[red!20, rounded corners] (\i) +(-0.4,-0.25) rectangle +(0.4,0.4);}
		\foreach \i in {0,...,3} { \node (\i) at (\i) {\nscale{$b_{i,\i}$}};}
\graph {
	(import nodes);
			0 ->["\escale{$X_i$}"]1->["\escale{$u_{11}$}"]2->["\escale{$k$}"]3;
			};
\end{scope}

\end{tikzpicture}
\end{center}
We briefly argue for the announced functionality of the gadgets.
Let $(sup, sig)$ be a region that solves $\alpha$.
If $sig(k)=\free$ then $sup(h_{3,3})=1$ and $s\edge{k}s'$ implies $sup(s)=sup(s')=0$.
Since $sup(h_{3,1})=0$ and $sup(h_{3,3})=1$, there is an event $e\in \{v_0,v_1\}$ such that $sig(e)\in \{\out,\set,\swap\}$.
If $sig(v_0)\in \{\out,\set,\swap\}$ then by $sup(h_{1,1})=0$ we get $sig(v_{0})=\swap$.
Moreover, if $sig(v_1)\in \{\out,\set,\swap\}$, which implies $sig(h_{3,2})=1$, then by $sup(h_{2,3})=0$ we get $sig(v_1)=\swap$.
By $sig(v_1)=\swap$ and $sup(h_{2,3})=0$ we get $sup(h_{2,2})=1$ which implies with $sup(h_{1,1})$ that $sig(v_0)=\swap$.
Thus, in any case we get $sig(v_0)=\swap$.
By $sig(v_0)=\swap$ and $sup(h_{4,3})=sup(h_{5,1})=0$ we obtain  $sup(h_{4,2})=sup(h_{5,2})=1$ which implies $sig(x)=\swap$.
Using this and $sup(s)=sup(s')=0$ if $s\edge{k}s'$, we have that $sup(h_{j,2})=1$ for all $j\in \{6,\dots, 11\}$.
This implies $sig(y_0)=sig(y_1)=sig(y_2)=\swap$.
Consequently, by $h_{9,1}\edge{y_0}\dots\edge{y_2}h_{9,4}$, there is an uneven number of state changes from $0$ to $0$ in $\tau$.
This is a contradiction.
Thus, $sig(k)\not=\free$.

Similarly, if $sig(k)=\used$ and $sup(h_{3,3})=0$ then there is an uneven number of state changes from $1$ to $1$ in $\tau$, a contradiction.
Thus, $sig(k)\not=\used$.

We conclude that $sig(k)=\inp$ and $sup(h_{3,3})=0$ which implies $sig(v_0)\not\in\{\out,\set\}$ and $sup(s)=0$ and $sup(s')=1$ if $s\edge{k}s'$.
Thus, by $sup(h_{2,1})=0$ and $sup(h_{2,3})=1$ there is an event $e\in \{v_0,v_1\}$ such that $sig(e)\in \{\out,\set,\swap\}$.
If $e=v_0$ then $sig(v_0)=\swap$.
Moreover, if $e=v_1$ then $sup(h_{3,2})=1$ which with $sup(h_{3,3})=0$ and $sup(h_{1,1})=1$ implies $sig(v_0)=\swap$.
Consequently, any case implies $sig(v_0)=\swap$.
This results in $sig(u_j)=\swap$ for all $j\in \{0,\dots, 13\}$ as follows.
By $sup(f_{0,3})=sup(f_{1,3})=1$ and $sig(v)=\swap$ we obtain $sup(f_{0,2})=sup(f_{1,2})=0$ which with $sup(f_{0,1})=sup(f_{1,1})=0$ implies $sig(z_0), sig(z_1)\in \{\nop,\res,\free\}$.
Moreover, by $sig(z_0), sig(z_1)\in \{\nop,\res,\free\}$ and $sup(f_{2,1})=0$ we get $sup(f_{2,3})=0$ which with $sup(f_{2,4})=1$ implies $sig(z_2)\in \{\out,\set,\swap\}$.
By $sig(z_0)\in \{\nop,\res,\free\}$ and $sup(g_{i,1})=0$, we get $sup(g_{i,2})=0$.
Furthermore, $sig(z_1)\in \{\nop,\res,\free\}$ and $sup(g_{i,4})=1$ yields $sig(z_1)=\nop$ and $sup(g_{i,3})=1$.
This implies $sig(u_i)\in \{\out,\set,\swap\}$.
Finally, by $sup(n_{i,1})=0$ and $sig(z_2)\in \{\out,\set,\swap\}$, we get $sup(n_{i,1})=1$ and, by $sup(n_{i,4})=1$ and $sig(v_0)=\swap$, we have $sup(n_{i,3})=0$.
Since $sig(u_i)\in \{\out,\set,\swap\}$, this yields $sig(u_i)=\swap$ for all $i\in \{0,\dots, 13\}$.

The gadgets $T_{i,0},\dots, T_{i,6}$, where $i\in \{0,\dots, m-1\}$, use $sig(k)=\inp$ and $sig(u_i)=\swap$ for all $i\in \{0,\dots, 13\}$ similarly to the ones of Theorem~\ref{the:nop_inp_res_swap+used_free} to ensure that $M=\{X\in V(\varphi) \mid sig(X)=\swap\}$ is a one-in-three model of $\varphi$:
By $sup(t_{i,4,6})=sup(t_{i,5,6})=sup(t_{i,6,6})=1$ and $sig(u_{5})=sig(u_{6})=\swap$ we have $sup(t_{i,4,4})=sup(t_{i,5,4})=sup(t_{i,6,4})=1$ for all $i\in \{0,\dots, m-1\}$.
Thus, if $X\in V(\varphi)$, $s\edge{X}s'$ and $sup(s)\not=sup(s')$ then $sig(X)=\swap$.
Using this, one argues quite similar to the proof of Theorem~\ref{the:nop_inp_res_swap+used_free} that $T_{i,0},\dots, T_{i,6}$ collaborate in way that there is exactly one variable event $X\in \{X_{i_0}, X_{i_1}, X_{i_2}\}$ such that $sig(X)=\swap$.
Thus, $M$ is a corresponding model.
Moreover, if $sig(k)=\out$ and $sup(h_{3,3})=1$ then we obtain again that $sig(u_i)=\swap$ for all $i\in \{0,\dots, 13\}$ which also guarantees that $M$ is a searched model.

Reversely, if $M$ is a one-in-three model of $\varphi$ then we can define analogously to Theorem~\ref{the:nop_inp_res_swap+used_free} a $\tau$-region solving $\alpha$.
\end{proof}

\begin{theorem}[\cite{DBLP:conf/apn/Tredup19}]
For any fixed $g\geq 1$, deciding if a $g$-bounded TS $A$ is $\tau$-solvable is NP-complete if $\tau\in \{\nop,\inp,\out\}\cup\{\used,\free\}$.
\end{theorem}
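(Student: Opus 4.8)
The plan is to follow the common reduction from cubic monotone one-in-three 3SAT used throughout this section, producing for every instance $\varphi$ a $1$-bounded TS $A^\tau_\varphi$ whose distinguished ESSP atom $\alpha=(k,s)$ is $\tau$-solvable exactly when $\varphi$ has a one-in-three model; that the remaining (E)SSP atoms become solvable once $\alpha$ is can then be deferred to the general framework of Section~A, as in the preceding proofs. Since $\{\nop,\inp,\out\}$ is self-isomorphic under the exchange $\inp\leftrightarrow\out$, while $\{\nop,\inp,\out,\used\}\cong\{\nop,\inp,\out,\free\}$ and $\{\nop,\inp,\out,\used,\free\}$ is self-dual, Lemma~\ref{lem:isomorphy} lets me build a single construction and treat $\used,\free$ uniformly: I would design the gadgets so that, whichever of these test interactions happen to lie in $\tau$, none of them can serve as $sig(k)$. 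As dictated by the paragraph on joining $1$-bounded gadgets, every gadget is a simple directed path, and the paths are concatenated through fresh states $\bot_1,\dots,\bot_n$ and fresh events $\ominus_j,\oplus_j$, so that $1$-boundedness is respected globally.

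First I would construct a gadget $H$ (a small bundle of directed paths sharing $k$ and a few auxiliary events $z_\ell$) that provides $\alpha=(k,s)$ and forces any solving region $(sup,sig)$ to satisfy $sig(k)=\out$ and $sup(s)=1$. The only candidate signatures for $k$ are $\inp,\out,\used,\free$, since $\nop$ is total on $\{0,1\}$; the role of $H$ is to eliminate $\inp,\used,\free$. As with the gadgets $H_1,\dots,H_4$ of Theorem~\ref{the:nop_inp_out_set+used_free}, I would place $k$ together with the $z_\ell$ on short paths whose endpoints pin down supports so that assuming $sig(k)\in\{\used,\free\}$ (tests force $sup$ equal at both ends of every $k$-edge) or $sig(k)=\inp$ forces some state to carry two incompatible support values. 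Second, for each clause $\zeta_i=\{X_{i_0},X_{i_1},X_{i_2}\}$ I would add a path gadget $T_i$ of the form $t_{i,0}\edge{k}t_{i,1}\edge{X_{i_0}}t_{i,2}\edge{X_{i_1}}t_{i,3}\edge{X_{i_2}}t_{i,4}\edge{k}t_{i,5}$, so that $sig(k)=\out$ forces $sup(t_{i,1})=1$ and $sup(t_{i,4})=0$; since in this type only $\inp$ can lower the support from $1$ to $0$, at least one variable event of the clause must carry signature $\inp$. Third, a variable gadget $B_i$ of the form $b_{i,0}\edge{X_i}b_{i,1}\edge{k}b_{i,2}$ forces $sup(b_{i,1})=0$ and hence $sig(X_i)\neq\out$ for every variable, so that no variable event may raise the support; consequently, once the support has dropped to $0$ inside $T_i$, no later variable can fire as $\inp$ (undefined on $0$), making the $\inp$-labelled variable of each clause unique.

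The correctness argument then splits the usual two ways. For the forward direction the constraints above show that $M=\{X\in V(\varphi)\mid sig(X)=\inp\}$ meets every clause in exactly one variable, hence is a one-in-three model. For the converse, given a model $M$ I would set $sig(k)=\out$, put $sig(X)=\inp$ for $X\in M$ and $sig(X)=\nop$ for the remaining variable events, and fix each joining event $\oplus_j$ to propagate the intended supports along the concatenation; one then checks inductively along each path, via $sup(s')=\delta_\tau(sup(s),sig(e))$, that $(sup,sig)$ is a well-defined $\tau$-region solving $\alpha$. I expect the main obstacle to be precisely the exclusion of $\used$ and $\free$ as signatures of $k$ under the $1$-boundedness restriction: because a test leaves the support unchanged across its edge, it cannot be ruled out locally by the single $k$-edge of the atom, so $H$ must couple $k$ with the auxiliary events $z_\ell$ over several interlocking paths to manufacture a support contradiction, all while keeping every state of in- and out-degree at most one. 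Designing these few extra paths compactly is the delicate part, and by the symmetry noted above it simultaneously dispatches all four types $\{\nop,\inp,\out\}\cup\omega$ with $\omega\subseteq\{\used,\free\}$.
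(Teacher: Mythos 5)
Your route is genuinely different from the paper's, because the paper does not construct anything for this theorem: it derives it entirely from \cite{DBLP:conf/apn/Tredup19}, observing that the gadgets there are $1$-bounded directed paths, that the $2$-bounded joining used there can be replaced by the $1$-bounded joining defined in this paper, and that the (pure) $1$-bounded net types are isomorphic to $\{\nop,\inp,\out,\used\}$ (resp.\ $\{\nop,\inp,\out\}$), with the remaining types reached via the $\used$/$\free$ symmetry and Lemma~\ref{lem:isomorphy}. Your self-contained reduction is therefore more informative, and its core is sound: your $T_i$ and $B_i$ are exactly the gadgets of Theorem~\ref{the:nop_inp_out_set+used_free}, and your analysis of them is correct --- once $sig(k)=\out$ is known, $B_i$ forces every variable event into $sig^{-1}(\{\nop,\inp,\free\})$, so the support drop from $t_{i,1}$ to $t_{i,4}$ is achieved by exactly one $\inp$-labelled variable per clause. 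Note that you could not have borrowed the $H$-gadgets of Theorem~\ref{the:nop_inp_out_set+used_free} wholesale: their interplay concludes $sig(k_1)\in\{\out,\set\}$ and needs $\set$ to be available (gadget $H_3$ forbids $sig(k_0)=sig(k_1)=\out$), which is presumably why the paper handles \S 8 by citation rather than by adapting that construction.

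There are, however, two concrete problems with your plan for $H$. First, the goal of eliminating $sig(k)=\inp$ is provably unattainable for $\tau=\{\nop,\inp,\out\}$ and $\tau=\{\nop,\inp,\out,\used,\free\}$: these types are closed under the swap $\phi$ with $\phi(\inp)=\out$, $\phi(\used)=\free$, $\phi(\nop)=\nop$, and for any $\tau$-region $(sup,sig)$ the complement $(1-sup,\phi\circ sig)$ is again a $\tau$-region solving exactly the same (E)SSP atoms; hence if $\alpha$ is solvable at all, it is solvable with $sig(k)=\inp$. You even invoke this self-duality to reduce the number of types, without noticing that it contradicts your design goal. The repair is cheap: force only $sig(k)\in\{\inp,\out\}$ and run the clause argument in mirror image when $sig(k)=\inp$ (then $B_i$ forces variables into $\{\nop,\out,\used\}$ and exactly one variable per clause is $\out$). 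Second, the part you postpone as ``delicate'' --- excluding $\used$ and $\free$ --- is the entire content of the missing gadget, so leaving it unbuilt is a real gap; it can, though, be closed with two short paths, for instance $H=h_0\edge{k}h_1\edge{z_0}h_2\edge{z_1}h_3\edge{k}h_4$ carrying $\alpha=(k,h_2)$, together with $H'=h'_0\edge{z_0}h'_1\edge{k}h'_2$: if $sig(k)=\used$, then $sup(h_1)=sup(h'_1)=1$ and $sup(h_2)=0$ force $sig(z_0)=\inp$, contradicting that $z_0$ has the sink $h'_1$ of support $1$; dually, $sig(k)=\free$ forces $sig(z_0)=\out$ with a sink of support $0$. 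Both tests are excluded, while $sig(k)\in\{\inp,\out\}$ remains realizable from a model (take $sig(z_0)=\nop$ and $sig(z_1)\in\{\inp,\out\}$). Finally, deferring the solvability of all remaining (E)SSP atoms ``to Section~A'' is not available to you: the appendix proves that property only for the paper's own constructions, so for your new TS this (routine but necessary) verification would still have to be carried out.
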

\begin{proof}
The claim follows directly from our result of \cite{DBLP:conf/apn/Tredup19}.
There we use $1$-bounded cycle free gadgets to prove that synthesis of (pure) $b$-bounded is NP-complete.
The joining of \cite{DBLP:conf/apn/Tredup19} yields a $2$-bounded TS.
However, it is easy to see that the $1$-bounded joining of this paper fits, too.
Moreover, the (pure) $1$-bounded type is isomorphic to $\{\nop,\inp,\out,\used\}$ ($\{\nop,\inp,\out\}$) and, by symmetry, $\tau$-solving ESSP atoms by $\used$ is equivalent to solving them by $\free$.
\end{proof}

\section{Polynomial Time Results}%

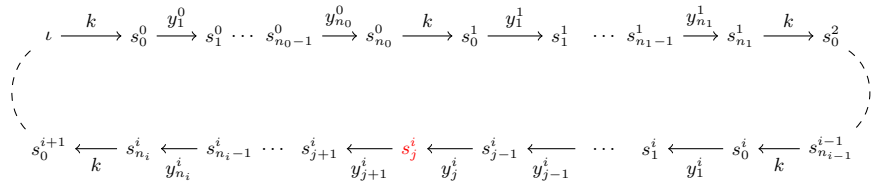
\begin{figure}[b!]
\begin{center}
\begin{tikzpicture}[new set = import nodes]
\begin{scope}[nodes={set=import nodes}]%
		\node (0) at (0,0) {\nscale{$\iota$}};
		\node (s00) at (1.2,0) {\nscale{$s^0_0$}};
		\node (s01) at (2.2,0) {\nscale{$s^0_1$}};
		\node (dots0) at (2.6,0) {\nscale{$\dots$}};
		\node (s0n_1) at (3.2,0) {\nscale{$s^0_{n_0-1}$}};
		\node (s0n) at (4.4,0) {\nscale{$s^0_{n_0}$}};
		\node (s10) at (5.6,0) {\nscale{$s^1_0$}};
		\node (s11) at (6.8,0) {\nscale{$s^1_1$}};
		\node (dots1) at (7.4,0) {\nscale{$\dots$}};
		\node (s1n_1) at (8,0) {\nscale{$s^1_{n_1-1}$}};
		\node (s1n) at (9.2,0) {\nscale{$s^1_{n_1}$}};
		\node (s20) at (10.4,0) {\nscale{$s^2_0$}};
		\node (si_1ni_1) at (10.4,-1.5) {\nscale{$s^{i-1}_{n_{i-1}}$}};
		\node (si0) at (9.2,-1.5) {\nscale{$s^i_0$}};
		\node (si1) at (8,-1.5) {\nscale{$s^i_1$}};
		\node (dotsi) at (7.4,-1.5) {\nscale{$\dots$}};
		\coordinate (c) at (7,-1.5);
		
		\node (sij_1) at (6,-1.5) {\nscale{$s^i_{j-1}$}};
		\node (sij) at (4.8,-1.5) {\nscale{\textcolor{red}{$s^i_j$}}};
		\node (sij+1) at (3.6,-1.5) {\nscale{$s^i_{j+1}$}};
		\node (dotsi) at (3,-1.5) {\nscale{$\dots$}};
		\node (sin_1) at (2.4,-1.5) {\nscale{$s^i_{n_i-1}$}};
		\node (sin) at (1.2,-1.5) {\nscale{$s^i_{n_i}$}};
		\node (si+10) at (0,-1.5) {\nscale{$s^{i+1}_0$}};
\graph {
	(import nodes);
			0 ->["\escale{$k$}"]s00->["\escale{$y^0_1$}"]s01;  
			s0n_1->["\escale{$y^0_{n_0}$}"]s0n->["\escale{$k$}"]s10 ->["\escale{$y^1_1$}"]s11;
			s1n_1->["\escale{$y^1_{n_1}$}"]s1n->["\escale{$k$}"]s20;
			si_1ni_1->["\escale{$k$}"]si0->["\escale{$y^i_1$}"]si1;
			
			c  ->["\escale{$y^i_{j-1}$}"]sij_1->["\escale{$y^i_{j}$}"]sij ->["\escale{$y^i_{j+1}$}"]sij+1 ;
			sin_1->["\escale{$y^i_{n_i}$}"]sin ->["\escale{$k$}"]si+10;
			s20--[bend left=50,dashed]si_1ni_1;
			si+10--[bend left=50,dashed]0;
			};
\end{scope}
\end{tikzpicture}
\end{center}
\caption{A sketch of a cyclic $1$-bounded input $A$ with ESSP atom $\alpha=(k,s^i_j)$.}\label{fig:polynomial}
\end{figure}
\begin{theorem}\label{the:nop_inp_set_1bounded}
For any fixed $g < 2$, one can decide in polynomial time if a $g$-bounded TS $A$ is $\tau$-solvable if $\tau=\{\nop,\inp,\set\}$ or $\tau=\{\nop,\inp,\set,\used\}$ or $\tau=\{\nop,\out,\res\}$ or $\tau=\{\nop,\out,\res,\free\}$ or $\tau=\{\nop,\set,\res\}\cup\omega$ with non-empty $\omega\subseteq\{\inp,\out,\used,\free\}$.
\end{theorem}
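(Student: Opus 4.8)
The plan is to exploit the drastically simplified shape of $1$-bounded inputs and reduce $\tau$-solvability to polynomially many, individually easy, region-existence tests. First I would dispose of $g=0$ (then $A$ consists only of the initial state and is trivially $\tau$-solvable for every $\tau\ni\nop$) and concentrate on $g=1$. Here the out-degree bound forces at most one outgoing arc per state, so the reachable part of $A$ is branching-free; following the unique chain of arcs from $s_0$, it is a directed path that either terminates or runs into a cycle, as sketched in Figure~\ref{fig:polynomial}, and the in-degree bound leaves no room for genuine merging (any merge would force the two incoming arcs to carry the same event). I would fix this normal form first, since every later argument relies on the absence of branching.

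Next, by Lemma~\ref{lem:admissible} and the discussion following it, $A$ is $\tau$-solvable iff it has the $\tau$-SSP and the $\tau$-ESSP, and there are only $O(\vert S\vert^2)$ SSP atoms and $O(\vert E\vert\cdot\vert S\vert)$ ESSP atoms. Hence it suffices to decide, for one fixed atom at a time, whether some $\tau$-region solves it, in polynomial time. By Lemma~\ref{lem:isomorphy} I only need one representative per isomorphism class, so I would argue for $\tau=\{\nop,\inp,\set\}$, $\tau=\{\nop,\inp,\set,\used\}$, and $\tau=\{\nop,\set,\res\}\cup\omega$; the types built from $\out,\res$ then follow via the $\inp\leftrightarrow\out$, $\set\leftrightarrow\res$, $\used\leftrightarrow\free$ isomorphisms.

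The core observation I would use is that along a path or cycle a region $(sup,sig)$ is completely described by the binary labelling $sup$, since each edge $s\edge{e}s'$ only constrains the pair $(sup(s),sup(s'))$ to a transition type realizable in $\tau$, while $sig(e)$ must be one single interaction covering all occurrences of $e$. For $\{\nop,\inp,\set\}$ this yields very rigid, monotone constraints: $\nop$ forces $sup(s)=sup(s')$, $\set$ forces $sup(s')=1$, and $\inp$ forces $sup(s)=1$ and $sup(s')=0$ (with $\used$ additionally permitting $1\to1$). Choosing an interaction per event thus becomes a conjunction of equalities and unit constraints on the $sup$-variables, and solving an atom becomes a small constraint-satisfaction instance: for an ESSP atom $(e,s)$ I would force the only inhibiting choice $sig(e)=\inp$ (hence all $e$-sources $1$, all $e$-targets $0$, and $sup(s)=0$) and test feasibility; for an SSP atom $(s,s')$ I would force $sup(s)\neq sup(s')$. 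Because the underlying graph is a single path/cycle, this reduces to a $2$-SAT/$2$-colouring-style propagation resolvable in linear time, trying the two seed values $sup(s_0)\in\{0,1\}$. For $\{\nop,\set,\res\}\cup\omega$ the labelling is entirely free (every transition type is realizable via $\set,\res,\nop$), so the $\tau$-SSP always holds, and an ESSP atom $(e,s)$ is solvable exactly when some partial interaction from $\omega$ inhibits $e$ at a state consistently with its remaining occurrences — again a directly, polynomially checkable condition.

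The step I expect to be the main obstacle is the cyclic case: when $A$ is a cycle, the propagated support must close up consistently after a full loop, and, more delicately, a repeated event such as $k$ in Figure~\ref{fig:polynomial} occurs at several positions, so its single chosen interaction must simultaneously fit every one of its occurrences around the cycle. I would have to show that this global closing condition together with the per-event consistency never forces an exponential search. Intuitively, this is precisely the place where the NP-hardness of the $2$-bounded variant (Theorem~\ref{the:nop_inp_set_2bounded}) is defeated: without branching there is at most one way the support can evolve between two consecutive occurrences of a repeated event, so the cyclic constraints collapse to a fixed, polynomially testable system instead of an exponential choice among clause-encodings. Making this collapse precise, and carefully accounting for the extra interactions $\used$/$\free$ and the nonempty $\omega$, is the technical heart of the argument.
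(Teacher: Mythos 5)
Your setup matches the paper's: dispose of $g=0$, observe that a $1$-bounded TS is a simple directed path or cycle, check the polynomially many (E)SSP atoms one at a time, and use Lemma~\ref{lem:isomorphy} to cut down the list of types. But the core step --- deciding a single atom in polynomial time --- is where your argument has a genuine gap, and you partly concede this yourself. After forcing $sig(k)=\inp$, the remaining problem is not a ``2-SAT/2-colouring-style propagation'': fixing a seed value $sup(s_0)$ does not propagate deterministically (from $sup(s)=1$ an edge may go to $0$ or stay at $1$, depending on which interaction its event receives), and the requirement that all occurrences of an event carry one common interaction is a constraint of the form ``all occurrence pairs lie in $\{(0,0),(1,1)\}$, or all equal $(1,0)$, or all lie in $\{(0,1),(1,1)\}$'' --- a disjunction of conjunctions over arbitrarily many variables, not a conjunction of binary clauses. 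Likewise your key intuition, that ``without branching there is at most one way the support can evolve between two consecutive occurrences of a repeated event,'' is false: between two occurrences of $k$ the support must go from $0$ to $1$, but it may flip up and down at many positions, and deciding which events get $\set$ and which get $\inp$ is exactly the combinatorial choice that must be tamed. The paper tames it with a normalization lemma that your proposal lacks: if \emph{any} region solves $(k,s^i_j)$, then one of linearly many \emph{canonical} regions does, namely $sig(k)=sig(y^i_{\ell'})=\inp$ for a single further event $y^i_{\ell'}$ of $I^0_i$, $sig(e)=\nop$ for the events $y^i_{\ell'+1},\dots,y^i_j$, and $sig(e)=\set$ for everything else; the algorithm then just tests each candidate position $\ell'$. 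This collapse of the search space is the actual proof of polynomiality, and it is precisely the step you leave open.

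Second, your treatment of $\tau=\{\nop,\set,\res\}\cup\omega$ contains an outright error: the $\tau$-SSP does \emph{not} always hold. If an event repeats consecutively, $s\edge{e}s'\edge{e}s''$ with $s'\not=s''$, then no $\tau$-region separates $s'$ and $s''$: whichever single interaction $e$ receives ($\nop$, $\set$, $\res$, or anything from $\omega$), the supports of $s'$ and $s''$ are forced to coincide or the signature is infeasible on one of the two occurrences. The single-signature-per-event constraint means the support labelling is not ``entirely free.'' The paper instead proves that an SSP atom is $\tau$-solvable iff it is $\{\nop,\res,\set\}$-solvable, that ESSP atoms reduce to $\{\nop,\inp,\res,\set\}$- or $\{\nop,\res,\set,\used\}$-solvability, and then reruns the same canonical-region enumeration, now searching for a $\res$- instead of an $\inp$-labelled event. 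Without the normalization lemma, and with the false SSP claim, your proposal does not yet constitute a proof.
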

\begin{proof}
If $A$ is $\tau$-solvable then no event $e$ of $A$ occurs twice in a row.
Otherwise, the SSP atom $(s',s'')$ of a sequence $s\edge{e}s'\edge{e}s''$ is not $\tau$-solvable.
Thus, in what follows, we assume that $A$ has no event occurring twice in a row.
Moreover, it is easy to see that a $1$-bounded TS $A=s_0\edge{e_1}\dots\edge{e_m}s_m$ is a simple directed path on pairwise distinct states $s_0,\dots, s_m$ or a directed cycle, that is, all states $s_0,\dots, s_m$ except $s_0$ and $s_m$ are pairwise distinct.
This proof proceeds as follows.
First, we assume that $\tau=\{\nop,\inp,\set\}$ and that $A$ is a directed cycle and argue that the $\tau$-solvability of a given ESSP atom $(k,s)$ or a SSP atom $(s,s')$ of $A$ is decidable in polynomial time.
Secondly, we argue that the presented algorithmic approach is applicable to directed paths, too.
Thirdly, we show that the procedure introduced for $\{\nop,\inp,\set\}$ can be extended to $\{\nop,\inp,\set,\used\}$.
By Lemma~\ref{lem:isomorphy}, this proves the claim for $\{\nop,\out,\res\}$ and $\{\nop,\out,\res,\free\}$, too. 
After that we investigate the case where $\tau=\{\nop,\set,\res\}\cup\omega$ with non-empty $\omega\subseteq\{\inp,\out,\used,\free\}$.
We argue that it is sufficient to decide the $\{\nop,\inp,\res,\set\}$- and $\{\nop,\res,\set,\used\}$-solvability of $A$ and that this is doable in polynomial time.
The corresponding procedures again modify those introduced for $\{\nop,\inp,\set\}$.

Let $\tau=\{\nop,\inp,\set\}$ and $A$ be $1$-bounded (cycle) TS with event $k\in E(A)$ that occurs $m$ times.
Since $A$ is a cycle, we can assume that $k$ occurs at $A$'s initial state: $\iota\edge{k}$.
Moreover, since $k$ does not occur twice in a row, its occurrences partition $A$ into $m$ $k$-free subsequences $I_0,\dots, I_{m-1}$ such that $I_i=s^i_0\Edge{y^i_1}s^i_1\dots s^i_{n_i-1}\Edge{y^i_{n_i}}s^i_{n_i}$, $i\in \{0,\dots, m-1\}$, and $s^{m-1}_{n_{m-1}}=\iota$, cf. Figure~\ref{fig:polynomial}.

Obviously, defining $sup(\iota)=1$, $sig(k)=\inp$ and $sig(e)=\set$ for all $e\in E(A)\setminus \{k\}$ inductively yields a region $(sup, sig)$ solving the ESSP atoms $(k,s)$ where $\edge{e}s$.
Thus, it remains to consider the case $\neg(\edge{k}s)$.
Since $\neg (\edge{k}s)$, there is an $i\in \{0,\dots, m-1\}$ such that $s$ is a state of the $i$-th subsequence $I_i$.
In particular, there is a $j\in \{1,\dots,n_i-1 \}$ such that $s=s^i_j$.
The state $s^i_j$ divides $I_i$ into the sequences $I^0_i=s^i_0\Edge{y^i_1}\dots \Edge{y^i_j}s^i_j$ and $I^1_i=s^i_j\Edge{y^i_{j+1}}\dots \Edge{y^i_{n_i}}s^i_{n_i}$, cf. Figure~\ref{fig:polynomial}.

If $(sup, sig)$ is a region that solves $\alpha$ then $sig(k)=\inp$ and $sup(s^i_j)=0$ is true.
This implies for all $\ell\in \{0,\dots, m-1\}$ that $sup(s^\ell_0)=0$ and $sup(s^\ell_{n_\ell})=1$.
Thus, it remains to define the signature of the events of $\bigcup_{\ell=0}^{m-1}E(I_\ell)$ such that $0\lEdge{sig(y^\ell_1)}\dots \lEdge{sig(y^\ell_{n_\ell})}1$, for all $\ell\in \{0,\dots, m-1\}\setminus \{i\}$, and $0\lEdge{sig(y^i_1)}\dots \lEdge{sig(y^i_j)}0$ and $0\lEdge{sig(y^i_{j+1})}\dots \lEdge{sig(y^i_{n_i})}1$.

If there is for all $\ell\in \{0,\dots,m-1\}\setminus\{i\}$ an event $e_\ell\in E(I_\ell)$ such that $e_\ell\not\in E(I^0_i)$ and if there is an event $e_i\in E(I^1_i)$ so that $e_i\not\in E(I^0_i)$ then $sup(\iota)=1$, $sig(k)=\inp$, $sig(e_\ell)=\set$ for all $\ell\in \{0,\dots, \ell\}$ and $sig(e)=\nop$ for all $e\in E(A)\setminus \{k,e_0,\dots,e_\ell\}$ yields a $\tau$-region $(sup, sig)$ of $A$ that solves $\alpha$.
Clearly, whether $A$ satisfies this property is decidable in polynomial time.

Otherwise, there is a sequence $I\in \{I_0,\dots, I_{i-1}, I^1_i,I_{i+1},\dots, I_{m-1}\}$ so that $E(I)\subseteq E(I^0_i)$.
Thus, if $(sup, sig)$ is a $\tau$-region that solves $\alpha$ then there is a $\ell \in \{1,\dots, j-1\}$ such that $sig(y^i_{\ell})=\set$.
Consequently, there has to be a $\ell' \in \{\ell+1,\dots, j\}$ such that $sig(y^i_{\ell'})=\inp$ and, in particular, $sig(y^i_{\ell''})=\nop$ for all $\ell''\in \{\ell'+1,\dots,j\}$.
Using this, one finds that $(sup, sig)$ implies a region $(sup',sig')$ that solves $\alpha$ and gets along with at most two $\inp$-events.
More exactly, defining $sup'(\iota)=1$, $sig'(k)=sig'(y^i_{\ell'})=\inp$, $sig'(e)=\nop$ for all $e\in \{y^i_{\ell'+1}, \dots, y^i_{j}\}$ and $sig'(e)=\set$ for all $e\in E(A)\setminus (\{k,y^i_{\ell'},\dots, y^i_j\})$ yields a valid $\tau$-region $(sup', sig')$ that solves $\alpha$.
Since $(sup, sig)$ was arbitrary, these deliberations show that in the second case the atom $\alpha$ is $\tau$-solvable if and only if there is a corresponding region $(sup',sig')$.
This yields the following polynomial procedure that decides whether $\alpha$ is $\tau$-solvable:
For $\ell$ from $j$ to $2$ test if $(sup_\ell, sig_\ell)$ (inductively) defined by $sup_\ell(\iota)=1$, $sig_\ell(y^i_\ell)=\inp$, $sig_\ell(y^i_{\ell'})=\nop$ for all $\ell'\in \{\ell+1,\dots, j\}$ and $sig_\ell(e)=\set$ for all $e\in E(A)\setminus (\{k,y^i_{\ell},\dots, y^i_j\})$ yields a $\tau$-region of $A$. 
If the test succeeds for any iteration then return \textsf{yes}, otherwise return \textsf{no}.

We can modify this approach to test the $\tau$-solvability of an SSP atom $\beta=(s,s')$ as follows.
Since $A=\iota\edge{e_1}\dots \edge{e_m}\iota$ is a cycle we can assume without loss of generality that $s=\iota$ and $s'=s_i$ for some $i\in \{1,\dots, m-1\}$.
The states $\iota$ and $s_j$ partition $A$ into two subsequences $I_0=\iota\edge{e_1}\dots\edge{e_i}s_i$ and $I_1=s_i\edge{e_{i+1}}\dots \edge{e_m}\iota$.
If $\beta$ is a solvable by a region $(sup', sig')$ such that $sup'(\iota)=1$ and $sup'(s_i)=0$ then there is an event $e\in I_0$ such that $sig(e)=\inp$.
In particular, there is a region $(sup, sig)$ as follows:
$sup(\iota)=1$, $sig(e_j)=\inp$ and $j\in \{1,\dots, i\}$, $sig(e_{\ell})=\nop$ for all $\ell\in \{j+1,\dots, i\}$ and $sig(e)=\set$ for all $e\in E(A)\setminus \{e_j,\dots, e_i\}$.
Similar to the approach for $\alpha$, we can check if such a region exists in polynomial time.
Moreover, the case where $sup(\iota)=0$ and $sup(s_j)=1$ works symmetrically.

So far we have shown that the $\tau$-solvability of (E)SSP atoms of $A$ are decidable in polynomial time if $A$ is a cycle.
If $A=\iota \edge{e_1}\dots\edge{e_m}s_m$ is a directed path then its \emph{cycle extension} $A_c$ has a fresh event $\oplus\not\in E(A)$ and is defined by $A_c=\iota\edge{e_1}\dots\edge{e_m}s_m\edge{\oplus}\iota$.
The event $\oplus$ is unique thus an (E)SSP atom of $A$ is solvable by a $\tau$-region of $A$ if and only if it is by a $\tau$-region of $A_c$.
Thus, we can decide the solvability of atoms of $A$ via $A_c$.
Altogether, this proves that the $\tau$-solvability of (E)SSP atoms of $1$-bounded inputs is decidable in polynomial time.
Since we have at most $\vert S\vert^2+\vert E\vert\cdot\vert S\vert$ atoms to solve, the decidability of the $\{\nop,\inp,\set\}$-solvability for $1$-bounded TS is polynomial.

Similar to the discussion for $\tau=\{\nop,\inp,\set\}$, one argues that the following assertion is true:
If $\tau=\{\nop,\inp,\set,\used\}$ then there is a $\tau$-region $(sup', sig')$ with $sig'(k)=\used$ and $sup(s^i_j)=0$ if and only if there is a $\tau$-region $(sup, sig)$ and an number $\ell\in \{1,\dots, j\}$ such that $sup(\iota)=1$, $sig(k)=\used$, $sig(y^i_\ell)=\inp$, $sig(y^i_{\ell'})=\nop$ for all $\ell'\in \{\ell+1,\dots,j\}$ and $sig(e)=\set$ for all $e\in E(A)\setminus\{k,y^i_\ell,\dots, y^i_j\}$.
Clearly, the procedure introduced for $\{\nop,\inp,\set\}$ can be extended appropriately to a procedure that works for $\{\nop,\inp,\set,\used\}$.

It remains to investigate the case where $\tau=\{\nop,\res,\set\}\cup\omega$ with non-empty $\omega\subseteq\{\inp,\out,\used,\free\}$.
For a start, let's argue that deciding the $\tau$-solvability is equivalent to deciding the $\{\nop,\inp,\res,\set\}$-solvability or the $\{\nop,\res,\set,\used\}$-solvability of $A$.
This can be seen as follows:
If $(sup, sig)$ is a region that solves an ESSP atom $\alpha=(k,s)$ such that $sig(k)=\inp$ then there is a $\{\nop,\inp,\res,\set\}$-region $(sup', sig')$ that solves $(k,s)$, too.
The region $(sup',sig')$ origins from $(sup,sig)$ by $sup'=sup$, $sig'(k)=\inp$ and for all $e\in E(A)\setminus\{k\}$ by $sig'(e)=\nop$ if $sig(e)\in \{\nop,\used,\free\}$, $sig'(e)=\res$ if $sig(e)\in \{\inp,\res\}$ and, finally, $sig'(e)=\set$ if $sig(e)\in \{\out,\set\}$.
Similarly, one argues that $\alpha$ is $\tau$-solvable such that $sig(k)=\out$ if and only if it is $\{\nop,\out,\res,\set\}$-solvable.
Moreover, $\{\nop,\inp,\res,\set\}$ and $\{\nop,\out,\res,\set\}$ are isomorphic thus $\tau$-solvability with $\inp$ or $\out$ reduces to $\{\nop,\inp,\res,\set\}$-solvability.
Similarly, the $\tau$-solvability with $\used$ or $\free$ reduces to $\{\nop,\res,\set, \used\}$-solvability.
It is easy to see that the procedure introduced for $\{\nop,\inp,\set\}$ can be extended to the types $\{\nop,\inp,\res,\set\}$ and $\{\nop,\res,\set, \used\}$.
The only difference is that we now search for an event $y^i_\ell$ such that $sig(y^i_\ell)=\res$ instead of $sig(y^i_\ell)=\inp$.

Finally, we observe that a SSP atom $\beta=(s,s')$ is $\tau$-solvable if and only if it is $\{\nop,\res,\set\}$-solvable.
The states $s$ an $s'$ induce again a partition $I_0$ and $I_1$ of $A$ and we can adapt the approach above to $\{\nop,\res,\set\}$.
\end{proof}

\begin{theorem}\label{trivial}
For any fixed $g\in \mathbb{N}$, deciding whether a $g$-bounded TS $A$ is $\tau$-solvable is polynomial if one of the following conditions is true:
\begin{enumerate}
\item
$\tau=\{\nop,\inp,\free\}$ or $\tau=\{\nop,\inp,\used, \free\}$ or $\tau=\{\nop,\out,\used\}$ or $\tau=\{\nop,\out,\used, \free\}$ and $g < 2$.
\item
$\tau=\{\nop, \set, \res \} \cup \omega$ and $\emptyset \not=\omega \subseteq \{\used, \free\}$ and $g < 3$.
\item
$\tau=\tau'\cup\omega$ and $\tau'\in \{\{\nop,\set,\swap\}, \{\nop,\res,\swap\}, \{\nop,\res,\set, \swap\} \}$ and $\emptyset\not=\omega\subseteq\{\used,\free\}$ and $g < 2$.
\item
$\tau\in\{ \{\nop, \inp\}, \{\nop,\inp,\used\}, \{\nop,\out\}, \{\nop,\out,\free\} \}$ or $\tau\in \mathcal{T}=\{\{\nop,\set,\swap\}, \{\nop,\res,\swap\}, \{\nop,\set,\res\}, \{\nop,\set,\res,\swap\}\}$,
\end{enumerate}
\end{theorem}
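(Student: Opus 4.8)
The plan is to invoke Lemma~\ref{lem:admissible}, which turns deciding $\tau$-solvability into deciding the $\tau$-(E)SSP; since $A$ has at most $|S|^2+|E|\cdot|S|$ (E)SSP atoms, it then suffices to decide the $\tau$-solvability of one atom in polynomial time. I would treat the four conditions separately, exploiting either the restricted shape of low-bound inputs (conditions~1--3) or the algebra of the interaction set (condition~4).

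For condition~4 the bound $g$ is irrelevant, and I would split its types into two families. For $\{\nop,\inp\}$, $\{\nop,\inp,\used\}$ and their $\out$-duals $\{\nop,\out\}$, $\{\nop,\out,\free\}$ (equivalent by Lemma~\ref{lem:isomorphy}), every interaction is monotone on $\{0,1\}$ and each partial one ($\inp,\used$) is undefined exactly on $0$; hence in every region the support is non-increasing along each arc. An ESSP atom $(e,s)$ is then solvable iff a signature $sig(e)\in\{\inp,\used\}$ can be realised with $sup(s)=0$, and an SSP atom iff the two states admit different support; both reduce to a consistency check on a $2$-colouring of $S$ driven by the per-event choice of signature, i.e.\ to a polynomial union-find computation. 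For the family $\{\{\nop,\set,\swap\},\{\nop,\res,\swap\},\{\nop,\set,\res\},\{\nop,\set,\res,\swap\}\}$ every interaction is a \emph{total} function on $\{0,1\}$; consequently no ESSP atom is ever solvable, so $A$ has the $\tau$-ESSP iff every event occurs at every state, which I can test directly. The remaining $\tau$-SSP asks whether some $2$-colouring of $S$ is compatible with assigning each event a single interaction (constraining the value-pairs on its arcs) while separating the given pair; for the swap-free type this is a union-find check and for the swap types I would organise it as a linear computation over $\mathrm{GF}(2)$ in the spirit of flip-flop net synthesis, both polynomial.

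For conditions~1 and~3 I would reuse the $k$-segmented path/cycle description and the cycle-extension device of Theorem~\ref{the:nop_inp_set_1bounded}, since a $1$-bounded TS is a simple directed path or cycle. The four types of condition~1 are monotone, so along a directed cycle the support is forced constant; a cycle with at least two states thus fails the $\tau$-SSP and is answered \textsf{no} at once, while on a path solving an atom amounts to locating a segment that realises the forced $1$-to-$0$ (resp.\ $0$-to-$1$) change, a linear scan. Under condition~3 the members $\set,\res,\swap$ make the support non-monotone, so cycles survive; the nonempty $\omega\subseteq\{\used,\free\}$ supplies the single partial interaction needed to inhibit an event, and the only global constraint is the consistency of the $\swap$-flips and $\set$/$\res$-resets around a cycle, which a per-segment scan together with one $\mathrm{GF}(2)$ check decides. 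Lemma~\ref{lem:isomorphy} reduces the $\{\nop,\res,\swap\}$-case and the $\set$/$\res$-symmetric cases to one representative.

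The step I expect to be the main obstacle is condition~2: the types $\{\nop,\set,\res\}\cup\omega$ with $\emptyset\neq\omega\subseteq\{\used,\free\}$ at $g<3$, i.e.\ for $2$-bounded inputs. Here the clean path/cycle structure of the $1$-bounded case is lost: a $2$-bounded TS branches and re-merges, so the support is no longer propagated along a single thread, and the overwriting behaviour of $\set$/$\res$ must be reconciled with the inhibiting $\used$/$\free$ member of $\omega$ across these branchings. I would need a dedicated structural analysis of $2$-bounded TSs, characterising for each branch/merge pattern when a consistent $\{\nop,\set,\res\}$-support exists that both inhibits the target event (by placing its sources on the appropriate side) and separates a prescribed pair. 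Showing that this characterisation is complete and checkable in polynomial time---so that the bound $g=3$ of Theorem~\ref{lem:nop_set_res+used_free} comes out exactly tight---is the delicate part of the argument.
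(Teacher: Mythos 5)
There is a genuine gap, and you flag it yourself: for condition~2 you only describe what a proof would have to do (``a dedicated structural analysis of $2$-bounded TSs \dots the delicate part of the argument''), so the case $\tau=\{\nop,\set,\res\}\cup\omega$, $g<3$ is simply not proven. Moreover, the branch/merge case analysis you anticipate is not what is needed, because for these types solvable inputs cannot genuinely branch at all. Since $\nop$, $\set$ and $\res$ are total functions, an ESSP atom can only be solved by $\used$ or $\free$, and both of these give source and sink of every transition of the inhibited event the \emph{same} support; hence if $s\edge{e}s'$ then $e$ must also occur at $s'$ (otherwise the atom $(e,s')$ is unsolvable). Next, if $s\edge{e}s'\edge{e}s''$ with $s''\not=s'$, then consecutive $e$-sinks cannot be separated: separating them would require $sig(e)\in\{\set,\res\}$, but these force all $e$-sinks to the same value, while $\nop,\used,\free$ preserve it. So in a solvable $2$-bounded input every non-loop transition $s\edge{e}s'$ must be followed by the loop $s'\edge{e}s'$, i.e.\ the TS is a simple directed path decorated with loops. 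Erasing the loops yields a $1$-bounded path, and the remaining (E)SSP checks are exactly the polynomial procedures of Theorem~\ref{the:nop_inp_set_1bounded} (for $\{\nop,\inp,\res,\set\}$- and $\{\nop,\res,\set,\used\}$-solvability). Without this structural collapse, the tightness of the bound $g<3$ is left open in your proposal.

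Condition~4 has a second, more repairable gap: your claimed algorithms are not justified, and they ignore the one fact that makes the case trivial. After the per-event choice between $\nop$ (an equality constraint on supports) and a partial interaction like $\inp$ or $\used$ (value-forcing constraints on all sources and sinks of that event), the constraint system is \emph{not} an equivalence-relation system, so ``a polynomial union-find computation'' is an assertion, not an argument; the same holds for the $\mathrm{GF}(2)$ claim once $\set$/$\res$ mix with $\swap$. The paper instead observes that in any solvable input every event must occur at the initial state $\iota$ (for $\{\nop,\inp\}$-like types because $(e,\iota)$ is otherwise unsolvable, supports being non-increasing from $\iota$; for the types in $\mathcal{T}$ because no ESSP atom is ever solvable, so every event occurs at every state). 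Hence $\vert E(A)\vert\leq g$, a region is determined by $sup(\iota)$ and $sig$, so there are at most $2\cdot\vert\tau\vert^{g}$ regions---a constant for fixed $g$---and brute force decides every atom. This is also where the hypothesis ``fixed $g$'' enters, which your treatment of condition~4 never uses. Your sketches for conditions~1 and~3 are closer to workable (the paper instead characterizes condition~1 by ``every event occurs exactly once'' on a loop-free path, and shows condition-3-solvable inputs have at most three states, again enabling brute force), but they too substitute unproven algorithmic claims for short structural arguments.
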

\begin{proof}
(1):
It is easy to see that $A$ is a loop, $A\cong s\edge{e}s$ or that $A$ is cycle free, since there is an unsolvable SSP atom otherwise.
Moreover, if an event $e$ occurs twice consecutively, $s\edge{e}s'\edge{e}s''$, then $(s,s')$ is not $\tau$-solvable.
Thus, for every $e\in E(A)$ there is a $s\in S(A)$ such that $(e,s)$ has to be solved by $sig(e)=\inp$ ($sig(e)=\out$) and $sup(s)=0$ ($sup(s)=1$).
If $e$ occurs twice on the directed path $A$ then such a region does not exist.
On the other hand, $A$ is $\tau$-solvable if every event occurs exactly once.
Consequently, $A$ is $\tau$-solvable if and only if it is $1$-bounded and every event occurs exactly once.

(2):
Since ESSP atoms of $\tau$-solvable inputs $A$ are only solvable by $\used$ and $\free$, we have that if $s\edge{e}s'\in A$ then $s'\edge{e}s''\in A$.
If $s=s''\not=s'$ or if $s,s',s''$ are pairwise distinct then $(s,s')$ is not $\tau$-solvable.
This implies $s'\edge{e}s'$.
As a result, $\tau$-solvable inputs have the shape

\begin{center}
\begin{tikzpicture}[new set = import nodes]
\begin{scope}[nodes={set=import nodes}]%
		
		\node (init) at (-0.5,0) {$A=$};
		\node (0) at (0,0) {\nscale{$\iota$}};
		\node (1) at (1.5,0) {\nscale{$s_1$}};
		\node (dots) at (2.25,0) {\nscale{$\dots$}};
		\node (2) at (3,0) {\nscale{$s_{m-1}$}};
		\node (3) at (4.5,0) {\nscale{$s_m$}};
		
		\path (1) edge [->, out=130,in=50,looseness=3] node[above] {\nscale{$e_1$} } (1);	
		\path (3) edge [->, out=130,in=50,looseness=3] node[above] {\nscale{$e_m$} } (3);
		\graph {
	(import nodes);
			0 ->["\escale{$e_0$}"]1;
			2->["\escale{$e_m$}"]3;
			};
\end{scope}
\end{tikzpicture}
\end{center}
Thus, if the \emph{loop erasement} $A'$ of $A$ origins from $A$ by erasing all loops $s\edge{e}s$, that is, $A'=\iota\edge{e_1}\dots\edge{e_m}s_m$, then deciding the $\tau$-solvability of $A$ reduces to deciding if $A'$ has the $\tau$-SSP and if all ESSP atoms $(e,s)$ with $\neg(\edge{e}s)$ of $A'$ are $\tau$-solvable.
This is doable in polynomial time by the approach of Theorem~\ref{the:nop_inp_set_1bounded}. 

(3):
Since ESSP atoms of an input $A$ are only solvable by $\used$ and $\free$, if $s\edge{e}s'$ and $s\not=s'$ then $s'\edge{e}$.
If $s\edge{e}s'\edge{e}s''\edge{e}s'''\in A$ and $s,s',s'', s'''$ are pairwise different, then the SSP atom $(s',s''')$ is not $\tau$-solvable.
As a consequence, $\tau$-solvable inputs can have at most 3 different states.

(4):
Let $\tau\in \{\{\nop, \inp\}, \{\nop,\inp,\used\}\}$.
If there is an event $e\in E(A)$ not occurring at $A$'s initial state $\iota$ then $(\iota, e)$ is not $\tau$-solvable. 
Thus, if $A$ is $\tau$-solvable then every event $e$ of $A$ occurs at $\iota$.
Similarly, if $\tau\in\mathcal{T}$, then ESSP atoms are not $\tau$-solvable thus, every event occurs at $\iota$.
Since $A$ is $g$-bounded, it is $\vert E(A)\vert \leq g$.
Thus, $A$ has at most $2\cdot \vert \tau \vert^g$ $\tau$-regions and, since $g$ is fixed, deciding the $\tau$-solvability is polynomial by brut-force.
By Lemma~\ref{lem:isomorphy}, this  proves the claim.
\end{proof}

\section{Conclusion}%
In this paper, we fully characterize the computational complexity of \nop-equipped Boolean Petri nets from $g$-bounded transition systems, where the bound $g\in \mathbb{N}$ is chosen in advance.
Our results show that if $\tau$-synthesis is hard then it remains hard even for low bounds $g$.
Moreover, they also show that when g becomes very small, sometimes it makes the difference between hardness and tractability,  cf. Figure~\ref{fig:summary} \S 1 - \S 3 and \S 9, but sometimes it does not, cf. Figure~\ref{fig:summary} \S 4 - \S 7.
In this sense, the parameter $g$ helps to recognize interactions that contribute to the power of a type.
By Theorem~\ref{the:nop_inp_set_2bounded} and Theorem~\ref{the:nop_inp_set_1bounded}, $\{\nop, \inp,\set\}$-synthesis is hard if $g\geq 2$ and tractable if $g < 2$, respectively.
By Theorem~\ref{the:nop_inp_set_free+used}, $\{\nop,\inp,\set,\free\}$-synthesis remains hard for all $g \geq 1$.
Thus, if restricted to $1$-bounded inputs then the test interaction $\free$ makes the difference between hardness and tractability of synthesis.
Surprisingly enough, by Theorem~\ref{the:nop_inp_set_1bounded}, replacing $\free$ by $\used$ makes synthesis from $1$-bounded TS tractable again.
It remains future work, to characterize the computational complexity of synthesis for the remaining 128 types which do not contain \nop.
Moreover, since $\tau$-synthesis generally remains hard even for (small) fixed $g$, the bound of a TS is ruled out for FPT-algorithms.
Future work might be concerned with parameterizing $\tau$-synthesis by the \emph{dependence number} of the searched $\tau$-net:
If $N=(P, T, f, M_0)$ is a Boolean net, $p\in P$ and if the \emph{dependence number} $d_p$ of $p$ is defined by $d_p=\vert \{ t\in T\mid f(p,t)\not=\nop\} \vert $ then the \emph{dependence number} $d$ of $N$ is defined by $d=\text{max}\{ d_p \mid p\in P\}$.
At first glance, $d$ appears to be a promising parameter for FPT-approaches because this parameterization puts the problem into the complexity class XP:
Since a $\tau$-region of $A=(S,E,\delta, \iota)$ is determined by $sup(\iota)$ and $sig$, for each (E)SSP atom $\alpha$ there are at most $2\cdot\vert\tau\vert^d\cdot\sum_{i=0}^d \binom{\vert E\vert}{i}$ fitting $\tau$-regions solving $\alpha$.
Thus, by $\vert \tau\vert \leq 8$, $\tau$-synthesis parameterized by $d$ is decidable in $\mathcal{O}(\vert E\vert^d \cdot \vert S\vert \cdot \text{max}\{\vert S\vert ,\vert E\vert\} )$.

\bibliography{myBibliographyTopnoc}%

\begin{appendix}
\section{The $\tau$-Solvability of $\alpha$ Implies the $\tau$-Solvability of $A^\tau_\varphi$}

In this section, we continue the proofs of Theorem~\ref{the:nop_inp_free} - Theorem~\ref{the:nop_inp_set_swap+out_res_used_free} and show that the solvability of $\alpha$ always implies the (E)SSP of $A^\tau_\varphi$.
To do so, we have to argue that all SSP atoms $(s,s')\in S(A^\tau_\varphi)^2$ and all relevant ESSP atoms $S(A^\tau_\varphi)\times E(A^\tau_\varphi)$ are solvable.
However, enumerating all involved regions and presenting them explicitly would be tedious and unreasonable.
Fortunately, due to the uniqueness of the connector events $\oplus$, $\ominus$ and states $\bot$, the symmetry of the gadgets and the fact that several events occur only once per gadget, this is not necessary.
In the following, we use, among others, the following techniques to prove the solvability of (E)SSP atoms.
First, we present regions $(sup, sig)$ implicitly by defining $sup(\iota)$ and $sig$.
This actually determines $(sup, sig)$ completely, to be seen as follows.
Let $s\in S(A^\tau_\varphi)$.
Since $A^\tau_\varphi$ is reachable, there is a path $\iota=s_0\edge{e_1}s_1\dots s_{n-1}\edge{e_n}s_n=s$ from $\iota$ to $s$ in $A^\tau_\varphi$.
Thus, inductively defining $sup(s_{i})=\delta_\tau(sup(s_{i-1},sig(e_{i}))$ for all $i\in \{1,\dots, n\}$ yields $sup(s)$.
By the arbitrariness of $s$, this shows that $sup(\iota)$ and $sig$ define $(sup, sig)$ completely.
Second, we often restrict the presentation of a solving region to the \enquote{interesting} gadgets of $A^\tau_\varphi$.
That is, if we apply this technique, then it is easy to see that the sketched part of the region can be extended from the gadgets shown to $A^\tau_\varphi$.
Third, graphical representations of the corresponding parts of $A^\tau_\varphi$ easily show that the sketched regions are valid.
In particular, the presented figures have colored areas.
The colored areas refer always to the states that are mapped to $1$, the other states of the presented gadgets are assumed to be mapped to $0$.
This allows to easily recognize that $s\edge{e}s'$ implies $sup(s)\edge{sig(e)}sup(s')$, in particular, if $sig(e)$ execute a state change in $\tau$.

\subsection{Continuation of the Proof of Theorem~\ref{the:nop_inp_free}}
The following table implicitly defines regions $(sup, sig)$ via $sup(\bot_0)$ and $sig$.
In particular, $sup(\bot_0)=1$ for all $\tau$-regions presented.
Moreover, for all $e\in E(A^\tau_\varphi)$ we define $sig(e)=\inp$ if $e$ is listed in the $\inp$-column, $sig(e)=\free$ if it is listed in the $\free$-column and else $sig(e)=\nop$.
\begin{flushleft}
\begin{longtable}{c|p{5cm}|p{5cm}}
$R$ & $\inp$ & $\free$ \\ \hline
$R_{i,0}$  & $\ominus_{i+1}$ & $\ominus_{i+2},\dots, \ominus_{m}$, $\oplus_{i+1},\dots,\oplus_m$, $k_0,k_1$   \\
$R_{i,1}$  & $\oplus_i, \ominus_{i+1} $ & \\
$R_\bot$ & $\oplus_0, \dots, \oplus_m$ & $E(A^\tau_\varphi)\setminus \{k_0,k_1, \}$ \\
$R_{k_0,i}$  & $k_0,X_{i_0}$ & \\
$R_{k_1,i}$  & $k_0,X_{i_3}$, $\{\oplus_j\in \oplus \mid X_{i_3}\not\in \zeta_j\}$ & \\
$R_{X_i,0}$ & $X_{i}, \ominus_m$ & \\
$R_{X_i,1}$  & $X_i, k_0$ & \\
$R_{X_i,2}$ & $X_i$, $\{\oplus_j\in \oplus \mid X_i\not\in \zeta_j\}$ & \\
\end{longtable}
\end{flushleft}
In the following, if $e\in \{z_0,\dots, z_{16m-1}\}$ then we say $e$ is a $z$-event.
Let $i\in \{0,\dots, m-1\} $ be arbitrary but fixed.

($\oplus,\ominus$):
Let $S_{i,0}=\{\bot_0,\dots, \bot_i\}\cup \bigcup_{j=0}^i S(T_j)$ and $E_{i,0}=\{ \ominus_{i+2},\dots, \ominus_{m}\}\cup \{\oplus_{i+1},\dots,\oplus_m, k_0,k_1 \}$.
The region $R_{i,0}$ solves all atoms $(e,s)\in \{\ominus_{i+1}\}\times  (S(A^\tau_\varphi)\setminus S_{i,0})$ and $(e,s)\in E_{i,0} \times S_{i,0}$.
The region $R_{i,1}$ solves all atoms $(e,s) \in \{\ominus_{i+1}\} \times S(T_i)$ and $(e,s) \in \{\oplus_i\}\times  (S(T_i)\cup S_{i,0})$.
The region $R=(sup, sig)$ defined by $sup(\bot_0)=1$, $sig(\oplus_m)=\inp$, $sig(k_0)=sig(k_1)=\free$ and $sig(e)=\nop$ if $e\in E(A^\tau_\varphi)\setminus\{\oplus_m,k_0,k_1\}$ solves (all relevant) $(e,s)\in \{ \oplus_m, k_0, k_1\} \times \{\bot_m,m_0,m_1\}$.
Thus, if $e\in \oplus\cup\ominus$, then $e$ is solvable.

($k_0, k_1$):
The region $R_\bot$ solves all $(e,s)\in \{k_0,k_1\}\times \{\bot_0,\dots, \bot_m\}$.
The region $R_{k_0,i}$ solves all $(e,s)\in \{k_0\}\times (S(T_i) \cup \{m_1,m_2\})$.
This proves the solvability of $k_0$.
The region $R_{k_1,i}$ solves all $(e,s)\in \{k_1\}\times \{t_{i,0},t_{i,1}, t_{i,2}, t_{i,4}, m_2\}$ and the $(sup, sig)$ with $sup(\bot_0)=1$ and $sig(k_1)=\inp$ solves $k_1$ at all $\bigcup_{j=0}^{m-1}\{t_{j,4}\}\cup\{m_2\}$.
This proves the solvability of $k_1$.

($X_i$):
The region $R_{X_i,0}$ solves all $(e,s)\in \{X_i\}\times \{\bot_6,m_0,m_1,m_2\}$ and the region $R_{X_i,1}$ solves all $(e,s) \in \{X_i\}\times (\{t_{j,\ell+1},\dots, t_{j,5}\mid j\in \{0,\dots,m-1\}, t_{j,\ell}\edge{X_i} \})$.
Moreover, the region $R_\bot$ solves $X_i$ at the states of $\bot$.
Finally, $R_{X_i,2}$ solves $X_i$ at the states of $T_j$ where $X_i\not\in \zeta_j$.

Since $i$ was arbitrary, this proves the $\tau$-ESSP.
One easily verifies that the presented regions also prove the $\tau$-SSP.

\subsection{Continuation of the Proof of Theorem~\ref{lem:nop_set_res+used_free}}
Let $(k_0,h_{0,2})$ be $\tau$-solvable.
For a start, we show that $A^\tau_\varphi$ has $\tau$-ESSP.
Since $A^\tau_\varphi$ is $\{\nop,\res,\set, \used\}$-solvable if and only if it is $\{\nop,\res,\set, \free\}$-solvable, in the following we only show that $A^\tau_\varphi$ is $\{\nop,\res,\set, \used\}$-solvable.

($e\in \oplus$):
Let $e\in \oplus$ and $s\edge{e}s'\in A^\tau_\varphi$.
We define $sup(s)=sup(s')=1$ and $sup(s'')=0$ for all $s''\in S(A^\tau_\varphi)\setminus \{s,s'\}$ and $sig(e)=\used$.
For all $e'\in E(A^\tau_\varphi)\setminus\{e\} $ and $s\edge{e'}s'\in A^\tau_\varphi$ we define $sig(e')=\set$ if $sup(s)=0$ and $sup(s')=1$, $sig(e')=\res$ if $sup(s)=1$ and $sup(s')=0$ and otherwise $sig(e')=\nop$.
Clearly, this yields a $\tau$-region $(sup,sig)$ that solves $(e,q)$ for all relevant $q\in S(A^\tau_\varphi)$.

($c_{2j}, c_{2j+1}$):
To solve $(c_{2j}, f_{j,0})$ and $(c_{2j}, f_{j,0})$, for all $s\in S(A^\tau_\varphi)$ we define $sup(s)=1$ if $s\in \{f_{j,1}, f_{j,2}, f_{j,4}\}\cup \{s'\in S(T_{i,\ell})\mid z_j\in E(T_{i,\ell})\}$ and else $sup(s)=0$.
Moreover, we let $sig(c_{2j})=\used$, $sig(c_{2j+1})=\res$ and $sig(z_j)=\set$.
It is easy to see that $sig$ can be defined appropriately on all remaining events $e\in E(A^\tau_\varphi)\setminus \{c_{2j},c_{2j+1}, z_j\}$ to get a proper region $(sup, sig)$. 

To solve $(c_{2j}, f_{j,4})$, for all $s\in S(A^\tau_\varphi)$  we define $sup(s)=1$ if $s\in \{f_{j,0}, f_{j,1}, f_{j,2}\}$ and else $sup(s)=0$.
Moreover, we let $sig(c_{2j})=\used$, $sig(c_{2j+1})=\res$ and choose the signature of the other events appropriately to get a proper region $(sup, sig)$.
Finally, to solve $(c_{2_j}, s)$ for all $s\in S(A^\tau_\varphi)\setminus S(F_j)$ we let $sup(s)=1$ if $s\in S(F_j)$ and else $sup(s)=0$ and let $sig(c_{2j})=\used$.
Again, we can define the signature $sig$ on the remaining events to get a proper $(sup, sig)$.

Solving $(c_{2j+1}, s)$ for all $s\in S(A^\tau_\varphi)\setminus \{f_{j,2}, f_{j,3}\}$ is even simpler: 
We define $sup(f_{j,2})=sup(f_{j,3})=1$ and $sup(s)=0$ for all $s\in S(A^\tau_\varphi)\setminus \{f_{j,2}, f_{j,3}\}$ and $sig(c_{2j+1})=\used$, $sig(c_{2j})=\set$, $sig(z_j)=\res$ and $sig(e)=\nop$ for all remaining events $e$.

$(z_j)$:
Let $(i,\ell)\in \{0,\dots, m-1\}\times \{0,\dots,3\}$ such that $z_j\in E(T_{i,\ell})$.
To $\tau$-solve all atoms $(z_j, s)$ for all relevant $s\in (S(A^\tau_\varphi)\setminus S(T_{i,\ell}))$, we define $sup=(S(F_j)\setminus \{f_{j,2}\})\cup S(T_{i,\ell})$ and $sig(z_j)=\used$, $sig(c_{2j})=\res$ and $sig(c_{2j+1})$.
Clearly, $sig$ can be defined appropriately on the remaining events such that $(sup, sig)$ becomes a region that solves $(z_{j}, s)$.

The red area of the following figure sketches how to solve $(z_{16i+9}, s)$ where $s\in \{t_{i,2,0}, t_{i,2,1}\}$ and the blue area sketches this for all $s\in \{t_{i,2,4},\dots, t_{i,2,7}\}$.
The signature of $z_{16i+9}$ is for both cases $\used$.
The states of the colored area are mapped to $1$.
In both cases, it is easy to see that we can extend the sketched part to a fitting region $(sup, sig)$ of $A^\tau_\varphi$.
Moreover, this approach works for all $z$-events $z_j$ and the corresponding states of $T_{i,\ell}$ where $z_j \in E(T_{i,\ell})$.
\begin{flushleft}
\begin{tikzpicture}[new set = import nodes]
\begin{scope}[yshift=-5cm,nodes={set=import nodes}]
	\foreach \i in {0,...,7} {\coordinate (\i) at (\i*1.6cm,0);}
	\foreach \i in {2} {\fill[red!20, rounded corners] (\i) +(-0.4,-1.2) rectangle +(8.5,1.2);}
	\foreach \i in {0} {\fill[blue!20, rounded corners, opacity=0.7] (\i) +(-0.4,-0.9) rectangle +(5.5,1);}
	\foreach \i in {0,...,7} {\node (t\i) at (\i) {\nscale{$t_{i,2,\i}$}};}
	
\path (t0) edge [->, out=-130,in=-50,looseness=3] node[below, align =left] {\nscale{$k_0$} } (t0);	
\path (t1) edge [->, out=130,in=50,looseness=3] node[above, align =left] {\nscale{$z_{16i+8}$} } (t1);

\path (t2) edge [->, out=130,in=50,looseness=3] node[above, align =left] {\nscale{$X_{i_2}$} } (t2);
\path (t2) edge [->, out=-130,in=-50,looseness=3] node[below, align =left] {\nscale{$y_{3i}$} } (t2);

\path (t3) edge [->, out=130,in=50,looseness=3] node[above, align =left] {\nscale{$z_{16i+9}$} } (t3);

\path (t4) edge [->, out=130,in=50,looseness=3] node[above, align =left] {\nscale{$X_{i_0}$} } (t4);
\path (t4) edge [->, out=-130,in=-50,looseness=3] node[below, align =left] {\nscale{$y_{3i+2}$} } (t4);

\path (t5) edge [->, out=130,in=50,looseness=3] node[above, align =left] {\nscale{$z_{16i+10}$} } (t5);
\path (t6) edge [->, out=130,in=50,looseness=3] node[above, align =left] {\nscale{$X_{i_1}$} } (t6);

\path (t7) edge [->, out=130,in=50,looseness=3] node[above, align =left] {\nscale{$z_{16i+11} $} } (t7);
\path (t7) edge [->, out=-130,in=-50,looseness=3] node[below, align =left] {\nscale{$k_1$} } (t7);
\graph {
	(import nodes);
			t0 ->["\escale{$z_{16i+8}$}"]t1;
			t1 ->["\escale{$X_{i_2}$}"]t2;
			t2 ->["\escale{$z_{16i+9}$}"]t3;
			t3 ->["\escale{$X_{i_0}$}"]t4;
			t4 ->["\escale{$z_{16i+10}$}"]t5;
			t5 ->["\escale{$X_{i_1}$}"]t6;
			t6 ->["\escale{$z_{16i+11}$}"]t7;
			};
\end{scope}
\end{tikzpicture}
\end{flushleft}
$(X_j, y_\ell)$:
Let $j\in \{0,\dots, m-1\}$, $\ell\in \{0,\dots, 3m-1\}$ be arbitrary but fixed.
Let $i_0,i_1,i_2\in \{0,\dots, m-1\}$ be pairwise distinct such that $X_j\in \zeta_{i_0}\cap\zeta_{i_1}\cap\zeta_{i_2} $.
To solve $X_{j}$ simultaneously at the states of the gadgets in which $X_j$ does not appear, we define $sup=\bigcup_{i\in\{i_0,i_1,i_2 \}}(S(T_{i,0})\cup S(T_{i,1})\cup S(T_{i,2}))$ and $sig(X_j)=\used$ and extend $sig$ appropriately to get a fitting region $(sup, sig)$.
The red colored area of the following figure sketches the situation for $X_{i_1}$.
\begin{flushleft}
\begin{tikzpicture}[new set = import nodes]
\begin{scope}[nodes={set=import nodes}]
	\foreach \i in {0,...,7} {\coordinate (\i) at (\i*1.6cm,0);}
	\foreach \i in {0} {\fill[red!20, rounded corners] (\i) +(-0.4,-1.1) rectangle +(11.6,1.1);}
	\foreach \i in {3} {\fill[blue!20, rounded corners, opacity=0.7] (\i) +(-0.4,-0.9) rectangle +(2,0.9);}
	\foreach \i in {0,...,7} {\node (t\i) at (\i) {\nscale{$t_{i,0,\i}$}};}
	
\path (t0) edge [->, out=-130,in=-50,looseness=3] node[below, align =left] {\nscale{$k_0$} } (t0);	
\path (t1) edge [->, out=130,in=50,looseness=3] node[above, align =left] {\nscale{$z_{16i}$} } (t1);

\path (t2) edge [->, out=130,in=50,looseness=3] node[above, align =left] {\nscale{$X_{i_0}$} } (t2);
\path (t2) edge [->, out=-130,in=-50,looseness=3] node[below, align =left] {\nscale{$y_{3i+1}$} } (t2);

\path (t3) edge [->, out=130,in=50,looseness=3] node[above, align =left] {\nscale{$z_{16i+1}$} } (t3);

\path (t4) edge [->, out=130,in=50,looseness=3] node[above, align =left] {\nscale{$X_{i_1}$} } (t4);
\path (t4) edge [->, out=-130,in=-50,looseness=3] node[below, align =left] {\nscale{$y_{3i+2}$} } (t4);

\path (t5) edge [->, out=130,in=50,looseness=3] node[above, align =left] {\nscale{$z_{16i+2}$} } (t5);
\path (t6) edge [->, out=130,in=50,looseness=3] node[above, align =left] {\nscale{$X_{i_2}$} } (t6);

\path (t7) edge [->, out=130,in=50,looseness=3] node[above, align =left] {\nscale{$z_{16i+3} $} } (t7);
\path (t7) edge [->, out=-130,in=-50,looseness=3] node[below, align =left] {\nscale{$k_1$} } (t7);
\graph {
	(import nodes);
			t0 ->["\escale{$z_{16i}$}"]t1;
			t1 ->["\escale{$X_{i_0}$}"]t2;
			t2 ->["\escale{$z_{16i+1}$}"]t3;
			t3 ->["\escale{$X_{i_1}$}"]t4;
			t4 ->["\escale{$z_{16i+2}$}"]t5;
			t5 ->["\escale{$X_{i_2}$}"]t6;
			t6 ->["\escale{$z_{16i+3}$}"]t7;
			};
\end{scope}
\end{tikzpicture}
\end{flushleft}
\begin{flushleft}
\begin{tikzpicture}[new set = import nodes]
\begin{scope}[yshift=-2.25cm,nodes={set=import nodes}]
	\foreach \i in {0,...,7} {\coordinate (\i) at (\i*1.6cm,0);}
	\foreach \i in {0} {\fill[red!20, rounded corners] (\i) +(-0.4,-1.1) rectangle +(11.6,1.1);}
	\foreach \i in {1} {\fill[blue!20, rounded corners, opacity=0.7] (\i) +(-0.4,-0.9) rectangle +(2,0.9);}	
	\foreach \i in {0,...,7} {\node (t\i) at (\i) {\nscale{$t_{i,1,\i}$}};}
	
\path (t0) edge [->, out=-130,in=-50,looseness=3] node[below, align =left] {\nscale{$k_0$} } (t0);	
\path (t1) edge [->, out=130,in=50,looseness=3] node[above, align =left] {\nscale{$z_{16i+4}$} } (t1);

\path (t2) edge [->, out=130,in=50,looseness=3] node[above, align =left] {\nscale{$X_{i_1}$} } (t2);
\path (t2) edge [->, out=-130,in=-50,looseness=3] node[below, align =left] {\nscale{$y_{3i}$} } (t2);

\path (t3) edge [->, out=130,in=50,looseness=3] node[above, align =left] {\nscale{$z_{16i+5}$} } (t3);

\path (t4) edge [->, out=130,in=50,looseness=3] node[above, align =left] {\nscale{$X_{i_2}$} } (t4);
\path (t4) edge [->, out=-130,in=-50,looseness=3] node[below, align =left] {\nscale{$y_{3i+1}$} } (t4);

\path (t5) edge [->, out=130,in=50,looseness=3] node[above, align =left] {\nscale{$z_{16i+6}$} } (t5);
\path (t6) edge [->, out=130,in=50,looseness=3] node[above, align =left] {\nscale{$X_{i_0}$} } (t6);
\path (t7) edge [->, out=130,in=50,looseness=3] node[above, align =left] {\nscale{$z_{16i+7} $} } (t7);
\path (t7) edge [->, out=-130,in=-50,looseness=3] node[below, align =left] {\nscale{$k_1$} } (t7);
\graph {
	(import nodes);
			t0 ->["\escale{$z_{16i+4}$}"]t1;
			t1 ->["\escale{$X_{i_1}$}"]t2;
			t2 ->["\escale{$z_{16i+5}$}"]t3;
			t3 ->["\escale{$X_{i_2}$}"]t4;
			t4 ->["\escale{$z_{16i+6}$}"]t5;
			t5 ->["\escale{$X_{i_0}$}"]t6;
			t6 ->["\escale{$z_{16i+7}$}"]t7;
			};
\end{scope}
\end{tikzpicture}
\end{flushleft}
\begin{flushleft}
\begin{tikzpicture}[new set = import nodes]
\begin{scope}[yshift=-5cm,nodes={set=import nodes}]
	\foreach \i in {0,...,7} {\coordinate (\i) at (\i*1.6cm,0);}
	\foreach \i in {0} {\fill[red!20, rounded corners] (\i) +(-0.4,-1.1) rectangle +(11.6,1.1);}
	\foreach \i in {5} {\fill[blue!20, rounded corners, opacity=0.7] (\i) +(-0.4,-0.9) rectangle +(2,0.9);}
	\foreach \i in {0,...,7} {\node (t\i) at (\i) {\nscale{$t_{i,2,\i}$}};}
	
\path (t0) edge [->, out=-130,in=-50,looseness=3] node[below, align =left] {\nscale{$k_0$} } (t0);	
\path (t1) edge [->, out=130,in=50,looseness=3] node[above, align =left] {\nscale{$z_{16i+8}$} } (t1);

\path (t2) edge [->, out=130,in=50,looseness=3] node[above, align =left] {\nscale{$X_{i_2}$} } (t2);
\path (t2) edge [->, out=-130,in=-50,looseness=3] node[below, align =left] {\nscale{$y_{3i}$} } (t2);

\path (t3) edge [->, out=130,in=50,looseness=3] node[above, align =left] {\nscale{$z_{16i+9}$} } (t3);

\path (t4) edge [->, out=130,in=50,looseness=3] node[above, align =left] {\nscale{$X_{i_0}$} } (t4);
\path (t4) edge [->, out=-130,in=-50,looseness=3] node[below, align =left] {\nscale{$y_{3i+2}$} } (t4);

\path (t5) edge [->, out=130,in=50,looseness=3] node[above, align =left] {\nscale{$z_{16i+10}$} } (t5);
\path (t6) edge [->, out=130,in=50,looseness=3] node[above, align =left] {\nscale{$X_{i_1}$} } (t6);

\path (t7) edge [->, out=130,in=50,looseness=3] node[above, align =left] {\nscale{$z_{16i+11} $} } (t7);
\path (t7) edge [->, out=-130,in=-50,looseness=3] node[below, align =left] {\nscale{$k_1$} } (t7);
\graph {
	(import nodes);
			t0 ->["\escale{$z_{16i+8}$}"]t1;
			t1 ->["\escale{$X_{i_2}$}"]t2;
			t2 ->["\escale{$z_{16i+9}$}"]t3;
			t3 ->["\escale{$X_{i_0}$}"]t4;
			t4 ->["\escale{$z_{16i+10}$}"]t5;
			t5 ->["\escale{$X_{i_1}$}"]t6;
			t6 ->["\escale{$z_{16i+11}$}"]t7;
			};
\end{scope}
\end{tikzpicture}
\end{flushleft}
To solve $X_j$ at the remaining states, we define $sup=\{s,s'\in S(A^\tau_\varphi)\mid s\edge{X_j}s'\}\cup \{s\in S(F_j)\mid \exists s'\in S(A^\tau_\varphi): \Edge{z_j}s'\Edge{X_j}\}$, $sig(X_j)=\used$ and, again, extend $sig$ appropriately.
The blue colored area above sketches the situation for $X_{i_1}$ in $T_{i,0}, T_{i,1}, T_{i,2}$.
The event $y_\ell$ can be solved quite similar.

$(k_0, k_1, k_2, k_3)$:
The atom $(k_0, m_{0,2})$ is solved by the region that solves $\alpha$. 
It is easy to see, that all remaining relevant atoms $(k_0, s)$ are solvable, too.
The solution of $(k_1, m_{0,0})$ works symmetrically to the solution of $(k_0, m_{0,2})$, in particular, it requires a one-and-three  model of $\varphi$.
Again, it is easy to see that the remaining atoms $(k_2,s)$ are solvable.
The same is true for all atoms $(e,s)\in \{k_2, k_3\}\times S(A^\tau_\varphi)$ in question.
Finally, one easily finds out that $A^\tau_\varphi$ has the $\tau$-SSP.

\subsection{Continuation of the Proof of Theorem~\ref{the:nop_inp_set_2bounded}}
The proof that the $\tau$-solvability of $\alpha$ implies $A^\tau_\varphi$'s $\tau$-(E)SSP can be found in~\cite{DBLP:journals/corr/abs-1806-03703}.

\subsection{Continuation of the Proof of Theorem~\ref{the:nop_inp_out_set+used_free}}

$(k_0)$:
The following figure sketches the solvability of $(k_0,s)$ for all $s\in \{h_{0,3}, h_{0,2}\}\cup \bigcup_{i=0}^{m-1}\{b_{i,0}\}$.
The presented part of the signature $sig$ is defined by $sig(k_0)=\inp$, $sig(e)=\set$ for all $e\in \{k_1,z_1\}\cup V(\varphi)$ and $sig(z_0)=sig(o)=\nop$.
Surely, $sig$ can be properly extended to the remaining events.
\begin{flushleft}
\begin{tikzpicture}[new set = import nodes]
\begin{scope}[nodes={set=import nodes}]%
		\foreach \i in {0,...,8} { \coordinate (\i) at (\i*1.4cm,0) ;}
		\foreach \i in {0} {\fill[red!20, rounded corners] (\i) +(-0.5,-0.25) rectangle +(0.4,0.35);}
		\foreach \i in {4} {\fill[red!20, rounded corners] (\i) +(-0.5,-0.25) rectangle +(4.6,0.35);}
		\foreach \i in {0,...,8} { \node (\i) at (\i) {\nscale{$h_{0,\i}$}};}
\graph {
	(import nodes);
			0 ->["\escale{$k_0$}"]1->["\escale{$z_0$}"]2 ->["\escale{$o$}"]3 ->["\escale{$k_1$}"]4  ->["\escale{$z_1$}"]5->["\escale{$z_0$}"]6->["\escale{$o$}"]7->["\escale{$k_0$}"]8;
};
\end{scope}
\begin{scope}[yshift=-1.2cm,nodes={set=import nodes}]%
		\foreach \i in {0,...,2} { \coordinate (\i) at (\i*1.5cm,0) ;}
		\foreach \i in {0} {\fill[red!20, rounded corners] (\i) +(-0.5,-0.25) rectangle +(1.9,0.35);}
		\foreach \i in {0,...,2} { \node (\i) at (\i) {\nscale{$h_{1,\i}$}};}
\graph {
	(import nodes);
			0 ->["\escale{$z_0$}"]1->["\escale{$k_0$}"]2;
			};
\end{scope}
\begin{scope}[xshift=4cm, yshift=-1.2cm,nodes={set=import nodes}]%
		\foreach \i in {0,...,2} { \coordinate (\i) at (\i*1.5cm,0) ;}
		\foreach \i in {1} {\fill[red!20, rounded corners] (\i) +(-0.5,-0.25) rectangle +(0.4,0.35);}
		\foreach \i in {0,...,2} { \node (\i) at (\i) {\nscale{$h_{2,\i}$}};}
\graph {
	(import nodes);
			0 ->["\escale{$z_1$}"]1->["\escale{$k_0$}"]2;
			};
\end{scope}
\begin{scope}[xshift=8cm,yshift=-1.2cm,nodes={set=import nodes}]%
		\foreach \i in {0,...,2} { \coordinate (\i) at (\i*1.5cm,0) ;}
		\foreach \i in {0} {\fill[red!20, rounded corners] (\i) +(-0.5,-0.25) rectangle +(0.4,0.35);}
		\foreach \i in {0,...,2} { \node (\i) at (\i) {\nscale{$h_{3,\i}$}};}
\graph {
	(import nodes);
			0 ->["\escale{$k_0$}"]1->["\escale{$k_1$}"]2;
			};
\end{scope}
\end{tikzpicture}
\end{flushleft}
\begin{flushleft}
\begin{tikzpicture}[new set = import nodes]
\begin{scope}[nodes={set=import nodes}]%
		\foreach \i in {0,...,3} { \coordinate (\i) at (\i*1.4cm,0) ;}
		\foreach \i in {0} {\fill[red!20, rounded corners] (\i) +(-0.5,-0.25) rectangle +(4.6,0.4);}
		\foreach \i in {0,...,3} { \node (\i) at (\i) {\nscale{$h_{4,\i}$}};}
\graph {
	(import nodes);
			0 ->["\escale{$k_1$}"]1->["\escale{$z_0$}"]2 ->["\escale{$k_1$}"]3;
			};
\end{scope}
\end{tikzpicture}
\end{flushleft}
\begin{flushleft}
\begin{tikzpicture}[new set = import nodes]
\begin{scope}[nodes={set=import nodes}]%
		\foreach \i in {0,...,5} { \coordinate (\i) at (\i*1.3cm,0) ;}
		\foreach \i in {0} {\fill[red!20, rounded corners] (\i) +(-0.5,-0.25) rectangle +(5.5,0.4);}
		\foreach \i in {0,...,5} { \node (\i) at (\i) {\nscale{$t_{i,\i}$}};}
\graph {
	(import nodes);
			0 ->["\escale{$k_1$}"]1->["\escale{$X_{i_0}$}"]2 ->["\escale{$X_{i_1}$}"]3->["\escale{$X_{i_2}$}"]4->["\escale{$k_0$}"]5; 
			};
\end{scope}
\begin{scope}[xshift=8cm,nodes={set=import nodes}]%
		\foreach \i in {0,...,2} { \coordinate (\i) at (\i*1.3cm,0) ;}
		\foreach \i in {1} {\fill[red!20, rounded corners] (\i) +(-0.45,-0.25) rectangle +(0.35,0.4);}
		\foreach \i in {0,...,2} { \node (\i) at (\i) {\nscale{$b_{i,\i}$}};}
\graph {
	(import nodes);
			0 ->["\escale{$X_i$}"]1->["\escale{$k_0$}"]2;  
			};
\end{scope}
\end{tikzpicture}
\end{flushleft}
Let $i\in \{0,\dots, m-1\}$ be arbitrary but fixed.
The following figure sketches the solvability of $(k_0,s)$ for all $s\in \{h_{1,0}, t_{i,2}, t_{i,3}, t_{i,5}\}$, where $sig(k_0)=\inp$.
It is easy to see that the $sig$ is extendable to the other events.
The remaining atoms with $k_0$ are solved by the region that solves $\alpha$.
\begin{flushleft}
\begin{tikzpicture}[new set = import nodes]
\begin{scope}[nodes={set=import nodes}]%
		\foreach \i in {0,...,8} { \coordinate (\i) at (\i*1.4cm,0) ;}
		\foreach \i in {0} {\fill[red!20, rounded corners] (\i) +(-0.5,-0.25) rectangle +(0.4,0.35);}
		\foreach \i in {2} {\fill[red!20, rounded corners] (\i) +(-0.5,-0.25) rectangle +(7.4,0.35);}
		\foreach \i in {0,...,8} { \node (\i) at (\i) {\nscale{$h_{0,\i}$}};}
\graph {
	(import nodes);
			0 ->["\escale{$k_0$}"]1->["\escale{$z_0$}"]2 ->["\escale{$o$}"]3 ->["\escale{$k_1$}"]4  ->["\escale{$z_1$}"]5->["\escale{$z_0$}"]6->["\escale{$o$}"]7->["\escale{$k_0$}"]8;
};
\end{scope}
\begin{scope}[yshift=-1.2cm,nodes={set=import nodes}]%
		\foreach \i in {0,...,2} { \coordinate (\i) at (\i*1.5cm,0) ;}
		\foreach \i in {1} {\fill[red!20, rounded corners] (\i) +(-0.5,-0.25) rectangle +(0.5,0.35);}
		\foreach \i in {0,...,2} { \node (\i) at (\i) {\nscale{$h_{1,\i}$}};}
\graph {
	(import nodes);
			0 ->["\escale{$z_0$}"]1->["\escale{$k_0$}"]2;
			};
\end{scope}
\begin{scope}[xshift=4cm, yshift=-1.2cm,nodes={set=import nodes}]%
		\foreach \i in {0,...,2} { \coordinate (\i) at (\i*1.5cm,0) ;}
		\foreach \i in {0,...,2} { \node (\i) at (\i) {\nscale{$h_{2,\i}$}};}
\graph {
	(import nodes);
			0 ->["\escale{$z_1$}"]1->["\escale{$k_0$}"]2;
			};
\end{scope}
\begin{scope}[xshift=8cm,yshift=-1.2cm,nodes={set=import nodes}]%
		\foreach \i in {0,...,2} { \coordinate (\i) at (\i*1.5cm,0) ;}
		\foreach \i in {0,...,2} { \node (\i) at (\i) {\nscale{$h_{3,\i}$}};}
\graph {
	(import nodes);
			0 ->["\escale{$k_0$}"]1->["\escale{$k_1$}"]2;
			};
\end{scope}
\end{tikzpicture}
\end{flushleft}
\begin{flushleft}
\begin{tikzpicture}[new set = import nodes]
\begin{scope}[nodes={set=import nodes}]%
		\foreach \i in {0,...,3} { \coordinate (\i) at (\i*1.4cm,0) ;}
		\foreach \i in {0,...,3} { \node (\i) at (\i) {\nscale{$h_{4,\i}$}};}
\graph {
	(import nodes);
			0 ->["\escale{$k_1$}"]1->["\escale{$z_0$}"]2 ->["\escale{$k_1$}"]3;
			};
\end{scope}
\end{tikzpicture}
\end{flushleft}
\begin{flushleft}
\begin{tikzpicture}[new set = import nodes]
\begin{scope}[nodes={set=import nodes}]%
		\foreach \i in {0,...,5} { \coordinate (\i) at (\i*1.3cm,0) ;}
		\foreach \i in {4} {\fill[red!20, rounded corners] (\i) +(-0.4,-0.25) rectangle +(0.5,0.4);}
		\foreach \i in {0,...,5} { \node (\i) at (\i) {\nscale{$t_{i,\i}$}};}
\graph {
	(import nodes);
			0 ->["\escale{$k_1$}"]1->["\escale{$X_{i_0}$}"]2 ->["\escale{$X_{i_1}$}"]3->["\escale{$X_{i_2}$}"]4->["\escale{$k_0$}"]5; 
			};
\end{scope}
\begin{scope}[xshift=8cm,nodes={set=import nodes}]%
		\foreach \i in {0,...,2} { \coordinate (\i) at (\i*1.3cm,0) ;}
		\foreach \i in {0} {\fill[red!20, rounded corners] (\i) +(-0.45,-0.25) rectangle +(1.6,0.4);}
		\foreach \i in {0,...,2} { \node (\i) at (\i) {\nscale{$b_{i,\i}$}};}
\graph {
	(import nodes);
			0 ->["\escale{$X_i$}"]1->["\escale{$k_0$}"]2;  
			};
\end{scope}
\end{tikzpicture}
\end{flushleft}

$(o)$:
The red colored area of the following figure sketches the solvability of $o$ for all $s\in S(A^\tau_\varphi)\setminus \{h_{1,1},h_{1,2}, h_{4,2}, h_{4,3} \}$ and the blue one does this for the remaining states.
In both cases, the signature of $o$ is $sig(o)=\inp$ and the signature of the remaining events can be chosen appropriately.
\begin{flushleft}
\begin{tikzpicture}[new set = import nodes]
\begin{scope}[nodes={set=import nodes}]%
		\foreach \i in {0,...,8} { \coordinate (\i) at (\i*1.4cm,0) ;}
		\foreach \i in {2,6} {\fill[red!20, rounded corners] (\i) +(-0.5,-0.5) rectangle +(0.4,0.5);}
		\foreach \i in {0} {\fill[blue!20, rounded corners, opacity=0.7] (\i) +(-0.5,-0.3) rectangle +(3.2,0.3);}
		\foreach \i in {5} {\fill[blue!20, rounded corners, opacity=0.7] (\i) +(-0.5,-0.3) rectangle +(1.8,0.3);}
		\foreach \i in {0,...,8} { \node (\i) at (\i) {\nscale{$h_{0,\i}$}};}
\graph {
	(import nodes);
			0 ->["\escale{$k_0$}"]1->["\escale{$z_0$}"]2 ->["\escale{$o$}"]3 ->["\escale{$k_1$}"]4  ->["\escale{$z_1$}"]5->["\escale{$z_0$}"]6->["\escale{$o$}"]7->["\escale{$k_0$}"]8;
};
\end{scope}
\begin{scope}[yshift=-1.2cm,nodes={set=import nodes}]%
		\foreach \i in {0,...,2} { \coordinate (\i) at (\i*1.5cm,0) ;}
		\foreach \i in {1} {\fill[red!20, rounded corners] (\i) +(-0.5,-0.25) rectangle +(1.8,0.35);}
		\foreach \i in {0,...,2} { \node (\i) at (\i) {\nscale{$h_{1,\i}$}};}
\graph {
	(import nodes);
			0 ->["\escale{$z_0$}"]1->["\escale{$k_0$}"]2;
			};
\end{scope}
\begin{scope}[xshift=4cm, yshift=-1.2cm,nodes={set=import nodes}]%
		\foreach \i in {0,...,2} { \coordinate (\i) at (\i*1.5cm,0) ;}
		\foreach \i in {1} {\fill[blue!20, rounded corners, opacity=0.7] (\i) +(-0.5,-0.25) rectangle +(1.8,0.35);}
		\foreach \i in {0,...,2} { \node (\i) at (\i) {\nscale{$h_{2,\i}$}};}
\graph {
	(import nodes);
			0 ->["\escale{$z_1$}"]1->["\escale{$k_0$}"]2;
			};
\end{scope}
\begin{scope}[xshift=8cm,yshift=-1.2cm,nodes={set=import nodes}]%
		\foreach \i in {0,...,2} { \coordinate (\i) at (\i*1.5cm,0) ;}
		\foreach \i in {0,...,2} { \node (\i) at (\i) {\nscale{$h_{3,\i}$}};}
\graph {
	(import nodes);
			0 ->["\escale{$k_0$}"]1->["\escale{$k_1$}"]2;
			};
\end{scope}
\end{tikzpicture}
\end{flushleft}
\begin{flushleft}
\begin{tikzpicture}[new set = import nodes]
\begin{scope}[nodes={set=import nodes}]%
		\foreach \i in {0,...,3} { \coordinate (\i) at (\i*1.4cm,0) ;}
		\foreach \i in {2} {\fill[red!20, rounded corners] (\i) +(-0.5,-0.25) rectangle +(1.7,0.4);}
		\foreach \i in {0,...,3} { \node (\i) at (\i) {\nscale{$h_{4,\i}$}};}
\graph {
	(import nodes);
			0 ->["\escale{$k_1$}"]1->["\escale{$z_0$}"]2 ->["\escale{$k_1$}"]3;
			};
\end{scope}
\end{tikzpicture}
\end{flushleft}

$(z_0)$:
The red colored area of the following figure sketches the solution of $(z_0,s)$ for all $s\in S(H_0)\setminus \{h_{0,8}\}$ and the blue colored area does this for $(z_0,h_{0,8})$ and $(z_0,h_{4,0})$.
For both regions, $sig(z_0)=\inp$.
\begin{flushleft}
\begin{tikzpicture}[new set = import nodes]
\begin{scope}[nodes={set=import nodes}]%
		\foreach \i in {0,...,8} { \coordinate (\i) at (\i*1.4cm,0) ;}
		\foreach \i in {1,5,8} {\fill[red!20, rounded corners] (\i) +(-0.5,-0.5) rectangle +(0.4,0.5);}
		\foreach \i in {0,4} {\fill[blue!20, rounded corners, opacity=0.7,opacity=0.5] (\i) +(-0.5,-0.3) rectangle +(1.8,0.3);}
		\foreach \i in {0,...,8} { \node (\i) at (\i) {\nscale{$h_{0,\i}$}};}
\graph {
	(import nodes);
			0 ->["\escale{$k_0$}"]1->["\escale{$z_0$}"]2 ->["\escale{$o$}"]3 ->["\escale{$k_1$}"]4  ->["\escale{$z_1$}"]5->["\escale{$z_0$}"]6->["\escale{$o$}"]7->["\escale{$k_0$}"]8;
};
\end{scope}
\begin{scope}[yshift=-1.2cm,nodes={set=import nodes}]%
		\foreach \i in {0,...,2} { \coordinate (\i) at (\i*1.5cm,0) ;}
		\foreach \i in {0,2} {\fill[red!20, rounded corners] (\i) +(-0.5,-0.5) rectangle +(0.4,0.5);}
		\foreach \i in {0} {\fill[blue!20, rounded corners, opacity=0.7] (\i) +(-0.5,-0.3) rectangle +(0.4,0.33);}
		\foreach \i in {0,...,2} { \node (\i) at (\i) {\nscale{$h_{1,\i}$}};}
\graph {
	(import nodes);
			0 ->["\escale{$z_0$}"]1->["\escale{$k_0$}"]2;
			};
\end{scope}
\begin{scope}[xshift=4cm, yshift=-1.2cm,nodes={set=import nodes}]%
		\foreach \i in {0,...,2} { \coordinate (\i) at (\i*1.5cm,0) ;}
		\foreach \i in {1} {\fill[red!20, rounded corners] (\i) +(-0.5,-0.25) rectangle +(1.8,0.35);}
		\foreach \i in {0,...,2} { \node (\i) at (\i) {\nscale{$h_{2,\i}$}};}
\graph {
	(import nodes);
			0 ->["\escale{$z_1$}"]1->["\escale{$k_0$}"]2;
			};
\end{scope}
\begin{scope}[xshift=8cm,yshift=-1.2cm,nodes={set=import nodes}]%
		\foreach \i in {0,...,2} { \coordinate (\i) at (\i*1.5cm,0) ;}
		\foreach \i in {1} {\fill[red!20, rounded corners] (\i) +(-0.5,-0.5) rectangle +(1.8,0.5);}
		\foreach \i in {2} {\fill[blue!20, rounded corners, opacity=0.7] (\i) +(-0.5,-0.3) rectangle +(0.3,0.3);}
		\foreach \i in {0,...,2} { \node (\i) at (\i) {\nscale{$h_{3,\i}$}};}
\graph {
	(import nodes);
			0 ->["\escale{$k_0$}"]1->["\escale{$k_1$}"]2;
			};
\end{scope}
\end{tikzpicture}
\end{flushleft}
\begin{flushleft}
\begin{tikzpicture}[new set = import nodes]
\begin{scope}[nodes={set=import nodes}]%
		\foreach \i in {0,...,3} { \coordinate (\i) at (\i*1.4cm,0) ;}
		\foreach \i in {0} {\fill[red!20, rounded corners] (\i) +(-0.5,-0.5) rectangle +(1.8,0.5);}
		\foreach \i in {1,3} {\fill[blue!20, rounded corners, opacity=0.7] (\i) +(-0.5,-0.3) rectangle +(0.4,0.3);}
		\foreach \i in {0,...,3} { \node (\i) at (\i) {\nscale{$h_{4,\i}$}};}
\graph {
	(import nodes);
			0 ->["\escale{$k_1$}"]1->["\escale{$z_0$}"]2 ->["\escale{$k_1$}"]3;
			};
\end{scope}
\end{tikzpicture}
\end{flushleft}
\begin{flushleft}
\begin{tikzpicture}[new set = import nodes]
\begin{scope}[nodes={set=import nodes}]%
		\foreach \i in {0,...,5} { \coordinate (\i) at (\i*1.3cm,0) ;}
		\foreach \i in {5} {\fill[red!20, rounded corners] (\i) +(-0.4,-0.5) rectangle +(0.5,0.6);}
		\foreach \i in {1} {\fill[blue!20, rounded corners, opacity=0.7] (\i) +(-0.45,-0.3) rectangle +(5.5,0.4);}
		\foreach \i in {0,...,5} { \node (\i) at (\i) {\nscale{$t_{i,\i}$}};}
\graph {
	(import nodes);
			0 ->["\escale{$k_1$}"]1->["\escale{$X_{i_0}$}"]2 ->["\escale{$X_{i_1}$}"]3->["\escale{$X_{i_2}$}"]4->["\escale{$k_0$}"]5; 
			};
\end{scope}
\begin{scope}[xshift=8cm,nodes={set=import nodes}]%
		\foreach \i in {0,...,2} { \coordinate (\i) at (\i*1.3cm,0) ;}
		\foreach \i in {2} {\fill[red!20, rounded corners] (\i) +(-0.45,-0.25) rectangle +(0.4,0.4);}
		\foreach \i in {0,...,2} { \node (\i) at (\i) {\nscale{$b_{i,\i}$}};}
\graph {
	(import nodes);
			0 ->["\escale{$X_i$}"]1->["\escale{$k_0$}"]2;  
			};
\end{scope}
\end{tikzpicture}
\end{flushleft}
The following figure sketches a region that solves $z_0$ at the remaining states, where $sig(z_0)=\inp$.
\begin{flushleft}
\begin{tikzpicture}[new set = import nodes]
\begin{scope}[nodes={set=import nodes}]%
		\foreach \i in {0,...,8} { \coordinate (\i) at (\i*1.4cm,0) ;}
		\foreach \i in {0,7} {\fill[red!20, rounded corners] (\i) +(-0.5,-0.3) rectangle +(1.8,0.4);}
		\foreach \i in {3} {\fill[red!20, rounded corners] (\i) +(-0.5,-0.3) rectangle +(3.2,0.4);}
		\foreach \i in {0,...,8} { \node (\i) at (\i) {\nscale{$h_{0,\i}$}};}
\graph {
	(import nodes);
			0 ->["\escale{$k_0$}"]1->["\escale{$z_0$}"]2 ->["\escale{$o$}"]3 ->["\escale{$k_1$}"]4  ->["\escale{$z_1$}"]5->["\escale{$z_0$}"]6->["\escale{$o$}"]7->["\escale{$k_0$}"]8;
};
\end{scope}
\begin{scope}[yshift=-1.2cm,nodes={set=import nodes}]%
		\foreach \i in {0,...,2} { \coordinate (\i) at (\i*1.5cm,0) ;}
		\foreach \i in {0} {\fill[red!20, rounded corners] (\i) +(-0.5,-0.3) rectangle +(0.4,0.3);}
		\foreach \i in {0,...,2} { \node (\i) at (\i) {\nscale{$h_{1,\i}$}};}
\graph {
	(import nodes);
			0 ->["\escale{$z_0$}"]1->["\escale{$k_0$}"]2;
			};
\end{scope}
\begin{scope}[xshift=4.5cm, yshift=-1.2cm,nodes={set=import nodes}]%
		\foreach \i in {0,...,3} { \coordinate (\i) at (\i*1.4cm,0) ;}
		\foreach \i in {0} {\fill[red!20, rounded corners] (\i) +(-0.5,-0.3) rectangle +(1.8,0.4);}
		\foreach \i in {0,...,3} { \node (\i) at (\i) {\nscale{$h_{4,\i}$}};}
\graph {
	(import nodes);
			0 ->["\escale{$k_1$}"]1->["\escale{$z_0$}"]2 ->["\escale{$k_1$}"]3;
			};
\end{scope}
\end{tikzpicture}
\end{flushleft}

$(z_1)$:
The red colored area of following figure sketches the solvability of $(z_1, s)$ for all $s\in S(H_0)\cup S(H_2)$.
The blue colored area shows the solvability of $z_1$ at the remaining states.
\begin{flushleft}
\begin{tikzpicture}[new set = import nodes]
\begin{scope}[nodes={set=import nodes}]%
		\foreach \i in {0,...,8} { \coordinate (\i) at (\i*1.4cm,0) ;}
		\foreach \i in {4} {\fill[red!20, rounded corners] (\i) +(-0.5,-0.5) rectangle +(0.4,0.5);}
		\foreach \i in {0} {\fill[blue!20, rounded corners, opacity=0.7] (\i) +(-0.5,-0.3) rectangle +(6,0.4);}
		\foreach \i in {0,...,8} { \node (\i) at (\i) {\nscale{$h_{0,\i}$}};}
\graph {
	(import nodes);
			0 ->["\escale{$k_0$}"]1->["\escale{$z_0$}"]2 ->["\escale{$o$}"]3 ->["\escale{$k_1$}"]4  ->["\escale{$z_1$}"]5->["\escale{$z_0$}"]6->["\escale{$o$}"]7->["\escale{$k_0$}"]8;
};
\end{scope}
\begin{scope}[yshift=-1.2cm,nodes={set=import nodes}]%
		\foreach \i in {0,...,2} { \coordinate (\i) at (\i*1.5cm,0) ;}
		\foreach \i in {0,...,2} { \node (\i) at (\i) {\nscale{$h_{1,\i}$}};}
\graph {
	(import nodes);
			0 ->["\escale{$z_0$}"]1->["\escale{$k_0$}"]2;
			};
\end{scope}
\begin{scope}[xshift=4cm, yshift=-1.2cm,nodes={set=import nodes}]%
		\foreach \i in {0,...,2} { \coordinate (\i) at (\i*1.5cm,0) ;}
		\foreach \i in {0} {\fill[red!20, rounded corners] (\i) +(-0.5,-0.5) rectangle +(0.4,0.5);}
		\foreach \i in {0} {\fill[blue!20, rounded corners, opacity=0.7] (\i) +(-0.5,-0.3) rectangle +(0.4,0.3);}		
		\foreach \i in {0,...,2} { \node (\i) at (\i) {\nscale{$h_{2,\i}$}};}
\graph {
	(import nodes);
			0 ->["\escale{$z_1$}"]1->["\escale{$k_0$}"]2;
			};
\end{scope}
\begin{scope}[xshift=8cm,yshift=-1.2cm,nodes={set=import nodes}]%
		\foreach \i in {0,...,2} { \coordinate (\i) at (\i*1.5cm,0) ;}
		\foreach \i in {0,...,2} { \node (\i) at (\i) {\nscale{$h_{3,\i}$}};}
\graph {
	(import nodes);
			0 ->["\escale{$k_0$}"]1->["\escale{$k_1$}"]2;
			};
\end{scope}
\end{tikzpicture}
\end{flushleft}
\begin{flushleft}
\begin{tikzpicture}[new set = import nodes]
\begin{scope}[nodes={set=import nodes}]%
		\foreach \i in {0,...,3} { \coordinate (\i) at (\i*1.4cm,0) ;}
		\foreach \i in {1} {\fill[red!20, rounded corners] (\i) +(-0.5,-0.3) rectangle +(3.2,0.4);}
		\foreach \i in {0,...,3} { \node (\i) at (\i) {\nscale{$h_{4,\i}$}};}
\graph {
	(import nodes);
			0 ->["\escale{$k_1$}"]1->["\escale{$z_0$}"]2 ->["\escale{$k_1$}"]3;
			};
\end{scope}
\end{tikzpicture}
\end{flushleft}
\begin{flushleft}
\begin{tikzpicture}[new set = import nodes]
\begin{scope}[nodes={set=import nodes}]%
		\foreach \i in {0,...,5} { \coordinate (\i) at (\i*1.3cm,0) ;}
		\foreach \i in {1} {\fill[red!20, rounded corners] (\i) +(-0.5,-0.3) rectangle +(5.5,0.4);}
		\foreach \i in {0,...,5} { \node (\i) at (\i) {\nscale{$t_{i,\i}$}};}
\graph {
	(import nodes);
			0 ->["\escale{$k_1$}"]1->["\escale{$X_{i_0}$}"]2 ->["\escale{$X_{i_1}$}"]3->["\escale{$X_{i_2}$}"]4->["\escale{$k_0$}"]5; 
			};
\end{scope}
\begin{scope}[xshift=7.5cm,nodes={set=import nodes}]%
		\foreach \i in {0,...,2} { \coordinate (\i) at (\i*1.3cm,0) ;}
		\foreach \i in {0,...,2} { \node (\i) at (\i) {\nscale{$b_{i,\i}$}};}
\graph {
	(import nodes);
			0 ->["\escale{$X_i$}"]1->["\escale{$k_0$}"]2;  
			};
\end{scope}
\end{tikzpicture}
\end{flushleft}

$(X_i)$:
Let $i\in \{0,\dots, m-1\}$ be arbitrary but fixed.
In the following figure we consider $X_{i_1}=X_i$ and sketch the solution of $(X_{i_1}, s)$ for all $s\in S(T_i)$.
For the signature holds $sig(X_{i_1})=\inp$ and $sig(X_{i_0})=\set$ and one can define the signature (support) for the other events (states) appropriately.
Moreover, it is easy to see that $X_i$ is also solvable at the remaining states.
\begin{flushleft}
\begin{tikzpicture}[new set = import nodes]
\begin{scope}[nodes={set=import nodes}]%
		\foreach \i in {0,...,5} { \coordinate (\i) at (\i*1.3cm,0) ;}
		\foreach \i in {2} {\fill[red!20, rounded corners] (\i) +(-0.4,-0.3) rectangle +(.4,0.4);}
		\foreach \i in {0,...,5} { \node (\i) at (\i) {\nscale{$t_{i,\i}$}};}
\graph {
	(import nodes);
			0 ->["\escale{$k_1$}"]1->["\escale{$X_{i_0}$}"]2 ->["\escale{$X_{i_1}$}"]3->["\escale{$X_{i_2}$}"]4->["\escale{$k_0$}"]5; 
			};
\end{scope}
\end{tikzpicture}
\end{flushleft}

It remains to argue for the solvability of $k_1$.
Unfortunately, this needs a lot of case analyses.
In particular, we have to consider the cases $\used\in \tau$ and $\used\not\in \tau$ separately.

($k_1$ and $\used\not\in \tau$):
Since $\used\not\in \tau$, $A^\tau_\varphi$ does not have $H_4$.
The red colored area of the following figure sketches the solvability of $(k_1,s)$ for all $s\in S(A^\tau_\varphi)\setminus \{h_{0,0}, h_{0,1}, h_{0,2}, h_{3,0}\}$.
The blue colored area sketches the solvability of $(k_1, s)$ for all $s\in \{h_{0,0}, h_{0,1}, h_{0,2}\}$.
In both cases, $sig(k_1)=\inp$.
It is easy to see that $(k_1,h_{3,0})$ is solvable.
This shows that $k_1$ is solvable if $\used\not\in \tau$.
\begin{flushleft}
\begin{tikzpicture}[new set = import nodes]
\begin{scope}[nodes={set=import nodes}]%
		\foreach \i in {0,...,8} { \coordinate (\i) at (\i*1.4cm,0) ;}
		\foreach \i in {0} {\fill[red!20, rounded corners] (\i) +(-0.5,-0.5) rectangle +(4.7,0.5);}
		\foreach \i in {3} {\fill[blue!20, rounded corners, opacity=0.7] (\i) +(-0.5,-0.3) rectangle +(0.4,0.4);}
		\foreach \i in {7} {\fill[blue!20, rounded corners, opacity=0.7] (\i) +(-0.5,-0.3) rectangle +(1.7,0.4);}
		\foreach \i in {0,...,8} { \node (\i) at (\i) {\nscale{$h_{0,\i}$}};}
\graph {
	(import nodes);
			0 ->["\escale{$k_0$}"]1->["\escale{$z_0$}"]2 ->["\escale{$o$}"]3 ->["\escale{$k_1$}"]4  ->["\escale{$z_1$}"]5->["\escale{$z_0$}"]6->["\escale{$o$}"]7->["\escale{$k_0$}"]8;
};
\end{scope}
\begin{scope}[yshift=-1.2cm,nodes={set=import nodes}]%
		\foreach \i in {0,...,2} { \coordinate (\i) at (\i*1.5cm,0) ;}
		\foreach \i in {0,...,2} { \node (\i) at (\i) {\nscale{$h_{1,\i}$}};}
\graph {
	(import nodes);
			0 ->["\escale{$z_0$}"]1->["\escale{$k_0$}"]2;
			};
\end{scope}
\begin{scope}[xshift=4cm, yshift=-1.2cm,nodes={set=import nodes}]%
		\foreach \i in {0,...,2} { \coordinate (\i) at (\i*1.5cm,0) ;}
		\foreach \i in {0,...,2} { \node (\i) at (\i) {\nscale{$h_{2,\i}$}};}
\graph {
	(import nodes);
			0 ->["\escale{$z_1$}"]1->["\escale{$k_0$}"]2;
			};
\end{scope}
\begin{scope}[xshift=8cm,yshift=-1.2cm,nodes={set=import nodes}]%
		\foreach \i in {0,...,2} { \coordinate (\i) at (\i*1.5cm,0) ;}
		\foreach \i in {0} {\fill[red!20, rounded corners] (\i) +(-0.5,-0.3) rectangle +(1.9,0.4);}
		\foreach \i in {0,...,2} { \node (\i) at (\i) {\nscale{$h_{3,\i}$}};}
\graph {
	(import nodes);
			0 ->["\escale{$k_0$}"]1->["\escale{$k_1$}"]2;
			};
\end{scope}
\end{tikzpicture}
\end{flushleft}
\begin{flushleft}
\begin{tikzpicture}[new set = import nodes]
\begin{scope}[nodes={set=import nodes}]%
		\foreach \i in {0,...,5} { \coordinate (\i) at (\i*1.3cm,0) ;}
		\foreach \i in {0} {\fill[red!20, rounded corners] (\i) +(-0.5,-0.3) rectangle +(0.4,0.4);}
		\foreach \i in {0,...,5} { \node (\i) at (\i) {\nscale{$t_{i,\i}$}};}
\graph {
	(import nodes);
			0 ->["\escale{$k_1$}"]1->["\escale{$X_{i_0}$}"]2 ->["\escale{$X_{i_1}$}"]3->["\escale{$X_{i_2}$}"]4->["\escale{$k_0$}"]5; 
			};
\end{scope}
\begin{scope}[xshift=7.5cm,nodes={set=import nodes}]%
		\foreach \i in {0,...,2} { \coordinate (\i) at (\i*1.3cm,0) ;}		
		\foreach \i in {0,...,2} { \node (\i) at (\i) {\nscale{$b_{i,\i}$}};}
\graph {
	(import nodes);
			0 ->["\escale{$X_i$}"]1->["\escale{$k_0$}"]2;  
			};
\end{scope}
\end{tikzpicture}
\end{flushleft}
($k_1$ and $\used \in \tau$):
The red colored area of the following figure sketches the solution of $k_1$ at all states of $A^\tau_\varphi$ but $h_{0,2}, h_{0,6},\dots, h_{0,8}$, $h_{1,1}, h_{1,2}$ and $h_{3,0}$.
For this region holds $sig(k_1)=\inp$.
The blue area sketches the solution of $(k_1,s)$ for all $s\in \{h_{0,2}, h_{1,1}, h_{1,2}\}$.
For this region holds $sig(k_1)=\used$.
\begin{flushleft}
\begin{tikzpicture}[new set = import nodes]
\begin{scope}[nodes={set=import nodes}]%
		\foreach \i in {0,...,8} { \coordinate (\i) at (\i*1.4cm,0) ;}
		\foreach \i in {2} {\fill[red!20, rounded corners] (\i) +(-0.5,-0.5) rectangle +(1.9,0.5);}
		\foreach \i in {6} {\fill[red!20, rounded corners] (\i) +(-0.5,-0.5) rectangle +(3.1,0.5);}
		\foreach \i in {3} {\fill[blue!20, rounded corners, opacity=0.7] (\i) +(-0.5,-0.3) rectangle +(7.3,0.4);}
		\foreach \i in {0,...,8} { \node (\i) at (\i) {\nscale{$h_{0,\i}$}};}
\graph {
	(import nodes);
			0 ->["\escale{$k_0$}"]1->["\escale{$z_0$}"]2 ->["\escale{$o$}"]3 ->["\escale{$k_1$}"]4  ->["\escale{$z_1$}"]5->["\escale{$z_0$}"]6->["\escale{$o$}"]7->["\escale{$k_0$}"]8;
};
\end{scope}
\begin{scope}[yshift=-1.2cm,nodes={set=import nodes}]%
		\foreach \i in {0,...,2} { \coordinate (\i) at (\i*1.5cm,0) ;}
		\foreach \i in {1} {\fill[red!20, rounded corners] (\i) +(-0.5,-0.5) rectangle +(1.9,0.5);}
		\foreach \i in {0,...,2} { \node (\i) at (\i) {\nscale{$h_{1,\i}$}};}
\graph {
	(import nodes);
			0 ->["\escale{$z_0$}"]1->["\escale{$k_0$}"]2;
			};
\end{scope}
\begin{scope}[xshift=4cm, yshift=-1.2cm,nodes={set=import nodes}]%
		\foreach \i in {0,...,2} { \coordinate (\i) at (\i*1.5cm,0) ;}
		\foreach \i in {0,...,2} { \node (\i) at (\i) {\nscale{$h_{2,\i}$}};}
\graph {
	(import nodes);
			0 ->["\escale{$z_1$}"]1->["\escale{$k_0$}"]2;
			};
\end{scope}
\begin{scope}[xshift=8cm,yshift=-1.2cm,nodes={set=import nodes}]%
		\foreach \i in {0,...,2} { \coordinate (\i) at (\i*1.5cm,0) ;}
		\foreach \i in {0} {\fill[red!20, rounded corners] (\i) +(-0.4,-0.5) rectangle +(1.9,0.5);}
		\foreach \i in {0} {\fill[blue!20, rounded corners, opacity=0.7] (\i) +(-0.4,-0.3) rectangle +(3.3,0.4);}
		\foreach \i in {0,...,2} { \node (\i) at (\i) {\nscale{$h_{3,\i}$}};}
\graph {
	(import nodes);
			0 ->["\escale{$k_0$}"]1->["\escale{$k_1$}"]2;
			};
\end{scope}
\end{tikzpicture}
\end{flushleft}
\begin{flushleft}
\begin{tikzpicture}[new set = import nodes]
\begin{scope}[nodes={set=import nodes}]%
		\foreach \i in {0,...,3} { \coordinate (\i) at (\i*1.4cm,0) ;}
		\foreach \i in {0,2} {\fill[red!20, rounded corners] (\i) +(-0.4,-0.5) rectangle +(0.5,0.5);}
		\foreach \i in {0} {\fill[blue!20, rounded corners, opacity=0.7] (\i) +(-0.4,-0.3) rectangle +(4.5,0.4);}
		\foreach \i in {0,...,3} { \node (\i) at (\i) {\nscale{$h_{4,\i}$}};}
\graph {
	(import nodes);
			0 ->["\escale{$k_1$}"]1->["\escale{$z_0$}"]2 ->["\escale{$k_1$}"]3;
			};
\end{scope}
\end{tikzpicture}
\end{flushleft}
\begin{flushleft}
\begin{tikzpicture}[new set = import nodes]
\begin{scope}[nodes={set=import nodes}]%
		\foreach \i in {0,...,5} { \coordinate (\i) at (\i*1.3cm,0) ;}
		\foreach \i in {0} {\fill[red!20, rounded corners] (\i) +(-0.4,-0.5) rectangle +(1.7,0.5);}
		\foreach \i in {0} {\fill[blue!20, rounded corners, opacity=0.7] (\i) +(-0.4,-0.3) rectangle +(6.8,0.4);}
		\foreach \i in {0,...,5} { \node (\i) at (\i) {\nscale{$t_{i,\i}$}};}
\graph {
	(import nodes);
			0 ->["\escale{$k_1$}"]1->["\escale{$X_{i_0}$}"]2 ->["\escale{$X_{i_1}$}"]3->["\escale{$X_{i_2}$}"]4->["\escale{$k_0$}"]5; 
			};
\end{scope}
\begin{scope}[xshift=7.5cm,nodes={set=import nodes}]%
		\foreach \i in {0,...,2} { \coordinate (\i) at (\i*1.3cm,0) ;}
		\foreach \i in {0,...,2} { \node (\i) at (\i) {\nscale{$b_{i,\i}$}};}
\graph {
	(import nodes);
			0 ->["\escale{$X_i$}"]1->["\escale{$k_0$}"]2;  
			};
\end{scope}
\end{tikzpicture}
\end{flushleft}
The red colored area of the following figure sketches the solution of $(k_1,s)$ for all $s\in \{h_{0,6}, h_{0,7}, h_{0,8}\}$.
Moreover, the blue area together with the states of the red colored area but $h_{3,0}$ sketches the solvability of $(k_0, h_{3,0})$.
For both regions, $sig(k_1)=\used$.
\begin{flushleft}
\begin{tikzpicture}[new set = import nodes]
\begin{scope}[nodes={set=import nodes}]%
		\foreach \i in {0,...,8} { \coordinate (\i) at (\i*1.4cm,0) ;}
		\foreach \i in {0} {\fill[red!20, rounded corners] (\i) +(-0.4,-0.3) rectangle +(6,0.4);}
		\foreach \i in {8} {\fill[blue!20, rounded corners, opacity=0.7] (\i) +(-0.5,-0.3) rectangle +(0.4,0.4);}
		\foreach \i in {0,...,8} { \node (\i) at (\i) {\nscale{$h_{0,\i}$}};}
\graph {
	(import nodes);
			0 ->["\escale{$k_0$}"]1->["\escale{$z_0$}"]2 ->["\escale{$o$}"]3 ->["\escale{$k_1$}"]4  ->["\escale{$z_1$}"]5->["\escale{$z_0$}"]6->["\escale{$o$}"]7->["\escale{$k_0$}"]8;
};
\end{scope}
\begin{scope}[yshift=-1.2cm,nodes={set=import nodes}]%
		\foreach \i in {0,...,2} { \coordinate (\i) at (\i*1.5cm,0) ;}
		\foreach \i in {2} {\fill[blue!20, rounded corners, opacity=0.7] (\i) +(-0.5,-0.3) rectangle +(0.4,0.4);}
		\foreach \i in {0,...,2} { \node (\i) at (\i) {\nscale{$h_{1,\i}$}};}
\graph {
	(import nodes);
			0 ->["\escale{$z_0$}"]1->["\escale{$k_0$}"]2;
			};
\end{scope}
\begin{scope}[xshift=4cm, yshift=-1.2cm,nodes={set=import nodes}]%
		\foreach \i in {0,...,2} { \coordinate (\i) at (\i*1.5cm,0) ;}
		\foreach \i in {0} {\fill[red!20, rounded corners] (\i) +(-0.4,-0.3) rectangle +(0.4,0.4);}
		\foreach \i in {2} {\fill[blue!20, rounded corners, opacity=0.7] (\i) +(-0.4,-0.3) rectangle +(0.4,0.4);}
		\foreach \i in {0,...,2} { \node (\i) at (\i) {\nscale{$h_{2,\i}$}};}
\graph {
	(import nodes);
			0 ->["\escale{$z_1$}"]1->["\escale{$k_0$}"]2;
			};
\end{scope}
\begin{scope}[xshift=8cm,yshift=-1.2cm,nodes={set=import nodes}]%
		\foreach \i in {0,...,2} { \coordinate (\i) at (\i*1.5cm,0) ;}
		\foreach \i in {0} {\fill[red!20, rounded corners] (\i) +(-0.4,-0.4) rectangle +(3.3,0.5);}
		\foreach \i in {1} {\fill[blue!20, rounded corners, opacity=0.7] (\i) +(-0.4,-0.3) rectangle +(1.8,0.4);}
		\foreach \i in {0,...,2} { \node (\i) at (\i) {\nscale{$h_{3,\i}$}};}
\graph {
	(import nodes);
			0 ->["\escale{$k_0$}"]1->["\escale{$k_1$}"]2;
			};
\end{scope}
\end{tikzpicture}
\end{flushleft}
\begin{flushleft}
\begin{tikzpicture}[new set = import nodes]
\begin{scope}[nodes={set=import nodes}]%
		\foreach \i in {0,...,3} { \coordinate (\i) at (\i*1.4cm,0) ;}
		\foreach \i in {0} {\fill[red!20, rounded corners] (\i) +(-0.5,-0.3) rectangle +(4.6,0.4);}
		\foreach \i in {0,...,3} { \node (\i) at (\i) {\nscale{$h_{4,\i}$}};}
\graph {
	(import nodes);
			0 ->["\escale{$k_1$}"]1->["\escale{$z_0$}"]2 ->["\escale{$k_1$}"]3;
			};
\end{scope}
\end{tikzpicture}
\end{flushleft}
\begin{flushleft}
\begin{tikzpicture}[new set = import nodes]
\begin{scope}[nodes={set=import nodes}]%
		\foreach \i in {0,...,5} { \coordinate (\i) at (\i*1.3cm,0) ;}
		\foreach \i in {0} {\fill[red!20, rounded corners] (\i) +(-0.5,-0.3) rectangle +(6.8,0.4);}
		\foreach \i in {0,...,5} { \node (\i) at (\i) {\nscale{$t_{i,\i}$}};}
\graph {
	(import nodes);
			0 ->["\escale{$k_1$}"]1->["\escale{$X_{i_0}$}"]2 ->["\escale{$X_{i_1}$}"]3->["\escale{$X_{i_2}$}"]4->["\escale{$k_0$}"]5; 
			};
\end{scope}
\begin{scope}[xshift=7.5cm,nodes={set=import nodes}]%
		\foreach \i in {0,...,2} { \coordinate (\i) at (\i*1.3cm,0) ;}
		\foreach \i in {2} {\fill[blue!20, rounded corners, opacity=0.7] (\i) +(-0.4,-0.3) rectangle +(0.3,0.4);}
		\foreach \i in {0,...,2} { \node (\i) at (\i) {\nscale{$b_{i,\i}$}};}
\graph {
	(import nodes);
			0 ->["\escale{$X_i$}"]1->["\escale{$k_0$}"]2;  
			};
\end{scope}
\end{tikzpicture}
\end{flushleft}
Altogether, this proves the solvability of $k_1$.
So far, we have proven that all relevant ESSP atoms $(e,s)\in (E(A^\tau_\varphi)\setminus(\oplus\cup\ominus))\times (S(A^\tau_\varphi)\setminus \bot)$ are solvable.
Moreover, the presented regions are also useful for the solvability of all $(e,s)\in (E(A^\tau_\varphi)\setminus(\oplus\cup\ominus))\times  \bot$.
This is due to the following fact.
Let $(e,s)\in (E(A^\tau_\varphi)\setminus(\oplus\cup\ominus))\times (S(A^\tau_\varphi)\setminus \bot)$; let $(sup, sig)$ be a region that solves $(e,s)$; let $s'\in \bot$ be arbitrary; let $e'\in \ominus$, $e''\in \oplus$ and $t,t'\in S(A^\tau_\varphi)\setminus \bot$ such that $t\edge{e'}s'\edge{e''}t'$.
If $\neg sup(s')\edge{e}$, then $(e,s')$ is already solved.
Otherwise, if $sup(s')\edge{e}$, then we obtain a slightly different region $(sup', sig')$ that solves $(e,s')$ as follows:
For all $s''\in S(A^\tau_\varphi)\setminus\{s'\}$, we let $sup'(s'')=sup(s'')$.
Moreover, we define $sup'(s')=sup(s)$, in particular, now holds $\neg sup'(s')\edge{e}$.
For all $e'''\in E(A^\tau_\varphi)\setminus \{e',e''\}$, we let $sig'(e''')=sig(e''')$.
Finally, if $sup(t)=sup(s')$ ($sup(s')=sup(t')$), then we define $sig(e')=\nop$ ($sig(e'')=\nop$); 
if $sup(t)>sup(s')$ ($sup(s')>sup(t')$), then we define $sig(e')=\set$ ($sig(e'')=\set$); 
if $sup(t) < sup(s')$ ($sup(s') < sup(t')$), then we define $sig(e')=\inp$ ($sig(e'')=\inp$). 

Consequently, to prove $A^\tau_\varphi$'s ESSP, it remains to argue that the events of $\oplus\cup\ominus$ are also solvable.
For the events of $\oplus$, this is immediately clear. 
Let $e\in \ominus$ be arbitrary but fixed.
By the definition of the joining, $e$ occurs at a \enquote{terminal} state $s_n$ of a gadget $A_n$, that is, $e=\ominus_n$.
Let $s\in S(A^\tau_\varphi)\setminus S(A_n)$ be arbitrary but fixed.
To solve $(\ominus_n,s)$, we define a region $(sup, sig)$ as follows.
For all $s'\in S(A^\tau_\varphi)$, if $s'\in S(A_n)$, then we define $sup(s')=1$ and else $sup(s')=0$.
For all $e'\in E(A^\tau_\varphi)$, we define $sig(e')=\inp$ if $e'=\oplus_n$, $sig(e')=\set$ if $e'=\ominus_{n-1}$ and else $sig(e')=\nop$.
This yields a region that solves $(\ominus_n, s)$.
Since $s\in S(A^\tau_\varphi)\setminus S(A_n)$ was arbitrary, it remains to argue that $(\ominus_n, s)$ is solvable if $s\in S(A_n)$.

If $A=s_0\edge{e_1}\dots\edge{e_n}s_n$ is a gadget of $A^\tau_\varphi$, then for all $i\in \{0,\dots, n-1\}$ there is a region of $A^\tau_\varphi$ such that $sup(s_n)=1$, $sup(s_i)=0$ and $sig(\ominus_n)=\inp$.
We sketch the situation for $A_n=H_0$ and, thus, $s_n=h_{0,8}$.
The red (blue) colored area of the following figure sketches a region such that $sup(h_{0,0})=\dots=sup(h_{0,4})=0$ ($sup(h_{0,5})=sup(h_{0,6})=0$) and $sup(h_{0,8})=1$.
\begin{flushleft}
\begin{tikzpicture}[new set = import nodes]
\begin{scope}[nodes={set=import nodes}]%
		\foreach \i in {0,...,8} { \coordinate (\i) at (\i*1.4cm,0) ;}
		\foreach \i in {5} {\fill[red!20, rounded corners] (\i) +(-0.5,-0.5) rectangle +(4.6,0.5);}
		\foreach \i in {3,7} {\fill[blue!20, rounded corners, opacity=0.7] (\i) +(-0.5,-0.25) rectangle +(1.7,0.35);}
		\foreach \i in {0,...,8} { \node (\i) at (\i) {\nscale{$h_{0,\i}$}};}
\graph {
	(import nodes);
			0 ->["\escale{$k_0$}"]1->["\escale{$z_0$}"]2 ->["\escale{$o$}"]3 ->["\escale{$k_1$}"]4  ->["\escale{$z_1$}"]5->["\escale{$z_0$}"]6->["\escale{$o$}"]7->["\escale{$k_0$}"]8;
};
\end{scope}
\begin{scope}[yshift=-1.2cm,nodes={set=import nodes}]%
		\foreach \i in {0,...,2} { \coordinate (\i) at (\i*1.5cm,0) ;}
		\foreach \i in {0,...,2} { \node (\i) at (\i) {\nscale{$h_{1,\i}$}};}
\graph {
	(import nodes);
			0 ->["\escale{$z_0$}"]1->["\escale{$k_0$}"]2;
			};
\end{scope}
\begin{scope}[xshift=4cm, yshift=-1.2cm,nodes={set=import nodes}]%
		\foreach \i in {0,...,2} { \coordinate (\i) at (\i*1.5cm,0) ;}
		\foreach \i in {1} {\fill[red!20, rounded corners] (\i) +(-0.5,-0.25) rectangle +(1.8,0.35);}
		\foreach \i in {0} {\fill[blue!20, rounded corners, opacity=0.7] (\i) +(-0.5,-0.25) rectangle +(0.5,0.35);}
		\foreach \i in {0,...,2} { \node (\i) at (\i) {\nscale{$h_{2,\i}$}};}
\graph {
	(import nodes);
			0 ->["\escale{$z_1$}"]1->["\escale{$k_0$}"]2;
			};
\end{scope}
\begin{scope}[xshift=8cm,yshift=-1.2cm,nodes={set=import nodes}]%
		\foreach \i in {0,...,2} { \coordinate (\i) at (\i*1.5cm,0) ;}
		\foreach \i in {0,...,2} { \node (\i) at (\i) {\nscale{$h_{3,\i}$}};}
\graph {
	(import nodes);
			0 ->["\escale{$k_0$}"]1->["\escale{$k_1$}"]2;
			};
\end{scope}
\end{tikzpicture}
\end{flushleft}
\begin{flushleft}
\begin{tikzpicture}[new set = import nodes]
\begin{scope}[nodes={set=import nodes}]%
		\foreach \i in {0,...,3} { \coordinate (\i) at (\i*1.4cm,0) ;}
		\foreach \i in {0,...,3} { \node (\i) at (\i) {\nscale{$h_{4,\i}$}};}
\graph {
	(import nodes);
			0 ->["\escale{$k_1$}"]1->["\escale{$z_0$}"]2 ->["\escale{$k_1$}"]3;
			};
\end{scope}
\end{tikzpicture}
\end{flushleft}
\begin{flushleft}
\begin{tikzpicture}[new set = import nodes]
\begin{scope}[nodes={set=import nodes}]%
		\foreach \i in {0,...,5} { \coordinate (\i) at (\i*1.3cm,0) ;}
		\foreach \i in {4} {\fill[red!20, rounded corners] (\i) +(-0.4,-0.25) rectangle +(0.5,0.4);}
		\foreach \i in {0,...,5} { \node (\i) at (\i) {\nscale{$t_{i,\i}$}};}
\graph {
	(import nodes);
			0 ->["\escale{$k_1$}"]1->["\escale{$X_{i_0}$}"]2 ->["\escale{$X_{i_1}$}"]3->["\escale{$X_{i_2}$}"]4->["\escale{$k_0$}"]5; 
			};
\end{scope}
\begin{scope}[xshift=8cm,nodes={set=import nodes}]%
		\foreach \i in {0,...,2} { \coordinate (\i) at (\i*1.3cm,0) ;}
		\foreach \i in {0} {\fill[red!20, rounded corners] (\i) +(-0.45,-0.25) rectangle +(1.6,0.4);}
		\foreach \i in {0,...,2} { \node (\i) at (\i) {\nscale{$b_{i,\i}$}};}
\graph {
	(import nodes);
			0 ->["\escale{$X_i$}"]1->["\escale{$k_0$}"]2;  
			};
\end{scope}
\end{tikzpicture}
\end{flushleft}
The red colored area of the following figure sketches a region $(sup, sig)$ such that $sup(h_{0,7})$ and $sup(h_{0,8})=1$.
\begin{flushleft}
\begin{tikzpicture}[new set = import nodes]
\begin{scope}[nodes={set=import nodes}]%
		\foreach \i in {0,...,8} { \coordinate (\i) at (\i*1.4cm,0) ;}
		\foreach \i in {8} {\fill[red!20, rounded corners] (\i) +(-0.5,-0.3) rectangle +(0.4,0.35);}
		\foreach \i in {1} {\fill[red!20, rounded corners] (\i) +(-0.5,-0.3) rectangle +(4.6,0.35);}
		\foreach \i in {0,...,8} { \node (\i) at (\i) {\nscale{$h_{0,\i}$}};}
\graph {
	(import nodes);
			0 ->["\escale{$k_0$}"]1->["\escale{$z_0$}"]2 ->["\escale{$o$}"]3 ->["\escale{$k_1$}"]4  ->["\escale{$z_1$}"]5->["\escale{$z_0$}"]6->["\escale{$o$}"]7->["\escale{$k_0$}"]8;
};
\end{scope}
\begin{scope}[yshift=-1.2cm,nodes={set=import nodes}]%
		\foreach \i in {0,...,2} { \coordinate (\i) at (\i*1.5cm,0) ;}
		\foreach \i in {2} {\fill[red!20, rounded corners] (\i) +(-0.5,-0.3) rectangle +(0.4,0.35);}
		\foreach \i in {0,...,2} { \node (\i) at (\i) {\nscale{$h_{1,\i}$}};}
\graph {
	(import nodes);
			0 ->["\escale{$z_0$}"]1->["\escale{$k_0$}"]2;
			};
\end{scope}
\begin{scope}[xshift=4cm, yshift=-1.2cm,nodes={set=import nodes}]%
		\foreach \i in {0,...,2} { \coordinate (\i) at (\i*1.5cm,0) ;}
		\foreach \i in {0,2} {\fill[red!20, rounded corners] (\i) +(-0.5,-0.3) rectangle +(0.4,0.35);}
		\foreach \i in {0,...,2} { \node (\i) at (\i) {\nscale{$h_{2,\i}$}};}
\graph {
	(import nodes);
			0 ->["\escale{$z_1$}"]1->["\escale{$k_0$}"]2;
			};
\end{scope}
\begin{scope}[xshift=8cm,yshift=-1.2cm,nodes={set=import nodes}]%
		\foreach \i in {0,...,2} { \coordinate (\i) at (\i*1.5cm,0) ;}
		\foreach \i in {1} {\fill[red!20, rounded corners] (\i) +(-0.5,-0.3) rectangle +(1.7,0.35);}
		\foreach \i in {0,...,2} { \node (\i) at (\i) {\nscale{$h_{3,\i}$}};}
\graph {
	(import nodes);
			0 ->["\escale{$k_0$}"]1->["\escale{$k_1$}"]2;
			};
\end{scope}
\end{tikzpicture}
\end{flushleft}
\begin{flushleft}
\begin{tikzpicture}[new set = import nodes]
\begin{scope}[nodes={set=import nodes}]%
		\foreach \i in {0,...,5} { \coordinate (\i) at (\i*1.3cm,0) ;}
		\foreach \i in {5} {\fill[red!20, rounded corners] (\i) +(-0.4,-0.25) rectangle +(0.5,0.4);}
		\foreach \i in {0,...,5} { \node (\i) at (\i) {\nscale{$t_{i,\i}$}};}
\graph {
	(import nodes);
			0 ->["\escale{$k_1$}"]1->["\escale{$X_{i_0}$}"]2 ->["\escale{$X_{i_1}$}"]3->["\escale{$X_{i_2}$}"]4->["\escale{$k_0$}"]5; 
			};
\end{scope}
\begin{scope}[xshift=8cm,nodes={set=import nodes}]%
		\foreach \i in {0,...,2} { \coordinate (\i) at (\i*1.3cm,0) ;}
		\foreach \i in {2} {\fill[red!20, rounded corners] (\i) +(-0.4,-0.25) rectangle +(0.5,0.4);}
		\foreach \i in {0,...,2} { \node (\i) at (\i) {\nscale{$b_{i,\i}$}};}
\graph {
	(import nodes);
			0 ->["\escale{$X_i$}"]1->["\escale{$k_0$}"]2;  
			};
\end{scope}
\end{tikzpicture}
\end{flushleft}
This proves the ESSP of $A^\tau_\varphi$.
Moreover, it is easy to see that $A^\tau_\varphi$ has the SSP, too.

\subsection{Continuation of the Proof of Theorem~\ref{the:nop_inp_set_free+used}}
Let $e\in E(A^\tau_\varphi)\setminus (\ominus\cup\oplus)$. 
Let $A_{i_1},\dots,A_{i_n}$ be $A^\tau_\varphi$'s gadgets such that $e\in \bigcap_{j=1}^n E(A_{i_j})$.
Since $\free\in \tau$, we can simply define a region $(sup, sig)$ that solves $e$ in $A^\tau_\varphi \setminus (A_{i_1}\cup \dots \cup A_{i_n})$ as follows.
For all $s\in S(A^\tau_\varphi)$ we let $sup(s)=0$ if $s \in \bigcap_{j=1}^n S(A_{i_j})$ and else $sup(s)=1$.
Moreover, we let $sig(e)=\free$ and $sig(e')=\nop$ for all $e'\in E(A^\tau_\varphi)\setminus (\{e\}\cup\ominus\cup\oplus)$.
Finally, by the uniqueness of the events of $\ominus\cup\oplus$, it is easy to see that we can define their signatures appropriately to obtain a proper region $(sup, sig)$.
Thus, in what follows, if $e\in E(A^\tau_\varphi)\setminus (\ominus\cup\oplus)$, then we restrict ourselves to solve the ESSP atoms $(e,s)$ such that $e$ and $s$ appear in the same gadget.

$(k_0)$:
The region that solves $\alpha$, also solves all atoms $(k_0,s)$ where $s\in \{h_{0,2}, h_{0,3}, h_{0,4}\}$.
The red area of the following figure sketches the solvability of all $(k_0,s)$ where $s\in \{h_{0,1}, h_{0,6},b_{i_2,0}\}\cup (S(T_{i})\setminus \{t_{i,4}\})$ and $i\in \{0,\dots, m-1\}$ is arbitrary but fixed.
This proves the solvability of $k_0$.
\begin{flushleft}
\begin{tikzpicture}[new set = import nodes]
\begin{scope}[nodes={set=import nodes}]%
		\foreach \i in {0,...,6} { \coordinate (\i) at (\i*1.4cm,0) ;}
		\foreach \i in {0} {\fill[red!20, rounded corners] (\i) +(-0.35,-0.3) rectangle +(0.4,0.4);}
		\foreach \i in {3} {\fill[red!20, rounded corners] (\i) +(-0.3,-0.3) rectangle +(3.8,0.4);}
		\foreach \i in {0,...,6} { \node (\i) at (\i*1.5cm,0) {\nscale{$h_{0,\i}$}};}
\graph {
	(import nodes);
			0 ->["\escale{$k_0$}"]1->["\escale{$k_1$}"]2 ->["\escale{$z_0$}"]3 ->["\escale{$k_1$}"]4  ->["\escale{$z_1$}"]5->["\escale{$k_0$}"]6;

			};
\end{scope}
\begin{scope}[yshift=-1.1cm,nodes={set=import nodes}]%
		\foreach \i in {0,...,3} { \coordinate (\i) at (\i*1.4cm,0) ;}
		\foreach \i in {0,2} {\fill[red!20, rounded corners] (\i) +(-0.35,-0.3) rectangle +(0.5,0.4);}
		\foreach \i in {0,...,3} { \node (\i) at (\i*1.5cm,0) {\nscale{$h_{1,\i}$}};}
\graph {
	(import nodes);
			0 ->["\escale{$k_0$}"]1->["\escale{$z_0$}"]2 ->["\escale{$k_0$}"]3;
			};
\end{scope}
\end{tikzpicture}
\end{flushleft}
\begin{flushleft}
\begin{tikzpicture}[new set = import nodes]
\begin{scope}[nodes={set=import nodes}]%
		\foreach \i in {0,...,5} { \coordinate (\i) at (\i*1.4cm,0) ;}
		\foreach \i in {4} {\fill[red!20, rounded corners] (\i) +(-0.35,-0.3) rectangle +(0.4,0.4);}
		\foreach \i in {0,...,5} { \node (\i) at (\i) {\nscale{$t_{i,\i}$}};}
\graph {
	(import nodes);
			0 ->["\escale{$k_1$}"]1->["\escale{$X_{i_0}$}"]2 ->["\escale{$X_{i_1}$}"]3->["\escale{$X_{i_2}$}"]4->["\escale{$k_0$}"]5; 
			};
\end{scope}
\begin{scope}[xshift=7.75cm,nodes={set=import nodes}]%
		\foreach \i in {0,...,2} { \coordinate (\i) at (\i*1.3cm,0) ;}
		\foreach \i in {1} {\fill[red!20, rounded corners] (\i) +(-0.35,-0.3) rectangle +(0.4,0.4);}
		\foreach \i in {0,...,2} { \node (\i) at (\i) {\nscale{$b_{i,\i}$}};}
\graph {
	(import nodes);
			0 ->["\escale{$X_i$}"]1->["\escale{$k_0$}"]2;  
			};
\end{scope}
\end{tikzpicture}
\end{flushleft}

($k_1$):
The red colored area of the following figure sketches a region that solves $k_1$ at all relevant states but $h_{0,6}$ and $t_{i,5}$.
The latter is done by the region which is sketched by the blue colored area.
For both regions, $sig(k_1)=\inp$.
\begin{flushleft}
\begin{tikzpicture}[new set = import nodes]
\begin{scope}[nodes={set=import nodes}]%
		\foreach \i in {0,...,6} { \coordinate (\i) at (\i*1.4cm,0) ;}
		\foreach \i in {1} {\fill[red!20, rounded corners] (\i) +(-0.4,-0.5) rectangle +(0.5,0.5);}
		\foreach \i in {3} {\fill[red!20, rounded corners] (\i) +(-0.2,-0.5) rectangle +(0.7,0.5);}
		\foreach \i in {6} {\fill[red!20, rounded corners] (\i) +(0,-0.3) rectangle +(1,0.4);}
		\foreach \i in {0} {\fill[blue!20, rounded corners, opacity=0.7] (\i) +(-0.3,-0.3) rectangle +(1.9,0.4);}
		\foreach \i in {3} {\fill[blue!20, rounded corners, opacity=0.7] (\i) +(-0.2,-0.3) rectangle +(0.7,0.4);}
		\foreach \i in {0,...,6} { \node (\i) at (\i*1.5cm,0) {\nscale{$h_{0,\i}$}};}
\graph {
	(import nodes);
			0 ->["\escale{$k_0$}"]1->["\escale{$k_1$}"]2 ->["\escale{$z_0$}"]3 ->["\escale{$k_1$}"]4  ->["\escale{$z_1$}"]5->["\escale{$k_0$}"]6;

			};
\end{scope}
\begin{scope}[yshift=-1.1cm,nodes={set=import nodes}]%
		\foreach \i in {0,...,3} { \coordinate (\i) at (\i*1.4cm,0) ;}
		\foreach \i in {0} {\fill[red!20, rounded corners] (\i) +(-0.35,-0.3) rectangle +(4.8,0.4);}
		\foreach \i in {0,...,3} { \node (\i) at (\i*1.5cm,0) {\nscale{$h_{1,\i}$}};}
\graph {
	(import nodes);
			0 ->["\escale{$k_0$}"]1->["\escale{$z_0$}"]2 ->["\escale{$k_0$}"]3;
			};
\end{scope}
\end{tikzpicture}
\end{flushleft}
\begin{flushleft}
\begin{tikzpicture}[new set = import nodes]
\begin{scope}[nodes={set=import nodes}]%
		\foreach \i in {0,...,5} { \coordinate (\i) at (\i*1.4cm,0) ;}
		\foreach \i in {0,5} {\fill[red!20, rounded corners] (\i) +(-0.35,-0.5) rectangle +(0.4,0.5);}
		\foreach \i in {0} {\fill[blue!20, rounded corners, opacity=0.7] (\i) +(-0.35,-0.3) rectangle +(0.4,0.4);}
		\foreach \i in {0,...,5} { \node (\i) at (\i) {\nscale{$t_{i,\i}$}};}
\graph {
	(import nodes);
			0 ->["\escale{$k_1$}"]1->["\escale{$X_{i_0}$}"]2 ->["\escale{$X_{i_1}$}"]3->["\escale{$X_{i_2}$}"]4->["\escale{$k_0$}"]5; 
			};
\end{scope}
\end{tikzpicture}
\end{flushleft}

($z_0$):
The red colored area of the following figure sketches the solvability of $(z_0,s)$ for all relevant $s\in (S(H_0)\cup S(H_1))\setminus \{h_{0,4}, h_{0,5}, h_{0,6}, h_{1,0},\}$ and the blue colored area for $s\in \{h_{0,4}, h_{0,5},h_{1,0}\}$.
For both regions, $sig(z_0)=\inp$.
It is easy to see that $(z_0,h_{0,6})$ is solvable.
\begin{flushleft}
\begin{tikzpicture}[new set = import nodes]
\begin{scope}[nodes={set=import nodes}]%
		\foreach \i in {0,...,6} { \coordinate (\i) at (\i*1.4cm,0) ;}
		\foreach \i in {2} {\fill[red!20, rounded corners] (\i) +(-0.4,-0.5) rectangle +(0.5,0.5);}
		\foreach \i in {4} {\fill[red!20, rounded corners] (\i) +(0,-0.5) rectangle +(3.8,0.5);}
		\foreach \i in {1} {\fill[blue!20, rounded corners, opacity=0.7] (\i) +(-0.3,-0.3) rectangle +(1.9,0.4);}
		\foreach \i in {6} {\fill[blue!20, rounded corners, opacity=0.7] (\i) +(0,-0.3) rectangle +(1.,0.4);}
		\foreach \i in {0,...,6} { \node (\i) at (\i*1.5cm,0) {\nscale{$h_{0,\i}$}};}
\graph {
	(import nodes);
			0 ->["\escale{$k_0$}"]1->["\escale{$k_1$}"]2 ->["\escale{$z_0$}"]3 ->["\escale{$k_1$}"]4  ->["\escale{$z_1$}"]5->["\escale{$k_0$}"]6;

			};
\end{scope}
\begin{scope}[yshift=-1.1cm,nodes={set=import nodes}]%
		\foreach \i in {0,...,3} { \coordinate (\i) at (\i*1.4cm,0) ;}
		\foreach \i in {0} {\fill[red!20, rounded corners] (\i) +(-0.35,-0.5) rectangle +(2,0.5);}
		\foreach \i in {1,3} {\fill[blue!20, rounded corners, opacity=0.7] (\i) +(-0.2,-0.3) rectangle +(0.6,0.4);}
		\foreach \i in {0,...,3} { \node (\i) at (\i*1.5cm,0) {\nscale{$h_{1,\i}$}};}
\graph {
	(import nodes);
			0 ->["\escale{$k_0$}"]1->["\escale{$z_0$}"]2 ->["\escale{$k_0$}"]3;
			};
\end{scope}
\end{tikzpicture}
\end{flushleft}
\begin{flushleft}
\begin{tikzpicture}[new set = import nodes]
\begin{scope}[nodes={set=import nodes}]%
		\foreach \i in {0,...,5} { \coordinate (\i) at (\i*1.4cm,0) ;}
		\foreach \i in {1} {\fill[red!20, rounded corners] (\i) +(-0.35,-0.3) rectangle +(5.8,0.4);}
		\foreach \i in {0,...,5} { \node (\i) at (\i) {\nscale{$t_{i,\i}$}};}
\graph {
	(import nodes);
			0 ->["\escale{$k_1$}"]1->["\escale{$X_{i_0}$}"]2 ->["\escale{$X_{i_1}$}"]3->["\escale{$X_{i_2}$}"]4->["\escale{$k_0$}"]5; 
			};
\end{scope}
\end{tikzpicture}
\end{flushleft}

($z_1$):
The red colored area of the following figure sketches the solvability of $(z_1,s)$ for all relevant $s\in S(H_0)\setminus \{h_{0,2}\}$ and the blue colored area for $(z_1,h_{0,2})$.
For both regions, $sig(z_1)=\inp$.
\begin{flushleft}
\begin{tikzpicture}[new set = import nodes]
\begin{scope}[nodes={set=import nodes}]%
		\foreach \i in {0,...,6} { \coordinate (\i) at (\i*1.4cm,0) ;}
		\foreach \i in {2} {\fill[red!20, rounded corners] (\i) +(-0.2,-0.5) rectangle +(0.6,0.5);}
		\foreach \i in {4} {\fill[red!20, rounded corners] (\i) +(0,-0.5) rectangle +(0.8,0.5);}
		\foreach \i in {3} {\fill[blue!20, rounded corners, opacity=0.7] (\i) +(-0.1,-0.3) rectangle +(2.2,0.4);}
		%
		\foreach \i in {0,...,6} { \node (\i) at (\i*1.5cm,0) {\nscale{$h_{0,\i}$}};}
\graph {
	(import nodes);
			0 ->["\escale{$k_0$}"]1->["\escale{$k_1$}"]2 ->["\escale{$z_0$}"]3 ->["\escale{$k_1$}"]4  ->["\escale{$z_1$}"]5->["\escale{$k_0$}"]6;

			};
\end{scope}
\begin{scope}[yshift=-1.1cm,nodes={set=import nodes}]%
		\foreach \i in {0,...,3} { \coordinate (\i) at (\i*1.4cm,0) ;}
		\foreach \i in {0} {\fill[red!20, rounded corners] (\i) +(-0.35,-0.3) rectangle +(2,0.4);}
		\foreach \i in {2} {\fill[blue!20, rounded corners, opacity=0.7] (\i) +(-0.3,-0.3) rectangle +(2.0,0.4);}
		\foreach \i in {0,...,3} { \node (\i) at (\i*1.5cm,0) {\nscale{$h_{1,\i}$}};}
\graph {
	(import nodes);
			0 ->["\escale{$k_0$}"]1->["\escale{$z_0$}"]2 ->["\escale{$k_0$}"]3;
			};
\end{scope}
\end{tikzpicture}
\end{flushleft}
\begin{flushleft}
\begin{tikzpicture}[new set = import nodes]
\begin{scope}[nodes={set=import nodes}]%
		\foreach \i in {0,...,5} { \coordinate (\i) at (\i*1.4cm,0) ;}
		\foreach \i in {0} {\fill[red!20, rounded corners] (\i) +(-0.3,-0.3) rectangle +(7.3,0.4);}
		\foreach \i in {0,...,5} { \node (\i) at (\i) {\nscale{$t_{i,\i}$}};}
\graph {
	(import nodes);
			0 ->["\escale{$k_1$}"]1->["\escale{$X_{i_0}$}"]2 ->["\escale{$X_{i_1}$}"]3->["\escale{$X_{i_2}$}"]4->["\escale{$k_0$}"]5; 
			};
\end{scope}
\end{tikzpicture}
\end{flushleft}

($X_i$):
It is easy to see, that the variable events are solvable.

It remains to prove that all $e\in \ominus\cup\oplus$ are solvable and that $A^\tau_\varphi$ has the $\tau$-SSP.
We can do this similarly to the proof of Theorem~\ref{the:nop_inp_out_set+used_free}.

%
\subsection{Continuation of the Proof of Theorem~\ref{the:nop_inp_res_swap+used_free}}
($k$):
If $s\in (\bigcup_{i=0}^3S(H_i)\cup S(F_0)\cup S(F_1)\cup \bigcup_{i=0}^{10}S(G_i))$ and $s\not\in \{h_{1,2}, h_{2,3}, h_{3,0}, g_{j,3}\mid 0\leq j\leq 10\}$, then the region that solves $\alpha$ also solves $(k,s)$.
The red colored area of the following figure sketches how $k$ can be solved at the remaining states of $A^\tau_\varphi$'s gadgets.
For this (these) region(s) holds $sig(k)=\inp$, and it is easy to see, that the situation sketched for $T_{i,0}$ can be transferred to all $T_{i,1},\dots, T_{i,6}$ and $B_i$, $i\in \{0,\dots, m-1\}$.
\begin{center}
\begin{tikzpicture}[new set = import nodes]
\begin{scope}[nodes={set=import nodes}]

		\foreach \i in {0,...,4} { \coordinate (\i) at (\i*1.3cm,0) ;}
		\foreach \i in {0} {\fill[red!20, rounded corners] (\i) +(-0.4,-0.25) rectangle +(0.4,0.3);}
		\foreach \i in {2} {\fill[red!20, rounded corners] (\i) +(-0.4,-0.25) rectangle +(1.6,0.3);}
		\foreach \i in {0,...,4} { \node (\i) at (\i) {\nscale{$h_{0,\i}$}};}
\graph {
	(import nodes);
			0 ->["\escale{$k$}"]1->["\escale{$y_0$}"]2->["\escale{$v$}"]3->["\escale{$k$}"]4;  
			};
\end{scope}
\begin{scope}[xshift=6.25cm,nodes={set=import nodes}]%

		\foreach \i in {0,...,4} { \coordinate (\i) at (\i*1.3cm,0) ;}
		\foreach \i in {0,3} {\fill[red!20, rounded corners] (\i) +(-0.4,-0.25) rectangle +(0.4,0.3);}
		\foreach \i in {0,...,4} { \node (\i) at (\i) {\nscale{$h_{1,\i}$}};}
\graph {
	(import nodes);
			0 ->["\escale{$k$}"]1->["\escale{$y_1$}"]2->["\escale{$y_0$}"]3->["\escale{$k$}"]4;  
			};
\end{scope}
\begin{scope}[yshift=-1.1cm,nodes={set=import nodes}]%

		\foreach \i in {0,...,5} { \coordinate (\i) at (\i*1.2cm,0) ;}
		\foreach \i in {0,2,4} {\fill[red!20, rounded corners] (\i) +(-0.4,-0.25) rectangle +(0.4,0.3);}
		\foreach \i in {0,...,5} { \node (\i) at (\i) {\nscale{$h_{2,\i}$}};}
\graph {
	(import nodes);
			0 ->["\escale{$k$}"]1->["\escale{$y_0$}"]2->["\escale{$y_1$}"]3->["\escale{$y_0$}"]4->["\escale{$k$}"]5;  
			};
\end{scope}
\begin{scope}[xshift=6.8cm, yshift=-1.1cm,nodes={set=import nodes}]%

		\foreach \i in {0,...,4} { \coordinate (\i) at (\i*1.2cm,0) ;}
		\foreach \i in {2} {\fill[red!20, rounded corners] (\i) +(-0.4,-0.25) rectangle +(1.6,0.3);}		
		\foreach \i in {0,...,4} { \node (\i) at (\i) {\nscale{$h_{3,\i}$}};}
\graph {
	(import nodes);
			0 ->["\escale{$y_1$}"]1->["\escale{$y_0$}"]2->["\escale{$v$}"]3->["\escale{$k$}"]4;  
			};
\end{scope}
\end{tikzpicture}
\end{center}
\begin{center}
\begin{tikzpicture}[new set = import nodes]
\begin{scope}[nodes={set=import nodes}]
		
		\foreach \i in {0,...,4} { \coordinate (\i) at (\i*1.2cm,0) ;}
		\foreach \i in {2} {\fill[red!20, rounded corners] (\i) +(-0.4,-0.25) rectangle +(1.6,0.3);}
		\foreach \i in {0} {\fill[red!20, rounded corners] (\i) +(-0.4,-0.25) rectangle +(0.4,0.3);}
		\foreach \i in {0,...,4} { \node (\i) at (\i) {\nscale{$f_{0,\i}$}};}
\graph {
	(import nodes);
			0 ->["\escale{$k$}"]1->["\escale{$z_0$}"]2->["\escale{$v$}"]3->["\escale{$k$}"]4;  
			};
\end{scope}
\begin{scope}[xshift=6.5cm,nodes={set=import nodes}]
		
		\foreach \i in {0,...,4} { \coordinate (\i) at (\i*1.2cm,0) ;}		
		\foreach \i in {2} {\fill[red!20, rounded corners] (\i) +(-0.4,-0.25) rectangle +(1.6,0.3);}
		\foreach \i in {0,...,4} { \node (\i) at (\i) {\nscale{$f_{1,\i}$}};}
\graph {
	(import nodes);
			0 ->["\escale{$k$}"]1->["\escale{$z_1$}"]2->["\escale{$v$}"]3->["\escale{$k$}"]4;  
			};
\end{scope}
\begin{scope}[yshift=-1.2cm,nodes={set=import nodes}]
		
		\foreach \i in {0,...,5} { \coordinate (\i) at (\i*1.2cm,0) ;}
		\foreach \i in {0,2,4} {\fill[red!20, rounded corners] (\i) +(-0.4,-0.25) rectangle +(0.4,0.3);}
		\foreach \i in {0,...,5} { \node (\i) at (\i) {\nscale{$g_{j,\i}$}};}
\graph {
	(import nodes);
			0 ->["\escale{$k$}"]1->["\escale{$z_0$}"]2->["\escale{$u_j$}"]3->["\escale{$z_1$}"]4->["\escale{$k$}"]5;  
			};
\end{scope}
\end{tikzpicture}
\end{center}
\begin{center}
\begin{tikzpicture}[new set = import nodes]
\begin{scope}[nodes={set=import nodes}]
		
		\foreach \i in {0,...,9} { \coordinate (\i) at (\i*1.25cm,0) ;}
		\foreach \i in {0,8} {\fill[red!20, rounded corners] (\i) +(-0.4,-0.25) rectangle +(0.4,0.3);}
		\foreach \i in {0,...,9} { \node (\i) at (\i) {\nscale{$t_{i,0,\i}$}};}
\graph {
	(import nodes);
			0 ->["\escale{$k$}"]1->["\escale{$u_{0}$}"]2->["\escale{$X_{i_0}$}"]3->["\escale{$u_{1}$}"]4->["\escale{$X_{i_1}$}"]5->["\escale{$u_{2}$}"]6->["\escale{$X_{i_2}$}"]7->["\escale{$u_{3}$}"]8->["\escale{$k$}"]9;
			};
\end{scope}
\end{tikzpicture}
\end{center}

($y_0$):
The red colored area of the following figure sketches the solvability of $(y_0,s)$ for all $s\in \bigcup_{i=0}^3S(H_i)\setminus\{h_{0,4}, h_{1,0}, h_{1,4}, h_{3,4}\}$.
For this region, $sig(y_0)=\inp$.
Since we can define $sig(k)=\swap$, it is easy to see that this region is extendable to $A^\tau_\varphi$.
Moreover, the blue colored area sketches the solvability of $(y_0,s)$ for all $s\in \{h_{0,4}, h_{1,0}, h_{1,4}, h_{3,4}\}$ and all remaining relevant states $s$ of $A^\tau_\varphi$'s gadgets.
Again, $sig(y_0)=\inp$.
\begin{center}
\begin{tikzpicture}[new set = import nodes]
\begin{scope}[nodes={set=import nodes}]

		\foreach \i in {0,...,4} { \coordinate (\i) at (\i*1.3cm,0) ;}
		\foreach \i in {0,4} {\fill[red!20, rounded corners] (\i) +(-0.4,-0.5) rectangle +(0.4,0.5);}
		\foreach \i in {0} {\fill[blue!20, rounded corners, opacity=0.7] (\i) +(-0.4,-0.3) rectangle +(1.7,0.3);}
		\foreach \i in {0,...,4} { \node (\i) at (\i) {\nscale{$h_{0,\i}$}};}
\graph {
	(import nodes);
			0 ->["\escale{$k$}"]1->["\escale{$y_0$}"]2->["\escale{$v$}"]3->["\escale{$k$}"]4;  
			};
\end{scope}
\begin{scope}[xshift=6.25cm,nodes={set=import nodes}]%

		\foreach \i in {0,...,4} { \coordinate (\i) at (\i*1.3cm,0) ;}
		\foreach \i in {0,2,4} {\fill[red!20, rounded corners] (\i) +(-0.4,-0.5) rectangle +(0.4,0.5);}
		\foreach \i in {2} {\fill[blue!20, rounded corners, opacity=0.7] (\i) +(-0.4,-0.3) rectangle +(0.4,0.3);}
		\foreach \i in {0,...,4} { \node (\i) at (\i) {\nscale{$h_{1,\i}$}};}
\graph {
	(import nodes);
			0 ->["\escale{$k$}"]1->["\escale{$y_1$}"]2->["\escale{$y_0$}"]3->["\escale{$k$}"]4;  
			};
\end{scope}
\begin{scope}[yshift=-1.1cm,nodes={set=import nodes}]%

		\foreach \i in {0,...,5} { \coordinate (\i) at (\i*1.2cm,0) ;}
		\foreach \i in {1,3,5} {\fill[red!20, rounded corners] (\i) +(-0.4,-0.5) rectangle +(0.4,0.5);}
		\foreach \i in {0} {\fill[blue!20, rounded corners, opacity=0.7] (\i) +(-0.4,-0.3) rectangle +(1.6,0.3);}
		\foreach \i in {3} {\fill[blue!20, rounded corners, opacity=0.7] (\i) +(-0.4,-0.3) rectangle +(0.4,0.3);}
		\foreach \i in {0,...,5} { \node (\i) at (\i) {\nscale{$h_{2,\i}$}};}
\graph {
	(import nodes);
			0 ->["\escale{$k$}"]1->["\escale{$y_0$}"]2->["\escale{$y_1$}"]3->["\escale{$y_0$}"]4->["\escale{$k$}"]5;  
			};
\end{scope}
\begin{scope}[xshift=6.8cm, yshift=-1.1cm,nodes={set=import nodes}]%

		\foreach \i in {0,...,4} { \coordinate (\i) at (\i*1.2cm,0) ;}
		\foreach \i in {1,4} {\fill[red!20, rounded corners] (\i) +(-0.4,-0.5) rectangle +(0.3,0.5);}
		\foreach \i in {1} {\fill[blue!20, rounded corners, opacity=0.7] (\i) +(-0.4,-0.3) rectangle +(0.3,0.3);}
		\foreach \i in {0,...,4} { \node (\i) at (\i) {\nscale{$h_{3,\i}$}};}
\graph {
	(import nodes);
			0 ->["\escale{$y_1$}"]1->["\escale{$y_0$}"]2->["\escale{$v$}"]3->["\escale{$k$}"]4;  
			};
\end{scope}
\end{tikzpicture}
\end{center}

($y_1$):
The red colored area of the following figure sketches the solvability of $(y_1,s)$ for all $s\in \{h_{0,0}, h_{0,1}, h_{1,2}, h_{2,0}, h_{2,1}, h_{3,1}\}$.
The blue  colored area does this for $(y_1,h_{1,0})$.
It is easy to see that the remaining atoms $(y_1,s)$ are solvable, too.
\begin{center}
\begin{tikzpicture}[new set = import nodes]
\begin{scope}[nodes={set=import nodes}]

		\foreach \i in {0,...,4} { \coordinate (\i) at (\i*1.3cm,0) ;}
		\foreach \i in {2} {\fill[red!20, rounded corners] (\i) +(-0.4,-0.5) rectangle +(2.9,0.5);}
		\foreach \i in {0,...,4} { \node (\i) at (\i) {\nscale{$h_{0,\i}$}};}
\graph {
	(import nodes);
			0 ->["\escale{$k$}"]1->["\escale{$y_0$}"]2->["\escale{$v$}"]3->["\escale{$k$}"]4;  
			};
\end{scope}
\begin{scope}[xshift=6.25cm,nodes={set=import nodes}]%

		\foreach \i in {0,...,4} { \coordinate (\i) at (\i*1.3cm,0) ;}
		\foreach \i in {0,3} {\fill[red!20, rounded corners] (\i) +(-0.4,-0.5) rectangle +(1.6,0.5);}
		\foreach \i in {1} {\fill[blue!20, rounded corners, opacity=0.7] (\i) +(-0.4,-0.3) rectangle +(0.3,0.3);}
		\foreach \i in {0,...,4} { \node (\i) at (\i) {\nscale{$h_{1,\i}$}};}
\graph {
	(import nodes);
			0 ->["\escale{$k$}"]1->["\escale{$y_1$}"]2->["\escale{$y_0$}"]3->["\escale{$k$}"]4;  
			};
\end{scope}
\begin{scope}[yshift=-1.1cm,nodes={set=import nodes}]%

		\foreach \i in {0,...,5} { \coordinate (\i) at (\i*1.2cm,0) ;}
		\foreach \i in {2} {\fill[red!20, rounded corners] (\i) +(-0.4,-0.5) rectangle +(0.4,0.5);}
		\foreach \i in {4} {\fill[red!20, rounded corners] (\i) +(-0.4,-0.5) rectangle +(1.5,0.5);}
		\foreach \i in {5} {\fill[blue!20, rounded corners, opacity=0.7] (\i) +(-0.4,-0.3) rectangle +(0.3,0.3);}
		\foreach \i in {1} {\fill[blue!20, rounded corners, opacity=0.7] (\i) +(-0.4,-0.3) rectangle +(1.6,0.3);}
		\foreach \i in {0,...,5} { \node (\i) at (\i) {\nscale{$h_{2,\i}$}};}
\graph {
	(import nodes);
			0 ->["\escale{$k$}"]1->["\escale{$y_0$}"]2->["\escale{$y_1$}"]3->["\escale{$y_0$}"]4->["\escale{$k$}"]5;  
			};
\end{scope}
\begin{scope}[xshift=6.8cm, yshift=-1.1cm,nodes={set=import nodes}]%

		\foreach \i in {0,...,4} { \coordinate (\i) at (\i*1.2cm,0) ;}
		\foreach \i in {0} {\fill[red!20, rounded corners] (\i) +(-0.4,-0.5) rectangle +(0.4,0.5);}
		\foreach \i in {2} {\fill[red!20, rounded corners] (\i) +(-0.4,-0.5) rectangle +(2.7,0.5);}
		\foreach \i in {0} {\fill[blue!20, rounded corners, opacity=0.7] (\i) +(-0.4,-0.3) rectangle +(0.4,0.3);}
		\foreach \i in {0,...,4} { \node (\i) at (\i) {\nscale{$h_{3,\i}$}};}
\graph {
	(import nodes);
			0 ->["\escale{$y_1$}"]1->["\escale{$y_0$}"]2->["\escale{$v$}"]3->["\escale{$k$}"]4;  
			};
\end{scope}
\end{tikzpicture}
\end{center}

($v$):
The red colored area of the following figure sketches the solvability of $(v,s)$ for all relevant $s\in \bigcup_{i=0}^3S(H_i)\cup S(F_0)\cup S(F_1)$ except for $s\in \{h_{1,3}, h_{2,2}, h_{2,3}, h_{2,4}, h_{2,5}\}$.
The blue colored area sketches the solvability of the remaining atoms.
For both regions, $sig(v)=\inp$, and it is easy to see that they can be extended to $A^\tau_\varphi$.
\begin{center}
\begin{tikzpicture}[new set = import nodes]
\begin{scope}[nodes={set=import nodes}]

		\foreach \i in {0,...,4} { \coordinate (\i) at (\i*1.3cm,0) ;}
		\foreach \i in {2} {\fill[red!20, rounded corners] (\i) +(-0.4,-0.5) rectangle +(0.4,0.5);}
		\foreach \i in {0} {\fill[blue!20, rounded corners, opacity=0.7] (\i) +(-0.4,-0.3) rectangle +(3,0.3);}
		\foreach \i in {0,...,4} { \node (\i) at (\i) {\nscale{$h_{0,\i}$}};}
\graph {(import nodes);
	0 ->["\escale{$k$}"]1->["\escale{$y_0$}"]2->["\escale{$v$}"]3->["\escale{$k$}"]4;  
			};
\end{scope}
\begin{scope}[xshift=6.25cm,nodes={set=import nodes}]%

		\foreach \i in {0,...,4} { \coordinate (\i) at (\i*1.3cm,0) ;}
		\foreach \i in {3} {\fill[red!20, rounded corners] (\i) +(-0.4,-0.5) rectangle +(1.6,0.5);}
		\foreach \i in {0,...,4} { \node (\i) at (\i) {\nscale{$h_{1,\i}$}};}
\graph {
	(import nodes);
			0 ->["\escale{$k$}"]1->["\escale{$y_1$}"]2->["\escale{$y_0$}"]3->["\escale{$k$}"]4;  
			};
\end{scope}
\begin{scope}[yshift=-1.1cm,nodes={set=import nodes}]%

		\foreach \i in {0,...,5} { \coordinate (\i) at (\i*1.2cm,0) ;}
		\foreach \i in {2} {\fill[red!20, rounded corners] (\i) +(-0.4,-0.3) rectangle +(3.9,0.3);}
		\foreach \i in {0,...,5} { \node (\i) at (\i) {\nscale{$h_{2,\i}$}};}
\graph {
	(import nodes);
			0 ->["\escale{$k$}"]1->["\escale{$y_0$}"]2->["\escale{$y_1$}"]3->["\escale{$y_0$}"]4->["\escale{$k$}"]5;  
			};
\end{scope}
\begin{scope}[xshift=6.8cm, yshift=-1.1cm,nodes={set=import nodes}]%

		\foreach \i in {0,...,4} { \coordinate (\i) at (\i*1.2cm,0) ;}
		\foreach \i in {2} {\fill[red!20, rounded corners] (\i) +(-0.4,-0.5) rectangle +(0.4,0.5);}
		\foreach \i in {0} {\fill[blue!20, rounded corners, opacity=0.7] (\i) +(-0.4,-0.3) rectangle +(2.8,0.3);}
		\foreach \i in {0,...,4} { \node (\i) at (\i) {\nscale{$h_{3,\i}$}};}
\graph {
	(import nodes);
			0 ->["\escale{$y_1$}"]1->["\escale{$y_0$}"]2->["\escale{$v$}"]3->["\escale{$k$}"]4;  
			};
\end{scope}
\end{tikzpicture}
\end{center}
\begin{center}
\begin{tikzpicture}[new set = import nodes]
\begin{scope}[nodes={set=import nodes}]
		
		\foreach \i in {0,...,4} { \coordinate (\i) at (\i*1.2cm,0) ;}
		\foreach \i in {2} {\fill[red!20, rounded corners] (\i) +(-0.4,-0.5) rectangle +(0.4,0.5);}
		\foreach \i in {0} {\fill[blue!20, rounded corners, opacity=0.7] (\i) +(-0.4,-0.3) rectangle +(2.8,0.3);}
		\foreach \i in {0,...,4} { \node (\i) at (\i) {\nscale{$f_{0,\i}$}};}
\graph {
	(import nodes);
			0 ->["\escale{$k$}"]1->["\escale{$z_0$}"]2->["\escale{$v$}"]3->["\escale{$k$}"]4;  
			};
\end{scope}
\begin{scope}[xshift=6.5cm,nodes={set=import nodes}]
		
		\foreach \i in {0,...,4} { \coordinate (\i) at (\i*1.2cm,0) ;}		
		\foreach \i in {2} {\fill[red!20, rounded corners] (\i) +(-0.4,-0.5) rectangle +(0.4,0.5);}
		\foreach \i in {0} {\fill[blue!20, rounded corners, opacity=0.7] (\i) +(-0.4,-0.3) rectangle +(2.8,0.3);}
		\foreach \i in {0,...,4} { \node (\i) at (\i) {\nscale{$f_{1,\i}$}};}
\graph {
	(import nodes);
			0 ->["\escale{$k$}"]1->["\escale{$z_1$}"]2->["\escale{$v$}"]3->["\escale{$k$}"]4;  
			};
\end{scope}
\begin{scope}[yshift=-1.2cm,nodes={set=import nodes}]
		
		\foreach \i in {0,...,5} { \coordinate (\i) at (\i*1.2cm,0) ;}
		\foreach \i in {2} {\fill[red!20, rounded corners] (\i) +(-0.4,-0.3) rectangle +(3.9,0.4);}
		\foreach \i in {0,...,5} { \node (\i) at (\i) {\nscale{$g_{j,\i}$}};}
\graph {
	(import nodes);
			0 ->["\escale{$k$}"]1->["\escale{$z_0$}"]2->["\escale{$u_j$}"]3->["\escale{$z_1$}"]4->["\escale{$k$}"]5;  
			};
\end{scope}
\end{tikzpicture}
\end{center}

It is easy to see that the remaining ESSP atoms $(e,s)$ for all $e\in \{z_0,z_1\}\cup V(\varphi)\cup\{u_0,\dots, u_{10}\}\cup\{w_0,\dots, w_{3m-1}\}$ and $s\in S(A^\tau_\varphi)$ are solvable, too.
It remains to prove that all $e\in \ominus\cup\oplus$ are solvable and that $A^\tau_\varphi$ has the $\tau$-SSP.
We can do this similarly to the proof of Theorem~\ref{the:nop_inp_out_set+used_free}.

\subsection{Continuation of the Proof of Theorem~\ref{the:nop_inp_set_swap+out_res_used_free}}
$(k)$:
The region $(sup, sig)$ that solves $\alpha$ also solves $(k,s)$ for all $s\in \{s'\in S(A^\tau_\varphi) \mid s'\edge{k}\}\cup\{h_{1,0}, h_{3,3}, h_{4,2},h_{6,2},h_{8,2},h_{10,2}\}\cup\{f_{0,2},f_{1,2},f_{2,2},f_{2,3}\}\cup\{g_{i,2}, n_{i,3}\mid i\in \{0,\dots, 13\}\}$.
This region can simply modified to a region $(sup', sig')$ that solves $(k,h_{3,2})$.
To do we define $(sup', sig')$ for all $e\in E(A^\tau_\varphi)$ and $s\in S(A^\tau_\varphi)$ as follows:
If $s=h_{3,2}$, then $sup'(s)=0$; if $s=h_{3,3}$, then $sup'(s)=1$; if $s\not\in \{h_{3,2},h_{3,3}$, then $sup'(s)=sup(s)$.
If $e=v_1$, then $sig'(e)=7nop$; if $e\not=v_1$, then $sig'(e)=sig(e)$.

The red colored area of the following figure sketches the solvability of $(k,s)$ for all remaining relevant states $s$ of $A^\tau_\varphi$ except for $s =h_{11,2}$.
It is easy to see that $(sup, sig)$, as sketched for $T_{i,0}$, can be transferred to $T_{i,0},\dots, T_{i,6}$.
Moreover, the region $(sup, sig)$ can be easily modified to a region $(sup', sig')$ that solves $(k,h_{11,2})$.
This modified region maps $h_{11,2}$ to $0$ and $h_{9,2}$ and $h_{12,3}$ to $1$, cf. the blue area.
For both regions, $sig(k)=\inp$.
Altogether, this proves the solvability of $k$.
\begin{center}
\begin{tikzpicture}[new set = import nodes]
\begin{scope}[nodes={set=import nodes}]%

		\foreach \i in {0,...,2} { \coordinate (\i) at (\i*1.4cm,0) ;}
		\foreach \i in {0} {\fill[red!20, rounded corners] (\i) +(-0.4,-0.25) rectangle +(0.4,0.35);}
		\foreach \i in {0,...,2} { \node (\i) at (\i) {\nscale{$h_{0,\i}$}};}
\graph {(import nodes);
			0 ->["\escale{$k$}"]1->["\escale{$v_0$}"]2;  
			};
\end{scope}
\begin{scope}[xshift=5cm,nodes={set=import nodes}]%

		\foreach \i in {0,...,2} { \coordinate (\i) at (\i*1.4cm,0) ;}
		\foreach \i in {0} {\fill[red!20, rounded corners] (\i) +(-0.4,-0.25) rectangle +(1.7,0.35);}
		\foreach \i in {0,...,2} { \node (\i) at (\i) {\nscale{$h_{1,\i}$}};}
\graph {(import nodes);
			0 ->["\escale{$v_0$}"]1->["\escale{$k$}"]2;  
			};
\end{scope}
\begin{scope}[yshift=-1.1cm, nodes={set=import nodes}]%
		
		\foreach \i in {0,...,4} { \coordinate (\i) at (\i*1.4cm,0) ;}
		\foreach \i in {0,3} {\fill[red!20, rounded corners] (\i) +(-0.4,-0.25) rectangle +(0.4,0.35);}
		\foreach \i in {0,...,4} { \node (\i) at (\i) {\nscale{$h_{2,\i}$}};}
\graph {(import nodes);
			0 ->["\escale{$k$}"]1->["\escale{$v_0$}"]2->["\escale{$v_1$}"]3->["\escale{$k$}"]4;  
			};
\end{scope}

\begin{scope}[xshift=6.9cm, yshift=-1.1cm,nodes={set=import nodes}]%
		
		\foreach \i in {0,...,3} { \coordinate (\i) at (\i*1.4cm,0) ;}
		\foreach \i in {0} {\fill[red!20, rounded corners] (\i) +(-0.4,-0.25) rectangle +(0.4,0.35);}
		\foreach \i in {2} {\fill[red!20, rounded corners] (\i) +(-0.4,-0.25) rectangle +(1.7,0.35);}
		\foreach \i in {0,...,3} { \node (\i) at (\i) {\nscale{$h_{3,\i}$}};}
\graph {
	(import nodes);
			0 ->["\escale{$k$}"]1->["\escale{$v_1$}"]2->["\escale{$v_0$}"]3;  
			};
\end{scope}
\end{tikzpicture}
\end{center}
\begin{center}
\begin{tikzpicture}[new set = import nodes]
\begin{scope}[nodes={set=import nodes}]%
		\foreach \i in {0,...,4} { \coordinate (\i) at (\i*1.2cm,0) ;}
		\foreach \i in {0} {\fill[red!20, rounded corners] (\i) +(-0.4,-0.25) rectangle +(0.4,0.35);}
		\foreach \i in {2} {\fill[red!20, rounded corners] (\i) +(-0.4,-0.25) rectangle +(1.7,0.35);}
		\foreach \i in {0,...,4} { \node (\i) at (\i) {\nscale{$h_{4,\i}$}};}
\graph {
	(import nodes);
			0 ->["\escale{$k$}"]1->["\escale{$x$}"]2->["\escale{$v_0$}"]3->["\escale{$k$}"]4;  
			};
\end{scope}
\begin{scope}[xshift=6cm,nodes={set=import nodes}]%
		\foreach \i in {0,...,4} { \coordinate (\i) at (\i*1.2cm,0) ;}
		\foreach \i in {0,3} {\fill[red!20, rounded corners] (\i) +(-0.4,-0.25) rectangle +(0.4,0.35);}
		\foreach \i in {0,...,4} { \node (\i) at (\i) {\nscale{$h_{5,\i}$}};}
\graph {
	(import nodes);
			0 ->["\escale{$k$}"]1->["\escale{$v_0$}"]2->["\escale{$x$}"]3->["\escale{$k$}"]4;  
			};
\end{scope}
\begin{scope}[yshift=-1.1cm,nodes={set=import nodes}]%
		\foreach \i in {0,...,4} { \coordinate (\i) at (\i*1.2cm,0) ;}
		\foreach \i in {0} {\fill[red!20, rounded corners] (\i) +(-0.4,-0.25) rectangle +(0.4,0.35);}
		\foreach \i in {2} {\fill[red!20, rounded corners] (\i) +(-0.4,-0.25) rectangle +(1.7,0.35);}
		\foreach \i in {0,...,4} { \node (\i) at (\i) {\nscale{$h_{6,\i}$}};}
\graph {
	(import nodes);
			0 ->["\escale{$k$}"]1->["\escale{$x$}"]2->["\escale{$y_0$}"]3->["\escale{$k$}"]4;  
			};
\end{scope}
\begin{scope}[xshift=6cm,yshift=-1.1cm,nodes={set=import nodes}]%
		\foreach \i in {0,...,4} { \coordinate (\i) at (\i*1.2cm,0) ;}
		\foreach \i in {0,3} {\fill[red!20, rounded corners] (\i) +(-0.4,-0.25) rectangle +(0.4,0.35);}
		\foreach \i in {0,...,4} { \node (\i) at (\i) {\nscale{$h_{7,\i}$}};}
\graph {
	(import nodes);
			0 ->["\escale{$k$}"]1->["\escale{$y_0$}"]2->["\escale{$x$}"]3->["\escale{$k$}"]4;  
			};
\end{scope}
\begin{scope}[yshift=-2.2cm,nodes={set=import nodes}]%
		\foreach \i in {0,...,4} { \coordinate (\i) at (\i*1.2cm,0) ;}
		\foreach \i in {0} {\fill[red!20, rounded corners] (\i) +(-0.4,-0.25) rectangle +(0.4,0.35);}
		\foreach \i in {2} {\fill[red!20, rounded corners] (\i) +(-0.4,-0.25) rectangle +(1.7,0.35);}
		\foreach \i in {0,...,4} { \node (\i) at (\i) {\nscale{$h_{8,\i}$}};}
\graph {
	(import nodes);
			0 ->["\escale{$k$}"]1->["\escale{$x$}"]2->["\escale{$y_1$}"]3->["\escale{$k$}"]4;  
			};
\end{scope}
\begin{scope}[xshift=6cm,yshift=-2.2cm,nodes={set=import nodes}]%
		\foreach \i in {0,...,4} { \coordinate (\i) at (\i*1.2cm,0) ;}
		\foreach \i in {0,3} {\fill[red!20, rounded corners] (\i) +(-0.4,-0.5) rectangle +(0.4,0.5);}
		\foreach \i in {2} {\fill[blue!20, rounded corners, opacity=0.7] (\i) +(-0.4,-0.25) rectangle +(1.6,0.35);}
		\foreach \i in {0,...,4} { \node (\i) at (\i) {\nscale{$h_{9,\i}$}};}
\graph {
	(import nodes);
			0 ->["\escale{$k$}"]1->["\escale{$y_1$}"]2->["\escale{$x$}"]3->["\escale{$k$}"]4;  
			};
\end{scope}
\begin{scope}[yshift=-3.3cm,nodes={set=import nodes}]%
		\foreach \i in {0,...,4} { \coordinate (\i) at (\i*1.2cm,0) ;}
		\foreach \i in {0} {\fill[red!20, rounded corners] (\i) +(-0.4,-0.5) rectangle +(0.4,0.5);}
		\foreach \i in {2} {\fill[red!20, rounded corners] (\i) +(-0.4,-0.5) rectangle +(1.5,0.5);}
		\foreach \i in {0,...,4} { \node (\i) at (\i) {\nscale{$h_{10,\i}$}};}
\graph {
	(import nodes);
			0 ->["\escale{$k$}"]1->["\escale{$x$}"]2->["\escale{$y_2$}"]3->["\escale{$k$}"]4;  
			};
\end{scope}
\begin{scope}[xshift=6cm,yshift=-3.3cm,nodes={set=import nodes}]%
		\foreach \i in {0,...,4} { \coordinate (\i) at (\i*1.2cm,0) ;}
		\foreach \i in {0} {\fill[red!20, rounded corners] (\i) +(-0.4,-0.5) rectangle +(0.4,0.5);}
		\foreach \i in {2} {\fill[red!20, rounded corners] (\i) +(-0.4,-0.5) rectangle +(1.5,0.5);}
		\foreach \i in {3} {\fill[blue!20, rounded corners, opacity=0.7] (\i) +(-0.4,-0.25) rectangle +(0.3,0.35);}
		\foreach \i in {0,...,4} { \node (\i) at (\i) {\nscale{$h_{11,\i}$}};}
\graph {
	(import nodes);
			0 ->["\escale{$k$}"]1->["\escale{$y_2$}"]2->["\escale{$x$}"]3->["\escale{$k$}"]4;  
			};
\end{scope}
\begin{scope}[yshift=-4.4cm,nodes={set=import nodes}]%
		\foreach \i in {0,...,5} { \coordinate (\i) at (\i*1.4cm,0) ;}
		\foreach \i in {0,4} {\fill[red!20, rounded corners] (\i) +(-0.4,-0.5) rectangle +(0.4,0.5);}
		\foreach \i in {3} {\fill[blue!20, rounded corners, opacity=0.7] (\i) +(-0.4,-0.25) rectangle +(1.8,0.35);}
		\foreach \i in {0,...,5} { \node (\i) at (\i) {\nscale{$h_{12,\i}$}};}
\graph {
	(import nodes);
			0 ->["\escale{$k$}"]1->["\escale{$y_0$}"]2->["\escale{$y_1$}"]3->["\escale{$y_2$}"]4->["\escale{$k$}"]5;  
			};
\end{scope}
\end{tikzpicture}
\end{center}
\begin{center}
\begin{tikzpicture}[new set = import nodes]
\begin{scope}[nodes={set=import nodes}]%
		\foreach \i in {0,...,4} { \coordinate (\i) at (\i*1.3cm,0) ;}
		\foreach \i in {0} {\fill[red!20, rounded corners] (\i) +(-0.4,-0.25) rectangle +(0.4,0.35);}
		\foreach \i in {2} {\fill[red!20, rounded corners] (\i) +(-0.4,-0.25) rectangle +(1.5,0.35);}
		\foreach \i in {0,...,4} { \node (\i) at (\i) {\nscale{$f_{0,\i}$}};}
\graph {
	(import nodes);
			0 ->["\escale{$k$}"]1->["\escale{$z_0$}"]2->["\escale{$v_0$}"]3->["\escale{$k$}"]4;
			};
\end{scope}
\begin{scope}[xshift=6.3cm, nodes={set=import nodes}]%
		\foreach \i in {0,...,4} { \coordinate (\i) at (\i*1.3cm,0) ;}
		\foreach \i in {0} {\fill[red!20, rounded corners] (\i) +(-0.4,-0.25) rectangle +(0.4,0.35);}
		\foreach \i in {2} {\fill[red!20, rounded corners] (\i) +(-0.4,-0.25) rectangle +(1.5,0.35);}
		\foreach \i in {0,...,4} { \node (\i) at (\i) {\nscale{$f_{1,\i}$}};}
\graph {
	(import nodes);
			0 ->["\escale{$k$}"]1->["\escale{$z_1$}"]2->["\escale{$v_0$}"]3->["\escale{$k$}"]4;
			};
\end{scope}
\begin{scope}[yshift=-1.2cm, nodes={set=import nodes}]
		
		\foreach \i in {0,...,5} { \coordinate (\i) at (\i*1.3cm,0) ;}
		\foreach \i in {0} {\fill[red!20, rounded corners] (\i) +(-0.4,-0.25) rectangle +(0.4,0.35);}
		\foreach \i in {2} {\fill[red!20, rounded corners] (\i) +(-0.4,-0.25) rectangle +(3,0.35);}
		\foreach \i in {0,...,5} { \node (\i) at (\i) {\nscale{$f_{2,\i}$}};}
\graph {
	(import nodes);
			0 ->["\escale{$k$}"]1->["\escale{$z_0$}"]2->["\escale{$z_1$}"]3->["\escale{$z_2$}"]4->["\escale{$k$}"]5;
			};
\end{scope}
\begin{scope}[yshift=-2.4cm,nodes={set=import nodes}]
		
		\foreach \i in {0,...,5} { \coordinate (\i) at (\i*1.05cm,0) ;}
		\foreach \i in {0} {\fill[red!20, rounded corners] (\i) +(-0.4,-0.25) rectangle +(0.4,0.35);}
		\foreach \i in {2} {\fill[red!20, rounded corners] (\i) +(-0.4,-0.25) rectangle +(2.4,0.35);}
		\foreach \i in {0,...,5} { \node (\i) at (\i) {\nscale{$g_{i,\i}$}};}
\graph {
	(import nodes);
			0 ->["\escale{$k$}"]1->["\escale{$z_0$}"]2->["\escale{$u_i$}"]3->["\escale{$z_1$}"]4->["\escale{$k$}"]5;
			};
\end{scope}
\begin{scope}[xshift=6.2cm,yshift=-2.4cm, nodes={set=import nodes}]
		
		\foreach \i in {0,...,5} { \coordinate (\i) at (\i*1.05cm,0) ;}
		\foreach \i in {0} {\fill[red!20, rounded corners] (\i) +(-0.4,-0.25) rectangle +(0.4,0.35);}
		\foreach \i in {2} {\fill[red!20, rounded corners] (\i) +(-0.4,-0.25) rectangle +(2.4,0.35);}
		\foreach \i in {0,...,5} { \node (\i) at (\i) {\nscale{$n_{i,\i}$}};}
\graph {
	(import nodes);
			0 ->["\escale{$k$}"]1->["\escale{$z_2$}"]2->["\escale{$u_i$}"]3->["\escale{$v_0$}"]4->["\escale{$k$}"]5;
			};
\end{scope}
\end{tikzpicture}
\end{center}
\begin{center}
\begin{tikzpicture}[new set = import nodes]
\begin{scope}[nodes={set=import nodes}]%
		
		\foreach \i in {0,...,9} { \coordinate (\i) at (\i*1.25cm,0) ;}
		\foreach \i in {0,8} {\fill[red!20, rounded corners] (\i) +(-0.3,-0.25) rectangle +(0.3,0.4);}
		\foreach \i in {0,...,9} { \node (\i) at (\i) {\nscale{$t_{i,0,\i}$}};}
\graph {
	(import nodes);
			0 ->["\escale{$k$}"]1->["\escale{$u_{0}$}"]2->["\escale{$X_{i_0}$}"]3->["\escale{$u_{1}$}"]4->["\escale{$X_{i_1}$}"]5->["\escale{$u_{2}$}"]6->["\escale{$X_{i_2}$}"]7->["\escale{$u_{3}$}"]8->["\escale{$k$}"]9;
			};
\end{scope}
\end{tikzpicture}
\end{center}
\begin{center}
\begin{tikzpicture}[new set = import nodes]
\begin{scope}[nodes={set=import nodes}]
		
		\foreach \i in {0,...,3} { \coordinate (\i) at (\i*1.3cm,0) ;}
		\foreach \i in {2} {\fill[red!20, rounded corners] (\i) +(-0.4,-0.25) rectangle +(0.4,0.4);}
		\foreach \i in {0,...,3} { \node (\i) at (\i) {\nscale{$b_{i,\i}$}};}
\graph {
	(import nodes);
			0 ->["\escale{$X_i$}"]1->["\escale{$u_{11}$}"]2->["\escale{$k$}"]3;
			};
\end{scope}

\end{tikzpicture}
\end{center}

($v_0$):
The red colored area of the following figure sketches the solvability of $(v_0,s)$ for all $s\in \{h_{0,0}, h_{2,0}, h_{4,0}, h_{4,1}, h_{5,0}\}$.
Notice that, while the red area sketches the case $sup(h_{4,0})=1, sup(h_{4,1})=0$, the case $sup(h_{4,0})=0, sup(h_{4,1})=1$ is also suitable, too.
This ensures the solvability of $(k,h_{4,0})$.
\begin{center}
\begin{tikzpicture}[new set = import nodes]
\begin{scope}[nodes={set=import nodes}]%

		\foreach \i in {0,...,2} { \coordinate (\i) at (\i*1.4cm,0) ;}
		\foreach \i in {1} {\fill[red!20, rounded corners] (\i) +(-0.4,-0.25) rectangle +(0.4,0.35);}
		\foreach \i in {0,...,2} { \node (\i) at (\i) {\nscale{$h_{0,\i}$}};}
\graph {(import nodes);
			0 ->["\escale{$k$}"]1->["\escale{$v_0$}"]2;  
			};
\end{scope}
\begin{scope}[xshift=5cm,nodes={set=import nodes}]%

		\foreach \i in {0,...,2} { \coordinate (\i) at (\i*1.4cm,0) ;}
		\foreach \i in {0,2} {\fill[red!20, rounded corners] (\i) +(-0.4,-0.25) rectangle +(0.4,0.35);}
		\foreach \i in {0,...,2} { \node (\i) at (\i) {\nscale{$h_{1,\i}$}};}
\graph {(import nodes);
			0 ->["\escale{$v_0$}"]1->["\escale{$k$}"]2;  
			};
\end{scope}
\begin{scope}[yshift=-1.1cm, nodes={set=import nodes}]%
		
		\foreach \i in {0,...,4} { \coordinate (\i) at (\i*1.4cm,0) ;}
		\foreach \i in {1,4} {\fill[red!20, rounded corners] (\i) +(-0.4,-0.25) rectangle +(0.4,0.35);}
		\foreach \i in {0,...,4} { \node (\i) at (\i) {\nscale{$h_{2,\i}$}};}
\graph {(import nodes);
			0 ->["\escale{$k$}"]1->["\escale{$v_0$}"]2->["\escale{$v_1$}"]3->["\escale{$k$}"]4;  
			};
\end{scope}

\begin{scope}[xshift=6.9cm, yshift=-1.1cm,nodes={set=import nodes}]%
		
		\foreach \i in {0,...,3} { \coordinate (\i) at (\i*1.4cm,0) ;}
		\foreach \i in {1} {\fill[red!20, rounded corners] (\i) +(-0.4,-0.25) rectangle +(1.7,0.35);}
		\foreach \i in {0,...,3} { \node (\i) at (\i) {\nscale{$h_{3,\i}$}};}
\graph {
	(import nodes);
			0 ->["\escale{$k$}"]1->["\escale{$v_1$}"]2->["\escale{$v_0$}"]3;  
			};
\end{scope}
\end{tikzpicture}
\end{center}
\begin{center}
\begin{tikzpicture}[new set = import nodes]
\begin{scope}[nodes={set=import nodes}]%
		\foreach \i in {0,...,4} { \coordinate (\i) at (\i*1.2cm,0) ;}
		\foreach \i in {0,2,4} {\fill[red!20, rounded corners] (\i) +(-0.4,-0.25) rectangle +(0.4,0.35);}
		\foreach \i in {0,...,4} { \node (\i) at (\i) {\nscale{$h_{4,\i}$}};}
\graph {
	(import nodes);
			0 ->["\escale{$k$}"]1->["\escale{$x$}"]2->["\escale{$v_0$}"]3->["\escale{$k$}"]4;  
			};
\end{scope}
\begin{scope}[xshift=6cm,nodes={set=import nodes}]%
		\foreach \i in {0,...,4} { \coordinate (\i) at (\i*1.2cm,0) ;}
		\foreach \i in {1,3} {\fill[red!20, rounded corners] (\i) +(-0.4,-0.25) rectangle +(0.4,0.35);}
		\foreach \i in {0,...,4} { \node (\i) at (\i) {\nscale{$h_{5,\i}$}};}
\graph {
	(import nodes);
			0 ->["\escale{$k$}"]1->["\escale{$v_0$}"]2->["\escale{$x$}"]3->["\escale{$k$}"]4;  
			};
\end{scope}
\begin{scope}[yshift=-1.1cm,nodes={set=import nodes}]%
		\foreach \i in {0,...,4} { \coordinate (\i) at (\i*1.2cm,0) ;}
		\foreach \i in {0} {\fill[red!20, rounded corners] (\i) +(-0.4,-0.25) rectangle +(0.4,0.35);}
		\foreach \i in {2} {\fill[red!20, rounded corners] (\i) +(-0.4,-0.25) rectangle +(1.7,0.35);}
		\foreach \i in {0,...,4} { \node (\i) at (\i) {\nscale{$h_{6,\i}$}};}
\graph {
	(import nodes);
			0 ->["\escale{$k$}"]1->["\escale{$x$}"]2->["\escale{$y_0$}"]3->["\escale{$k$}"]4;  
			};
\end{scope}
\begin{scope}[xshift=6cm,yshift=-1.1cm,nodes={set=import nodes}]%
		\foreach \i in {0,...,4} { \coordinate (\i) at (\i*1.2cm,0) ;}
		\foreach \i in {0,3} {\fill[red!20, rounded corners] (\i) +(-0.4,-0.25) rectangle +(0.4,0.35);}
		\foreach \i in {0,...,4} { \node (\i) at (\i) {\nscale{$h_{7,\i}$}};}
\graph {
	(import nodes);
			0 ->["\escale{$k$}"]1->["\escale{$y_0$}"]2->["\escale{$x$}"]3->["\escale{$k$}"]4;  
			};
\end{scope}
\begin{scope}[yshift=-2.2cm,nodes={set=import nodes}]%
		\foreach \i in {0,...,4} { \coordinate (\i) at (\i*1.2cm,0) ;}
		\foreach \i in {0} {\fill[red!20, rounded corners] (\i) +(-0.4,-0.25) rectangle +(0.4,0.35);}
		\foreach \i in {2} {\fill[red!20, rounded corners] (\i) +(-0.4,-0.25) rectangle +(1.7,0.35);}
		\foreach \i in {0,...,4} { \node (\i) at (\i) {\nscale{$h_{8,\i}$}};}
\graph {
	(import nodes);
			0 ->["\escale{$k$}"]1->["\escale{$x$}"]2->["\escale{$y_1$}"]3->["\escale{$k$}"]4;  
			};
\end{scope}
\begin{scope}[xshift=6cm,yshift=-2.2cm,nodes={set=import nodes}]%
		\foreach \i in {0,...,4} { \coordinate (\i) at (\i*1.2cm,0) ;}
		\foreach \i in {0,3} {\fill[red!20, rounded corners] (\i) +(-0.4,-0.3) rectangle +(0.4,0.3);}
		\foreach \i in {0,...,4} { \node (\i) at (\i) {\nscale{$h_{9,\i}$}};}
\graph {
	(import nodes);
			0 ->["\escale{$k$}"]1->["\escale{$y_1$}"]2->["\escale{$x$}"]3->["\escale{$k$}"]4;  
			};
\end{scope}
\begin{scope}[yshift=-3.3cm,nodes={set=import nodes}]%
		\foreach \i in {0,...,4} { \coordinate (\i) at (\i*1.2cm,0) ;}
		\foreach \i in {0} {\fill[red!20, rounded corners] (\i) +(-0.4,-0.3) rectangle +(0.4,0.3);}
		\foreach \i in {2} {\fill[red!20, rounded corners] (\i) +(-0.4,-0.3) rectangle +(1.7,0.3);}
		\foreach \i in {0,...,4} { \node (\i) at (\i) {\nscale{$h_{10,\i}$}};}
\graph {
	(import nodes);
			0 ->["\escale{$k$}"]1->["\escale{$x$}"]2->["\escale{$y_2$}"]3->["\escale{$k$}"]4;  
			};
\end{scope}
\begin{scope}[xshift=6cm,yshift=-3.3cm,nodes={set=import nodes}]%
		\foreach \i in {0,...,4} { \coordinate (\i) at (\i*1.2cm,0) ;}
		\foreach \i in {0,3} {\fill[red!20, rounded corners] (\i) +(-0.4,-0.3) rectangle +(0.4,0.3);}
		\foreach \i in {0,...,4} { \node (\i) at (\i) {\nscale{$h_{11,\i}$}};}
\graph {
	(import nodes);
			0 ->["\escale{$k$}"]1->["\escale{$y_2$}"]2->["\escale{$x$}"]3->["\escale{$k$}"]4;  
			};
\end{scope}
\begin{scope}[yshift=-4.4cm,nodes={set=import nodes}]%
		\foreach \i in {0,...,5} { \coordinate (\i) at (\i*1.4cm,0) ;}
		\foreach \i in {0,5} {\fill[red!20, rounded corners] (\i) +(-0.4,-0.3) rectangle +(0.4,0.3);}
		\foreach \i in {0,...,5} { \node (\i) at (\i) {\nscale{$h_{12,\i}$}};}
\graph {
	(import nodes);
			0 ->["\escale{$k$}"]1->["\escale{$y_0$}"]2->["\escale{$y_1$}"]3->["\escale{$y_2$}"]4->["\escale{$k$}"]5;  
			};
\end{scope}
\end{tikzpicture}
\end{center}
\begin{center}
\begin{tikzpicture}[new set = import nodes]
\begin{scope}[nodes={set=import nodes}]%
		\foreach \i in {0,...,4} { \coordinate (\i) at (\i*1.3cm,0) ;}
		\foreach \i in {4} {\fill[red!20, rounded corners] (\i) +(-0.4,-0.25) rectangle +(0.4,0.35);}
		\foreach \i in {1} {\fill[red!20, rounded corners] (\i) +(-0.4,-0.25) rectangle +(1.7,0.35);}
		\foreach \i in {0,...,4} { \node (\i) at (\i) {\nscale{$f_{0,\i}$}};}
\graph {
	(import nodes);
			0 ->["\escale{$k$}"]1->["\escale{$z_0$}"]2->["\escale{$v_0$}"]3->["\escale{$k$}"]4;
			};
\end{scope}
\begin{scope}[xshift=6.3cm, nodes={set=import nodes}]%
		\foreach \i in {0,...,4} { \coordinate (\i) at (\i*1.3cm,0) ;}
		\foreach \i in {4} {\fill[red!20, rounded corners] (\i) +(-0.4,-0.25) rectangle +(0.4,0.35);}
		\foreach \i in {1} {\fill[red!20, rounded corners] (\i) +(-0.4,-0.25) rectangle +(1.7,0.35);}
		\foreach \i in {0,...,4} { \node (\i) at (\i) {\nscale{$f_{1,\i}$}};}
\graph {
	(import nodes);
			0 ->["\escale{$k$}"]1->["\escale{$z_1$}"]2->["\escale{$v_0$}"]3->["\escale{$k$}"]4;
			};
\end{scope}
\begin{scope}[yshift=-1.2cm, nodes={set=import nodes}]
		
		\foreach \i in {0,...,5} { \coordinate (\i) at (\i*1.3cm,0) ;}
		\foreach \i in {0,5} {\fill[red!20, rounded corners] (\i) +(-0.4,-0.25) rectangle +(0.4,0.35);}
		\foreach \i in {0,...,5} { \node (\i) at (\i) {\nscale{$f_{2,\i}$}};}
\graph {
	(import nodes);
			0 ->["\escale{$k$}"]1->["\escale{$z_0$}"]2->["\escale{$z_1$}"]3->["\escale{$z_2$}"]4->["\escale{$k$}"]5;
			};
\end{scope}
\begin{scope}[yshift=-2.4cm,nodes={set=import nodes}]
		
		\foreach \i in {0,...,5} { \coordinate (\i) at (\i*1.05cm,0) ;}
		\foreach \i in {0,5} {\fill[red!20, rounded corners] (\i) +(-0.4,-0.25) rectangle +(0.4,0.35);}
		\foreach \i in {0,...,5} { \node (\i) at (\i) {\nscale{$g_{i,\i}$}};}
\graph {
	(import nodes);
			0 ->["\escale{$k$}"]1->["\escale{$z_0$}"]2->["\escale{$u_i$}"]3->["\escale{$z_1$}"]4->["\escale{$k$}"]5;
			};
\end{scope}
\begin{scope}[xshift=6.2cm,yshift=-2.4cm, nodes={set=import nodes}]
		
		\foreach \i in {0,...,5} { \coordinate (\i) at (\i*1.05cm,0) ;}
		\foreach \i in {0,5} {\fill[red!20, rounded corners] (\i) +(-0.4,-0.25) rectangle +(0.4,0.35);}
		\foreach \i in {0,...,5} { \node (\i) at (\i) {\nscale{$n_{i,\i}$}};}
\graph {
	(import nodes);
			0 ->["\escale{$k$}"]1->["\escale{$z_2$}"]2->["\escale{$u_i$}"]3->["\escale{$v_0$}"]4->["\escale{$k$}"]5;
			};
\end{scope}
\end{tikzpicture}
\end{center}
\begin{center}
\begin{tikzpicture}[new set = import nodes]
\begin{scope}[nodes={set=import nodes}]%
		
		\foreach \i in {0,...,9} { \coordinate (\i) at (\i*1.25cm,0) ;}
		\foreach \i in {0,9} {\fill[red!20, rounded corners] (\i) +(-0.3,-0.25) rectangle +(0.3,0.4);}
		\foreach \i in {0,...,9} { \node (\i) at (\i) {\nscale{$t_{i,0,\i}$}};}
\graph {
	(import nodes);
			0 ->["\escale{$k$}"]1->["\escale{$u_{0}$}"]2->["\escale{$X_{i_0}$}"]3->["\escale{$u_{1}$}"]4->["\escale{$X_{i_1}$}"]5->["\escale{$u_{2}$}"]6->["\escale{$X_{i_2}$}"]7->["\escale{$u_{3}$}"]8->["\escale{$k$}"]9;
			};
\end{scope}
\end{tikzpicture}
\end{center}
\begin{center}
\begin{tikzpicture}[new set = import nodes]
\begin{scope}[nodes={set=import nodes}]
		
		\foreach \i in {0,...,3} { \coordinate (\i) at (\i*1.3cm,0) ;}
		\foreach \i in {3} {\fill[red!20, rounded corners] (\i) +(-0.4,-0.25) rectangle +(0.4,0.4);}
		\foreach \i in {0,...,3} { \node (\i) at (\i) {\nscale{$b_{i,\i}$}};}
\graph {
	(import nodes);
			0 ->["\escale{$X_i$}"]1->["\escale{$u_{11}$}"]2->["\escale{$k$}"]3;
			};
\end{scope}

\end{tikzpicture}
\end{center}
The red area of the following figure sketches a region $(sup, sig)$ that proves the solvability of $(v_0,s)$ for all $s\in \{h_{1,2}, h_{2,4},h_{4,4}\}\cup\bigcup_{i=6}^{12}S(H_i)\cup S(F_0)\cup S(F_1)$. 
If we modify this region by $sup(h_{2,3})=sup(h_{2,4})=1$ and $sup(h_{3,0})=sup(h_{3,1})=0$ (and $sig(v_1)=\set$), then we get a region that solves $(v_0, h_{3,0})$ and $(v_0, h_{3,1})$.
It is easy to see that the remaining atoms $(v_0,s)$ (especially $(v_0,n_{0,2}),\dots, (v_0,n_{13,2})$) are solvable, too.
\begin{center}
\begin{tikzpicture}[new set = import nodes]
\begin{scope}[nodes={set=import nodes}]%

		\foreach \i in {0,...,2} { \coordinate (\i) at (\i*1.4cm,0) ;}
		\foreach \i in {0} {\fill[red!20, rounded corners] (\i) +(-0.4,-0.25) rectangle +(1.7,0.35);}
		\foreach \i in {0,...,2} { \node (\i) at (\i) {\nscale{$h_{0,\i}$}};}
\graph {(import nodes);
			0 ->["\escale{$k$}"]1->["\escale{$v_0$}"]2;  
			};
\end{scope}
\begin{scope}[xshift=5cm,nodes={set=import nodes}]%

		\foreach \i in {0,...,2} { \coordinate (\i) at (\i*1.4cm,0) ;}
		\foreach \i in {0} {\fill[red!20, rounded corners] (\i) +(-0.4,-0.25) rectangle +(1.7,0.35);}
		\foreach \i in {0,...,2} { \node (\i) at (\i) {\nscale{$h_{1,\i}$}};}
\graph {(import nodes);
			0 ->["\escale{$v_0$}"]1->["\escale{$k$}"]2;  
			};
\end{scope}
\begin{scope}[yshift=-1.1cm, nodes={set=import nodes}]%
		
		\foreach \i in {0,...,4} { \coordinate (\i) at (\i*1.4cm,0) ;}
		\foreach \i in {0} {\fill[red!20, rounded corners] (\i) +(-0.4,-0.25) rectangle +(1.8,0.35);}
		\foreach \i in {0,...,4} { \node (\i) at (\i) {\nscale{$h_{2,\i}$}};}
\graph {(import nodes);
			0 ->["\escale{$k$}"]1->["\escale{$v_0$}"]2->["\escale{$v_1$}"]3->["\escale{$k$}"]4;  
			};
\end{scope}

\begin{scope}[xshift=6.9cm, yshift=-1.1cm,nodes={set=import nodes}]%
		
		\foreach \i in {0,...,3} { \coordinate (\i) at (\i*1.4cm,0) ;}
		\foreach \i in {0} {\fill[red!20, rounded corners] (\i) +(-0.4,-0.25) rectangle +(3.2,0.35);}
		\foreach \i in {0,...,3} { \node (\i) at (\i) {\nscale{$h_{3,\i}$}};}
\graph {
	(import nodes);
			0 ->["\escale{$k$}"]1->["\escale{$v_1$}"]2->["\escale{$v_0$}"]3;  
			};
\end{scope}
\end{tikzpicture}
\end{center}
\begin{center}
\begin{tikzpicture}[new set = import nodes]
\begin{scope}[nodes={set=import nodes}]%
		\foreach \i in {0,...,4} { \coordinate (\i) at (\i*1.2cm,0) ;}
		\foreach \i in {0} {\fill[red!20, rounded corners] (\i) +(-0.4,-0.25) rectangle +(2.7,0.35);}
		\foreach \i in {0,...,4} { \node (\i) at (\i) {\nscale{$h_{4,\i}$}};}
\graph {
	(import nodes);
			0 ->["\escale{$k$}"]1->["\escale{$x$}"]2->["\escale{$v_0$}"]3->["\escale{$k$}"]4;  
			};
\end{scope}
\begin{scope}[xshift=6cm,nodes={set=import nodes}]%
		\foreach \i in {0,...,4} { \coordinate (\i) at (\i*1.2cm,0) ;}
		\foreach \i in {0} {\fill[red!20, rounded corners] (\i) +(-0.4,-0.25) rectangle +(1.5,0.35);}
		\foreach \i in {0,...,4} { \node (\i) at (\i) {\nscale{$h_{5,\i}$}};}
\graph {
	(import nodes);
			0 ->["\escale{$k$}"]1->["\escale{$v_0$}"]2->["\escale{$x$}"]3->["\escale{$k$}"]4;  
			};
\end{scope}
\end{tikzpicture}
\end{center}
\begin{center}
\begin{tikzpicture}[new set = import nodes]
\begin{scope}[nodes={set=import nodes}]%
		\foreach \i in {0,...,4} { \coordinate (\i) at (\i*1.3cm,0) ;}
		\foreach \i in {2} {\fill[red!20, rounded corners] (\i) +(-0.4,-0.25) rectangle +(0.4,0.35);}
		\foreach \i in {0,...,4} { \node (\i) at (\i) {\nscale{$f_{0,\i}$}};}
\graph {
	(import nodes);
			0 ->["\escale{$k$}"]1->["\escale{$z_0$}"]2->["\escale{$v_0$}"]3->["\escale{$k$}"]4;
			};
\end{scope}
\begin{scope}[xshift=6.3cm, nodes={set=import nodes}]%
		\foreach \i in {0,...,4} { \coordinate (\i) at (\i*1.3cm,0) ;}
		\foreach \i in {2} {\fill[red!20, rounded corners] (\i) +(-0.4,-0.25) rectangle +(0.4,0.35);}
		\foreach \i in {0,...,4} { \node (\i) at (\i) {\nscale{$f_{1,\i}$}};}
\graph {
	(import nodes);
			0 ->["\escale{$k$}"]1->["\escale{$z_1$}"]2->["\escale{$v_0$}"]3->["\escale{$k$}"]4;
			};
\end{scope}
\begin{scope}[yshift=-1.2cm, nodes={set=import nodes}]
		
		\foreach \i in {0,...,5} { \coordinate (\i) at (\i*1.3cm,0) ;}
		\foreach \i in {2} {\fill[red!20, rounded corners] (\i) +(-0.4,-0.25) rectangle +(4.2,0.35);}
		\foreach \i in {0,...,5} { \node (\i) at (\i) {\nscale{$f_{2,\i}$}};}
\graph {
	(import nodes);
			0 ->["\escale{$k$}"]1->["\escale{$z_0$}"]2->["\escale{$z_1$}"]3->["\escale{$z_2$}"]4->["\escale{$k$}"]5;
			};
\end{scope}
\begin{scope}[yshift=-2.4cm,nodes={set=import nodes}]
		
		\foreach \i in {0,...,5} { \coordinate (\i) at (\i*1.05cm,0) ;}
		\foreach \i in {2} {\fill[red!20, rounded corners] (\i) +(-0.4,-0.25) rectangle +(3.4,0.35);}
		\foreach \i in {0,...,5} { \node (\i) at (\i) {\nscale{$g_{i,\i}$}};}
\graph {
	(import nodes);
			0 ->["\escale{$k$}"]1->["\escale{$z_0$}"]2->["\escale{$u_i$}"]3->["\escale{$z_1$}"]4->["\escale{$k$}"]5;
			};
\end{scope}
\begin{scope}[xshift=6.2cm,yshift=-2.4cm, nodes={set=import nodes}]
		
		\foreach \i in {0,...,5} { \coordinate (\i) at (\i*1.05cm,0) ;}
		\foreach \i in {2} {\fill[red!20, rounded corners] (\i) +(-0.4,-0.25) rectangle +(1.4,0.35);}
		\foreach \i in {0,...,5} { \node (\i) at (\i) {\nscale{$n_{i,\i}$}};}
\graph {
	(import nodes);
			0 ->["\escale{$k$}"]1->["\escale{$z_2$}"]2->["\escale{$u_i$}"]3->["\escale{$v_0$}"]4->["\escale{$k$}"]5;
			};
\end{scope}
\end{tikzpicture}
\end{center}

Notice that all other events can occur in multiple gadgets of $A^\tau_\varphi$, but at most once in a fixed gadget of $A^\tau_\varphi$.
Using this, one finds out that all the other separation atoms are solvable, too.
However, for the sake of simplicity, we refrain from a tedious case analyses and skip the enumeration of the corresponding regions. 
%
\end{appendix}
\end{document}